\newtheorem{theorem}{Theorem}[section]
\newtheorem*{theorem*}{Theorem}
\newtheorem{proposition}[theorem]{Proposition}
\newtheorem{lemma}[theorem]{Lemma}
\newtheorem{dlemma}[theorem]{Definition-Lemma}
\newtheorem{fact}[theorem]{Fact}
\newtheorem{corollary}[theorem]{Corollary}
\theoremstyle{definition} 
\newtheorem{definition}[theorem]{Definition}
\newtheorem{remark}[theorem]{Remark}
\newtheorem{example}[theorem]{Example}
\newtheorem{notation}[theorem]{Notation}
\newcommand\+[1]{\mathcal{#1}}
\newcommand{\N}{{\mathbb N}}     
\newcommand{\PP}{{\mathbb P}}    
\newcommand{\Z}{{\mathbb Z}}     
\newcommand{\Q}{{\mathbb Q}}     
\newcommand{\RR}{{\mathbb R}}    
\newcommand{\kernel}{\textit{Ker}}
\newcommand{\suc}{\textit{Suc}}
\newcommand{\lcm}{\textit{lcm}}
\newcommand{\Zhat}{\widehat{\Z}}    
\newcommand{\val}{\textit{Val}}     
\newcommand{\DUO}{\textit{DUO}}     
\newcommand{\gen}{\textit{gen}}   
\newcommand{\Suc}{\textit{Suc}}
\newcommand{\seg}{\color{blue}}     
\newcommand{\sege}{\color{cyan}} 
\definecolor{olive}{rgb}{0.5, 0.5, 0.0}
    \newcommand{\ire}{\color{olive}} 
\begin{document}
\title {Congruence Preservation, Lattices and Recognizability}
\author{
{Patrick C\'egielski\textsuperscript{1,3}}
\and 
{Serge Grigorieff\textsuperscript{2,3} }
\and 
{Ir\`ene  Guessarian\textsuperscript{2,3,4}}
}
\maketitle

\footnotetext[1]{LACL, EA 4219, Universit\'e Paris-Est Cr\'eteil,  IUT S\'enart-Fontainebleau, France
\texttt{cegielski@u-pec.fr}}
\footnotetext[2]{IRIF,  UMR 8243, Universit\'e Paris 7 Denis Diderot, France \texttt {FirstName.LastName@irif.fr}}
\footnotetext[3]{Partially supported by TARMAC ANR agreement 12 BS02 007 01}
\footnotetext[4]{Emeritus at UPMC Universit\'e Paris 6}
\bibliographystyle{plain}

\maketitle
\begin{center}
{\sl To Yuri, on his 70th birthday, with   gratitude for sharing with us his beautiful mathematical ideas during our long friendship}

\end{center}

\begin{abstract} Looking at some monoids and (semi)rings (natural numbers, integers and $p$-adic integers),
and more generally, residually finite algebras (in a strong sense),
we prove  the equivalence of two ways for a function on such an algebra to behave like the operations of the algebra. The
first way is to preserve congruences or stable preorders.
The second way is to demand that preimages of recognizable sets belong to the lattice 
 or the Boolean algebra generated by the
preimages of recognizable sets by ``derived unary operations'' of the algebra (such as translations, quotients,\ldots). 
\end{abstract}
{\scriptsize\tableofcontents}

\section{Motivation and overview of the paper } 
\label{s:intro}
%
%
In \cite{cgg14ipl}, 
we proved that
if $f:\N\longrightarrow\N$ is non decreasing then conditions (1) and (2) are equivalent
\begin{description}
\item[(1)] 
 \begin{description}
\item[(a)]  for all $a,b\in\N$, $a-b$ divides $f(a)-f(b)$, and 
\item[(b)] for all $a\in\N$, $f(a)\geq a$,
\end{description}
\item[(2)]  every lattice  $\+L$ of  regular  subsets of $\N$ which is closed under $x\mapsto x-1$, i.e., $L\in\+L$ implies $\{n\mid n+1\in L\}\in\+L$, is  also closed under $f^{-1}$,  i.e., for every $L\in\+L$, $f^{-1}(L)=\{n\in\N\mid f(n)\in L\}\in \+L$.
\end{description}

For which (semi)rings does this property   hold\,?    For instance, does it hold for  the ring of integers $\Z$ or  the rings of $p$-adic integers $\Z_p$\,? To extend this property to diverse  structures, we begin by rewriting the two conditions (1) and (2)
in more algebraic terms.

Observing that condition (1) is equivalent to  the notion of  ``congruence preservation" (Section \ref{ss:CPD}
Theorem \ref{thm:CPsurN}) in  the case of  $\N$,  we  will use the latter notion of congruence preservation instead of condition (1). The general notion of congruence  
preservation is defined in Definition \ref{def:congCompat 1} for arbitrary algebras.
This will allow to consider general algebras in the sense of universal algebra instead of just (semi)rings.

Moreover, as regular subsets coincide with recognizable subsets for $\N$ (Remark \ref{rk:rr}), we will  use ``recognizable" subsets  in condition {\bf (2)} (see Section \ref{sec:rec}) instead of regular subsets, again leading to an algebraic statement also suitable for general algebras. 

The above equivalence can thus be restated for the original case $\langle \N ;  +, \times\rangle$ as 
\begin{theorem}\label{thm:ipl1}
If $f:\N\longrightarrow\N$ is non decreasing 
 then conditions (1) and (2) below are equivalent
\begin{description}
\item[(1)] $f$ is  congruence preserving on $\langle \N ;  +, \times\rangle$ and, for all $a\in\N$, $f(a)\geq a$
\item[(2)] for every recognizable subset $L$ of \ $\langle \N; +, \times\rangle$ the smallest  lattice of subsets of $\N$ containing $L$  and closed under $x\mapsto x-1$ is  also closed under $f^{-1}$.
\end{description}
\end{theorem}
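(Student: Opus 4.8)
The plan is to obtain Theorem~\ref{thm:ipl1} not by a fresh argument but as a pure \emph{translation} of the equivalence proved in \cite{cgg14ipl}, replacing the two ingredients of that equivalence (the divisibility clause and the notion of regular set) by their announced reformulations. So the task splits into matching condition (1) of Theorem~\ref{thm:ipl1} with the original condition (1) of \cite{cgg14ipl}, and matching condition (2) here with the original condition (2) there.

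For the first matching I would invoke Theorem~\ref{thm:CPsurN}: for a map $f:\N\to\N$, congruence preservation on $\langle\N;+,\times\rangle$ is equivalent to the divisibility clause that $a-b$ divides $f(a)-f(b)$ for all $a,b\in\N$ (the original clause (1a)). Since the inequality clause $f(a)\geq a$ (the original (1b)) is carried over verbatim, condition (1) of Theorem~\ref{thm:ipl1} is word-for-word equivalent to condition (1) of \cite{cgg14ipl}.

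For the second matching I would first appeal to Remark~\ref{rk:rr}, by which the regular subsets of $\N$ are exactly the recognizable ones, so the ambient class of sets is unchanged. The only genuine point to check is that the two phrasings of the lattice closure property coincide. The original statement quantifies over \emph{all} lattices $\+L$ of (regular) subsets closed under $x\mapsto x-1$, asserting each is closed under $f^{-1}$; the present statement instead fixes a recognizable $L$ and asserts closure under $f^{-1}$ of the \emph{smallest} lattice $\+L_L$ containing $L$ and closed under $x\mapsto x-1$. One direction is immediate, since $\+L_L$ is itself one of the lattices quantified over. For the converse, take any lattice $\+L$ of recognizable sets closed under $x\mapsto x-1$ and any $M\in\+L$; by minimality $\+L_M\subseteq\+L$, and since $\+L_M$ is closed under $f^{-1}$ we get $f^{-1}(M)\in\+L_M\subseteq\+L$, so $\+L$ is closed under $f^{-1}$.

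Putting the two matchings together, both conditions of Theorem~\ref{thm:ipl1} are equivalent to the respective conditions of \cite{cgg14ipl}, whose equivalence is granted; hence they are equivalent to one another. I do not expect a real obstacle here: the mathematical content lives in \cite{cgg14ipl} and in Theorem~\ref{thm:CPsurN}, both of which may be cited, and the only care needed is the universally-quantified-versus-generated reformulation of the lattice condition carried out in the previous paragraph.
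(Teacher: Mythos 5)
Your route is the same as the paper's: there, too, Theorem \ref{thm:ipl1} is not given a fresh argument but is presented (see the lead-in to Theorem \ref{thm:ipl} in Section \ref{s:35}) as a restatement of the equivalence of \cite{cgg14ipl}, with Theorem \ref{thm:CPsurN} handling condition (1) and Remark \ref{rk:rr} handling the passage from regular to recognizable sets. Your third paragraph, reducing the universal quantification over all decrement-closed lattices of regular sets to the single generated lattice $\+L_L$ (instantiation in one direction, minimality of $\+L_M$ inside any such lattice containing $M$ in the other), is correct and is in fact a point the paper glosses over; note only that it implicitly uses that every set in the generated lattice is again regular, which holds because decrement, finite unions and finite intersections preserve regularity.

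The one real defect is your quotation of Theorem \ref{thm:CPsurN}. It does not assert that congruence preservation is equivalent to the divisibility clause alone; it asserts equivalence with the conjunction of (i) the divisibility clause and (ii) ``$f$ is constant or $f(x)\geq x$ for all $x$''. Divisibility alone does not imply congruence preservation: the paper's Remark \ref{exCPnonsurlineaire} exhibits a non-constant $F$ satisfying (i) with $F(1)<1$ which is not congruence preserving. Your matching of the two conditions (1) is nevertheless correct, because the clause $f(a)\geq a$ sits on both sides and subsumes disjunct (ii): by Theorem \ref{thm:CPsurN}, congruence preservation together with $f(a)\geq a$ is equivalent to (i) $\wedge$ (ii) $\wedge$ ($f(a)\geq a$ for all $a$), which is equivalent to (i) $\wedge$ ($f(a)\geq a$ for all $a$). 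So state the cited theorem correctly and insert this one line. A smaller citation point: Theorem \ref{thm:CPsurN} concerns $\langle\N;+\rangle$, so invoking it for $\langle\N;+,\times\rangle$ (for both congruence preservation and recognizability) also requires Corollary \ref{+CPimpliesXCPN}.
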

In the present paper, we try and generalize Theorem \ref{thm:ipl1} as much as possible: i.e., for which classes of algebras does a similar Theorem hold\,?  We  investigate in a general framework the relationships  between congruence preservation, recognizability and lattices 
 or Boolean algebras of recognizable sets.

\medskip
Formal definitions are recalled in Section \ref{s:def}.  Besides the usual universal algebra notion of congruence preservation, we consider a similar notion of stable preorder preservation.  We also extend to general algebras the notions of recognizability, syntactic preorder and syntactic congruence from language theory.

In Section \ref{ss:frying pan}, we prove Theorem \ref{thm:ipl1} (cf. Theorem \ref{thm:ipl} in Section \ref{s:35}).  To this end, using the characterization of congruences on $\langle\N;+\rangle$ (Sections \ref{s:CN+} and \ref{s:FryingPan}) we prove that congruence preserving functions on  $\langle\N;+\rangle$ are exactly those satisfying conditions {\bf (1) (a)} and  {\bf (b)} supra, a result interesting per se (Theorem \ref{thm:CPsurN}, Section \ref{ss:CPD}).
Functions satisfying condition {\bf (1) (a)} on $\langle\N;+\rangle$  have been characterized in \cite{cgg14}.  They
can be very complex, for instance $x\mapsto$ if $x=0$ then 1 else $\lfloor e x!\rfloor$ satisfies conditions {\bf (1) (a)}  and  {\bf (b)}.
In Section \ref{s:34} we prove that stable preorder preserving functions on $\langle\N;+\rangle$ are exactly the non decreasing congruence preserving functions (Theorem 
 \ref{prop:miracle}). All these results on $\langle\N;+\rangle$  hold on $\langle\N;+,\times\rangle$. In Section   \ref{ss:Ntimes}, we generalize Theorem  \ref{thm:ipl1}
to the monoid $\langle\N;\times\rangle$.
Moreover, we give a very simple characterization of the corresponding subclass of congruence preserving functions: this subclass consists of all monomial functions $x\mapsto kx^n$ (Theorem \ref{thm:iplX}).

We prove our main results  in Section \ref{s:genResFin}. We consider variants of Theorem 1.1 for algebras as general as possible
and a version of condition {\bf (2)} of Theorem 1.1
involving the lattices $\+L_\+A(L)$ 
and the Boolean algebras $\+B_\+A(L)$ of preimages of a recognizable set $L$
by derived unary operations of the algebra (such as translations, quotients,\ldots, cf. Definition~\ref{def:Lat-LAL}).
Our results show that congruence preservation of a function $f$ 
is related to the condition that $f^{-1}(L)$ belongs to the Boolean algebra $\+B_\+A(L)$ for all recognizable $L$
whereas stable preorder preservation  
is related to the condition that $f^{-1}(L)$ belongs to the lattice $\+L_\+A(L)$.
 Theorem~\ref{th:LatticeGen} is a general wild version of Theorem 1.1
relating stable preorder preserving functions to the
condition $f^{-1}(L)$ belongs to the complete lattice variant of $\+L_\+A(L)$ for all sets $L$.
Theorem~\ref{th:RecLatGeneralResiduallyFinite} in section~\ref{s:43}
shows that, on any algebra,
stable preorder preserving functions satisfy $f^{-1}(L)\in\+L_\+A(L)$ for recognizable $L$.
The reciprocal is true for  sp-residually  finite algebras
(a strong variant of residual finiteness, cf. Definition~\ref{def:residually finite c sp}).
 Avatars with congruence preservation and the Boolean algebra $\+B_\+A(L)$
are stated in Theorems~\ref{thm:cp bhull} and \ref{th:RecLatGeneralResiduallyFinite c}.
 In case the algebra contains a group operation and satisfies a strong form of residual finiteness
it turns out that all the conditions considered in the paper are equivalent.

Section \ref{ss:Z} is devoted to $\langle\Z;+,\times\rangle$  to which
Theorem~\ref{th:RecLatGeneralResiduallyFinite} applies, giving Theorem~\ref{thm:mainZ1} in Section \ref{s:55}.
Though the congruence preserving functions can be very intricate \cite{cgg14B},
for instance,
$$
f(n)= \left\{\begin{array}{ll}
\sqrt{\dfrac{e}{\pi}}
\times  \dfrac{\Gamma(1/2)}{2\times4^n\times n!}
\displaystyle\int_1^\infty e^{-t/2}(t^2-1)^n dt
&\quad\text{for $n\geq 0$}
\\
-f(|n|-1)&\quad\text{for $n<0$}
\end{array}\right.\,,
$$
the lattices $\+L_{\langle\Z;+,\times\rangle}(L)$ for $L$ recognizable are very simple, cf. Lemma~\ref{l:lattice L Z} in Section 5.4.
 
In Section \ref{ss:ZpZhat}, congruence preservation for the rings of $p$-adic integers
is treated similarly  (Proposition \ref{prop:CPlatZp}, section \ref{s:latticesZp}). 
\section{Preliminary definitions }\label{s:def}

We here recall the useful definitions, notations and prove basic results.
\subsection{Stable relations  and congruences on an algebra}
\begin{definition} An {\em algebra } $\+A=\langle A;\Xi\rangle$ consists of a nonempty carrier set $A$ together with a set of operations $\Xi$, each $\xi\in\Xi$ is  a mapping $\xi\colon A^{ar(\xi)}\to A$ where $ar(\xi)\in\N$ is the arity of $\xi$.
\end{definition}
\begin{definition}\label{def:fpreserveR}
Let $A$ be a set and let $f$ be a function $f\colon A^p\longrightarrow A$. A binary relation $\rho$ on  $A$ 
is said to be {\em compatible} with $f$ if and only if, for all elements
$x_1,\ldots,x_{p}, y_1,\ldots,y_{p}$ in $A$
\begin{equation}\label{eq:preserveR}
(x_1\rho\, y_1\ \wedge\cdots\wedge\ x_{p}\rho\, y_{p})\quad
  \Longrightarrow \quad f(x_1,\ldots,x_p)\;\rho\, f(y_1,\ldots,y_p)
\end{equation}  
\end{definition}
\begin{definition}\label{def:stable}  A  binary relation $\rho$   on  $A$ is said to be {\em stable} on the algebra $\+A=\langle A;\Xi\rangle$  
 if it is compatible with each  operation $\xi\in\Xi$,
 i.e., if $\xi$ is $n$-ary then,  for all $x_1,\ldots,x_{n},\allowbreak y_1,\ldots,y_{n}$ in $A$
\begin{equation}\label{eq:stable}
(x_1\rho \; y_1\ \wedge\cdots\wedge\ x_{n}\rho  \; y_{n})
\quad \Longrightarrow\quad 
 \xi(x_1,\ldots,x_{n})\;\rho \; \xi(y_1,\ldots,y_{n})
\end{equation}
\end{definition}
\begin{definition}
A stable  equivalence relation  on $\+A$ is called an  {\em $\+A$-congruence}. If there are finitely many equivalence classes, it is said to be a {\em finite index} congruence.
\end{definition}
%
%
\subsection{Congruence  and stable (pre)order preservation}
\subsubsection{Definitions}
The {\em substitution property}, introduced by Gr\"atzer  in \cite{gratzer} page 44,   has since been renamed {\em congruence preservation} in the literature.  We shall also use an extension dealing with (pre)orders instead of congruences.
\begin{definition}\label{def:congCompat 1}
Let $\+A=\langle A;\Xi\rangle$ be an algebra.

1) A function $f\colon A^p\longrightarrow A$
is {\em $\+A$-congruence preserving} if  all $\+A$-congruences are compatible with $f$, i.e., for every congruence $\sim$ on  $\+A$ and all elements
$x_1,\ldots,x_{p}, \allowbreak  y_1,\ldots,y_{p}$ in $A$
\begin{equation}\label{eq:cp}
(x_1\sim y_1\ \wedge\cdots\wedge\ x_{p}\sim y_{p})\quad
  \Longrightarrow \quad f(x_1,\ldots,x_p)\sim f(y_1,\ldots,y_p).
\end{equation}  

2) A function $f\colon A^p\longrightarrow A$ is {\em $\+A$-stable 
 (pre)order preserving}
if all $\+A$-stable  (pre)orders are compatible with $f$, i.e., for every stable   (pre)order $\preceq$ on  $\+A$ and all elements
$x_1,\ldots,x_{p}, y_1,\ldots,y_{p}$ in $A$
\begin{equation}\label{eq:spp}
(x_1\preceq y_1\ \wedge\cdots\wedge\ x_{p}\preceq y_{p})\quad
  \Longrightarrow \quad f(x_1,\ldots,x_p)\preceq f(y_1,\ldots,y_p).
\end{equation}
\end{definition}
When the algebra $\+A$ is clear from the context,  $f$ is simply said to be {\em congruence preserving} (resp. {\em stable (pre)order preserving}).

Congruences and congruence  preservation can also be defined in terms of morphisms.

\begin{definition}
For $f\colon A\rightarrow B$, the kernel $Ker(f)$ of  $f$ is defined by $Ker(f)=\{(x,y)\; | \; f(x)=f(y)\}$.
\end{definition}

\begin{lemma} \label{CongVsHomo}  1) A binary relation on $\+A$ is a congruence if and only if it is the {\em kernel}
$\kernel(\varphi)=\{(x,y)\mid\varphi(x)=\varphi(y)\}$ of some homomorphism 
$\varphi\colon A\rightarrow B$ onto some algebra $\+B$.

2) $f\colon A\rightarrow A$ is congruence preserving if and only if, for every homomorphism $\varphi\colon A\rightarrow B$, $Ker(\varphi)\subseteq Ker(\varphi\circ f)$.
\end{lemma}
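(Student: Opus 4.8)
The plan is to prove both parts by the standard quotient-algebra construction from universal algebra, using the stability hypothesis at exactly one point, namely where the quotient operations must be shown well defined; everything else is a direct rewriting of the definitions.

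For part 1, the direction from right to left is immediate. If $\rho=\kernel(\varphi)$ for a homomorphism $\varphi\colon A\to B$, then $\rho$ is an equivalence relation since it is defined by equality of images. To see it is stable, fix an operation $\xi\in\Xi$ of arity $n$ and suppose $x_i\,\rho\,y_i$ for $1\le i\le n$, i.e. $\varphi(x_i)=\varphi(y_i)$; applying the homomorphism property of $\varphi$ to both the $x$-tuple and the $y$-tuple gives
\[
\varphi(\xi(x_1,\ldots,x_n))=\xi(\varphi(x_1),\ldots,\varphi(x_n))=\xi(\varphi(y_1),\ldots,\varphi(y_n))=\varphi(\xi(y_1,\ldots,y_n)),
\]
so $\xi(x_1,\ldots,x_n)\,\rho\,\xi(y_1,\ldots,y_n)$, and $\rho$ is a congruence. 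For the converse direction, starting from a congruence $\rho$ I would build the quotient algebra $\+A/\rho$ whose carrier is the set of $\rho$-classes and whose operations are defined on representatives by $\xi([x_1],\ldots,[x_n])=[\xi(x_1,\ldots,x_n)]$. The only step that uses the hypothesis is the well-definedness of these operations: this is precisely the stability of $\rho$ (Definition~\ref{def:stable}), which guarantees that replacing each $x_i$ by a $\rho$-equivalent $y_i$ does not change the class of the output. The canonical surjection $\pi\colon A\to A/\rho$, $x\mapsto[x]$, is then a homomorphism by construction and satisfies $\kernel(\pi)=\rho$, exhibiting $\rho$ as the kernel of a homomorphism onto $\+A/\rho$.

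For part 2 (with $f$ unary), I would use part 1 to pass freely between congruences and kernels of homomorphisms. If $f$ is congruence preserving, then for any homomorphism $\varphi\colon A\to B$ the relation $\kernel(\varphi)$ is a congruence, hence compatible with $f$; unwinding compatibility in the unary case, $\varphi(x)=\varphi(y)$ implies $\varphi(f(x))=\varphi(f(y))$, which is exactly the inclusion $\kernel(\varphi)\subseteq\kernel(\varphi\circ f)$. Conversely, assuming this kernel inclusion for every homomorphism, let $\sim$ be an arbitrary congruence; by part 1 write $\sim=\kernel(\pi)$ for the quotient map $\pi$, so that $x\sim y$ yields $(x,y)\in\kernel(\pi)\subseteq\kernel(\pi\circ f)$, i.e. $f(x)\sim f(y)$. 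Thus $\sim$ is compatible with $f$, and $f$ is congruence preserving.

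The only genuinely substantive step is the well-definedness of the quotient operations in part 1, and I do not expect it to present a real obstacle: the result is the universal-algebra analogue of the homomorphism/quotient correspondence, and the stability hypothesis of Definition~\ref{def:stable} is tailored precisely to make that construction go through.
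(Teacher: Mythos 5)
Your proof is correct, and it is the standard argument: the paper itself states Lemma~\ref{CongVsHomo} without proof, treating it as the classical universal-algebra correspondence between congruences and kernels, so there is no competing proof to diverge from. Your quotient-algebra construction (with stability invoked exactly to make the quotient operations well defined, and the canonical surjection supplying the required homomorphism) is precisely the construction the paper implicitly relies on and reuses explicitly in the proof of Proposition~\ref{thm:CP=FactorThroughMorphism}.
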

The next result shows that congruence preserving functions  somehow extend  operations of the algebra.
\begin{proposition}\label{thm:CP=FactorThroughMorphism}
Let $\+A=\langle A;\Xi\rangle$ be an algebra
and let $f:A^n\to A$ with $n\geq1$.
The following conditions are equivalent:
\begin{enumerate}
\item[(i)]
$f$ is $\+A$-congruence preserving,
\item[(ii)]
For every algebra $\+B=\langle B;\Theta\rangle$ having the same signature as $\+A$ and every surjective morphism $\varphi\colon A \to B$ there exists a unique 
function $f_\varphi\colon B^n\to B$ such that 
$\varphi(f(x_1,\ldots,x_n)) = f_\varphi(\varphi(x_1),\ldots,\varphi(x_n))$,
(i.e., $\varphi$ is also a morphism between the algebras $\langle A;\Xi\cup\{f\}\rangle$ and
$\langle B;\Theta\cup\{f_\varphi\}\rangle$,
namely the  diagram of Figure~\ref{NEW fig:fak} is commutative). 
\end{enumerate}
\begin{figure}[h]
\[\xymatrix{
A^n \ar[rr]^{\text{\normalsize$f$}} 
 \ar[d]_{\text{\normalsize$(\varphi,\ldots,\varphi)$}}&& A\ar[d]^{\text{\normalsize$\varphi$}}
\\
B^n  \ar[rr]_{\text{\normalsize$f_\varphi$}}  && B
}\]
\caption{From $f\colon A^n\to A$ to $f_\varphi \colon B^n \to B$}\label{NEW fig:fak}
\end{figure}
\end{proposition}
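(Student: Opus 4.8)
The plan is to build everything on Lemma~\ref{CongVsHomo}, which identifies $\+A$-congruences with kernels of surjective morphisms, and then run the standard ``factoring through a quotient'' argument in both directions. For (i) $\Rightarrow$ (ii), I would fix a surjective morphism $\varphi\colon A\to B$ onto an algebra $\+B$ of the same signature and define $f_\varphi$ by choosing, for each tuple $(b_1,\ldots,b_n)\in B^n$, preimages $a_i\in\varphi^{-1}(b_i)$ (available since $\varphi$ is onto) and setting $f_\varphi(b_1,\ldots,b_n)=\varphi(f(a_1,\ldots,a_n))$. The heart of the matter is well-definedness: if $\varphi(a_i)=\varphi(a_i')$ for every $i$, then each pair $(a_i,a_i')$ lies in $\kernel(\varphi)$, which by Lemma~\ref{CongVsHomo}(1) is an $\+A$-congruence; since $f$ is congruence preserving, $\kernel(\varphi)$ is compatible with $f$, forcing $\varphi(f(a_1,\ldots,a_n))=\varphi(f(a_1',\ldots,a_n'))$. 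By construction the diagram of Figure~\ref{NEW fig:fak} commutes, and combined with the assumption that $\varphi$ already respects the operations of $\Xi$, this makes $\varphi$ a morphism between the enriched algebras $\langle A;\Xi\cup\{f\}\rangle$ and $\langle B;\Theta\cup\{f_\varphi\}\rangle$. Uniqueness is immediate: surjectivity of $\varphi$ means every element of $B^n$ has the form $(\varphi(x_1),\ldots,\varphi(x_n))$, so the commutation identity pins down every value of $f_\varphi$.

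For the converse (ii) $\Rightarrow$ (i), I would start from an arbitrary $\+A$-congruence $\sim$ and realize it, via Lemma~\ref{CongVsHomo}(1) — concretely, through the quotient algebra $\+A/{\sim}$ together with its canonical surjection $\varphi$ — as $\sim=\kernel(\varphi)$. Applying hypothesis (ii) to this $\varphi$ produces $f_\varphi$ with $\varphi(f(x_1,\ldots,x_n))=f_\varphi(\varphi(x_1),\ldots,\varphi(x_n))$. Then, whenever $x_i\sim y_i$ for all $i$, that is $\varphi(x_i)=\varphi(y_i)$, substituting into this identity yields $\varphi(f(x_1,\ldots,x_n))=\varphi(f(y_1,\ldots,y_n))$, i.e.\ $f(x_1,\ldots,x_n)\sim f(y_1,\ldots,y_n)$. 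Hence $\sim$ is compatible with $f$, and since $\sim$ was an arbitrary congruence, $f$ is $\+A$-congruence preserving.

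The only genuine obstacle is the well-definedness step in the forward direction, and this is precisely where congruence preservation is consumed; all remaining verifications are formal. The argument appeals to nothing beyond Lemma~\ref{CongVsHomo} and the definitions, so it goes through verbatim for $n$-ary $f$ with any $n\geq1$, and makes clear that condition (ii) is just the categorical reformulation of condition (i).
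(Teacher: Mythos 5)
Your proposal is correct and follows essentially the same route as the paper: the forward direction defines $f_\varphi$ via preimages and uses the fact that $\kernel(\varphi)$ is a congruence compatible with $f$ to get well-definedness (with uniqueness from surjectivity), and the converse instantiates (ii) at the canonical surjection onto the quotient algebra $\+A/\!\!\sim$ and substitutes. The only cosmetic difference is that you spell out the uniqueness argument, which the paper leaves implicit.
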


\begin{proof} $(i)\Longrightarrow (ii)$ Assume $(i)$. 
As $\varphi$ is a morphism, $Ker(\varphi)$ is the congruence:
$x\sim y$ if and only if $\varphi(x)=\varphi(y)$;
 as $f$ is congruence preserving, $\varphi(x_i)=\varphi(y_i)$ for $1\leq i\leq n$ implies $\varphi(f(x_1,\ldots,x_n))=\varphi(f(y_1,\ldots,y_n))$  hence $f_\varphi$  is well defined and equal to the common value of all the $ \varphi(f(y_1,\ldots,y_n))$  for $y_i$'s such that $ \varphi(y_i)= \varphi(x_i)$. 

$(ii)\Longrightarrow (i)$  Assume (ii). Let $\sim$ be a congruence on $A$ and let $\+B=\langle A/\!\!\sim;\Xi/\!\!\sim\rangle$ be the quotient algebra. The canonical quotient map $\varphi\colon A\to A/\!\!\sim$ is a surjective morphism.
By $(ii)$, $f$ factors through $A/\!\!\sim$ to $f_\varphi$.
In particular, if $y_i\sim  x_i$ for $1\leq  i \leq n$ then
$\varphi(y_i)=\varphi(x_i)$ hence 
$f_\varphi(\varphi(y_1),\ldots,\varphi(y_n))=f_\varphi(\varphi(x_1),\ldots,\varphi(x_n))$.
Using $(ii)$, we get
$\varphi(f(y_1,\ldots,y_n)) = \varphi(f(x_1,\ldots,x_n))$,
i.e.,  $f(y_1,\ldots,y_n) \sim f(x_1,\ldots,x_n)$.
Hence $f$ preserves congruences.
\end{proof}

\subsubsection{Reduction from arity $n$ to arity one}
Congruence preservation of a function of arbitrary arity can be characterized via congruence preservation of its restrictions to unary functions. This enables us to simplify some proofs.  

Reducing to
unary functions  is also a key point in the definition of recognizability, syntactic congruences and syntactic preorders
for general algebras.
\begin{definition}\label{not:freeze}
Given $n\geq2$, a $n$-ary $f:A^n\to A$,  an index $i\in\{1,\ldots,n\}$, and 
$\vec{c}=(c_1,\ldots,c_{i-1},c_{i+1},\ldots,c_n)\in A^{n-1}$, we denote by
$f_i^{\vec{c}}$ the unary function $A\to A$ (called the {\em frozen function } of $f$ relative to $i$, $\vec c$) obtained by fixing all arguments to $\vec c$ except the $i$-th one. In other words,
$$
f_i^{\vec{c}}(x)=f(c_1,\ldots,c_{i-1},x,c_{i+1},\ldots,c_{n}).
$$
\end{definition}

\begin{lemma}\label{lem:de n a 1}
1) An equivalence relation (resp.  (pre)order) is compatible with an $n$-ary function $f\colon A^n\longrightarrow A$ if and only if for all $i\in\{1,\ldots,n\}$, for all  $\vec{c} =(c_1,\ldots,c_{i-1},c_{i+1},\ldots,c_n)\in A^{n-1}$, it is compatible with the unary function $f^{\vec{c}}_i$.

2) Given an algebra $\+A=\langle A;\Xi\rangle$, a $n$-ary function $f\colon A^n\longrightarrow A$ 
 is $\+A$-congruence preserving (resp. $\+A$-stable (pre)order preserving) if and only if for all $i\in\{1,\ldots,n\}$, for all  $\vec{c} =(c_1,\ldots,c_{i-1},c_{i+1},\ldots,c_n)\in A^{n-1}$,  the unary function
$f^{\vec{c}}_i$ is $\+A$-congruence preserving (resp. $\+A$-stable (pre)order preserving).
\end{lemma}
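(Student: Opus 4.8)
The plan is to establish part 1 first, since part 2 drops out of it by a formal manipulation of quantifiers. A useful preliminary observation is that the only properties of the relation $\rho$ I shall ever invoke are reflexivity and transitivity; both hold whether $\rho$ is an equivalence relation, a preorder, or an order, so a single argument handles all three cases at once (symmetry is never needed).

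For the forward implication of part 1, assume $\rho$ is compatible with $f$ in the sense of \eqref{eq:preserveR}. Fix an index $i$ and a tuple $\vec c=(c_1,\ldots,c_{i-1},c_{i+1},\ldots,c_n)$, and suppose $x\mathrel{\rho}y$. To apply compatibility of $f$ I feed it the two $n$-tuples that agree with $\vec c$ off coordinate $i$ and carry $x$, respectively $y$, in coordinate $i$: the hypotheses of \eqref{eq:preserveR} are met because $c_j\mathrel{\rho}c_j$ by reflexivity and $x\mathrel{\rho}y$ by assumption. The conclusion is exactly $f^{\vec c}_i(x)\mathrel{\rho}f^{\vec c}_i(y)$, so $\rho$ is compatible with every frozen function.

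The converse is the heart of the lemma, and the telescoping step is the one place that requires care. Assume $\rho$ is compatible with each $f^{\vec c}_i$, and suppose $x_1\mathrel{\rho}y_1,\ldots,x_n\mathrel{\rho}y_n$. The idea is to travel from $f(x_1,\ldots,x_n)$ to $f(y_1,\ldots,y_n)$ by switching the arguments from $x$ to $y$ one coordinate at a time, so that each individual switch is governed by compatibility with a single frozen unary function. Concretely, for $0\le k\le n$ put $z_k=f(y_1,\ldots,y_k,x_{k+1},\ldots,x_n)$, so that $z_0=f(x_1,\ldots,x_n)$ and $z_n=f(y_1,\ldots,y_n)$. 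Consecutive terms $z_k$ and $z_{k+1}$ differ only in coordinate $k+1$, where they carry $x_{k+1}$ and $y_{k+1}$ respectively; taking $\vec c=(y_1,\ldots,y_k,x_{k+2},\ldots,x_n)$ we have $z_k=f^{\vec c}_{k+1}(x_{k+1})$ and $z_{k+1}=f^{\vec c}_{k+1}(y_{k+1})$, whence $x_{k+1}\mathrel{\rho}y_{k+1}$ and compatibility of $\rho$ with $f^{\vec c}_{k+1}$ give $z_k\mathrel{\rho}z_{k+1}$. Chaining these $n$ relations by transitivity of $\rho$ yields $z_0\mathrel{\rho}z_n$, i.e.\ $f(x_1,\ldots,x_n)\mathrel{\rho}f(y_1,\ldots,y_n)$, which is compatibility of $\rho$ with $f$. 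This finishes part 1.

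Part 2 now follows by commuting universal quantifiers. By Definition \ref{def:congCompat 1}, $f$ is $\+A$-congruence preserving iff every $\+A$-congruence $\sim$ is compatible with $f$; by part 1 this is equivalent to requiring that every $\+A$-congruence be compatible with every frozen function $f^{\vec c}_i$. As the two clauses ``for all congruences'' and ``for all $(i,\vec c)$'' are both universal, they may be exchanged, and the resulting statement says precisely that for each $(i,\vec c)$ every $\+A$-congruence is compatible with $f^{\vec c}_i$, that is, each $f^{\vec c}_i$ is $\+A$-congruence preserving. Replacing ``congruence'' throughout by ``stable (pre)order'' gives verbatim the stable (pre)order preserving case, completing the proof.
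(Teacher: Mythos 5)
Your proof is correct and follows essentially the same route as the paper's: the forward direction via reflexivity, the converse by switching one coordinate at a time through frozen functions and chaining with transitivity (you carry out the general telescoping that the paper only illustrates for $n=2$), and part 2 as a formal consequence of part 1. Your explicit observation that only reflexivity and transitivity are used, so that the equivalence, preorder, and order cases are handled uniformly, matches the paper's remark that "the case of preorders is similar."
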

\begin{proof} We prove 1) for an equivalence $\sim$, the case of preorders is similar, and 2) is an immediate consequence of 1). The left to right implication in 1) is clear. For the converse implication, use the transitivity of $\sim$, e.g.,  assuming $n=2$, if $\sim$ is compatible with $f^{a_1}_1$ and $f^{b_2}_2$, then $a_1\sim b_1$ and $a_2\sim b_2$  imply $f(a_1,a_2)\sim f(a_1,b_2)$ and $f(a_1,b_2)\sim f(b_1,b_2)$, hence $f(a_1,a_2)\sim  f(b_1,b_2)$.
\end{proof}
%
\subsubsection{Syntactic congruence and preorder}

To every  subset of the carrier set of an algebra are  associated a 
 {\em syntactic congruence} and a   {\em syntactic preorder}. 
Let us  first define the notion of {\em derived unary operation.}

\begin{definition} \label{def:gen} [Derived Unary Operations] Given $\+A=\langle A;\Xi\rangle$, 
we denote by $ {\DUO}(\+A)$  the set of unary functions $\gamma$ defined by composing frozen functions of the operations in $\Xi$, i.e., $\gamma=\xi_{1,j_1}^{\vec {c_1}}\circ \cdots\circ\xi_{n,j_n}^{\vec {c_n}}$ where
  $n\in\N\setminus\{0\}$,  $i=1,\ldots,n,\ \xi_i\in\Xi$, $1\leq j_i\leq ar(\xi_i)$, and $\vec{c_i}\in A^{ar(\xi_i)-1}$. For $n=0$, $\gamma$ is the identity on $A$. 
\end{definition}
\begin{example} 1) For  $\+A=\langle \N;+\rangle$,  ${\textit{DUO}}(\+A)$ is the set of translations $x\mapsto x+a$, for $a\in\N$. 

2) For  $\+A=\langle \N\setminus\{0\};\times\rangle$
(resp. $\+A=\langle \N;\times\rangle$), 
 ${\textit{DUO}}(\+A)$ is the set of homotheties $x\mapsto ax$, 
with $a\geq1$ (resp. $a\in\N$).

3)  For $\+S=\langle\Sigma^*; \cdot \rangle$, the algebra of words with concatenation,  ${\textit{DUO}}(\+A)$ is the set of left and right multiplications by words $x\mapsto w\cdot x\cdot w'$ for $w,w'\in\Sigma^*$.
\end{example}
%
%
%
Using the notion of Derived Unary Operations, we can define the syntactic preorder and 
syntactic congruence associated with $L$.   

 \begin{dlemma} \label{syntacticL_congruence}  For $L\subseteq A$, 
the relation $\leq_L $  defined by  
\begin{equation}\label{eq:sord}
x\leq_L  y\quad {\text{ if and only if } } \quad\forall \gamma\in {\textit{DUO}}(\+A) \quad \big(\gamma(y)\in L \Longrightarrow \gamma(x)\in L \big)
\end{equation} 
   is a stable preorder. It is called the {\em syntactic preorder} associated with $L$.
   
The relation $\sim_L $  defined by 
\begin{equation}\label{eq:scong}
 x\sim_L  y \quad {\text{ if and only if } } \quad\forall \gamma\in {\textit{DUO}}(\+A) \quad \big(\gamma(x)\in L \Longleftrightarrow \gamma(y)\in L \big)
\end{equation} 
   is the congruence associated with  the preorder $\leq_L $. It is called the {\em syntactic congruence} associated with $L$. 
 \end{dlemma}       
\begin{proof}  It is clear that $\leq_L $ is  reflexive and transitive.
Also,  if $x\leq_L  y$ and $\delta\in {\textit{DUO}}(\+A)$ then equation \eqref{eq:sord}
(applied with the composition $\gamma\circ\delta$) insures that
for all $\gamma\in {\textit{DUO}}(\+A)$ we have 
$\gamma(\delta(x))\in L\iff \gamma(\delta(x))\in L$ ; hence
$\delta(x)\leq_L  \delta(y)$.
Applying Lemma \ref{lem:de n a 1}, we see that $\leq_L $ is a stable preorder.

 Clearly, $\sim_L $ is reflexive, symmetric and transitive,  and  it is the equivalence associated with $\leq_L $. 
By equation \eqref{eq:scong}, we see that for all $\xi\in\Xi$ , for $1\leq i\leq ar(\xi)$ and $\vec c\in A^{ar(\xi)-1}$,  $\xi_i^{\vec c}$  is compatible with $\sim_L $, hence Lemma \ref{lem:de n a 1} implies that for all $\xi\in\Xi$, $\xi$  is compatible with $\sim_L $, i.e., $\sim_L $ is a congruence on $\langle A;\Xi\rangle$.
\end{proof}
\begin{remark}
Recall, that in the algebra of words $\Sigma^*$ with concatenation, 
(1)  the frozen unary operations consist in adding a fixed prefix or suffix,
(2) the family $\textit{DUO}$ consists of operations $x\mapsto uxv$
for fixed $u,v\in\Sigma^*$. if $L\subseteq\Sigma^*$ is a language then 
its syntactic congruence $x\sim_L  y$ is defined by the condition
$\forall u,v\in\Sigma^*\ (uxv\in L\Leftrightarrow uyv\in L)$.
Our notion of syntactic congruence thus generalizes the usual notion of syntactic congruence in language theory.
\end{remark}
\begin{definition}\label{df:satureCong} A set is said to be {\em  saturated with respect to an equivalence} if is is a union of equivalence classes. 
\end {definition}

Proposition \ref{p:syntactic largest} states some properties of syntactic congruences and preorders. 

\begin{proposition}\label{p:syntactic largest}
Let $L$ be a subset of an algebra $\+A$.

1) If $L$ is saturated for a congruence $\equiv$ of $\+A$ 
then $\equiv$ refines the syntactic congruence $\sim_L $ of $L$, i.e.,
$x\equiv y$ implies $x\sim_L  y$.

2) If $L$ is an initial segment of a stable preorder $\preceq$ of $\+A$ (i.e.,  if  $b\in L$ and  $x\preceq b$ then $x\in L$),
then $\preceq$ refines the syntactic congruence $\leq_L $ of $L$, i.e.,
$x\preceq y$ implies $x\leq_L  y$.
\end{proposition}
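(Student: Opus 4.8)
The plan is to reduce both statements to a single preliminary observation: every derived unary operation preserves stable relations. Concretely, if $\rho$ is a stable relation on $\+A$ (in particular, a congruence or a stable preorder) and $x\,\rho\,y$, then $\gamma(x)\,\rho\,\gamma(y)$ for every $\gamma\in\DUO(\+A)$. This holds because, by Definition~\ref{def:gen}, each $\gamma$ is a composition of frozen operations $\xi_i^{\vec c}$; each such frozen operation is compatible with $\rho$ by Lemma~\ref{lem:de n a 1} (since $\rho$ is compatible with every $\xi\in\Xi$), and compatibility of unary functions is clearly closed under composition. I would state and verify this observation first, as it does all the work for both parts.

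For part~1, I would argue directly from the defining condition \eqref{eq:scong} of $\sim_L$. Assume $x\equiv y$ and fix $\gamma\in\DUO(\+A)$. By the observation, $\gamma(x)\equiv\gamma(y)$. Since $L$ is saturated for $\equiv$, i.e. a union of $\equiv$-classes, membership in $L$ is constant on each $\equiv$-class, so $\gamma(x)\in L\iff\gamma(y)\in L$. As $\gamma$ was arbitrary, condition \eqref{eq:scong} yields $x\sim_L y$, which is exactly the claim that $\equiv$ refines $\sim_L$.

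For part~2, I would argue similarly from the defining condition \eqref{eq:sord} of $\leq_L$. Assume $x\preceq y$ and fix $\gamma\in\DUO(\+A)$. By the observation, $\gamma(x)\preceq\gamma(y)$. If $\gamma(y)\in L$, then since $L$ is an initial segment of $\preceq$ and $\gamma(x)\preceq\gamma(y)$, we obtain $\gamma(x)\in L$. Hence $\gamma(y)\in L\Rightarrow\gamma(x)\in L$ for every $\gamma$, which is precisely \eqref{eq:sord}; thus $x\leq_L y$, establishing that $\preceq$ refines the syntactic preorder $\leq_L$.

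There is no genuine obstacle here: the only substantive point is the preliminary observation that $\DUO$ operations respect stable relations, and even that is immediate once one invokes Lemma~\ref{lem:de n a 1} together with closure under composition. The remainder is a direct unwinding of the defining conditions \eqref{eq:scong} and \eqref{eq:sord}, using saturation in part~1 and the initial-segment hypothesis in part~2.
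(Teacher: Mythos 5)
Your proof is correct and follows essentially the same route as the paper's: both deduce $\gamma(x)\mathrel{\equiv}\gamma(y)$ (resp. $\gamma(x)\preceq\gamma(y)$) for every $\gamma\in\DUO(\+A)$ from stability, then conclude via saturation (resp. the initial-segment property) and the defining conditions of $\sim_L$ and $\leq_L$. The only difference is that you spell out, via Lemma~\ref{lem:de n a 1} and closure under composition, the preliminary fact that derived unary operations preserve stable relations, which the paper asserts without comment.
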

\begin{proof}
Assume $x\equiv y$. Since $\equiv$ is a congruence we have $\gamma(x)\equiv \gamma(y)$
for all $\gamma\in\DUO(\+A)$. If $L$ is saturated for $\equiv$ we then have
$\gamma(x)\in L \Leftrightarrow \gamma(y)\in L$ hence $x\sim_L  y$.
Similar proof with a stable preorder.
\end{proof}

\subsection{Recognizability}\label{sec:rec}

%
\begin{definition} [Recognizability] \label{def:rec}
Given $\+A=\langle A; \Xi\rangle$ an algebra, 
a  subset $B$ of $A$ is said to be  $\Xi$-{\em recognizable}
(or  $\+A$-{\em recognizable})
if there exists a finite algebra $\+M=\langle M;\Theta\rangle$ with the same signature  as $\+A$
and a surjective morphism $\varphi:\+A\to \+M$ such that 
such that $B =  \varphi^{-1}(\varphi(B))$, i.e., $B = \varphi^{-1}(T)$ for some subset $T$ of $M$.
\end{definition}

Recognizability can also be stated in terms of congruences.

\begin{lemma}\label{l:rec and congru}
Let $B$ be a subset  of $A$. The following are equivalent

1)  $B$  is $\+A$-recognizable, 

2)  $B$  is saturated with respect to some finite index congruence of $\+A$,

3) the syntactic congruence $\sim^s_B$ of $B$ has finite index.
\end{lemma}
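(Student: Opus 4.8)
The plan is to prove the equivalence of (1), (2), (3) by showing $(1)\Rightarrow(2)\Rightarrow(3)\Rightarrow(1)$, using as the main tool the characterization of recognizability via morphisms onto finite algebras together with the maximality property of the syntactic congruence established in Proposition~\ref{p:syntactic largest}.

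First I would prove $(1)\Rightarrow(2)$. Suppose $B$ is $\+A$-recognizable, so there is a surjective morphism $\varphi\colon\+A\to\+M$ onto a finite algebra $\+M=\langle M;\Theta\rangle$ with $B=\varphi^{-1}(T)$ for some $T\subseteq M$. By Lemma~\ref{CongVsHomo}, $\kernel(\varphi)$ is a congruence of $\+A$; since $\+M$ is finite and $\varphi$ is surjective, this congruence has exactly $\card(M)$ classes, hence finite index. Because $B=\varphi^{-1}(T)$ and the classes of $\kernel(\varphi)$ are exactly the fibres $\varphi^{-1}(\{m\})$, the set $B$ is the union of those fibres with $m\in T$, i.e. $B$ is saturated with respect to $\kernel(\varphi)$. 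This gives a finite index congruence saturating $B$.

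Next $(2)\Rightarrow(3)$. Assume $B$ is saturated with respect to some finite index congruence $\equiv$ of $\+A$. By Proposition~\ref{p:syntactic largest}(1), $\equiv$ refines the syntactic congruence $\sim_B$, i.e. $x\equiv y$ implies $x\sim_B y$. Refinement means every $\sim_B$-class is a union of $\equiv$-classes, so the quotient map induces a surjection from the set of $\equiv$-classes onto the set of $\sim_B$-classes; therefore the index of $\sim_B$ is at most the index of $\equiv$, which is finite. Hence $\sim_B$ has finite index. (Here I am using $\sim_B$ for what the statement writes $\sim^s_B$, the syntactic congruence of Definition-Lemma~\ref{syntacticL_congruence}.)

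Finally $(3)\Rightarrow(1)$. Assume the syntactic congruence $\sim_B$ has finite index. By Definition-Lemma~\ref{syntacticL_congruence}, $\sim_B$ is a congruence of $\+A$, so the quotient $\+M=\langle A/\!\!\sim_B;\Xi/\!\!\sim_B\rangle$ is a well-defined algebra of the same signature, finite by hypothesis, and the canonical map $\varphi\colon A\to A/\!\!\sim_B$ is a surjective morphism. It remains to check $B=\varphi^{-1}(\varphi(B))$, which amounts to showing that $B$ is saturated for $\sim_B$: if $x\sim_B y$ and $y\in B$, then applying \eqref{eq:scong} with $\gamma$ the identity (which lies in $\textit{DUO}(\+A)$ for $n=0$) yields $x\in B\Leftrightarrow y\in B$, so $x\in B$. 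Thus $B$ is $\+A$-recognizable. I expect the only mildly delicate point to be the saturation check in this last step, since one must remember that the identity is an admissible derived unary operation so that $\sim_B$-equivalence directly controls membership in $B$; everything else is a routine counting-of-classes argument.
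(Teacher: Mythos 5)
Your proof is correct and takes essentially the same approach as the paper's: both rely on the kernel/morphism correspondence (Lemma~\ref{CongVsHomo}) to pass between recognizability and finite index saturating congruences, on Proposition~\ref{p:syntactic largest} for $(2)\Rightarrow(3)$, and on the saturation of $B$ with respect to its own syntactic congruence (plus the finite quotient algebra) to close the loop. The only difference is organizational: you arrange the argument as a cycle $(1)\Rightarrow(2)\Rightarrow(3)\Rightarrow(1)$, while the paper proves $(1)\Leftrightarrow(2)$, then $(2)\Rightarrow(3)$ and $(3)\Rightarrow(2)$.
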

\begin{proof} 1) $\Leftrightarrow$ 2). Let $B$ be saturated with respect to some congruence defined by $\kernel(\varphi)$ (cf. Lemma~\ref{CongVsHomo}), then $B =  \varphi^{-1}(\varphi(B))$. As   $\kernel(\varphi)$ has finite index, $M=\varphi(A)$ is finite and $B$ is recognizable. Conversely, if $B$ is recognizable,  $B =  \varphi^{-1}(\varphi(B))$ is saturated with respect to the congruence  $\kernel(\varphi)$. As $\varphi:\+A\to \+M$ with $M$ finite,  this congruence  has a finite number of classes hence a finite index.

2) $\Rightarrow$ 3). Assume $B$ is saturated for the finite index congruence $\equiv$.
By Proposition~\ref{p:syntactic largest}, $\equiv$ refines $\sim^s_B$ hence $\sim^s_B$ also has finite index.

3) $\Rightarrow$ 2). Follows from the fact that $B$ is saturated with respect to $\sim^s_B$.
\end{proof}
\begin{remark}\label{rk:rr}
Recall the difference between the notions of {\it recognizable} and {\it regular} subsets for a monoid $X$:
a subset $L$ of $X$ is regular if  it can be generated from  finite subsets of $X$ by unions, products and stars. It happens that the two notions coincide, e.g., in $\langle \N;+\rangle$ or in the free monoids.
\end{remark}
%
\subsection{Lattices and Boolean algebras of subsets closed under preimage}
We denote by $\+P(X)$ the class of subsets of $X$.
\begin{definition}
1. A {\em   lattice}  (resp. {\em  complete lattice}) $\+L$ {\em of subsets} of a set $E$ is a family of subsets of $E$
such that $L\cap M$ and $L\cup M$ are in $\+L$ whenever $L,M\in \+L$ (resp. such that any nonempty but possibly infinite) union or intersection of subsets in $\+L$ is in $\+L$.

$\+L$ is a {\em Boolean algebra}  (resp. {\em  complete Boolean algebra})
if it is a lattice  (resp. complete lattice) also closed under complementation.

2. For  $f:E\to E$,
a lattice $\+L$ of subsets of $E$ is {\em closed under} $f^{-1}$ if
$f^{-1}(L)\in\+L$ whenever $L\in\+L$.
\end{definition}

\begin{definition} \label{def:Lat-LAL}For $\+A=\langle A;\Xi\rangle$  an algebra  and $L\subseteq A$, we denote by $\+L_{\+A}(L)$ (resp. $\+L_{\+A}^\infty(L)$\;)  the smallest sublattice (resp. complete sublattice)   $\+L$ of $\+P(A)$ containing $L$ and closed under the inverses of the {\textit{DUO}}s:  i.e.,  $ \gamma^{-1}(Z)\in\+L$, for all $Z\in\+L$, for all $\gamma\in {\textit{DUO}}$.

 We  denote by $\+B_{\+A}(L)$ (resp. $\+B_{\+A}^\infty(L)$\;) the Boolean algebra
(resp. complete Boolean algebra) similarly defined.
\end{definition}
\begin{example} 1) If $\+A=\langle \N;\suc\rangle$,  $\+L_{\+A}(L)$ is the smallest lattice 
containing $L$ and closed under $(x\mapsto x+1)^{-1}$, i.e., closed under decrement where $(L-1)=\{n-1\mid n\in L,n-1\geq 0\}\in\+L$, e.g.,  $\{0,3,7\}-1= \{2,6\}$.

2) If $\+A'=\langle \N;+\rangle$,  $\+L_{\+A'}(L)$ is the smallest lattice 
containing $L$ and closed under $(x\mapsto x+a)^{-1}$ for all $a\in \N$. Since this last closure amounts to closure under decrement,
we have $\+L_{\+A}(L) = \+L_{\+A'}(L)$.

3) If $\+A''=\langle \N;\times\rangle$,  $\+L_{\+A''}(L)$ is the smallest lattice 
containing $L$ and closed under $(x\mapsto ax)^{-1}$, 
i.e.,  the set $L/a= \{n\mid an\in L\}\in\+L$.
 For instance  $\{0,3,7\}/3= \{0,1\}$.

4) If $\+S=\langle\Sigma^*; \cdot \rangle$, $\+L_{\+S}(L)$ is the smallest lattice 
containing $L$ and closed under  $(x\mapsto w\cdot x\cdot w')^{-1}$ for $w,w'\in\Sigma^*$, i.e.,  the set $w^{-1}L{w'}^{-1}= \{x\mid w\cdot x\cdot w'\in L\}\in\+L$.
\end{example}

\begin{lemma}[Disjunctive Normal Form]
\label{l:normal LAL}
1)  Every set in $\+L_{\+A}^\infty(L)$ \big(resp. $\+L_{\+A}(L)$\big)  is of the form
$\cup_{i\in I} \big(\cap_{\gamma\in \Gamma_i}\gamma^{-1}(L)\big)$
where the $\Gamma_i$'s are subsets of $\DUO$ \big(resp. with $I$ and the $\Gamma_i$'s  finite\big).

2) Every set in $\+B_{\+A}^\infty(L)$ \big(resp. $\+B_{\+A}(L)$\big)  is of the form
$\cup_{i\in I} \big(\cap_{\gamma\in \Gamma_i}\gamma^{-1}(L_{i,\gamma})\big)$
where $L_{i,\gamma}$ is either $L$ or its complement $A\setminus L$,
and the $\Gamma_i$'s are as above.
\end{lemma}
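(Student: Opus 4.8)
The plan is to prove both statements by the standard ``smallest closed family'' argument: I denote by $\+D$ the family of all subsets of $A$ having the prescribed disjunctive form, and I show that $\+D$ coincides with $\+L_{\+A}(L)$ (resp. with its complete or Boolean avatars) by proving the two inclusions. Two elementary facts underlie everything. First, $\DUO(\+A)$ is a monoid under composition: it contains the identity (the case $n=0$ of Definition~\ref{def:gen}) and is closed under composition by its very construction. Second, for all $\gamma,\delta\in\DUO(\+A)$ and all $Z\subseteq A$ one has $\delta^{-1}(\gamma^{-1}(Z))=(\gamma\circ\delta)^{-1}(Z)$, and preimage commutes with arbitrary unions and intersections. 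These are the only structural ingredients needed.

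For the easy inclusion $\+D\subseteq\+L_{\+A}^\infty(L)$, I observe that $L=\mathrm{id}^{-1}(L)\in\+L_{\+A}^\infty(L)$ and that this family is closed under $\gamma^{-1}$, so every atom $\gamma^{-1}(L)$ belongs to it; being a complete lattice, $\+L_{\+A}^\infty(L)$ then contains every union of intersections of such atoms. The finite statement is identical with $I$ and the $\Gamma_i$ finite. For the reverse inclusion I check that $\+D$ is itself a complete lattice containing $L$ and closed under the inverses of the DUOs, so that minimality of $\+L_{\+A}^\infty(L)$ forces $\+L_{\+A}^\infty(L)\subseteq\+D$. Containment of $L$ is witnessed by a single atom ($I$ and $\Gamma_i$ singletons with $\gamma=\mathrm{id}$); closure under union is trivial, by concatenating the index families; closure under $\delta^{-1}$ follows by pushing the preimage through the union and the intersection and using $\delta^{-1}(\gamma^{-1}(L))=(\gamma\circ\delta)^{-1}(L)$, so that each $\Gamma_i$ is replaced by $\{\gamma\circ\delta\mid\gamma\in\Gamma_i\}\subseteq\DUO(\+A)$.

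The only place where something happens is closure under intersection, which rests on the general distributive law $\bigcap_{k}\bigcup_{i\in I_k}T_{k,i}=\bigcup_{f\in\prod_k I_k}\bigcap_k T_{k,f(k)}$; after applying it, each inner intersection $\bigcap_k\bigcap_{\gamma\in\Gamma_{k,f(k)}}\gamma^{-1}(L)$ collapses to $\bigcap_{\gamma\in\Gamma_f}\gamma^{-1}(L)$ with $\Gamma_f=\bigcup_k\Gamma_{k,f(k)}$, returning a set of $\+D$. This is the step I expect to be the main obstacle, since in the complete case it uses the infinitary distributive law and, implicitly, a choice function for $\prod_k I_k$; one must also keep all index sets nonempty to respect the convention that complete lattices are closed only under nonempty unions and intersections. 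The finite case needs only ordinary finite distributivity and raises no such issue.

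For part 2) the same scheme applies to the Boolean (resp. complete Boolean) case, the atoms now being $\gamma^{-1}(L)$ or $\gamma^{-1}(A\setminus L)$; the sole extra verification is closure of $\+D$ under complementation. Here De Morgan turns a disjunctive form into a conjunctive one, $A\setminus\bigcup_i\bigcap_{\gamma\in\Gamma_i}\gamma^{-1}(L_{i,\gamma})=\bigcap_i\bigcup_{\gamma\in\Gamma_i}\gamma^{-1}(A\setminus L_{i,\gamma})$, using $A\setminus\gamma^{-1}(X)=\gamma^{-1}(A\setminus X)$ and the fact that complementing $L$ or $A\setminus L$ again yields $L$ or $A\setminus L$; one final application of the distributive law rewrites this conjunctive form as a disjunctive one in $\+D$, completing the argument.
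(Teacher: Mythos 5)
Your proof is correct and is in essence the paper's own argument: the published proof is a one-liner resting on exactly the two facts you isolate, namely that $\cap$ and $\cup$ distribute over each other and that $\gamma^{-1}$ commutes with arbitrary unions and intersections (together with the implicit fact, which you rightly make explicit, that $\DUO(\+A)$ is a monoid under composition, so that nested preimages collapse to single atoms $(\gamma\circ\delta)^{-1}(L)$). Your two-inclusion minimality packaging and your attention to nonempty index sets simply make rigorous what the paper asserts in one sentence; the only detail you leave tacit is that in part 2, after distributing, a conjunct in which some $\gamma$ receives both labels $L$ and $A\setminus L$ equals $\emptyset$ and can be discarded (or absorbed into the convention allowing $I=\emptyset$), which is harmless.
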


\begin{proof}
1) As $\cap$ and $\cup$ distribute over each other, 
 and $\gamma^{-1}(\cup_{i\in I} L_i)= \cup_{i\in I} \gamma^{-1}(L_i)$, and similarly for $\cap$, every arbitrary (resp. finite) $\cap,\cup$ combination of the $\gamma^{-1}(L)$'s, 
$\gamma\in\DUO$, can be put in a disjunctive normal form of the mentioned type. 
 A similar argument proves 2).
\end{proof}

\begin{lemma}\label{bool:saturated}
Let $\+A$ be an algebra and $L$ be a subset of $\+A$.

1) Let $\sim$ be an $\+A$-congruence.
If $L$ is $\sim$-saturated then so is every set in  $\+B_{\+A}^\infty(L)$.

2) Let $\leq$ be an $\+A$-stable preorder. 
If $L$ is a $\leq$-initial segment then so is every set in  $\+L_{\+A}^\infty(L)$.
\end{lemma}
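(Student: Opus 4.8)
The plan is to use the Disjunctive Normal Form of Lemma \ref{l:normal LAL} to reduce each claim to two elementary observations: that the basic building blocks $\gamma^{-1}(L)$ (or $\gamma^{-1}(L_{i,\gamma})$) are themselves saturated (resp. are initial segments), and that the relevant closure operations — arbitrary unions and intersections, plus complementation in part 1 — preserve saturation (resp. the property of being an initial segment). First I would isolate the key stability fact for $\DUO$s: if $\sim$ is a congruence then, by definition of congruence and Definition \ref{def:gen} (each $\gamma\in\DUO$ being a composition of frozen operations), $x\sim y$ implies $\gamma(x)\sim\gamma(y)$; likewise if $\leq$ is a stable preorder then $x\leq y$ implies $\gamma(x)\leq\gamma(y)$. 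This is exactly the computation already performed inside the proof of Proposition \ref{p:syntactic largest}, so I would invoke it directly.

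For part 1, I would argue that each preimage $\gamma^{-1}(L)$ is $\sim$-saturated whenever $L$ is. Indeed, if $x\in\gamma^{-1}(L)$ and $x\sim y$, then $\gamma(x)\sim\gamma(y)$ and $\gamma(x)\in L$; since $L$ is saturated, $\gamma(y)\in L$, so $y\in\gamma^{-1}(L)$. The same goes for $\gamma^{-1}(A\setminus L)$, since the complement of a saturated set is saturated. It then remains to note that the class of $\sim$-saturated subsets is closed under arbitrary unions, arbitrary intersections and complementation — it is precisely the complete Boolean algebra of unions of $\sim$-equivalence classes. By Lemma \ref{l:normal LAL}(2), every member of $\+B_{\+A}^\infty(L)$ is built from the sets $\gamma^{-1}(L_{i,\gamma})$ by such operations, hence is saturated.

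For part 2 the structure is identical but the closure properties are weaker, which is the one place requiring care. I would first check that each $\gamma^{-1}(L)$ is a $\leq$-initial segment when $L$ is: if $x\in\gamma^{-1}(L)$ and $z\leq x$, then $\gamma(z)\leq\gamma(x)\in L$, so $\gamma(z)\in L$ and $z\in\gamma^{-1}(L)$. One then observes that initial segments are closed under arbitrary unions and arbitrary intersections (if each $Z_i$ is an initial segment and $z\leq x$ with $x\in\bigcap_i Z_i$, then $z\in Z_i$ for every $i$, and the union case is analogous), but \emph{not} under complementation — which is exactly why part 2 concerns the lattice $\+L_{\+A}^\infty(L)$ and uses Lemma \ref{l:normal LAL}(1), whose normal form employs only $\gamma^{-1}(L)$ (never the complement) combined by $\cup$ and $\cap$. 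Applying this normal form, every member of $\+L_{\+A}^\infty(L)$ is a union of intersections of initial segments, hence an initial segment. The main (very mild) obstacle is simply to keep straight that the asymmetry between the two parts is forced by complementation failing to preserve initial segments, so that part 2 must be phrased for the lattice rather than the Boolean algebra; once the DNF lemma is in hand, both arguments are routine verifications.
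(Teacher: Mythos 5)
Your proof is correct and follows essentially the same route as the paper's: both rest on the observation that compatibility of $\sim$ (resp.\ $\leq$) with every $\gamma\in\DUO$ makes $\gamma^{-1}(L)$ saturated (resp.\ an initial segment), combined with closure of saturated sets under arbitrary Boolean operations and of initial segments under arbitrary unions and intersections. The only cosmetic difference is that you make the final step explicit via the disjunctive normal form of Lemma~\ref{l:normal LAL}, whereas the paper concludes directly from these closure properties.
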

\begin{proof}
{\seg 1) }Since $\sim$ is an $\+A$-congruence, if $x\sim y$ then $\gamma(x)\sim\gamma(y)$
for every $\gamma\in\DUO$.
In particular, if $L$ is $\sim$-saturated 
 then so is $\gamma^{-1}(L)$.
Since $\sim$-saturation is closed under finite or infinite Boolean operations
we conclude that all sets in $\+B_{\+A}^\infty(L)$ are $\sim$-saturated.

 2) Argue similarly, observing that the family of initial segments is closed 
under finite or infinite union and intersection.
\end{proof}

\begin{lemma}\label{l:hull syntactic}
Let $L$ be a subset of an algebra $\+A$.
\\
1. The boolean algebra $\+B_{\+A}^\infty(L)$ is the family of subsets of $A$
which are saturated for the syntactic congruence $\sim_L$ of $L$.
\\
 2. The lattice $\+L_{\+A}^\infty(L)$ is the family of subsets of $A$
which are initial segments for the syntactic preorder $\leq_L$ of $L$.
\end{lemma}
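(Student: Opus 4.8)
The plan is to establish each of the two equalities as a pair of opposite inclusions; in both cases one inclusion is immediate from the results already in place, while the reverse inclusion carries the content. First I would record the elementary observation that, taking $\gamma$ to be the identity (the $n=0$ case of $\DUO(\+A)$) in \eqref{eq:scong} and \eqref{eq:sord}, the set $L$ is saturated for its own syntactic congruence $\sim_L$ and is an initial segment for its own syntactic preorder $\leq_L$. Granting this, the inclusion $\+B_{\+A}^\infty(L)\subseteq\{\text{subsets saturated for }\sim_L\}$ is exactly Lemma~\ref{bool:saturated}(1) applied to the congruence $\sim_L$, and the inclusion $\+L_{\+A}^\infty(L)\subseteq\{\text{initial segments for }\leq_L\}$ is exactly Lemma~\ref{bool:saturated}(2) applied to the stable preorder $\leq_L$.

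For the reverse inclusions the key step is to exhibit the individual $\sim_L$-classes and the principal initial segments of $\leq_L$ as members of $\+B_{\+A}^\infty(L)$ and $\+L_{\+A}^\infty(L)$ respectively. Fixing $a\in A$, I would partition $\DUO(\+A)$ into $\Gamma^{+}_{a}=\{\gamma\mid\gamma(a)\in L\}$ and $\Gamma^{-}_{a}=\{\gamma\mid\gamma(a)\notin L\}$. Reading off \eqref{eq:scong}, one has $x\sim_L a$ precisely when $\gamma(x)\in L$ for all $\gamma\in\Gamma^{+}_{a}$ and $\gamma(x)\notin L$ for all $\gamma\in\Gamma^{-}_{a}$, so that
\[
[a]_{\sim_L}=\Big(\bigcap_{\gamma\in\Gamma^{+}_{a}}\gamma^{-1}(L)\Big)\cap\Big(\bigcap_{\gamma\in\Gamma^{-}_{a}}\gamma^{-1}(A\setminus L)\Big),
\]
which lies in $\+B_{\+A}^\infty(L)$ since the latter is a complete Boolean algebra closed under each $\gamma^{-1}$ (degenerate empty intersections being read as $A=L\cup(A\setminus L)$, still a member). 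Reading off \eqref{eq:sord} instead, $x\leq_L a$ holds precisely when $\gamma(x)\in L$ for all $\gamma\in\Gamma^{+}_{a}$, whence the principal initial segment $\{x\mid x\leq_L a\}$ equals $\bigcap_{\gamma\in\Gamma^{+}_{a}}\gamma^{-1}(L)$, a member of $\+L_{\+A}^\infty(L)$.

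Finally I would reassemble an arbitrary saturated set or initial segment from these pieces. By Definition~\ref{df:satureCong} a $\sim_L$-saturated set $S$ equals $\bigcup_{a\in S}[a]_{\sim_L}$, so closure of $\+B_{\+A}^\infty(L)$ under arbitrary unions places $S$ in $\+B_{\+A}^\infty(L)$; and a $\leq_L$-initial segment $S$ equals $\bigcup_{a\in S}\{x\mid x\leq_L a\}$ --- the inclusion from right to left using that $S$ is initial, the reverse using reflexivity of $\leq_L$ --- so the same closure places $S$ in $\+L_{\+A}^\infty(L)$. The step I expect to be the main obstacle, and the reason the \emph{complete} lattice and Boolean algebra occur here rather than their finitary versions, is precisely that both the intersections describing a single class or principal segment and the unions recombining a saturated set or initial segment run over index sets that are in general infinite; once arbitrary unions and intersections are available the two inclusions meet and the lemma follows.
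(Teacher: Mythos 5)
Your proof is correct and follows essentially the same route as the paper's: one inclusion is obtained from Lemma~\ref{bool:saturated}, and the reverse inclusion by expressing each $\sim_L$-class (resp.\ each principal $\leq_L$-initial segment) as an intersection over $\DUO(\+A)$ of sets $\gamma^{-1}(L)$ and $\gamma^{-1}(A\setminus L)$ (resp.\ of sets $\gamma^{-1}(L)$), and then recovering an arbitrary saturated set (resp.\ initial segment) as a union of these pieces. Your partition of $\DUO(\+A)$ into $\Gamma^{+}_{a}$ and $\Gamma^{-}_{a}$ is just a notational variant of the paper's ``If $\gamma(x)\in L$ then $\gamma^{-1}(L)$ else $A\setminus\gamma^{-1}(L)$'' intersection, so the two arguments coincide.
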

\begin{proof}
 1) As $L$ is $\sim_L$-saturated, Lemma~\ref{bool:saturated} insures that every set 
in $\+B_L^{\infty}(L)$ is also $\sim_L$-saturated.
Conversely, for every element $x\in A$ the $\sim_L$-congruence class of $x$ 
belongs to $\+B_\+A(L)$ since it is equal to
\begin{multline*}
\{y\mid \forall\gamma\in\DUO(\+A)\ (\gamma(x)\in L\Leftrightarrow\gamma(y)\in L)\}
\\=\bigcap_{\gamma\in\DUO(\+A)}
\text{If $\gamma(x)\in L$ then $\gamma^{-1}(L)$ else $A\setminus\gamma^{-1}(L)$}
\end{multline*}
Finally, a $\sim_L$-saturated set is a union of $\sim_L$-congruence classes,
hence also belongs to $\+B_\+A(L)$.

2) As $L$ is a $\leq_L$-initial segment, Lemma~\ref{bool:saturated} insures that every set 
in $\+L_L^{\infty}(L)$ is also a $\leq_L$-initial segment.
Conversely, for every element $x\in A$ the $\leq_L$-initial segment $I_x=\{y\mid y\leq_L x\}$ 
belongs to $\+L_\+A(L)$ since it is equal to
\begin{multline*}
\{y\mid \forall\gamma\in\DUO(\+A)\ (\gamma(x)\in L\Rightarrow\gamma(y)\in L)\}
=\bigcap_{\gamma\in\DUO(\+A),\ \gamma(x)\in L} \gamma^{-1}(L)
\end{multline*}
Finally, a $\leq_L$-initial segment $X$ is the union of the $I_x$'s for $x\in X$
hence also belongs to $\+L_\+A(L)$.
\end{proof}

\begin{lemma}\label{l;recFinite}
 If $L$ is a recognizable subset of $\+A$
then $\+L_{\+A}(L)$ and $\+B_{\+A}(L)$ are finite hence
$\+L_{\+A}^\infty(L)=\+L_{\+A}(L)$ and $\+B_{\+A}^\infty(L)=\+B_{\+A}(L)$.
\end{lemma}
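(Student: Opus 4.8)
The plan is to reduce finiteness to the finite index of the syntactic congruence and then read off the cardinality bounds from the descriptions of the complete hulls in Lemma~\ref{l:hull syntactic}. First, since $L$ is recognizable, Lemma~\ref{l:rec and congru} tells us that the syntactic congruence $\sim_L$ has finite index, say with $k$ classes. By Definition-Lemma~\ref{syntacticL_congruence} the syntactic preorder $\leq_L$ has $\sim_L$ as its associated equivalence, so $\leq_L$ factors through the finite quotient and induces a partial order on the set of $k$ congruence classes.

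Next I would count the two complete hulls. By Lemma~\ref{l:hull syntactic}(1), $\+B_{\+A}^\infty(L)$ is exactly the family of $\sim_L$-saturated subsets of $A$; each such subset is a union of $\sim_L$-classes, so there are at most $2^k$ of them and $\+B_{\+A}^\infty(L)$ is finite. By Lemma~\ref{l:hull syntactic}(2), $\+L_{\+A}^\infty(L)$ is exactly the family of $\leq_L$-initial segments; each such set is a union of $\sim_L$-classes that is downward closed for the induced order, i.e.\ a downset of the finite quotient poset, and a finite poset has only finitely many downsets. Hence $\+L_{\+A}^\infty(L)$ is finite as well.

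Finally I would obtain the two equalities by a minimality argument. A complete lattice (resp.\ complete Boolean algebra) is in particular a lattice (resp.\ Boolean algebra) containing $L$ and closed under the $\gamma^{-1}$, so minimality of $\+L_{\+A}(L)$ and $\+B_{\+A}(L)$ gives the inclusions $\+L_{\+A}(L)\subseteq\+L_{\+A}^\infty(L)$ and $\+B_{\+A}(L)\subseteq\+B_{\+A}^\infty(L)$; by the previous paragraph both left-hand families are therefore finite. Now a finite lattice (resp.\ Boolean algebra) is automatically closed under arbitrary unions and intersections, hence is itself a complete lattice (resp.\ complete Boolean algebra). Thus $\+L_{\+A}(L)$ and $\+B_{\+A}(L)$ are complete hulls containing $L$ and closed under the inverses of the $\DUO$s, and minimality of $\+L_{\+A}^\infty(L)$ and $\+B_{\+A}^\infty(L)$ yields the reverse inclusions, establishing $\+L_{\+A}^\infty(L)=\+L_{\+A}(L)$ and $\+B_{\+A}^\infty(L)=\+B_{\+A}(L)$.

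The argument is essentially bookkeeping; the only point requiring care is the identification of the relevant complete hulls with the saturated-set and initial-segment families of Lemma~\ref{l:hull syntactic}, since it is precisely this that converts finite index of $\sim_L$ into a cardinality bound. Once that identification is invoked, the observation that a finite lattice is automatically complete closes the argument with no further computation.
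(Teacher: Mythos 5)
Your proof is correct, and there is no circularity: Lemma~\ref{l:hull syntactic} precedes Lemma~\ref{l;recFinite} in the paper, so you may invoke it. Your route, however, differs from the paper's. The paper works bottom-up: it notes that each generator $\gamma^{-1}(L)$ (and its complement) is a union of $\sim_L$-classes, so among all $\gamma\in\DUO(\+A)$ there are at most $2^k$ distinct such sets; hence the Boolean algebra $\+B_{\+A}(L)$ they generate is finite, therefore complete, and equality with $\+B_{\+A}^\infty(L)$ follows by minimality on both sides, with the lattice case disposed of ``a fortiori.'' You work top-down: you first identify the \emph{complete} hulls with the family of $\sim_L$-saturated sets and the family of $\leq_L$-initial segments via Lemma~\ref{l:hull syntactic}, bound those by the finite index of $\sim_L$ (saturated sets by $2^k$, initial segments as downsets of the finite quotient poset), and then push finiteness down to $\+L_{\+A}(L)$ and $\+B_{\+A}(L)$ through the minimality inclusion before closing with the same double-minimality argument. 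The paper's argument is more elementary and self-contained, needing only the definition of $\sim_L$ to see that generators are saturated; yours buys an exact structural description of both hulls as a by-product (in particular the order-theoretic identification of $\+L_{\+A}^\infty(L)$ with downsets of the quotient poset, which the paper never makes explicit), at the cost of leaning on the heavier Lemma~\ref{l:hull syntactic}. Both converge on the same key observation that a finite lattice or Boolean algebra of sets is automatically complete.
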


\begin{proof}
If $L$ is recognizable, then $\sim_L$ has a finite  index $k$, there are $k$ congruence classes 
and, as each $\gamma^{-1} (L)$  and each $A\setminus \gamma^{-1} (L)$, 
for $\gamma\in {\textit{DUO}}(\+A)$, 
is a union of congruence classes of  $\sim_L$  (cf. Lemma~\ref{syntacticL_congruence}),  
there are at most $2^{k}$  sets $\gamma^{-1} (L)$ 
and $A\setminus \gamma^{-1} (L)$.
Thus, the Boolean algebra $\+B_{\+A}(L)$ is finite 
hence it is complete and equal to $\+B_{\+A}^\infty(L)$.
A fortiori, the lattice $\+L_{\+A}(L)$ is finite 
hence it is complete and equal to $\+L_{\+A}^\infty(L)$.
\end{proof}

\begin{proposition}\label{p:lattice to BA}
If a lattice $\+L$ of subsets of $E$ is closed under $f^{-1}$ 
then so is the Boolean algebra $\+B$ of subsets of $E$ generated by $\+L$.

As a consequence, in subsequent sections, 
every result of the form ``$\+L_{\+A}(L)$ (resp. $\+L_{\+A}^\infty(L)$\;) is closed under $f^{-1}$''
implies its twin statement   ``$\+B_{\+A}(L)$ (resp. $\+B_{\+A}^\infty(L)$\;) is closed under $f^{-1}$''.
\end{proposition}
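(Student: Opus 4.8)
The plan is to exploit the single fact that forming the preimage under a fixed function $f$ commutes with all Boolean operations on sets. Concretely, for arbitrary $X,Y\subseteq E$ one has $f^{-1}(X\cup Y)=f^{-1}(X)\cup f^{-1}(Y)$ and $f^{-1}(X\cap Y)=f^{-1}(X)\cap f^{-1}(Y)$, exactly as in the lattice setting, but \emph{additionally} $f^{-1}(E\setminus X)=E\setminus f^{-1}(X)$. This last identity, valid for every function, is the only new ingredient compared with the lattice case, where complementation is unavailable.

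First I would introduce the auxiliary family $\+D=\{B\subseteq E\mid f^{-1}(B)\in\+B\}$ of ``good sets''. The claim that $\+B$ is closed under $f^{-1}$ is precisely the inclusion $\+B\subseteq\+D$, so it suffices to show that $\+D$ is a Boolean algebra of subsets of $E$ containing $\+L$; the conclusion then follows at once because $\+B$ is by definition the \emph{smallest} Boolean algebra of subsets of $E$ containing $\+L$.

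The verification that $\+L\subseteq\+D$ uses the hypothesis: for $L\in\+L$ we have $f^{-1}(L)\in\+L\subseteq\+B$, so $L\in\+D$. The verification that $\+D$ is a Boolean algebra uses the three commutation identities above together with the fact that $\+B$ is closed under the corresponding operations. For instance, if $B_1,B_2\in\+D$ then $f^{-1}(B_1\cup B_2)=f^{-1}(B_1)\cup f^{-1}(B_2)$ is the union of two members of $\+B$, hence lies in $\+B$, so $B_1\cup B_2\in\+D$; intersection is handled identically, and closure under complementation is the step where $f^{-1}(E\setminus B)=E\setminus f^{-1}(B)$ is invoked.

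I expect no genuine obstacle here: the argument is a routine ``good sets'' induction, and the only conceptual point is the interplay between preimage and complement noted above. For the concluding consequence, it is enough to observe that $\+B_{\+A}(L)$ (resp. $\+B_{\+A}^\infty(L)$) coincides with the Boolean algebra generated by the lattice $\+L_{\+A}(L)$ (resp. $\+L_{\+A}^\infty(L)$): both are the smallest Boolean algebra of subsets of $A$ containing $L$ and closed under the inverses of the $\textit{DUO}$s, the latter closure transferring from the lattice to the generated Boolean algebra by the first assertion applied with $f=\gamma$ for each $\gamma\in\textit{DUO}(\+A)$. Hence applying the first assertion with $\+L=\+L_{\+A}(L)$ turns any statement ``$\+L_{\+A}(L)$ is closed under $f^{-1}$'' into its twin ``$\+B_{\+A}(L)$ is closed under $f^{-1}$''.
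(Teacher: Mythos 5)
Your proof is correct. Note that the paper itself states Proposition~\ref{p:lattice to BA} without any proof, treating it as routine; your ``good sets'' argument --- that $\+D=\{B\subseteq E\mid f^{-1}(B)\in\+B\}$ is a Boolean algebra containing $\+L$, because preimage commutes with union, intersection \emph{and} complement, so minimality of $\+B$ gives $\+B\subseteq\+D$ --- is exactly the standard justification the authors presumably had in mind. You also correctly identify and fill the one non-obvious point in the ``consequence'' clause, namely that $\+B_{\+A}(L)$ equals the Boolean algebra generated by $\+L_{\+A}(L)$ (both being the smallest Boolean algebra containing $L$ and closed under the $\gamma^{-1}$, using your first assertion with $f=\gamma$ to transfer DUO-closure to the generated algebra); alternatively this identification can be read off from the disjunctive normal forms of Lemma~\ref{l:normal LAL}. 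The only thing left tacit is the $\+L_{\+A}^\infty(L)$/$\+B_{\+A}^\infty(L)$ variant, which needs the complete analogue of your first assertion; since $f^{-1}$ also commutes with arbitrary unions and intersections, the identical argument applies verbatim, so this is a cosmetic rather than a genuine gap.
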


\subsection{Generated sets}
A convenient generalization of  condition  $f(a)\geq a$ in (1) of Theorem~\ref{thm:ipl1} to arbitrary algebras, consists in assuming that $f$ is such that, for each $a\in A$, $f(a)$ is in the set $\gen(a)$ generated by $\{a\}$ using all functions in $ {\textit{DUO}}(\+A)$ (cf. Definition \ref{def:gen}).
\begin{definition} \label{def:gena} [Generated set] 
For $\+A=\langle A;\Xi\rangle$ an algebra and $a\in A$, let  $\gen(a)$ be the subset of $A$ defined by $\gen(a)=\{\gamma(a)\mid \gamma\in {\textit{DUO}}(\+A)\}$.
\end{definition}
\begin{example} 1) For $\+A=\langle \N;\suc\rangle$ and $\+A'=\langle \N;+\rangle$, we have $\gen(a)=\{b\mid b\geq a\}=a+\N$. Hence $b\in \gen(a)$ if and only if $b\geq a$.  In particular, $f(a)\in gen(a)$ is equivalent to $f(a)\geq a$.

2)  If $\+A''=\langle \N;\times\rangle$,   we have $\gen(a)=\{b\mid a \hbox{ divides } b\}=a\N$.  Thus $b\in \gen(a)$ if and only if $a$ divides $b$. In particular, $f(a)\in gen(a)$ is equivalent to $a$ divides $f(a)$.

3) For $\+S=\langle\Sigma^*; \cdot \rangle$, the algebra of words with concatenation,  $\gen(a)=\{w\cdot a\cdot w'\mid w,w'\in\Sigma^*\}$. In particular, $f(a)\in gen(a)$ is equivalent to $a$ is a factor of $f(a)$.
\end{example}
\begin{remark}
 The failure of the extension of Theorem~\ref{thm:ipl1} to some simple algebras can be related 
to the failure of the hypothesis $f(a)\in \gen(a)$ for every $a\in A$. 
Consider the algebra $\+A=\langle \{a,b\};Id\rangle$
 and $f$ such that $f(a)=b$ and $f(b)=a$.  On the one hand, the sole congruences on $\+A$  are the two trivial ones and $f$ is trivially congruence preserving, even though $f$ fails the  condition $f(x)\in\gen(x)$ as $f(a)=b\notin\gen(a)=\{a\}$.
 On the other hand, letting $L=\{a\}$, the set $f^{-1}(L)=\{b\}$ is not in the lattice 
 $\+L_{\+A}(L)=\+L_{\+A}^\infty(L)= \{\{a\}\}$.
  \end{remark}
%

\section{Case of  natural integers}
\label{ss:frying pan}
We now reinterpret Theorem \ref{thm:ipl1} using the notions introduced in Section \ref{s:def}. Let us first recall ``folk" results about congruences and recognizable sets of $\langle  \N ;+\rangle$.
\subsection{Congruences on $\langle\N;+\rangle$ and $\langle\N;+,\times\rangle$}\label{s:CN+}
\begin{lemma} \label{folk0} A  congruence $\sim$  on $\langle  \N ;Suc\rangle$ or on $\langle  \N ;+\rangle$ is either  equality, or $\sim_{a,k}$ for some $a,k\in\N$, $k\geq1$ where $\sim_{a,k}$  is defined by
\begin{equation}\label{equation:ak}
x\sim_{a,k } y \mbox{  if and only if  }\begin{cases} \mbox{ either } x=y\\
\mbox{or }\  a\leq x\ ,\  a\leq y \mbox{ and }x\equiv  y\pmod k
\end{cases}.
\end{equation}
The congruence $\sim_{a,k}$ has finite index $a+k$.
It is cancellable if and only if $a=0$.
\end{lemma}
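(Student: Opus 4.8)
The plan is to first collapse both algebras to the single-unary-operation case. The derived unary operations of $\langle\N;+\rangle$ are exactly the translations $x\mapsto x+c$ (Example after Definition~\ref{def:gen}), so by Lemma~\ref{lem:de n a 1} an equivalence is a $\langle\N;+\rangle$-congruence iff it is compatible with every translation; since each translation is an iterate of $Suc$, this holds iff it is a $\langle\N;Suc\rangle$-congruence. Hence the two notions coincide and I may work with $\langle\N;Suc\rangle$: a congruence is an equivalence $\sim$ with $x\sim y\Rightarrow x+1\sim y+1$, equivalently $x\sim y\Rightarrow x+t\sim y+t$ for all $t\in\N$. If $\sim$ is equality we are done, so assume $\sim$ relates two distinct elements.

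Next I would read off the modulus from a difference group. Set $G=\{v-u\mid u,v\in\N,\ u\sim v\}\subseteq\Z$. Using symmetry of $\sim$, the fact that any witnessed difference can be re-based arbitrarily high by adding a constant, and transitivity, one checks routinely that $G$ contains $0$ and is closed under negation and addition; thus $G$ is a subgroup of $\Z$, so $G=k\Z$ for a unique $k\geq0$. Since $\sim$ is not equality there is a positive difference, so $k\geq1$. In particular every related pair $u\sim v$ satisfies $k\mid v-u$, which already secures the ``$x\equiv y\pmod k$'' part.

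Now I would pin down the base. Let $a=\min\{z\mid \exists w\neq z,\ z\sim w\}$, which exists because $\sim\neq{}$equality. Write $a+k'$ for the least element of the $\sim$-class of $a$ other than $a$, with $k'\geq1$. A short descent argument shows this class is exactly $\{a+jk'\mid j\geq0\}$: translating $a\sim a+k'$ upward gives all $a+jk'$, and if an offset $m$ not divisible by $k'$ occurred in the class then (necessarily $m>k'$, and) $a+(m-k')\sim a+m\sim a$ would produce a strictly smaller non-divisible offset, a contradiction. Comparing with $G$ forces $k'=k$: on one hand $k'\in G$ gives $k\mid k'$; on the other hand a witness $u\sim u+k$, re-based into the (unbounded) class of $a$, gives an element $c\sim c+k$ with $c,c+k$ both in the class, whence $k'\mid k$. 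Therefore $a\sim a+k$, and translating upward yields $z\sim z+k$ for all $z\geq a$; so any $x,y\geq a$ with $x\equiv y\pmod k$ are related, i.e.\ $\sim_{a,k}\subseteq{}\sim$. Conversely, if $x\sim y$ with $x<y$ then $x\geq a$ by minimality and $k\mid y-x$ by the previous paragraph, so $x\sim_{a,k}y$; hence $\sim\,=\,\sim_{a,k}$.

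Finally, the index and cancellability are easy. The classes are the $a$ singletons $\{0\},\ldots,\{a-1\}$ together with the $k$ residue classes modulo $k$ inside $[a,\infty)$, giving index $a+k$. For cancellability, if $a=0$ then $\sim_{0,k}$ is congruence modulo $k$ and the quotient is the group $\Z/k\Z$, which is cancellative; if $a\geq1$ then $0+a\sim k+a$ while $0\not\sim k$, so cancellation fails. The one genuinely delicate point is the identity $k'=k$: the modulus $k$ is extracted from differences regardless of base, whereas $k'$ is the smallest shift at the specific minimal base $a$, and reconciling the two—showing the smallest shift really occurs already at the smallest related element—is exactly where upward translation (successor-compatibility) must be invoked with care.
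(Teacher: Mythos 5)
Your proof is correct; it reaches the same classification by a genuinely different route. The paper picks the lexicographically least pair $(a,k)$ with $a\equiv a+k$: translation invariance immediately gives $\sim_{a,k}\,\subseteq\,\equiv$, and a single minimality-of-$k$ contradiction shows that $0,\ldots,a+k-1$ are pairwise $\equiv$-inequivalent, which forces $\equiv\;=\;\sim_{a,k}$ since every $\equiv$-class is a union of $\sim_{a,k}$-classes and these $a+k$ representatives separate them (this last counting/saturation step is left implicit in the paper). You instead extract the modulus globally, as the generator of the difference subgroup $G=\{v-u\mid u\sim v\}=k\Z$ of $\Z$, identify the class of the least nontrivially related element $a$ as the progression $\{a+jk'\mid j\geq 0\}$ by a descent argument, and then reconcile the local step $k'$ with the global modulus $k$ by mutual divisibility; this yields both inclusions $\sim_{a,k}\,\subseteq\,\sim$ and $\sim\,\subseteq\,\sim_{a,k}$ explicitly, with no counting needed. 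What your approach buys is the structural viewpoint that generalizes (it is essentially how one classifies congruences of $\langle\Z;+\rangle$, or stable equivalences on submonoids of groups), plus a fully explicit converse inclusion; what it costs is the $k$-versus-$k'$ bookkeeping you rightly flag as the delicate point, which the paper sidesteps by minimizing $a$ and $k$ jointly so that only one candidate modulus ever appears. Your preliminary reduction identifying $\langle\N;+\rangle$- and $\langle\N;\Suc\rangle$-congruences via Lemma~\ref{lem:de n a 1}, the index count $a+k$, and the cancellability argument (you cancel $a$ in $0+a\sim_{a,k}k+a$, the paper cancels $k$ in $(a-1)+k\sim_{a,k}(a-1)+2k$; both are valid) are all sound.
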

\begin{proof} Let  $\equiv$ be a congruence on $\langle \N;+\rangle$ (or on $\langle  \N ;Suc\rangle$) which is not the identity: there are $a$ and $k>0$ such that $a\equiv a+k$. Choose the least (in lexicographic order) such $a,k$; then for all $j$, $(a+j)\equiv(a+j+k)$, hence $x\sim_{a,k } y$ implies $x\equiv y$. 

Moreover, all elements in $\{0,\ldots,a+k-1\}$ are pairwise nonequivalent modulo $\equiv$. 
First, if $0\leq x<a$ and $x < y $, then $x$ and $y$ cannot be equivalent modulo $\equiv$ as $a$ is the least one such that $a\equiv(a+k)$ for some $k$.  
 Finally, we show that if  $a \leq x<y<a+k$, then we also have $x\not\equiv y$.
Indeed, assume by contradiction that $x\equiv y$ and let $h=x-a$ and $\ell=y-x$. 
We then have $0\leq h< h+\ell<k$, $\ell>0$ and $a+h=x\equiv y=a+h+\ell<a+k$. 
As $\equiv$ is a +-congruence, we have $a+h+j\equiv a+h+\ell+j$ for all $j$.
Letting $j=k-(h+\ell)$ yields $a+h+j\equiv a+h+\ell+j=a+k$, hence, as $a\equiv a+k$,  by transitivity of $\equiv$, we get $a\equiv a+h+j$. As $h+j=k-\ell<k$, this contradicts the minimality of $k$.

If $a=0$ then  $\sim_{0,k}$ is the usual congruence modulo $k$ hence it is cancellable.
If $a\geq1$ then $a-1+k\sim_{a,k}a-1+2k$ but $a-1\not\sim_{a,k}a-1+k$
 hence $\sim_{a,k}$ is not cancellable.
  \end{proof}

A priori, congruences, recognizability 
strongly depend upon the signature. However, due to the properties of addition and multiplication on the integers in $\N$  we have

\begin{corollary}\label{+CPimpliesXCPN} 
The three structures 
$\langle \N;+,\times\rangle$, $\langle \N;+\rangle$ and $\langle \N;\Suc\rangle$
yield the same notions of congruence (namely, equality and the $\sim_{a,k}$'s),
congruence preserving function $\N\to \N$ and recognizable subset of $\N$.
\end{corollary}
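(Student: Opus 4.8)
The plan is to show that the three signatures $\langle\N;+,\times\rangle$, $\langle\N;+\rangle$, $\langle\N;\Suc\rangle$ induce identical families of congruences, from which the coincidence of congruence preservation and of recognizability will follow formally. I would organize the argument around the chain of signatures, exploiting that each signature extends the previous one (successor is a frozen instance of $+$, and $+$ is definable from the operations of the full semiring), so the classes of stable relations can only shrink as we add operations.

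First I would establish that the three notions of congruence coincide, i.e. that for each signature the congruences are exactly equality and the $\sim_{a,k}$. The key observation is that adding operations to an algebra can only \emph{restrict} the class of congruences (a relation compatible with more operations is compatible with fewer constraints removed), so
$$
\textit{Cong}\langle\N;+,\times\rangle \subseteq \textit{Cong}\langle\N;+\rangle \subseteq \textit{Cong}\langle\N;\Suc\rangle,
$$
because every instance of $\Suc$ is the frozen operation $x\mapsto x+1$ of $+$, and every instance of $+$ is obtained from $\times$-compatibility only trivially, so I would instead argue the reverse inclusions directly. Lemma~\ref{folk0} already names all congruences of $\langle\N;\Suc\rangle$ and $\langle\N;+\rangle$ and shows they agree. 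It therefore remains to check that each $\sim_{a,k}$ is compatible with $\times$: given $x\sim_{a,k}y$ and $x'\sim_{a,k}y'$, I must verify $xx'\sim_{a,k}yy'$. When both pairs are equalities this is trivial; in the mixed and generic cases one uses that $x\equiv y\pmod k$ and $x'\equiv y'\pmod k$ imply $xx'\equiv yy'\pmod k$, together with the fact that products of sufficiently large numbers stay $\geq a$. This containment, combined with the trivial inclusion $\textit{Cong}\langle\N;+,\times\rangle\subseteq\textit{Cong}\langle\N;+\rangle$, forces equality of all three congruence lattices.

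Once the congruences coincide, the remaining two claims are immediate consequences. Congruence preservation of a function $f:\N\to\N$ is, by Definition~\ref{def:congCompat 1}, the assertion that \emph{every} congruence is compatible with $f$; since the three signatures have the same set of congruences, a function is congruence preserving for one signature if and only if it is for the others. Likewise, by Lemma~\ref{l:rec and congru} a subset is recognizable precisely when it is saturated by some finite-index congruence, and since the finite-index congruences (the $\sim_{a,k}$, of index $a+k$) are common to all three signatures, the recognizable subsets coincide as well.

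The main obstacle I anticipate is the direct verification that each $\sim_{a,k}$ is a $\times$-congruence, specifically handling the \emph{mixed} cases where one argument pair is an equality and the other lies in the ``$\geq a$ and congruent mod $k$'' regime: here one must confirm that a product like $xx'$ with $x<a$ but $x'\geq a$ still lands in the correct class, which requires care about whether $xx'$ exceeds $a$. If $x=0$ the product collapses to $0$, so one must check that the two sides agree exactly (both zero) rather than merely modulo $k$. I would dispatch this by splitting on whether the equal argument is $0$ or positive, using in the positive case that multiplying a number $\geq a$ by a positive factor keeps it $\geq a$. This case analysis is the only genuinely computational part; everything else is formal transfer via the definitions and the two cited lemmas.
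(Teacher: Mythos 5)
Your proposal is correct and follows essentially the same route as the paper's proof: the trivial inclusion $\textit{Cong}\langle\N;+,\times\rangle \subseteq \textit{Cong}\langle\N;+\rangle$, the converse via Lemma~\ref{folk0} together with $\times$-stability of the $\sim_{a,k}$'s, and then transfer of congruence preservation and (via Lemma~\ref{l:rec and congru}) of recognizability. The only difference is that you spell out the case analysis (the zero case and products remaining $\geq a$) that the paper dismisses as ``a straightforward property of modular congruences,'' and you handle it correctly.
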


\begin{proof}
 Every congruence for $\langle \N;+,\times\rangle$ is a fortiori
a congruence for $\langle \N;+\rangle$.
Conversely, observe that the $\sim_{a,k}$'s are stable under multiplication, 
a straightforward property of modular congruences. 
Using Lemma~\ref{folk0}, this shows that every $+$-congruence is also a $\times$-congruence. 
The assertion about congruence preservation is a trivial consequence.
For that about recognizability, use Lemma~\ref{l:rec and congru}.
\end{proof}

 For morphisms  the situation is more complex, as shown by the next Remark.
%
\begin{remark}\label{r:morphimes+x}
 Homotheties $x\mapsto kx$ for $k\geq 2$ are +-morphisms which are not $\times$-morphism
as $k(x\times y)\neq kx\times ky$.
\end{remark}
%

\subsection{The ``frying pan''  monoids and semirings}\label{s:FryingPan}
We define canonical representations of the quotient monoids and semirings
$\N/\!\!\sim_{a,k}$.

 \begin{definition}\label{def:finite monogenic monoids}
Let $a,k\in\N$ such that $k\geq1$.

1) We denote by $M_{a,k}=\{0,\ldots,a+k-1\}$ the set of minimum representatives 
of the equivalence classes of $\sim_{a,k}$
and by $\varphi_{a,k}:\N \to M_{a,k}$ the map such that 
$$
\varphi_{a,k}(x)= \textit{IF } x<a \textit{ THEN } x \textit{ ELSE } a+((x-a)\pmod k)
$$
which can be identified to the canonical surjection $\N\to\N/\!\!\sim_{a,k}$. 

2) To any $n$-ary operation $\xi:\N^n\to\N$ on $\N$ corresponds a unique operation
$\xi_{a,k} : M_{a,k}^n\to M_{a,k}$  making $\varphi_{a,k}$ a morphism 
$\langle\N;\xi\rangle \to \langle M_{a,k};\xi_{a,k}\rangle$;  it is
defined by $\xi_{a,k} (x_1,\ldots,x_n)= \varphi_{a,k}\big(\xi (x_1,\ldots,x_n)\big)$.
In this way, we shall consider the arithmetic operations 
$\Suc_{a,k}$, $+_{a,k}$ and $\times_{a,k}$ on $M_{a,k}$. 
\end{definition}

\begin{definition}
A monoid $\langle M;\oplus\rangle$ with unit $0$ is monogenic if there exists $g\in M$
such that every element of $M\setminus\{0\}$ is a sum of some nonempty finite set
of copies of $g$.
Such an element $g$ is called a {\em generator}.
\end{definition}

\begin{figure}[h]
\[
\scalebox{.70}{ 
\xymatrix{
&&&&&&{a+2}\ar@/^/[dr]&&\\
&&&&&{a+1}\ar@/^/[ru]&&{a+3}\ar@/^/[rd]&\\
\qquad        0\ar  [r] &   1\ar  [r]& 2\ \ldots\!\!\!\!\!\!\!\!\!\!\!\!
&   {a-1}\ar  [r]&   {a}\ar@/^/[ru]&&&&{a+4}\ar@/^/[dl]\\
&&&&&{a+7}\ar @/^/ [lu]&&{a+5}\ar @/^/[ld]&\\
&&&&&&{a+6}\ar @/^/[lu]&&\\
}
}
\]
\caption{``Frying pan'' monoid $M_{a,k}$,  $k=8$, where  $Suc_{a,k}$  is represented by  arrow.
}\label{fig:frying pan}
\end{figure}
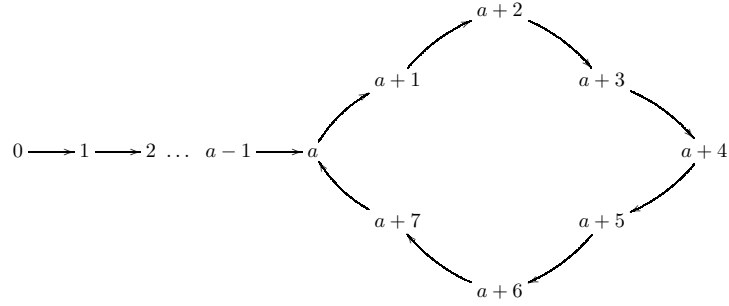
\begin{lemma}\label{l:finite monogenic monoids}
1) $\langle M_{a,k};+_{a,k}\rangle$ is a monogenic commutative monoid 
(called ``frying pan'' monoid, cf. Figure~\ref{fig:frying pan}) with $0$ as unit.

2) Every finite monogenic monoid $\langle M;\oplus\rangle$ is isomorphic to
the monoid $\langle M_{a,k};+_{a,k}\rangle$ for some $a,k$.

3) For every surjective morphism $\psi : \langle \N;+\rangle \to \langle M;\oplus\rangle$
onto a finite monoid $\langle M;\oplus\rangle$, there exists $a,k$
and an isomorphism $\theta : \langle M_{a,k};+_{a,k}\rangle \to \langle M;\oplus\rangle$
such that $\psi = \theta \circ \varphi_{a,k}$.
\end{lemma}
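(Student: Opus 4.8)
The plan is to prove the three parts in order, using Lemma~\ref{folk0} as the backbone since it already classifies all $+$-congruences on $\N$, and the quotient construction of Definition~\ref{def:finite monogenic monoids}.

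For part 1), I would first observe that $M_{a,k}=\{0,\ldots,a+k-1\}$ with the operation $+_{a,k}$ inherited via $\varphi_{a,k}$ is a commutative monoid with unit $0$, since commutativity and associativity and the unit law descend through the surjective morphism $\varphi_{a,k}$. To see it is monogenic with generator $g=1$, I would note that every nonzero element $m\in M_{a,k}$ satisfies $m=\underbrace{1+_{a,k}\cdots+_{a,k}1}_{m\text{ times}}=\varphi_{a,k}(m)$, because $\varphi_{a,k}$ fixes each representative in $\{0,\ldots,a+k-1\}$ and is a morphism; thus each element is a finite nonempty sum of copies of $1$. The ``frying pan'' picture of Figure~\ref{fig:frying pan} is exactly this: the ``handle'' $0,1,\ldots,a-1$ is the transient (aperiodic) part and the ``pan'' $a,\ldots,a+k-1$ is the cyclic part of period $k$ reached after $a$ steps.

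For part 2), let $\langle M;\oplus\rangle$ be a finite monogenic monoid with generator $g$. The plan is to define $\psi:\langle\N;+\rangle\to\langle M;\oplus\rangle$ by $\psi(0)=0$ (the unit) and $\psi(n)=\underbrace{g\oplus\cdots\oplus g}_{n}$ for $n\geq1$; this is a surjective morphism by the definition of generator. Since $M$ is finite, the sequence $\psi(0),\psi(1),\psi(2),\ldots$ must eventually repeat, so there exist least $a\geq0$ and least $k\geq1$ with $\psi(a)=\psi(a+k)$; then $\kernel(\psi)$ is a finite-index $+$-congruence on $\N$, and by Lemma~\ref{folk0} it must be $\sim_{a',k'}$ for some $a',k'$. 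Comparing the eventual-periodicity data shows $a'=a$ and $k'=k$, so $\kernel(\psi)=\sim_{a,k}$ and the induced map $M_{a,k}=\N/\!\!\sim_{a,k}\;\to M$ is the desired isomorphism $\theta$. Part 3) is then almost immediate: given any surjective $\psi:\langle\N;+\rangle\to\langle M;\oplus\rangle$ with $M$ finite, $M$ is automatically monogenic with generator $\psi(1)$ (since $\N$ is generated by $1$ and $\psi$ is a surjective morphism), so $\kernel(\psi)$ is a finite-index congruence, hence equals some $\sim_{a,k}$ by Lemma~\ref{folk0}; the first isomorphism theorem then yields the unique isomorphism $\theta:\langle M_{a,k};+_{a,k}\rangle\to\langle M;\oplus\rangle$ with $\theta\circ\varphi_{a,k}=\psi$, since $\varphi_{a,k}$ is exactly the canonical surjection for $\sim_{a,k}$.

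The main obstacle I anticipate is bookkeeping in part 2): one must verify that the least $a$ and $k$ extracted from the eventual periodicity of $\psi(n)$ really coincide with the parameters $a,k$ classifying $\kernel(\psi)$ via Lemma~\ref{folk0}, rather than merely giving \emph{some} $\sim_{a',k'}$. The cleanest route is to appeal directly to Lemma~\ref{folk0}'s characterization (the least lexicographic $(a,k)$ with $a\sim a+k$) and match it against the least $(a,k)$ with $\psi(a)=\psi(a+k)$; these are the same minimization, so the identification is forced and no separate injectivity argument on $M_{a,k}$ is needed. Everything else — that the descended operations are well defined and that $\theta$ is a morphism — follows routinely from Proposition~\ref{thm:CP=FactorThroughMorphism} or directly from the universal property of the quotient.
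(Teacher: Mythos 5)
Your proof is correct and takes essentially the same route as the paper: both construct the surjective morphism $\psi_g:\langle\N;+\rangle\to\langle M;\oplus\rangle$ from a generator $g$, observe that its kernel is a finite-index $+$-congruence which Lemma~\ref{folk0} forces to be some $\sim_{a,k}$, and take $\theta$ to be the induced map $M_{a,k}\to M$ (sum of $x$ copies of $g$), with part 3) reduced to part 2) via the generator $\psi(1)$. The only difference is cosmetic: your bookkeeping about matching the least $(a,k)$ from eventual periodicity with the parameters in Lemma~\ref{folk0} is not actually needed, since the argument only requires that $\kernel(\psi)$ equals \emph{some} $\sim_{a,k}$, which is exactly what the paper invokes.
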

\begin{proof}
We recall the argument of the classical proof of 2) which is also used
for Lemma~\ref{l:generators monogenic}.
Let $g$ be a generator of $M$.
Consider the relation on $\N$ such that $\ell\equiv n$ if
the sums in $\langle M;\oplus\rangle$ 
of $\ell$ copies of $g$ and that of $n$ copies of $g$ are equal.
This relation $\equiv$ is a congruence on $\langle\N;+\rangle$ and it has finite index
since $M$ is finite.
Thus, it is equal to $\sim_{a,k}$ for some $a,k$.
The wanted isomorphism $\theta : \langle M_{a,k};+_{a,k}\rangle \to \langle M;\oplus\rangle$
maps $x\in M_{a,k}$ onto the sum in $M$ of $x$ copies of $g$.
To get 3) observe   that $\langle M;\oplus\rangle$ is necessarily monogenic as so is $\langle\N;+\rangle$, and that the image $g=\psi(1)$ of the generator $1$ of $\langle\N;+\rangle$
is a generator of $\langle M;\oplus\rangle$,  then use the above isomorphism $\theta$.
\end{proof}

\begin{lemma}\label{l:generators monogenic}
The generators of the monogenic monoid $\langle M_{a,k};+_{a,k}\rangle$ are as follows:
\begin{itemize}
\item
If $a\geq 2$  then $1$ is the unique generator,
\item
If $a\in\{0,1\}$ the generators are the elements of $\{1,\ldots,a+k-1\}$ which are coprime  with $k$.
\end{itemize}
\end{lemma}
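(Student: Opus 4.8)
The plan is to adapt the argument used for part 2) of Lemma~\ref{l:finite monogenic monoids}. Fix $g\in M_{a,k}\setminus\{0\}$ and consider the relation $\equiv$ on $\N$ defined by $\ell\equiv n$ if and only if the sum of $\ell$ copies of $g$ equals the sum of $n$ copies of $g$ in $M_{a,k}$, that is, $\varphi_{a,k}(\ell g)=\varphi_{a,k}(ng)$. This is a $+$-congruence of finite index on $\langle\N;+\rangle$, so by Lemma~\ref{folk0} it equals $\sim_{a',k'}$ for some $a',k'$. The map $\theta\colon n\mapsto\varphi_{a,k}(ng)$ then induces an injective monoid morphism $M_{a',k'}\to M_{a,k}$ whose image is exactly the submonoid generated by $g$. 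Since both monoids are finite and $\theta$ is injective, $g$ is a generator of $M_{a,k}$ (equivalently, the image is all of $M_{a,k}$) if and only if $|M_{a',k'}|=|M_{a,k}|$, i.e. $a'+k'=a+k$. So everything reduces to computing $a'$ and $k'$ explicitly.

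Next I would determine $a'$ and $k'$ by reading off $\varphi_{a,k}$. The integer $\ell g$ lands in the handle $\{0,\dots,a-1\}$ exactly when $\ell g<a$, i.e. when $\ell<\lceil a/g\rceil$, in which case $\varphi_{a,k}(\ell g)=\ell g$; otherwise it lands in the cycle $\{a,\dots,a+k-1\}$. Two handle values coincide only when the integers coincide, two cycle values coincide iff $\ell g\equiv ng\pmod k$, hence iff $\ell\equiv n\pmod{k/\gcd(g,k)}$, and a handle value never equals a cycle value. A short case check then shows that $\equiv$ is exactly $\sim_{a',k'}$ with $a'=\lceil a/g\rceil$ and $k'=k/\gcd(g,k)$.

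To conclude, observe that $g\ge1$ gives $a'=\lceil a/g\rceil\le a$ and $k'=k/\gcd(g,k)\le k$, so the equality $a'+k'=a+k$ forces $a'=a$ and $k'=k$ separately. Now $k'=k$ holds iff $\gcd(g,k)=1$, while $a'=a$ holds automatically when $a\in\{0,1\}$ but, when $a\ge2$, forces $\lceil a/g\rceil=a$, i.e. $g=1$. Thus for $a\ge2$ the unique generator is $g=1$ (and $\gcd(1,k)=1$ trivially), whereas for $a\in\{0,1\}$ the generators are precisely the $g\in\{1,\dots,a+k-1\}$ with $\gcd(g,k)=1$. Conversely, whenever these conditions hold one has $a'+k'=a+k$, so $g$ is indeed a generator; this settles both implications at once.

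The main obstacle I anticipate is the careful verification that $\equiv$ is exactly $\sim_{a',k'}$: one must check both that the tail has length exactly $\lceil a/g\rceil$ with no accidental earlier collapse (which relies on the fact that handle elements cannot be reached through the cycle) and that the period is exactly $k/\gcd(g,k)$. The rest is bookkeeping, such as reconciling the boundary case $g=a+k-1=k$ arising when $a=1$ with the coprimality statement.
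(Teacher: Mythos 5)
Your proof is correct, but it takes a genuinely different route from the paper's. The paper argues directly about which elements of $M_{a,k}$ are reachable as $+_{a,k}$-sums of copies of $g$: for $a\geq 2$, any sum of copies of an element $\geq 2$ stays $\geq 2$, so reaching the element $1$ forces $g=1$; for $a\in\{0,1\}$ it computes the set of sums explicitly as $\{ng \pmod k \mid n=1,\ldots,k\}$ (resp.\ its shift by $1$) and observes that it covers $M_{a,k}\setminus\{0\}$ if and only if $\gcd(g,k)=1$. You instead compute the kernel congruence of $n\mapsto \varphi_{a,k}(ng)$, identify it via Lemma~\ref{folk0} as $\sim_{a',k'}$ with $a'=\lceil a/g\rceil$ and $k'=k/\gcd(g,k)$, and reduce generation to the cardinality equation $a'+k'=a+k$, which splits into $a'=a$ and $k'=k$ because the inequalities $a'\leq a$ and $k'\leq k$ hold separately. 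This is essentially the congruence-theoretic argument the paper uses to prove part 2) of Lemma~\ref{l:finite monogenic monoids} (and announces as ``also used'' for the present lemma), carried out in full rather than replaced by a direct computation. What each approach buys: the paper's version is shorter and needs no bookkeeping about $\lceil a/g\rceil$ or $k/\gcd(g,k)$; yours is uniform across the cases $a\geq 2$ and $a\in\{0,1\}$, and yields the stronger structural fact that the submonoid generated by any $g\neq 0$ is isomorphic to $M_{\lceil a/g\rceil,\,k/\gcd(g,k)}$. The steps you flag as delicate do check out: handle values are $<a$ while cycle values are $\geq a$, so no accidental collapse occurs, and $\ell g\equiv ng \pmod k$ is indeed equivalent to $\ell\equiv n \pmod{k/\gcd(g,k)}$, so the period is exactly $k/\gcd(g,k)$.
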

\begin{proof}
Obviously, $1$ is a generator in all cases.
Let $g$ be a generator. Necessarily $g\neq0$.
If $a\geq2$ since any $+_{a,k}$ sum of copies of an element $\geq2$ is also $\geq2$,
the sole way to obtain $1$ as a $+_{a,k}$ sum of copies of $g$ is that $g=1$.
If $a=0$ then the set of $+_{0,k}$ sums of nonempty finite sets of copies of $g$ is equal to 
$\{ng\pmod k \mid n\geq1\} = \{ng\pmod k \mid n=1,\ldots,k\}$ 
and contains $M_{0,k}\setminus\{0\}$
if and only if $g$ is  coprime with $k$. 
If $a=1$ then, for $g\geq1$,
the set of $+_{1,k}$ sums of nonempty finite sets of copies of $g$ is equal to 
$\{1+(ng-1\pmod k) \mid n\geq1)\} = \{1+(ng-1\pmod k) \mid n=1,\ldots,k\}$ 
and contains $M_{a,k}\setminus\{0\}$
if and only if $g$ is  coprime with  $k$. 
\end{proof}

\begin{lemma} \label{+morphisms} 
There is a bijective correspondence between the generators of a finite monogenic monoid
$\langle M;\oplus\rangle$ and
the surjective morphisms $\langle\N;+\rangle\to \langle M;\oplus\rangle$, defined by
\begin{eqnarray}\label{eq:psi g}
g &\leadsto& \textit{ the unique morphism 
$\psi_g : \langle\N;+\rangle\to \langle M;\oplus\rangle$  such that } \psi_g(1)=g
\end{eqnarray}
\end{lemma}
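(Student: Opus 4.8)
The plan is to first show that the assignment $g\mapsto\psi_g$ is a well-defined bijection between the whole carrier set $M$ and the set of \emph{all} monoid morphisms $\langle\N;+\rangle\to\langle M;\oplus\rangle$, and then to check that this bijection restricts to one between generators and surjective morphisms.

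For the first part, I would observe that a monoid morphism $\psi\colon\langle\N;+\rangle\to\langle M;\oplus\rangle$ is entirely determined by the single value $\psi(1)$: necessarily $\psi(0)=0$ since the unit is preserved, and by an immediate induction using $\psi(n+1)=\psi(n)\oplus\psi(1)$ one has $\psi(n)=\underbrace{\psi(1)\oplus\cdots\oplus\psi(1)}_{n}$ for every $n\geq1$. Conversely, for each $g\in M$ the prescription $\psi_g(0)=0$ and $\psi_g(n)=$ the sum of $n$ copies of $g$ (for $n\geq1$) defines a map with $\psi_g(1)=g$, and the associativity of $\oplus$ guarantees $\psi_g(m+n)=\psi_g(m)\oplus\psi_g(n)$, so that $\psi_g$ is indeed a morphism. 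Hence $g\mapsto\psi_g$ is a bijection from $M$ onto the set of all morphisms, with inverse $\psi\mapsto\psi(1)$; injectivity is immediate since $\psi_g(1)=g$, and this also yields the uniqueness asserted in \eqref{eq:psi g}.

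For the second part, I would simply compute the image of $\psi_g$. By the description above, $\psi_g(\N)=\{0\}\cup\{ng\mid n\geq1\}$, where $ng$ denotes the sum of $n$ copies of $g$. Since $0\in M$ always, $\psi_g$ is surjective if and only if $M\setminus\{0\}\subseteq\{ng\mid n\geq1\}$, which is precisely the defining condition for $g$ to be a generator of the monogenic monoid $\langle M;\oplus\rangle$. Thus $\psi_g$ is surjective exactly when $g$ is a generator, and the bijection $g\mapsto\psi_g$ restricts to the claimed bijection between the generators and the surjective morphisms.

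The argument is entirely routine; the only point requiring a little care is the bookkeeping around the unit $0$, namely that the image of $\psi_g$ always contains $0$ while the generator condition quantifies only over $M\setminus\{0\}$, so that surjectivity and the generator property match up exactly rather than being off by the element $0$.
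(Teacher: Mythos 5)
Your proof is correct and follows essentially the same route as the paper's: a morphism out of $\langle\N;+\rangle$ is determined by its value at $1$, sums of copies of $g$ define $\psi_g$, and surjectivity of $\psi_g$ is exactly the generator condition; the paper simply states this directly for generators and surjective morphisms, without your preliminary bijection between $M$ and the set of all morphisms. One caveat on that preliminary step: in the paper's universal-algebra setting the signature contains only $+$, so unit preservation ($\psi(0)=0_M$) is not automatic for an arbitrary morphism (an idempotent $e\neq 0_M$ absorbing the image can serve as $\psi(0)$), hence your bijection with \emph{all} morphisms requires the monoid-morphism convention — but preservation of the unit \emph{is} automatic for surjective morphisms (if $0_M=\psi(k)$ then $\psi(0)=\psi(0)\oplus 0_M=\psi(0)\oplus\psi(k)=\psi(k)=0_M$), which is all the lemma actually needs.
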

\begin{proof} 
Given a generator $g$ of $M$,  let $\psi_g$ be defined by condition \eqref{eq:psi g} 
together with $\psi_g(0)=0_M$  and, for $n\geq1$, 
$\psi_g(n)=\overbrace{g\oplus\cdots\oplus g}^{\text{$n$ times}}$ ;
$\psi_g$ defines a surjective morphism.
Conversely, as $1$ is a generator of $\langle\N;+\rangle$,  if $\psi : \langle\N;+\rangle\to \langle M;\oplus\rangle$ is a surjective morphism,
then  $\psi(1)$ is a generator of $M$.
\end{proof}
 \begin{corollary}\label{cor:+{ak}morphisms}
If $a\geq2$ or $(a,k)\in\{(0,1),(0,2),(1,1),(1,2)\}$ 
then $\varphi_{a,k}$ is the unique surjective morphism
$\langle\N;+\rangle\to \langle M_{a,k},+_{a,k}\rangle$.
If $a\in\{0,1\}$ and $k\geq3$,  then there are $\phi(k)$ distinct  
surjective morphisms $ \langle\N;+\rangle\to \langle M_{a,k};+_{a,k}\rangle$,
where $\phi\colon\N\to\N$ is Euler totient function mapping $x$ to the number of integers $\leq x$
which are coprime with $x$.
\end{corollary}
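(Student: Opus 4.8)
The plan is to reduce the whole statement to a count of the generators of $\langle M_{a,k};+_{a,k}\rangle$, and then read off that count from Lemma~\ref{l:generators monogenic}. Indeed, Lemma~\ref{+morphisms} gives a bijection between the generators of a finite monogenic monoid and the surjective morphisms $\langle\N;+\rangle\to\langle M;\oplus\rangle$ (sending a generator $g$ to the unique morphism $\psi_g$ with $\psi_g(1)=g$). Since $\langle M_{a,k};+_{a,k}\rangle$ is monogenic by Lemma~\ref{l:finite monogenic monoids}, it therefore suffices to count its generators and to identify which morphism corresponds to which generator.

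First I would dispose of the case $a\geq2$. By Lemma~\ref{l:generators monogenic}, $1$ is the unique generator, so there is exactly one surjective morphism. To see it is $\varphi_{a,k}$, note that the defining formula gives $\varphi_{a,k}(1)=1$ (as $1<a$), so $\varphi_{a,k}$ is precisely the morphism $\psi_1$ attached to the generator $1$ under the correspondence of Lemma~\ref{+morphisms}. The same identification $\varphi_{a,k}=\psi_1$ will be reused in the remaining ``unique'' cases, since there too $\varphi_{a,k}(1)=1$ for $(a,k)\in\{(1,1),(1,2),(0,2)\}$, while for $(0,1)$ the monoid $M_{0,1}$ is trivial and $\varphi_{0,1}$ is manifestly its only morphism.

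Next, for $a\in\{0,1\}$, Lemma~\ref{l:generators monogenic} says the generators are the elements of $\{1,\dots,a+k-1\}$ coprime with $k$, and the key point is that there are exactly $\phi(k)$ of them. For $a=0$ this set is $\{1,\dots,k-1\}$, whose elements coprime with $k$ number $\phi(k)$ by definition of the totient. For $a=1$ the set is $\{1,\dots,k\}$; since $\gcd(k,k)=k>1$ for $k\geq2$ the extra endpoint $k$ contributes nothing, so again the count is $\phi(k)$ (and for $k=1$ the lone element $1$ is coprime with $1$, matching $\phi(1)=1$). Hence for every $a\in\{0,1\}$ there are exactly $\phi(k)$ surjective morphisms. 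Specializing, uniqueness holds if and only if $\phi(k)=1$, i.e.\ if and only if $k\in\{1,2\}$, which together with $a\in\{0,1\}$ yields precisely the four pairs $(0,1),(0,2),(1,1),(1,2)$; for $k\geq3$ one gets $\phi(k)$ distinct morphisms, as claimed.

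The argument is thus essentially a two-step appeal to the two preceding lemmas, and the only point requiring a little care is the coprimality bookkeeping for $a\in\{0,1\}$: one must check that the endpoint $k$ in the range $\{1,\dots,k\}$ (case $a=1$) is never coprime with $k$ for $k\geq2$, and that the degenerate trivial monoid $(a,k)=(0,1)$ is handled correctly. I expect this boundary check to be the main (and indeed only) obstacle, everything else being immediate from Lemmas~\ref{l:generators monogenic} and~\ref{+morphisms}.
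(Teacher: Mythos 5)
Your proof is correct and follows essentially the same route as the paper, whose proof of this corollary is simply an appeal to Lemma~\ref{l:generators monogenic} (with the generator--morphism bijection of Lemma~\ref{+morphisms} implicitly doing the rest). Your additional boundary checks (the endpoint $k$ not being coprime with $k$ for $k\geq2$, the trivial monoid $M_{0,1}$, and the verification that $\varphi_{a,k}(1)=1$ identifies it with $\psi_1$) merely make explicit what the paper leaves to the reader.
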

\begin{proof}It follows from Lemma~\ref{l:generators monogenic}.
\end{proof}
\begin{definition}\label{def:semiring}
A {\em semiring} is a set $R$ equipped with two binary operations $\oplus$ and $\otimes$ such that
\\- $\langle R, \oplus\rangle$ is a commutative monoid with an identity element, say $0$,
\\- $\langle R, \otimes\rangle$ is a monoid with an identity element,
\\- Multiplication by 0 annihilates $R$: $0\otimes a = a\otimes 0 = 0$ for all $a$,
\\- Multiplication left and right distributes over addition: 
$a\otimes(b\oplus c)=(a\otimes b)\oplus(a\otimes c)$
and $(a\oplus b)\otimes c=(a\otimes c)\oplus(b\otimes c)$ for all $a,b,c$.
\end{definition}

We use the arithmetic operations defined on $M_{a,k}$, 
cf. Definition \ref{def:finite monogenic monoids}.

\begin{lemma}\label{l:varphi ak morphism semiring}
The algebra $\langle M_{a,k};+_{a,k},\times_{a,k}\rangle$ is a semiring,
called the ``({a,k}) frying pan semiring'',
and $\varphi_{a,k}$ is a morphism 
$\langle\N;+,\times\rangle \to \langle M_{a,k};+_{a,k},\times_{a,k}\rangle$. 
\end{lemma}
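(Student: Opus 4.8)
The plan is to reduce everything to the standard universal-algebra fact that a homomorphic image of a semiring is again a semiring, the point being that every semiring axiom (Definition~\ref{def:semiring}) is a universally quantified identity between terms in $+$ and $\times$, and such identities are inherited along a surjective morphism.

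First I would check that $\varphi_{a,k}$ really is a morphism, i.e.\ that the operations $+_{a,k}$ and $\times_{a,k}$ of Definition~\ref{def:finite monogenic monoids} are well defined. For $+_{a,k}$ this is Lemma~\ref{folk0}, which tells us that $\sim_{a,k}$ is a congruence on $\langle\N;+\rangle$; for $\times_{a,k}$ it is Corollary~\ref{+CPimpliesXCPN}, which says that every $+$-congruence, and in particular $\sim_{a,k}$, is also a $\times$-congruence. Hence $+_{a,k}$ and $\times_{a,k}$ are well defined, and by the very formula $\xi_{a,k}(\ldots)=\varphi_{a,k}\big(\xi(\ldots)\big)$ used to define them, $\varphi_{a,k}$ commutes with $+$ and $\times$; that is, $\varphi_{a,k}$ is a surjective morphism $\langle\N;+,\times\rangle\to\langle M_{a,k};+_{a,k},\times_{a,k}\rangle$. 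This already yields the second assertion of the lemma.

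For the first assertion I would transfer each semiring axiom along $\varphi_{a,k}$. Given any identity $s(\vec{x})=t(\vec{x})$ valid for all natural integers (commutativity and associativity of $+$, associativity of $\times$, the two distributivity laws, and $0\times x=x\times 0=0$) and any tuple $\vec{u}$ of elements of $M_{a,k}$, surjectivity lets me write $u_j=\varphi_{a,k}(x_j)$; then $s(\vec{u})=\varphi_{a,k}(s(\vec{x}))=\varphi_{a,k}(t(\vec{x}))=t(\vec{u})$, since $\varphi_{a,k}$ commutes with the operations and $s(\vec{x})=t(\vec{x})$ holds in $\N$. The neutral elements are handled the same way: $\varphi_{a,k}(0)$ is the additive unit and $\varphi_{a,k}(1)$ the multiplicative unit of $M_{a,k}$, because for instance $u+_{a,k}\varphi_{a,k}(0)=\varphi_{a,k}(x+0)=\varphi_{a,k}(x)=u$; note also $\varphi_{a,k}(0)=0$, and the degenerate case $k=1$ (where $M_{0,1}=\{0\}$ and the two units collapse to $0$) is covered automatically. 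Commutativity of $+_{a,k}$ and the monoid structure of $\langle M_{a,k};\times_{a,k}\rangle$ follow by the same recipe.

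There is no serious obstacle here; the only point that is not purely formal is that multiplication descends to the quotient at all, i.e.\ that $\sim_{a,k}$ is a $\times$-congruence, which is precisely the content of Corollary~\ref{+CPimpliesXCPN} and ultimately rests on the elementary fact that modular congruences are stable under multiplication. Everything else is the routine identity-transfer argument above, valid exactly because $\varphi_{a,k}$ is surjective.
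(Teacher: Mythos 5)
Your proof is correct, and in fact the paper itself states this lemma without any proof, treating it as routine; your argument supplies exactly the verification the paper's definitions implicitly rely on. Namely, well-definedness of $+_{a,k}$ and $\times_{a,k}$ comes from $\sim_{a,k}$ being both a $+$-congruence (Lemma~\ref{folk0}) and a $\times$-congruence (Corollary~\ref{+CPimpliesXCPN}), after which the morphism property is immediate from Definition~\ref{def:finite monogenic monoids} and the semiring axioms, being universally quantified identities, transfer along the surjection $\varphi_{a,k}$.
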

\if 35

{\ire CE LEMME NE SERT NULLE PART{\color{red} SUPPRIMER ??}\\
\sege NON, CAR IL SERT A REPONDRE A LA QUESTION : LES +-MORPHISMES
SONT-ILS AUSSI DES $(+,\times)$-MORPHISMES ?}
{\seg

The following Lemma is the key to complete Corollary~\ref{+CPimpliesXCPN}
by fully explaining the reason behind the counterexample in 
Remark~\ref{r:morphimes+x}.

\begin{lemma} \label{l:monogenic semiring} 
Let $\langle M;\oplus,\otimes\rangle$ be a finite monogenic semiring.
The unit element of the multiplicative monoid $\langle M;\otimes\rangle$ is a generator of 
the additive monoid $\langle M;\oplus\rangle$.
\end{lemma}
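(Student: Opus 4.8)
The plan is to bypass the classification of the frying pan monoids entirely and argue directly from distributivity, obtaining a proof with no case split. Write $0$ for the additive identity and $e$ for the multiplicative unit, let $g$ be an additive generator of $\langle M;\oplus\rangle$, and for $n\geq 1$ let $n\cdot x$ abbreviate the $\oplus$-sum of $n$ copies of $x$ (with $0\cdot x=0$). First I would dispose of the trivial case $M=\{0\}$, where $e=0$ and there is nothing to prove. In the remaining case $e\neq 0$: otherwise $x=e\otimes x=0\otimes x=0$ for every $x$, forcing $M=\{0\}$.

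Since $g$ generates and $e\neq 0$, I can write $e=m\cdot g$ for some $m\geq 1$. Because $g$ generates, every element of $M$ has the form $n\cdot g$ with $n\geq 0$; in particular $g\otimes g=c\cdot g$ for some $c\geq 0$. The key computation evaluates $e\otimes g$ in two ways. On the one hand $e\otimes g=g$ since $e$ is the multiplicative unit. On the other hand, using $e=m\cdot g$ and right-distributivity of $\otimes$ over $\oplus$,
$$ e\otimes g=(m\cdot g)\otimes g=m\cdot(g\otimes g)=m\cdot(c\cdot g)=(mc)\cdot g. $$
Comparing the two yields $g=(mc)\cdot g$.

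From this I extract the crucial fact that the generator $g$ is itself an additive multiple of $e$:
$$ c\cdot e=c\cdot(m\cdot g)=(cm)\cdot g=(mc)\cdot g=g. $$
(The degenerate possibility $c=0$ would give $g=e\otimes g=0$, hence $M=\{0\}$, already excluded, so in fact $c\geq 1$.) Finally, since $g=c\cdot e$ and every element of $M$ equals $\ell\cdot g$ for some $\ell\geq 0$, each such element is $\ell\cdot(c\cdot e)=(\ell c)\cdot e$, a sum of copies of $e$; and if the element is nonzero then $\ell c\geq 1$, so it is a nonempty such sum. Thus $e$ generates $\langle M;\oplus\rangle$, as claimed.

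The only genuine obstacle is spotting the bridge in the middle step: one must notice that combining $g\otimes g=c\cdot g$ with $e\otimes g=g$ produces the relation $g=(mc)\cdot g$, which can then be re-read as $g=c\cdot e$; everything before and after is routine use of distributivity and of the generating property. An alternative route would invoke the isomorphism $\langle M;\oplus\rangle\cong\langle M_{a,k};+_{a,k}\rangle$ from Lemma~\ref{l:finite monogenic monoids} together with the generator description of Lemma~\ref{l:generators monogenic}, checking that the multiplicative unit is coprime to $k$; but that forces a split into the cases $a\geq 2$ and $a\in\{0,1\}$ and is less transparent than the direct argument above.
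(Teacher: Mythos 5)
Your proof is correct, and it takes a genuinely different route from the paper's. The paper argues via structure theory: it invokes Lemma~\ref{l:finite monogenic monoids} to identify $\langle M;\oplus\rangle$ with a frying pan monoid $\langle M_{a,k};+_{a,k}\rangle$, observes that the multiplicative unit $u$ is nonzero and that $1\otimes 1$ is an additive generator (from $1=u\otimes 1=u\cdot(1\otimes1)$), and then splits into cases: $1\otimes1=1$ forces $u=1$, while $1\otimes1\neq1$ forces $a\in\{0,1\}$ via Lemma~\ref{l:generators monogenic}, after which $u$ is shown to be a generator by coprimality arithmetic modulo $k$, with separate subcases $a=0$ and $a=1$. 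Your argument replaces all of this by the single equational chain $g=e\otimes g=(m\cdot g)\otimes g=m\cdot(g\otimes g)=(mc)\cdot g=c\cdot e$, which exhibits the given additive generator $g$ directly as a nonempty sum of copies of $e$, whence every nonzero element $\ell\cdot g$ equals $(\ell c)\cdot e$; each step is licensed by the semiring axioms the paper lists (right-distributivity, the unit law, annihilation by $0$), and your treatment of the degenerate cases $M=\{0\}$, $e=0$, and $c=0$ is sound. Your approach buys two things: it is self-contained (no classification, no case split, no number theory), and, notably, it nowhere uses finiteness of $M$, so it proves the statement for arbitrary monogenic semirings, whereas the paper's route is intrinsically tied to the finite frying pan classification. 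What the paper's proof buys in exchange is only coherence with the explicit $M_{a,k}$ machinery it has already built and reuses elsewhere.
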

\begin{proof} 
The case where $M$ is a singleton set is trivial.
We now assume that $M$ has at least two elements.
Using Lemma~\ref{l:finite monogenic monoids}, we reduce to the case where the additive 
monoid $\langle M;\oplus\rangle=\langle M_{a,k};+_{a,k}\rangle$ with $k\geq1$ and $a+k\geq2$.
Let $u$ be the unit of $\otimes$.
Observe that $u\neq0$ : else, we would have $x=x\otimes u=x\otimes 0=0$, 
contradicting the hypothesis that $M$ has $n\geq2$ elements.
Since $u\geq1$ and $1$ is a generator of $\langle M;\oplus\rangle$,  
$u$ is the $+_{a,k}$ sum of $u$ copies of $1$;   as $u$ is the unit of $\otimes$, we have 
 $1=u\otimes 1=(1+_{a,k}\cdots+_{a,k}1)\otimes 1=
(1\otimes1)+_{a,k}\cdots+_{a,k}(1\otimes1)$ 
so that $1$ is the sum of $u$ copies of $1\otimes1$.
As $1$ is a generator and $1\otimes1$ generates $1$ we see that $1\otimes1$ is a generator.

If $1\otimes1=1$ then $1$ is the $+_{a,k}$ sum of $u$ copies of $1$. 
This implies that $u=1$ hence is a generator.

If $1\otimes1\neq1$ then $\langle M_{a,k};+_{a,k}\rangle$ has generators other than $1$
and Lemma~\ref{l:generators monogenic} insures that $a\in\{0,1\}$ 
and $1\otimes1$ is coprime to $k$.
If $a=0$, the fact that $1$ is the $+_{0,k}$ sum of $u$ copies of $1\otimes1$ means that
$1=u(1\otimes1)\pmod k$ and implies that $u$ is also coprime with $k$ hence $u$ is a generator
of $\langle M_{0,k};+_{0,k}\rangle$.
If $a=1$, the fact that $1$ is the $+_{1,k}$ sum of $u$ copies of $1\otimes1$ means that
$1=1+((u(1\otimes1)-1)\pmod k)$, i.e. $u(1\otimes1)=1\pmod k$.
Again, this implies that $u$ is also coprime with $k$ hence $u$ is a generator
of $\langle M_{1,k};+_{1,k}\rangle$.
\end{proof} 

In contrast with Corollary~\ref{+CPimpliesXCPN},
there is a unique surjective morphism $\langle\N;+\rangle\to \langle M_{a,k};+_{a,k}\rangle$
which is also a $\langle\N;\times\rangle\to \langle M_{a,k},\times_{a,k}\rangle$ morphism.

\begin{lemma} \label{+xmorphisms} 
Let $\langle M;\oplus,\otimes\rangle$ be a finite monogenic semiring.
There exists a unique morphism $\psi : \langle\N;+\rangle\to\langle M;\oplus\rangle$
which is also a morphism $\psi : \langle\N;+,\times\rangle\to\langle M;\oplus,\otimes\rangle$.
namely the one which maps $1$ onto the unit element of $\langle M;\otimes\rangle$.
In particular, $\varphi_{a,k}$ is the unique morphism 
$\langle\N;+,\times\rangle\to \langle M_{a,k},+_{a,k}\times_{a,k}\rangle$.
\end{lemma}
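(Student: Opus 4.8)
The plan is to reduce everything to the behaviour of a candidate morphism on the two units of $\N$: the freeness of $\langle\N;+\rangle$ on its generator $1$ yields uniqueness, while a single distributivity computation yields existence. Write $u$ for the identity of the multiplicative monoid $\langle M;\otimes\rangle$. As in the proof of Lemma~\ref{+morphisms}, any additive morphism $\psi\colon\langle\N;+\rangle\to\langle M;\oplus\rangle$ is determined by the value $g=\psi(1)$ through $\psi(n)=g\oplus\cdots\oplus g$ ($n$ terms); conversely, for each $g\in M$ this formula defines such a morphism, which I denote $\psi_g$ (surjective precisely when $g$ is a generator). By Lemma~\ref{l:monogenic semiring}, $u$ is a generator of $\langle M;\oplus\rangle$, so $\psi_u$ is in particular surjective.

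For uniqueness, I would use that a morphism of multiplicative monoids sends identity to identity. Hence if $\psi$ is also a morphism $\langle\N;+,\times\rangle\to\langle M;\oplus,\otimes\rangle$, then necessarily $\psi(1)=u$, so $\psi=\psi_u$ is the only possible candidate.

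For existence, the task is to verify that $\psi_u$ is multiplicative. Writing $m\cdot u$ for the $\oplus$-sum of $m$ copies of $u$, so that $\psi_u(m)=m\cdot u$, and applying the left and right distributive laws of Definition~\ref{def:semiring} together with $u\otimes u=u$, one obtains
\[
\psi_u(m)\otimes\psi_u(n)=(m\cdot u)\otimes(n\cdot u)
=\bigoplus_{i=1}^{m}\bigoplus_{j=1}^{n}(u\otimes u)
=(mn)\cdot u=\psi_u(mn).
\]
Since $\psi_u(1)=u$, this shows $\psi_u$ respects $\times$; the degenerate cases $m=0$ or $n=0$ are covered by $\psi_u(0)=0_M$ and the annihilation axiom. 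As $\psi_u$ is an additive morphism by construction, it is a semiring morphism, and by the previous paragraph it is the unique one.

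For the ``in particular'' statement I would invoke Lemma~\ref{l:varphi ak morphism semiring}, by which $\langle M_{a,k};+_{a,k},\times_{a,k}\rangle$ is a semiring and $\varphi_{a,k}$ a semiring morphism; Lemma~\ref{l:finite monogenic monoids}(1) makes its additive monoid monogenic, so the semiring is finite monogenic and the uniqueness just proved forces $\varphi_{a,k}=\psi_u$. The only step demanding real care is the displayed identity, namely expanding $(m\cdot u)\otimes(n\cdot u)$ into $mn$ copies of $u\otimes u$ by applying the two distributive laws in turn; everything else is bookkeeping with the additive and multiplicative units. This finally clarifies Remark~\ref{r:morphimes+x}: when $a\in\{0,1\}$ and $k\geq3$, the other surjective additive morphisms onto $M_{a,k}$ are the $\psi_g=\varphi_{a,k}\circ(x\mapsto gx)$ for generators $g\neq1$, and they fail to be multiplicative exactly because the homothety $x\mapsto gx$ is not a $\times$-morphism.
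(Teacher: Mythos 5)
Your existence argument is sound, and is actually more self-contained than the paper's: where the paper just cites Lemma~\ref{l:varphi ak morphism semiring} and the structure theory of finite monogenic monoids, you verify multiplicativity of $\psi_u$ directly from the two distributive laws and $u\otimes u=u$, handling the degenerate cases via annihilation. The gap is in your uniqueness step. You assert that ``a morphism of multiplicative monoids sends identity to identity.'' In this paper's universal-algebra setting a morphism $\langle\N;+,\times\rangle\to\langle M;\oplus,\otimes\rangle$ is only required to preserve the two binary operations; the units are not part of the signature, so nothing forces $\psi(1)=u$. The claim is in fact false without an extra hypothesis: whenever $M$ has at least two elements, the constant map $\psi(n)=0_M$ preserves both operations (since $0_M\oplus 0_M=0_M$ and, by the annihilation axiom, $0_M\otimes 0_M=0_M$), yet it sends $1$ to $0_M\neq u$. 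So uniqueness among \emph{all} morphisms --- which is what you set out to prove, and what the statement literally says --- is not even true, and your argument cannot be repaired at that level of generality.

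The paper's own proof quietly fixes this by restricting to \emph{surjective} morphisms (which is also the setting of the surrounding results, e.g.\ Lemma~\ref{+morphisms}), and surjectivity is used exactly at the step you skipped: from $\psi(n)=\psi(n\times 1)=\psi(n)\otimes\psi(1)$ and $\psi(n)=\psi(1)\otimes\psi(n)$ one only gets that $\psi(1)$ is a two-sided identity for the image $\psi(\N)$; if $\psi$ is surjective then $\psi(\N)=M$, and uniqueness of units in a monoid gives $\psi(1)=u$. With that correction the rest of your proof goes through and matches the paper's scheme: additive morphisms are determined by the image of $1$, the monogenic-semiring lemma you invoke shows $u$ generates $\langle M;\oplus\rangle$ so that $\psi_u$ is surjective, and your distributivity computation supplies existence; the ``in particular'' clause for $\varphi_{a,k}$ then follows as you say.
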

\begin{proof} 
Since $1$ generates $\langle\N;+\rangle$, any additive morphism between $\N$ and $M$
is entirely determined by the image of $1$.
Since $1$ is the unit of $\langle\N;\times\rangle$,
its image by any multiplicative surjective morphism
$\langle\N;\times\rangle\to\langle M;\otimes\rangle$ is the unit of $\langle M;\otimes\rangle$.
This shows that there is at most one surjective morphism
$\langle\N;+,\times\rangle\to\langle M;\oplus,\otimes\rangle$.
The existence of such a morphism is a consequence of Lemmas~\ref{l:finite monogenic monoids} and \ref{l:varphi ak morphism semiring}.
\end{proof}
}
\fi
\subsection{Congruence preservation and divisibility}\label{ss:CPD}
Congruence preservation on $\langle\N;+\rangle$ can be characterized as follows
\begin{theorem}\label{thm:CPsurN}
For a map $f:\N\to\N$, the following conditions are equivalent
\begin{enumerate}
\item
$f:\N\to\N$ is congruence preserving on the algebra  $\langle\N;+\rangle$,
\item
$\left\{
   \text{\begin{tabular}{l}
(i) \qquad\ $(x-y)$ divides $(f(x)-f(y))$ for all $x,y\in\N$, and \label{CPa}\\
(ii) \quad\ \  either $f$ is constant or $f(x)\geq x$ for all $x$. \label{CPb}
\end{tabular}}
\right.$
\end{enumerate}
\end{theorem}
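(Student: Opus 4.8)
The plan is to lean entirely on the complete classification of congruences of $\langle\N;+\rangle$ supplied by Lemma~\ref{folk0}: every such congruence is either equality or one of the ``frying pan'' congruences $\sim_{a,k}$. Compatibility with equality is automatic for any map, and a constant $f$ is trivially compatible with every binary relation, so the real content concerns how a non-constant $f$ interacts with the $\sim_{a,k}$. I will prove the two implications separately, and a recurring theme is that the \emph{cancellable} congruences $\sim_{0,k}$ encode divisibility while the \emph{non-cancellable} ones $\sim_{a,k}$ with $a\geq1$ force the growth condition $f(x)\geq x$.

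For $(1)\Rightarrow(2)$, I would first extract condition (i). Given $x\neq y$, set $d=|x-y|$; then $x\equiv y\pmod d$, i.e.\ $x\sim_{0,d}y$ (ordinary congruence modulo $d$, which is exactly $\sim_{0,d}$). Congruence preservation forces $f(x)\sim_{0,d}f(y)$, that is $d\mid f(x)-f(y)$, which is precisely $(x-y)\mid(f(x)-f(y))$; the case $x=y$ is vacuous. Only the congruences $\sim_{0,k}$ are used here.

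The harder half, and the main obstacle, is condition (ii); this is where the non-cancellable congruences do the work. Suppose $f$ is not constant but $f(n)=m<n$ for some $n$, aiming for a contradiction. Put $a=m+1$, so that $m<a$ lies below the threshold and hence forms a singleton $\sim_{a,k}$-class for every $k\geq1$, whereas $n\geq a$. Since $n\sim_{a,k}n+k$, compatibility gives $m=f(n)\sim_{a,k}f(n+k)$, and singletonhood of the class of $m$ forces $f(n+k)=m$ for all $k\geq1$; thus $f$ is identically $m$ on $[n,\infty)$. Now I feed this back into the divisibility (i) already proved: for an arbitrary $j$ and any integer $x\geq n$ with $x>j$ we have $(x-j)\mid\big(f(x)-f(j)\big)=\big(m-f(j)\big)$, and letting $x\to\infty$ the divisors $x-j$ grow without bound, forcing $m-f(j)=0$. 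Hence $f\equiv m$ everywhere, contradicting non-constancy. Therefore a non-constant $f$ satisfies $f(x)\geq x$ for all $x$.

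For $(2)\Rightarrow(1)$, assume (i) and (ii). If $f$ is constant it is compatible with every relation, so I may assume $f(x)\geq x$ for all $x$. By Lemma~\ref{folk0} it suffices to check compatibility with each $\sim_{a,k}$. Take $x\sim_{a,k}y$ with $x\neq y$; then $x,y\geq a$ and $k\mid y-x$. Divisibility (i) gives $(y-x)\mid(f(y)-f(x))$, and since $k\mid y-x$ we obtain $f(x)\equiv f(y)\pmod k$; moreover $f(x)\geq x\geq a$ and $f(y)\geq y\geq a$ by (ii). These are exactly the two requirements defining $f(x)\sim_{a,k}f(y)$, so every congruence is compatible with $f$ and $f$ is congruence preserving. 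The only delicate point throughout is the backward propagation step in $(1)\Rightarrow(2)$(ii), which needs both the singleton-class structure of $\sim_{m+1,k}$ and the divisibility already secured in part (i).
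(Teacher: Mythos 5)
Your proof is correct and follows essentially the same route as the paper's: divisibility (i) is extracted from the modular congruences, the growth condition (ii) is forced by the non-cancellable frying pan congruences (you use the family $\sim_{m+1,k}$ where the paper uses the single congruence $\sim_{a,1}$ with $a$ minimal such that $f(a)<a$, but both rest on the same singleton-class structure below the threshold), and constancy is propagated backwards by the same unbounded-divisor argument; the converse direction is identical. This is a faithful reconstruction of the paper's argument, with only cosmetic variations.
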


\begin{proof}
$(1)\Rightarrow(2)$.
Suppose $f$ is congruence preserving. 
Let $x<y$ and consider the congruence modulo $y-x$.
As $y\equiv x\bmod y-x$ we have $f(y)\equiv f(x)\bmod y-x$ hence $y-x$ divides $f(y)-f(x)$.
This proves (i).
To prove (ii), we show that if condition $f(x)\geq x$ is not satisfied then $f$ is constant.
Let $a$ be least such that $f(a)<a$.
 Consider 
the frying pan $M_{a,1}$ (depicted in Figure~\ref{Ma,1}) 
and the congruence $\sim_{a,1}$. 
\begin{figure}[h]
\[\xymatrix{
0\ar[r]&1\ar[r]&\ldots &{a-1}\ar[r]&a\ar@(ul,ur)
}\]
\caption{$M_{a,1}$}\label{Ma,1}
\end{figure}
We have $a\sim_{a,1} y$ for all $y\geq a$ hence $f(a)\sim_{a,1}f(y)$. 
As $f(a)<a$ this implies $f(a)=f(y)$.
Thus, $f(a)=f(y)$ for all $y\geq a$.
Let $z<a$. By condition (i) (already proved) we know that $p$ divides $f(z)-f(z+p)$ for all $p$.
Now, $f(z+p)=f(a)$ when $z+p\geq a$. Thus, $f(z)-f(a)$ is divisible by all $p\geq a-z$.
This shows that $f(z)=f(a)$.
Summing up, we have proved that $f$ is constant with value $f(a)$.

$(2)\Rightarrow(1)$.
Constant functions are trivially congruence preserving. 
We thus assume that $f$ is not constant, hence $f$ satisfies condition (i) and $f(x)\geq x$ for all $x$.
Consider a congruence $\sim$ and suppose $x\sim y$. 
If $\sim$ is the identity relation then $x=y$ hence $f(x)=f(y)$.
Else, by Lemma~\ref{folk0} the congruence $\sim$ is $\sim_{a,k}$ with $a\in\N$ and $k\geq1$.
In case $y<a$ then condition $x\sim_{a,k} y$ implies $x=y$ hence $f(x)=f(y)$ and $f(x)\sim f(y)$.
In case $y\geq a$ then condition $x\sim_{a,k} y$ implies $x\geq a$ and $x\equiv y\bmod k$,
hence $k$ divides $y-x$.
Condition (i) insures that $y-x$ divides $f(y)-f(x)$ hence $k$ also divides $f(y)-f(x)$.
Also, our hypothesis yields $f(x)\geq x$ and $f(y)\geq y$.
As $x,y\geq a$ we get $f(x),f(y)\geq a$. Since $k$ divides $f(y)-f(x)$ we conclude that 
$f(x)\sim_{a,k} f(y)$, whence (1).
\end{proof}
\begin{remark}\label{exCPnonsurlineaire} 
 We cannot withdraw the over-linearity condition $f(x)\geq x$ in Theorem \ref{thm:CPsurN}. We proved in {\rm\cite{cgg15}} that any function $f\colon\Z/n\Z\to
\Z/n\Z$ satisfying (2)$(i)$ can be lifted to a function $F\colon\N\to\N$  such that $F$ satisfies (2)$(i)$ and for $x\leq n-1$, $F(x)=f(x)$. Consider the function $f\colon\Z/3\Z\to\Z/3\Z$  such that $f(0) = f(1)=0<1$, $f(2)=2$;  $F$ is non constant, satisfies (2)$(i)$ but is not congruence preserving as $F(1)<1$  (using Theorem \ref{thm:CPsurN}). In fact we can  directly see that 
$F$ is not congruence preserving using the congruence $\sim_{1,1}$: indeed $1\sim_{1,1}2$
but $F(1)\not\sim_{1,1}F(2)$.
\end{remark}

Using Lemma \ref{lem:de n a 1}, we    immediately  deduce  from Theorem \ref{thm:CPsurN}
\begin{corollary}
Function $f\colon \N^n\longrightarrow \N$ is congruence preserving if and only if condition (2) of Theorem \ref{thm:CPsurN}
holds for all the unary frozen functions $f^{\vec{a}}_i$,  $1\leq i\leq n$ and $\vec{a}\in \N^{n-1}$.
\end{corollary}
%
\subsection{Congruence/(pre)orders preservation and monotonicity}\label{s:34}

Theorem~\ref{thm:CPsurN} may induce the hope that  congruence preserving functions   over $\langle\N;+\rangle$ are  monotone, but this is not the case.
Counterexamples can be obtained using  the following Propostition \ref{p:g idr f} (Theorem 3.15 in  \cite{cgg14}).

\begin{proposition}[\cite{cgg14}]\label{p:g idr f}
For every $f:\N\to\N$ there exists a function $g:\N\to\Z$ such that,
letting $\lcm(x)$ be the least common multiple of $1,\ldots,x$,
\begin{enumerate}
\item[i)]
$x-y$ divides $g(x)-g(y)$ for all $x,y\in\N$,
\item[ii)]
$f(x)- 2^x \lcm(x) \leq g(x)\leq f(x)$.
\end{enumerate}
\end{proposition}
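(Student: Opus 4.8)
The plan is to build $g$ by induction on $x$, choosing at each stage the largest integer $\le f(x)$ that is compatible with the divisibility constraints already imposed by $g(0),\dots,g(x-1)$. Set $g(0)=f(0)$. Assume $g(0),\dots,g(x-1)$ have been defined with $(u-v)\mid(g(u)-g(v))$ whenever $v<u\le x-1$. To maintain (i) I must choose $g(x)$ in the solution set of the system
\[
g(x)\equiv g(y)\pmod{\,x-y\,}\qquad(0\le y\le x-1),
\]
in which the moduli $x-y$ run through $\{1,2,\dots,x\}$. The first thing to establish is that this system is solvable and that its solutions form a single residue class modulo $\lcm(1,\dots,x)=\lcm(x)$.

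For solvability I would appeal to the generalized Chinese Remainder Theorem: a system $z\equiv b_i\pmod{m_i}$ is solvable if and only if $b_i\equiv b_j\pmod{\gcd(m_i,m_j)}$ for all $i,j$, and the solution is then unique modulo the least common multiple of the $m_i$. Here $b_y=g(y)$ and $m_y=x-y$. Given $y_1,y_2<x$, set $d=\gcd(x-y_1,x-y_2)$; then $d$ divides $(x-y_1)-(x-y_2)=y_2-y_1$, and the induction hypothesis yields $(y_2-y_1)\mid(g(y_2)-g(y_1))$, hence $g(y_1)\equiv g(y_2)\pmod d$. So all pairwise compatibility conditions hold, the system is solvable, and its solution set is an arithmetic progression of common difference $\lcm(1,\dots,x)=\lcm(x)$, unbounded below. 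I then let $g(x)$ be its largest term not exceeding $f(x)$, which exists since the progression is unbounded below.

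This choice gives $f(x)-\lcm(x)<g(x)\le f(x)$ for every $x$, which is in fact stronger than required: since $\lcm(x)\le 2^x\lcm(x)$ we obtain (ii), while (i) holds by construction for all pairs with $y<x$, hence for all pairs by symmetry. The only genuine difficulty is the solvability step, where pairwise $\gcd$-compatibility must be promoted to joint solvability; the remaining verifications are routine bookkeeping. It is exactly the least common multiple of the moduli $1,\dots,x$ that governs the spacing of the admissible values of $g(x)$, and this is where the factor $\lcm(x)$ in (ii) originates. Alternatively one can make $g$ explicit in the Newton basis, writing $g(x)=\sum_{n\le x}a_n\,\lcm(n)\binom{x}{n}$ with $a_n\in\Z$ — one checks that each $\lcm(n)\binom{x}{n}$ satisfies (i) — and choosing $a_x$ at each stage so that $g(x)\in(f(x)-\lcm(x),f(x)]$; this reproduces the same bound.
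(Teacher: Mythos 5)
Your proof is correct. A point of context first: the paper itself contains no proof of this proposition --- it is imported from \cite{cgg14} (Theorem 3.15 there), where it comes out of the Newton-expansion machinery, i.e., the characterization of the functions satisfying (i) as exactly the $\Z$-linear combinations $\sum_{n} a_n\,\lcm(n)\binom{x}{n}$; the factor $2^x$ in (ii) is an artifact of that route (it reflects bounds of the type $\sum_{n\le x}\binom{x}{n}=2^x$). Your argument is genuinely different and more elementary: a greedy induction whose only nontrivial ingredient is the generalized Chinese Remainder Theorem, and you verify its pairwise-compatibility hypothesis correctly from the induction hypothesis ($\gcd(x-y_1,x-y_2)$ divides $y_2-y_1$, which divides $g(y_2)-g(y_1)$). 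Since the solution set of your congruence system is a full residue class modulo $\lcm(1,\ldots,x)=\lcm(x)$, taking its largest element $\le f(x)$ gives $f(x)-\lcm(x)<g(x)\le f(x)$, which is strictly sharper than the stated bound $f(x)-2^x\lcm(x)\le g(x)\le f(x)$. So your approach buys self-containedness and a better error term, while the Newton-basis route of \cite{cgg14} buys an explicit description of \emph{all} solutions of (i), which is of independent interest but not needed for this proposition. One caveat on your closing alternative: the Newton-basis variant is indeed workable, but it silently rests on the nontrivial lemma that $x-y$ divides $\lcm(n)\bigl(\binom{x}{n}-\binom{y}{n}\bigr)$ for all $n$, which is precisely the machinery of \cite{cgg14}; your primary CRT argument avoids this entirely and should be regarded as the main proof.
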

\begin{proposition}\label{p:cp non monotone}
 There exists a non  monotone function $g:\N\to\N$
which is $\langle\N;+\rangle$-congruence preserving.
\end{proposition}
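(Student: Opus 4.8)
The plan is to construct an explicit non-monotone $g:\N\to\N$ that is congruence preserving, and the natural tool is Proposition~\ref{p:g idr f} combined with the characterization of congruence preservation given in Theorem~\ref{thm:CPsurN}. Recall that by Theorem~\ref{thm:CPsurN}, a function is $\langle\N;+\rangle$-congruence preserving if and only if it satisfies condition (2)(i) (the divided difference property $(x-y)\mid(f(x)-f(y))$) together with condition (2)(ii) (either constant or over-linear $f(x)\geq x$). So it suffices to exhibit a non-monotone $g$ satisfying these two conditions with range in $\N$.

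First I would pick a target function $f:\N\to\N$ that grows fast enough and that forces non-monotonicity in any close approximant. The idea is that Proposition~\ref{p:g idr f} automatically grants (2)(i) for the produced $g$, and the sandwich bound $f(x)-2^x\lcm(x)\leq g(x)\leq f(x)$ can be used to control $g$ both from below (to get $g(x)\geq x$, hence (2)(ii) and over-linearity) and to create oscillation. Concretely, one takes $f$ large enough that the lower bound $f(x)-2^x\lcm(x)$ already exceeds $x$ for all $x$, guaranteeing $g(x)\geq x\geq 0$ so that $g$ indeed maps into $\N$ and is over-linear, hence congruence preserving by the $(2)\Rightarrow(1)$ direction of Theorem~\ref{thm:CPsurN}.

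The key step is engineering non-monotonicity. Since the error window $2^x\lcm(x)$ in Proposition~\ref{p:g idr f} is wide and we have no control over the exact value of $g$ inside it, I would choose $f$ so that consecutive values $f(x)$ and $f(x+1)$ differ by substantially more than the error terms permit $g$ to compensate — i.e. arrange that the intervals $[f(x)-2^x\lcm(x),\,f(x)]$ and $[f(x+1)-2^{x+1}\lcm(x+1),\,f(x+1)]$ are positioned so that for some particular $x$ one is forced to have $g(x)>g(x+1)$. The cleanest way is to make $f$ itself drop at a chosen point while staying over-linear on average: for instance let $f$ spike to a very large value at some $x_0$ and then be comparatively small at $x_0+1$, with the spike large enough that even the worst-case downward correction of $g$ at $x_0$ cannot bring $g(x_0)$ below $g(x_0+1)$.

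The main obstacle I expect is reconciling the lack of pointwise control over $g$ with the need for a definite inequality $g(x_0)>g(x_0+1)$: Proposition~\ref{p:g idr f} only gives a range, so the argument must produce the non-monotonicity from the ranges alone, not from any specific constructed value. I would handle this by making the gap between the two sandwich intervals strict and disjoint, i.e. choosing $f$ so that $f(x_0+1)<f(x_0)-2^{x_0}\lcm(x_0)$; then necessarily $g(x_0+1)\leq f(x_0+1)<f(x_0)-2^{x_0}\lcm(x_0)\leq g(x_0)$, yielding $g(x_0)>g(x_0+1)$ regardless of which $g$ the proposition supplies. This forces non-monotonicity while (2)(i) and over-linearity are inherited automatically, completing the argument.
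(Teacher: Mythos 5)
Your proposal is correct and follows essentially the same route as the paper: both invoke Proposition~\ref{p:g idr f} to obtain $g$ sandwiched in $[f(x)-2^x\lcm(x),\,f(x)]$, use the lower bound to get $g(x)>x$ (hence $g$ maps into $\N$ and is congruence preserving by Theorem~\ref{thm:CPsurN}), and force a descent from the sandwich bounds alone. The only difference is cosmetic: the paper chooses an alternating-sum $f$ whose sandwich intervals are disjoint at every step (so $g$ zig-zags), whereas you engineer a single disjoint pair of intervals at one point $x_0$, which suffices since over-linearity already rules out $g$ being non-increasing.
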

\begin{proof}
Let $f(x) = \sum_{0\leq y\leq x, y \textit{ even}} 2^{y+2}\lcm(y+2)
                 -  \sum_{1\leq z\leq x, z \textit{ odd}} 2^{z}\lcm(z)$
and let $g$ be as in Proposition~\ref{p:g idr f}.
We first prove that $g$ maps $\N$ into $\N$. A simple computation shows that 
$f(x) > 3\,(2^x)\,\lcm(x)$ for all $x\in\N$, 
hence $g(x) \geq f(x) - 2^x\lcm(x) > 2^{x+1}\,\lcm(x)$.
This proves that $g(x)\in\N$ and moreover $g(x)>x$ for all $x$.

We next prove that $g$ is non monotone.
Condition $ii)$ of Proposition \ref{p:g idr f} implies 
\begin{equation}\label{eqn:g idr f}
(f(x)-2^{x} \lcm(x)) - f(x-1)\leq g(x)-g(x-1) \leq f(x) - (f(x-1) - 2^{x-1} \lcm(x-1))
\end{equation}
Computing the lower and upper bounds in \eqref{eqn:g idr f} shows that:
 for $x$ even,  $g(x)-g(x-1) \geq  (f(x)-f(x-1)) -2^{x} \lcm(x) =
2^{x+2}\lcm(x+2) - 2^{x} \lcm(x) >0$, while for $x$ odd,
$g(x)-g(x-1) \leq f(x)-f(x-1) + 2^{x-1} \lcm(x-1)) = -2^x\lcm(x) + 2^{x-1}\lcm(x-1) <0$.
Thus, $g$ is a zig zag function hence non monotone.

Finally, as $g(x)>x$ for all $x$ and $x-y$ divides $g(x)-g(y)$ for all $x,y\in\N$ (by condition $i)$ of Proposition \ref{p:g idr f},  both conditions $(i)$ and $(ii)$  of Theorem \ref{thm:CPsurN} {\it (2)} hold and $g$ is congruence preserving.
\end{proof}


The existence of a stable total order has a nice consequence.

\begin{proposition}\label{p:stable total order}
Let $\+A = \langle A;\Xi\rangle$ be an algebra.
Assume there is a total order on $A$ which is $\+A$-stable.
Then a function $f:A\to A$ is $\+A$-stable preorder preserving 
if and only if it is $\+A$-stable order preserving.
\end{proposition}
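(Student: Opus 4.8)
The plan is to prove the two implications separately, noting that only the forward direction needs the hypothesis. The implication ``stable preorder preserving $\Rightarrow$ stable order preserving'' is immediate and uses nothing: every $\+A$-stable order is in particular an $\+A$-stable preorder, so a function compatible (Definition~\ref{def:fpreserveR}) with all stable preorders is \emph{a fortiori} compatible with all stable orders.

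For the converse, I would assume $f$ is $\+A$-stable order preserving and let $\leq$ denote the given $\+A$-stable total order. First I would record two easy stability facts. (i) The reverse relation $\geq$ of $\leq$ is again an $\+A$-stable total order: if $x_j\geq y_j$ for all $j$, i.e. $y_j\leq x_j$, then stability of $\leq$ gives $\xi(y_1,\ldots,y_n)\leq\xi(x_1,\ldots,x_n)$, that is $\xi(x_1,\ldots,x_n)\geq\xi(y_1,\ldots,y_n)$. (ii) The intersection of two $\+A$-stable relations is $\+A$-stable, since compatibility is a universally quantified implication that passes to intersections: from $x_j\,(\rho\cap\sigma)\,y_j$ for all $j$ one gets both $\xi(\vec x)\,\rho\,\xi(\vec y)$ and $\xi(\vec x)\,\sigma\,\xi(\vec y)$, hence $\xi(\vec x)\,(\rho\cap\sigma)\,\xi(\vec y)$. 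Combining these, for an arbitrary $\+A$-stable preorder $\preceq$ both ${\preceq}\cap{\leq}$ and ${\preceq}\cap{\geq}$ are $\+A$-stable. Each is moreover a genuine partial order: reflexivity and transitivity are inherited from the two factors, while antisymmetry is forced by the $\leq$ (resp. $\geq$) component. So both are $\+A$-stable \emph{orders}, and by hypothesis $f$ is compatible with each of them.

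The final step exploits totality of $\leq$ to capture a single comparison $x\preceq y$ by one of these two orders. Given $x\preceq y$, totality yields $x\leq y$ or $y\leq x$. In the first case $(x,y)\in{\preceq}\cap{\leq}$, so compatibility of $f$ with ${\preceq}\cap{\leq}$ gives $f(x)\,({\preceq}\cap{\leq})\,f(y)$, in particular $f(x)\preceq f(y)$. In the second case $(x,y)\in{\preceq}\cap{\geq}$ and compatibility with ${\preceq}\cap{\geq}$ gives $f(x)\preceq f(y)$ likewise. Either way $f(x)\preceq f(y)$, so $f$ is compatible with $\preceq$; as $\preceq$ was an arbitrary $\+A$-stable preorder, $f$ is $\+A$-stable preorder preserving.

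I do not expect a serious obstacle. The only point requiring care is that antisymmetry of ${\preceq}\cap{\leq}$ must be supplied by the total order $\leq$ and not by $\preceq$ (which may be a proper preorder), together with the observation that totality is precisely what lets every instance $x\preceq y$ fall into one of the two refined stable orders. Since the statement concerns a unary $f\colon A\to A$ this is all that is needed; for higher arity the same argument applies after freezing coordinates via Lemma~\ref{lem:de n a 1}.
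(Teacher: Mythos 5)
Your proposal is correct and follows essentially the same route as the paper's proof: both refine an arbitrary stable preorder $\preceq$ into the two stable orders ${\preceq}\cap{\leq}$ and ${\preceq}\cap{\geq}$ (antisymmetry coming from the total order, stability from intersection of stable relations), then use totality of $\leq$ to place any pair $x\preceq y$ inside one of them. The only cosmetic difference is that you explicitly verify that $\geq$ is stable, which the paper leaves implicit.
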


\begin{proof}
One implication is trivial. We show that if $f$ preserves all $\+A$-stable orders
then it also preserves all $\+A$-stable preorders.
Let $\leq$ be a total $\+A$-stable order on $A$ and $\preceq$ be an $\+A$-stable preorder on $A$.
Then $\leq\cap\preceq$ and $\geq\cap\preceq$ are $\+A$-stable orders on $A$.
Indeed, as reflexivity and transitivity go through intersection,
the intersection of two preorders is a preorder. Also, the antisymmetry property of an order
(namely, $x\leq y\leq x$ implies $x=y$) still holds for the intersection with any relation.
Finally, the intersection of two $\+A$-stable relations is $\+A$-stable.

Suppose now that $x\preceq y$. As $\leq$ is total, either $x\leq y$ or $y\leq x$.
Assume $x\leq y$. As $f$ preserves the $\+A$-stable order $\leq\cap\preceq$
and $x(\leq\cap\preceq) y$ we have $f(x)(\leq\cap\preceq) f(y)$ and a fortiori
$f(x)\preceq f(y)$.
Same argument if $y\leq x$ using the order $\geq\cap\preceq$.
\end{proof}

\begin{remark}\label{rk:stable total order}
In particular, considering the usual total order on $\RR$,
the above result applies for the algebras $\langle\RR;+\rangle$ and
$\langle[0,+\infty[;+,\times\rangle$ and their subalgebras,
e.g., $\langle\N;+\rangle$ and $\langle\N;+,\times\rangle$.
It also applies to products of these algebras
(consider the stable lexicographic product of the usual order).
\end{remark}

We now show what monotonicity adds to congruence preservation in $\langle\N;+\rangle$ and $\langle\N;+,\times\rangle$.
First, a simple observation.

\begin{dlemma}\label{l:cp and spp on N}
Let $\preceq$ be a stable order on $\langle\N;+\rangle$ and, 
for $a\in\N$, let $M^+_a=\{x \mid a\preceq a+x\}$ and $M^-_a=\{x \mid a+x\preceq a\}$.

1) $M^+_a$ and $M^-_a$ are submonoids of $\langle\N;+\rangle$.

2) If $a\leq b$ then $M^+_a \subseteq M^+_b$ and $M^-_a \subseteq M^-_b$.
\end{dlemma}

\begin{proof}
1) Clearly, $0\in M^+_a$. 
Suppose $a\preceq a+x$ and $a\preceq a+y$.
By stability the first inequality yields $a+y\preceq a+x+y$ and, by transitivity,
the second inequality gives $a\preceq a+x+y$. Thus, $M^+_a$ is a submonoid.
Idem with $M^-_a$.

2) If $a\preceq a+x$ then by stability $a+(b-a)\preceq a+x+(b-a)$, i.e. $b\preceq b+x$.
Thus, $M^+_a \subseteq M^+_b$. Similarly, $M^-_a \subseteq M^-_b$. 
\end{proof}

\begin{theorem}\label{prop:miracle}
Relative to the algebras $\langle\N;+\rangle$ and $\langle\N;+,\times\rangle$,
a function $f:\N\to\N$ is stable preorder preserving if and only if it is monotone non decreasing
and congruence preserving.
\end{theorem}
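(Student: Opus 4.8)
The plan is to dispatch the easy direction first and then reduce the substantial content to a single computation on $\langle\N;+\rangle$. For the ``only if'' direction, suppose $f$ is stable preorder preserving. The usual order on $\N$ is a stable total order for both $\langle\N;+\rangle$ and $\langle\N;+,\times\rangle$ (Remark~\ref{rk:stable total order}), so the fact that $f$ preserves it says exactly that $f$ is monotone non decreasing. Moreover every congruence is in particular a stable preorder (an equivalence relation is reflexive and transitive), so $f$ preserves every congruence; since the congruences of $\langle\N;+\rangle$ and of $\langle\N;+,\times\rangle$ coincide (Corollary~\ref{+CPimpliesXCPN}), $f$ is congruence preserving for both algebras.

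For the ``if'' direction I would first observe that it suffices to treat $\langle\N;+\rangle$: a stable preorder of $\langle\N;+,\times\rangle$ is compatible with more operations, hence is a fortiori a stable preorder of $\langle\N;+\rangle$, so preserving all stable preorders of $\langle\N;+\rangle$ entails preserving all stable preorders of $\langle\N;+,\times\rangle$. So assume $f$ is monotone non decreasing and $\langle\N;+\rangle$-congruence preserving. By Proposition~\ref{p:stable total order} (applicable since $\le$ is stable and total) it is enough to show that $f$ preserves every stable \emph{order} $\preceq$. If $f$ is constant this is immediate by reflexivity; otherwise Theorem~\ref{thm:CPsurN} furnishes that $(x-y)$ divides $(f(x)-f(y))$ for all $x,y$ and that $f(x)\ge x$ for all $x$.

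The heart of the argument, and where I expect the only real work, is the following dichotomy. Fix $x\preceq y$; I will use the submonoids $M^+_a=\{z\mid a\preceq a+z\}$ and $M^-_a=\{z\mid a+z\preceq a\}$ of Definition-Lemma~\ref{l:cp and spp on N}, together with the monotonicity properties $a\le b\Rightarrow M^+_a\subseteq M^+_b$ and $M^-_a\subseteq M^-_b$. If $x\le y$ in the usual order, set $d=y-x$, so $d\in M^+_x$ and hence $md\in M^+_x$ for every $m\ge0$; divisibility together with the non decreasingness of $f$ gives $f(y)-f(x)=md$ with $m\ge0$, and from $f(x)\ge x$ we get $M^+_x\subseteq M^+_{f(x)}$, whence $md\in M^+_{f(x)}$, i.e. $f(x)\preceq f(x)+md=f(y)$. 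If instead $y<x$, set $d=x-y$, so $d\in M^-_y$, and the symmetric computation, using $md\in M^-_y\subseteq M^-_{f(y)}$ and $f(x)=f(y)+md$, yields $f(y)+md\preceq f(y)$, that is $f(x)\preceq f(y)$. In both cases $f(x)\preceq f(y)$, so $f$ preserves $\preceq$. The main subtlety to get right is the bookkeeping that links the integer $m$ produced by divisibility to membership in the submonoids $M^+$ and $M^-$, and checking in each case that $m\ge0$ follows from monotonicity and the over-linearity $f(x)\ge x$.
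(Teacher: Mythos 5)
Your proof is correct and takes essentially the same route as the paper's: the forward direction via stability of the usual order, and the converse via Theorem~\ref{thm:CPsurN}, the reduction from stable preorders to stable orders through Proposition~\ref{p:stable total order}, and the submonoids $M^+_a$, $M^-_a$ of Definition-Lemma~\ref{l:cp and spp on N} with the same case dichotomy. Your explicit reduction of $\langle\N;+,\times\rangle$ to $\langle\N;+\rangle$ and your bookkeeping that the multiplier $m$ is nonnegative (via monotonicity) merely make explicit steps the paper leaves implicit.
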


\begin{proof}
If $f$ is stable preorder preserving then it is a fortiori congruence preserving.
As the usual order on $\N$ is stable, it is preserved by $f$ hence $f$ is monotone non decreasing.

Assume $f$ is congruence preserving and monotone nondecreasing.
We prove that $f$ preserves stable preorders.
The case $f$ is constant is trivial. We now suppose $f$ is not constant hence
$f$ satisfies conditions (i) and (ii) in  Theorem \ref{thm:CPsurN}.
Since the usual order on $\N$ is stable, using Lemma~\ref{l:cp and spp on N} it suffices
to show that $f$ preserves every stable order $\preceq$. 
Let $\sim$ be the congruence associated with $\preceq$.
 Suppose $a\preceq b$, then
\begin{itemize}
\item either  $a\leq b$ hence  $b-a\in M^+_a=\{x \mid a\preceq a+x\}$.
By condition (i) we know that $b-a$ divides $f(b)-f(a)$. 
As $M^+_a$ is a monoid, $f(b)-f(a)$ is also in $M^+_a$.
Condition (ii) insures that $f(a)\geq a$ hence,  by Lemma~\ref{l:cp and spp on N},
$M^+_a \subseteq M^+_{f(a)}$. Thus, $f(b)-f(a)$ is in $M^+_{f(a)}$
implying $f(a)\preceq f(b)$. 
\item or $b\leq a$, the proof is similar, by noting that $a-b\in
M^-_b=\{x \mid b+x\preceq b\}$.\qedhere
\end{itemize}
\end{proof}
%
\subsection{Recognizable subsets of $\langle\N;+\rangle$ and  $\langle\N;+,\times\rangle$}\label{s:35}
 We recall the classical characterization of recognizability in $\langle\N;+\rangle$ and $\langle\N;+,\times\rangle$.
\begin{proposition}\label{p:folk N rec}
Let $L$ be a subset of $\N$. The following conditions are equivalent.
\begin{enumerate}
\item
$L$ is $\langle\N;+\rangle$-recognizable
\item
$L$ is $\langle\N;+,\times\rangle$-recognizable
\item
$L$ is of the form $L=F\cup(R+k\N)$ with
$1\leq k$, $F\subseteq\{x\mid 0\leq x<a\}$,  and $R\subseteq\{x\mid a\leq x<a+k\}$
(possibly empty in which case $L$ is finite).
\end{enumerate}
\end{proposition}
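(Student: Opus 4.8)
The plan is to prove the equivalence by establishing $(1)\Leftrightarrow(2)$ via Corollary~\ref{+CPimpliesXCPN} (which tells us that recognizability coincides for $\langle\N;+\rangle$ and $\langle\N;+,\times\rangle$, so it suffices to work with one signature), and then proving $(1)\Leftrightarrow(3)$ directly. The core of the argument is the characterization of recognizable sets in terms of the finite-index congruences $\sim_{a,k}$ from Lemma~\ref{folk0}, combined with Lemma~\ref{l:rec and congru} which says that a set is recognizable if and only if it is saturated for some finite-index congruence.

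First I would prove $(3)\Rightarrow(1)$, which is the easy direction. Given $L=F\cup(R+k\N)$ with $F\subseteq\{0,\ldots,a-1\}$ and $R\subseteq\{a,\ldots,a+k-1\}$, I claim $L$ is saturated for the congruence $\sim_{a,k}$. Indeed, the equivalence classes of $\sim_{a,k}$ are the singletons $\{0\},\ldots,\{a-1\}$ together with the $k$ arithmetic progressions $\{x\mid x\geq a,\ x\equiv r\bmod k\}$ for $a\leq r<a+k$. The finite part $F$ is a union of singleton classes, and $R+k\N$ is precisely a union of those arithmetic-progression classes indexed by $R$. Hence $L$ is a union of $\sim_{a,k}$-classes, so by Lemma~\ref{l:rec and congru} it is recognizable, giving $(1)$ (after passing to the $\langle\N;+,\times\rangle$ signature via Corollary~\ref{+CPimpliesXCPN} for the statement as phrased with both $(1)$ and $(2)$).

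For the converse $(1)\Rightarrow(3)$, suppose $L$ is recognizable. By Lemma~\ref{l:rec and congru}, $L$ is saturated for some finite-index congruence, and by Lemma~\ref{folk0} every nontrivial finite-index congruence on $\langle\N;+\rangle$ is of the form $\sim_{a,k}$ (the case of equality itself being excluded for recognizability since equality has infinite index on $\N$, unless $L$ arises from a coarser congruence). Writing $L$ as a union of $\sim_{a,k}$-classes, I separate the classes into the singleton classes contained in $\{0,\ldots,a-1\}$, whose union forms the finite part $F$, and the progression classes for residues $r$ with $a\leq r<a+k$; collecting the relevant residues into $R$ yields exactly $L=F\cup(R+k\N)$. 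The only subtlety is the boundary bookkeeping: ensuring the decomposition ranges are correct ($F\subseteq\{x\mid 0\leq x<a\}$ and $R\subseteq\{x\mid a\leq x<a+k\}$) and handling the degenerate cases where $F$ or $R$ is empty (giving finite $L$ or purely periodic $L$).

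The main obstacle I anticipate is not conceptual but organizational: matching the quotient description of $\sim_{a,k}$-saturated sets to the specific normal form $F\cup(R+k\N)$, and verifying that no generality is lost by restricting attention to a single congruence $\sim_{a,k}$ rather than an arbitrary finite-index congruence (which is guaranteed precisely by Lemma~\ref{folk0}'s classification). For the jump from $(1)$ to $(2)$, I would simply invoke Corollary~\ref{+CPimpliesXCPN}, which already asserts that the two signatures yield identical recognizable subsets of $\N$, so the equivalence of $(1)$ and $(2)$ is immediate and the real content lies entirely in the equivalence with $(3)$.
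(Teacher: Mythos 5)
Your proof is correct, but it takes a slightly different route from the paper's. Both proofs dispose of $(1)\Leftrightarrow(2)$ by citing Corollary~\ref{+CPimpliesXCPN}, so the real comparison is in $(1)\Leftrightarrow(3)$. The paper works on the morphism side: it takes $L=\psi^{-1}(U)$ for a surjective morphism $\psi$ onto a finite monoid, invokes item 3) of Lemma~\ref{l:finite monogenic monoids} to factor $\psi$ as an isomorphism composed with the canonical surjection $\varphi_{a,k}$ onto the frying pan monoid $M_{a,k}$, and then computes $\varphi_{a,k}^{-1}(U)=F\cup(R+k\N)$ explicitly; the converse direction is the same computation read backwards. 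You instead work on the congruence side: you use Lemma~\ref{l:rec and congru} (recognizable $\Leftrightarrow$ saturated for a finite-index congruence) together with Lemma~\ref{folk0} (every finite-index congruence on $\langle\N;+\rangle$ is some $\sim_{a,k}$, equality being excluded by its infinite index), and then identify the $\sim_{a,k}$-classes as the singletons $\{0\},\ldots,\{a-1\}$ and the progressions $r+k\N$ for $a\leq r<a+k$. Since $\kernel(\varphi_{a,k})=\;\sim_{a,k}$, the two arguments are two packagings of the same classification fact, but your version bypasses the frying-pan factorization lemma entirely and is arguably more self-contained, resting only on Lemmas~\ref{folk0} and~\ref{l:rec and congru}; the paper's version buys an explicit normal form for the recognizing morphism itself, which it reuses elsewhere. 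One small point of care in your write-up: the phrase ``unless $L$ arises from a coarser congruence'' is unnecessary and slightly confusing --- the argument only needs that the saturating congruence supplied by Lemma~\ref{l:rec and congru} has finite index, which already forces it to be of the form $\sim_{a,k}$.
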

\begin{proof} 
$(1)\Leftrightarrow(2)$. By Corollary~\ref{+CPimpliesXCPN}.

$(1)\Leftrightarrow(3)$.
By definition, a subset $L$ of $\N$ is $\langle\N;+\rangle$-recognizable if it is of the form
$L=\psi^{-1}(U)$ for some surjective morphism 
$\psi : \langle\N;+\rangle \to \langle M;\oplus\rangle$
onto a finite monoid 
and some $U\subset M$. 
 Using  Lemma~\ref{l:finite monogenic monoids} 3),
 we reduce to the case $M=M_{a,k}$ and
$\psi=\varphi_{a,k}$ for some $a,k$.
Letting
$F=U\cap\{0,\ldots,a-1\}$ and $Y=\{x\in\{0,\ldots,k-1\}\mid a+x\in U\}$,
we have $U=F\cup(a+Y)$ and
$\varphi_{a,k}^{-1}(F)=F$ and $\varphi_{a,k}^{-1}(a+Y)=a+Y+k\N=R+k\N$
hence $\varphi_{a,k}^{-1}(U)=F\cup(R+k\N)$.
\end{proof}

In \cite{cgg14ipl} we proved a connection between recognizable subsets and 
functions satisfying conditions {\it (2) (i) } and {\it (2) (ii) } of Theorem \ref{thm:CPsurN}. Using the equivalence given by Theorem \ref{thm:CPsurN} we now can reformulate the result  of  \cite{cgg14ipl} as  the following version of Theorem \ref{thm:ipl1}.
\begin{theorem}\label{thm:ipl} 
Let $f:\N\longrightarrow\N$ be a non decreasing function, 
the following conditions are equivalent:
\begin{enumerate}
\item[$(1)_\N$] 
The function $f$ is $\langle\N;+\rangle$-congruence preserving on $\+\N$ and 
$f(x)\geq x$ for all $x$.
\item[$(2)_\N$] 
For every finite subset $L$  of $\N$, the lattice $\+L_{\langle\N;+\rangle}(L)$ is closed under $f^{-1}$.
\item[$(3)_\N$] 
For every recognizable subset $L$ of $\langle \N;+\rangle$, the lattice $\+L_{\langle\N;+\rangle}(L)$ is  closed under $f^{-1}$.
\end{enumerate}
\end{theorem}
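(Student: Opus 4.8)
The plan is to prove the cycle of implications $(3)_\N \Rightarrow (2)_\N \Rightarrow (1)_\N \Rightarrow (3)_\N$. The implication $(3)_\N \Rightarrow (2)_\N$ is immediate, since every finite subset of $\N$ is recognizable in $\langle\N;+\rangle$ by Proposition~\ref{p:folk N rec} (take $R=\emptyset$), so condition $(3)_\N$ applies in particular to finite $L$.

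For $(2)_\N \Rightarrow (1)_\N$, I would argue contrapositively. By Theorem~\ref{thm:CPsurN}, if $f$ is \emph{not} congruence preserving then, since $f$ is assumed non decreasing (hence not requiring the overlinearity clause via constancy), $f$ must violate condition (i), i.e.\ there exist $x<y$ with $(y-x)\nmid(f(y)-f(x))$, or else violate $f(x)\geq x$. The strategy is to manufacture a finite set $L$ whose lattice $\+L_{\langle\N;+\rangle}(L)$ witnesses the failure of closure under $f^{-1}$. Recalling Example after Definition~\ref{def:Lat-LAL}, $\+L_{\langle\N;+\rangle}(L)=\+L_{\langle\N;\suc\rangle}(L)$ is the smallest lattice containing $L$ closed under decrement $x\mapsto(x\mapsto x+a)^{-1}$ for all $a$; concretely these are finite lattices built from $L$ and its translates. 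I would choose $L$ to be a singleton or a small finite set pinpointing the arithmetic obstruction, then show that $f^{-1}(L)$ cannot be expressed as a finite union of intersections of inverse-translates of $L$ (using the Disjunctive Normal Form of Lemma~\ref{l:normal LAL}), because such sets are ultimately periodic in a way that $f^{-1}(L)$ fails when the divisibility condition breaks.

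For $(1)_\N \Rightarrow (3)_\N$, assume $f$ is congruence preserving with $f(x)\geq x$. Let $L$ be recognizable. By Lemma~\ref{l;recFinite}, $\+L_{\langle\N;+\rangle}(L)$ is finite and equals $\+L^\infty_{\langle\N;+\rangle}(L)$, and by Lemma~\ref{l:hull syntactic} it coincides with the family of initial segments of the syntactic preorder $\leq_L$. Here the key is Theorem~\ref{prop:miracle}: a non decreasing congruence preserving function on $\langle\N;+\rangle$ is exactly a stable preorder preserving function. So $f$ preserves every stable preorder, in particular $\leq_L$. To finish I would verify that if $f$ is $\leq_L$-preserving and $Z\in\+L_{\langle\N;+\rangle}(L)$ is a $\leq_L$-initial segment, then $f^{-1}(Z)$ is again a $\leq_L$-initial segment: indeed if $y\in f^{-1}(Z)$ and $x\leq_L y$ then $f(x)\leq_L f(y)\in Z$, and since $Z$ is an initial segment $f(x)\in Z$, so $x\in f^{-1}(Z)$. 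Hence $f^{-1}(Z)\in\+L_{\langle\N;+\rangle}(L)$, giving closure under $f^{-1}$.

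The main obstacle is the implication $(2)_\N\Rightarrow(1)_\N$, where I must construct an explicit finite witness $L$ from a failure of the divisibility/overlinearity conditions and then rigorously show $f^{-1}(L)$ escapes the finite lattice; this requires a careful analysis of exactly which ultimately-periodic sets arise as $\+L_{\langle\N;+\rangle}(L)$ for finite $L$ and matching them against the structure forced on $f^{-1}(L)$. The other two directions are comparatively soft, relying on the already-established Theorems~\ref{thm:CPsurN} and~\ref{prop:miracle} together with the lattice/initial-segment dictionary of Lemmas~\ref{l:hull syntactic} and~\ref{l;recFinite}.
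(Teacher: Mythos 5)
Your cycle $(3)_\N\Rightarrow(2)_\N\Rightarrow(1)_\N\Rightarrow(3)_\N$ is a reasonable plan and $(3)_\N\Rightarrow(2)_\N$ is indeed immediate, but the two substantive implications both have genuine gaps. The implication $(2)_\N\Rightarrow(1)_\N$ is never actually proved: you describe what you \emph{would} do and yourself flag it as the main obstacle. Moreover, the sketch points in a wrong direction. For finite $L$, every member of $\+L_{\langle\N;+\rangle}(L)$ is a finite subset of $\{0,\ldots,\max L\}$ (each $L-c$ is such, and finite unions and intersections preserve this), so ``ultimate periodicity'' is not the relevant obstruction; and a singleton $L=\{m\}$ can never witness a failure of the divisibility condition, because $\{m\}-c=\{m-c\}$, hence $\+L_{\langle\N;+\rangle}(\{m\})$ is the \emph{full} power set of $\{0,\ldots,m\}$, which is closed under $f^{-1}$ for every $f$ with $f(x)\geq x$, no matter how badly divisibility fails. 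Detecting a failure of ``$y-x$ divides $f(y)-f(x)$'' requires sets such as long arithmetic-progression segments $\{f(x)+kd\mid 0\leq k\leq N\}$ together with a structural invariant of the lattices they generate, and none of this is supplied. Note that the paper does not reprove this direction either: Theorem~\ref{thm:ipl} is presented as a reformulation, via Theorem~\ref{thm:CPsurN}, of the result of \cite{cgg14ipl}, where this implication is established.

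Your proof of $(1)_\N\Rightarrow(3)_\N$ has a subtler but fatal defect: it never uses the hypothesis $f(x)\geq x$, and therefore it proves too much. A constant function $f\equiv c$ is non decreasing and congruence preserving, hence stable preorder preserving by Theorem~\ref{prop:miracle}, so your argument would give $(3)_\N$ for it; but taking $L=\{c\}$ one has $f^{-1}(\{c\})=\N\notin\+L_{\langle\N;+\rangle}(\{c\})$, since every member of that lattice is a subset of $\{0,\ldots,c\}$. The leak is the step ``$f^{-1}(Z)$ is a $\leq_L$-initial segment, hence lies in the lattice'' (your appeal to Lemma~\ref{l:hull syntactic}): the whole set $\N$ is always a $\leq_L$-initial segment, but it arises only as an intersection $\bigcap_{\gamma\in\Gamma}\gamma^{-1}(L)$ over an \emph{empty} family $\Gamma$, which the lattice of Definition~\ref{def:Lat-LAL} need not contain. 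This is exactly where $f(x)\geq x$ must enter, as it does in the paper's own machinery (Lemma~\ref{f^-1-dansTreillis}, Theorems~\ref{th:LatticeGen} and~\ref{th:RecLatGeneralResiduallyFinite}): for each $a\in f^{-1}(Z)$ the translation $x\mapsto x+(f(a)-a)$ witnesses that $\{\gamma\mid\gamma(a)\in Z\}\neq\emptyset$, so the sets $I_a$ in the decomposition $f^{-1}(Z)=\bigcup_{a\in f^{-1}(Z)}I_a$ are nonempty (and, by finite index of the syntactic congruence, essentially finitely many finite) intersections, hence genuinely belong to $\+L_{\langle\N;+\rangle}(Z)\subseteq\+L_{\langle\N;+\rangle}(L)$. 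Repairing your argument means reinstating the overlinearity hypothesis at precisely this point.
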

%

\subsection{Case of  $\langle  \N ;\times\rangle$} \label{ss:Ntimes}
We here extend Proposition~\ref{thm:ipl}  to congruence preserving function on $\langle  \N ;\times\rangle$  and  explicitly characterize these functions in Theorem \ref{thm:iplX} $(ii)$. We
have seen that congruences and morphisms of $\langle\N;+\rangle$ coincide with congruences and morphisms of $\langle\N;+,\times\rangle$. The situation changes radically when considering
the algebra $\langle  \N ;\times\rangle$.
\begin{theorem}\label{thm:iplX}
For $f:\N\setminus\{0\}\longrightarrow\N\setminus\{0\}$ 
the following conditions are equivalent:
\begin{enumerate}
\item[(i)] 
For every recognizable subset $L$ of $\langle \N\setminus\!\{0\};\times\rangle$ the lattice $\+L_\times(L)$ is  closed under $f^{-1}$.
\item[(ii)] The function  $f$ is  of the form 
$f(x)=f(1)\times x^n$ for some fixed  $n\in\N$.
\item[(iii)]  The function
$f$ is $\langle\N\setminus\{0\}; \times\rangle$-congruence preserving and  $x$  divides $f(x)$ for all $x$. 
\end{enumerate}
\end{theorem}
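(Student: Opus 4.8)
The plan is to prove the cycle of implications $(iii)\Rightarrow(ii)\Rightarrow(i)$ and $(i)\Rightarrow(iii)$, exploiting the fact that $\langle\N\setminus\{0\};\times\rangle$ is, up to isomorphism of monoids, a free commutative monoid on the set of primes. Concretely, the logarithm-of-prime-exponents viewpoint turns $\langle\N\setminus\{0\};\times\rangle$ into $\bigoplus_{p\text{ prime}}\langle\N;+\rangle$, so I expect to transport the additive results of Section~\ref{ss:frying pan} (in particular Theorem~\ref{thm:CPsurN} and Theorem~\ref{thm:ipl}) through this isomorphism. The first task would be to understand the congruences of $\langle\N\setminus\{0\};\times\rangle$: because the monoid is free commutative, a congruence preserving function must respect the congruence induced on each prime coordinate and the interaction between coordinates, and this is what will pin $f$ down so strongly.

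First I would establish $(iii)\Rightarrow(ii)$, which I expect to be the main obstacle. Assume $f$ is $\langle\N\setminus\{0\};\times\rangle$-congruence preserving and $x\mid f(x)$ for all $x$. The key is to use cleverly chosen congruences. For a fixed prime $p$, the map sending $x$ to its $p$-adic valuation modulo some $k$ (combined with a threshold $a$, as in the frying-pan congruences $\sim_{a,k}$ of Lemma~\ref{folk0}, now living additively in the exponent of $p$) gives a multiplicative congruence. Congruence preservation applied to these forces, coordinate by coordinate, that the exponent of each prime $q$ in $f(x)$ depends on the exponents in $x$ in an affine-like, divisibility-controlled way exactly as in Theorem~\ref{thm:CPsurN}. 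Testing $f$ against $x=p$ and $x=p^m$ and using the multiplicative analogue of condition~(i) of Theorem~\ref{thm:CPsurN}, namely that $x/y$ (in the valuation sense, i.e. $x-y$ in each exponent) divides $f(x)-f(y)$, should force the exponent of $q$ in $f(x)$ to be $n\cdot v_q(x)+v_q(f(1))$ for a single global integer $n$ independent of $p,q$. The condition $x\mid f(x)$ rules out negative or defective exponents and forces the homogeneous part to be the uniform power $x^n$, giving $f(x)=f(1)\,x^n$. The delicate point is showing $n$ is the same across all primes and that no cross-terms between distinct primes appear; this is where I would lean hardest on congruence preservation with congruences that mix two primes simultaneously.

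Next, $(ii)\Rightarrow(i)$ should be comparatively routine. For $f(x)=f(1)x^n$ the preimage $f^{-1}(L)$ can be computed directly: writing $c=f(1)$, we have $f^{-1}(L)=\{x\mid c x^n\in L\}$, and since the $\textit{DUO}$s of $\langle\N\setminus\{0\};\times\rangle$ are the homotheties $x\mapsto ax$ with $a\geq1$ (Example following Definition~\ref{def:gen}), the inverse images $\gamma^{-1}(L)=L/a$ generate $\+L_\times(L)$. I would check that pulling back a quotient $L/a$ along $x\mapsto cx^n$ again lands in the lattice generated by quotients of $L$, using that recognizable $L$ gives a \emph{finite} lattice $\+L_\times(L)$ by Lemma~\ref{l;recFinite}; closure under $f^{-1}$ then follows from the disjunctive normal form of Lemma~\ref{l:normal LAL} together with the commutation of $x\mapsto x^n$ with taking quotients in the valuation exponents.

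Finally, for $(i)\Rightarrow(iii)$ I would argue contrapositively via the syntactic machinery. By Proposition~\ref{p:lattice to BA} and Lemma~\ref{l:hull syntactic}, closure of $\+L_\times(L)$ under $f^{-1}$ for all recognizable $L$ translates into $f$ respecting every finite-index syntactic preorder, hence (taking $L$ ranging over all recognizable sets, whose syntactic congruences exhaust the finite-index congruences by Lemma~\ref{l:rec and congru}) into $f$ being congruence preserving; this is the multiplicative analogue of the $(2)\Rightarrow(1)$ direction of Theorem~\ref{thm:ipl}. The divisibility condition $x\mid f(x)$, i.e. $f(x)\in\gen(x)$ in the sense of Definition~\ref{def:gena}, should be forced exactly as the over-linearity condition is forced in the additive case: if some $x_0\nmid f(x_0)$, I would exhibit a recognizable $L$ (a single residue class of $p$-adic valuations) whose lattice $\+L_\times(L)$ fails to be closed under $f^{-1}$, mirroring the counterexample in the Remark following Definition~\ref{def:gena}. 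Throughout, the recurring technique is to pass to exponents and reuse the one-variable theory on $\langle\N;+\rangle$ prime by prime, so the whole argument is a structured reduction to the already-established additive case.
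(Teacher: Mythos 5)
Your $(iii)\Rightarrow(ii)$ plan is essentially the paper's own proof (it uses the valuation congruences $x\sim_p y\iff\val(x,p)=\val(y,p)$ and the two-prime congruences $x\sim_{p,q}y\iff\val(x,p)+\val(x,q)=\val(y,p)+\val(y,q)$ to get $\val(f(x),p)=\theta_p(\val(x,p))$ with $\theta_p(k)=\theta_p(0)+kn$ for a slope $n$ common to all primes). The serious gap is in $(i)\Rightarrow(iii)$. From closure of the lattices $\+L_\times(L)$ for $L$ recognizable, the syntactic machinery you invoke (Theorem~\ref{th:LatticeGen}, Lemma~\ref{l:rec and congru}, Lemma~\ref{l:hull syntactic}) yields only that $f$ preserves \emph{finite-index} congruences and stable preorders. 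But condition (iii) demands preservation of \emph{all} congruences of $\langle\N\setminus\{0\};\times\rangle$, and this algebra has uncountably many congruences, most of infinite index (the $\sim_Q$ of Example~\ref{r:morphimes+xbis}, the same-support congruence of Example~\ref{ex:PrimeSupport}). Bridging from finite-index preservation to full preservation is exactly c-residual finiteness of the algebra (Definition~\ref{def:residually finite c sp}, Lemma~\ref{l:residually finite utile}), which the paper never establishes for $\langle\N\setminus\{0\};\times\rangle$ and which is not obvious --- this is precisely why the paper does not take your route. Instead it proves $(i)\Rightarrow(ii)$ directly with explicitly chosen \emph{finite} (hence recognizable) sets: $L=\{f(a)\}$ forces $a\mid f(a)$ since every member of $\+L_\times(L)$ consists of divisors of $f(a)$ (Lemma~\ref{NX}); and, for $b\mid a$ with $q=a/b$, the finite set $L=\N\cap\{f(a)/q^j\mid j\in\N\}$ forces $f(a)=f(b)q^j$; a coprimality argument then makes the exponent uniform, after which $(ii)\Rightarrow(iii)$ is immediate because monomials are compatible with every multiplicative congruence. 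Unless you can supply c-residual finiteness of the multiplicative monoid, your cycle must be re-routed this way.

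A second, smaller flaw: in $(ii)\Rightarrow(i)$ the ``commutation'' you invoke is false. For $f(x)=cx^n$ one has $f^{-1}(L/a)=\{x\mid acx^n\in L\}$ whereas $f^{-1}(L)/b=\{x\mid cb^nx^n\in L\}$, so the preimage of a quotient is a quotient of the preimage only when $a$ is a perfect $n$th power; no DNF manipulation fixes this. The genuine content of the implication is that a division-closed lattice is closed under $n$th roots $M\mapsto\{x\mid x^n\in M\}$, which the paper imports as Theorem~2.2 of \cite{cgg14ipl}. Alternatively you can bypass it with the paper's general machinery: $x\mapsto cx^n$ preserves every stable preorder (compatibility of $\preceq$ with $\times$ gives $x\preceq y\Rightarrow x^n\preceq y^n\Rightarrow cx^n\preceq cy^n$), so item~1 of Theorem~\ref{th:RecLatGeneralResiduallyFinite} --- which requires no residual finiteness --- gives $f^{-1}(M)\in\+L_\times(M)\subseteq\+L_\times(L)$ for every $M\in\+L_\times(L)$, each such $M$ being itself recognizable.
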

It  appears that a sharp difference between Theorem \ref{thm:iplX} and Proposition\ref{thm:ipl} 
is the richness of the family of involved  functions.  For instance on  $\langle \N;+\rangle$ this family contains non polynomial functions. This can be explained by the fact that there are many more congruences on $\langle\N\setminus\{0\}; \times\rangle$ than $\langle \N;+\rangle$. See Example \ref{r:morphimes+xbis}.
\begin{definition} Let $P$ be the set of prime numbers.
For $p\in P$,  the  $p$-valuation of $x$ denoted $\val(x,p)$ is the
highest exponent $n$ of $p$ such that $p^n$ divides $x$.
\end{definition}
 \begin{example}\label{r:morphimes+xbis}
 1) There are strictly more congruences on $\langle \N\setminus\{0\};\times\rangle$ 
than on  $\langle \N;+\rangle$. Let $Q\neq P$ be a nonempty set of prime numbers. The 
 relation $\sim_Q$ such that $x\sim_Q y$ if for all $p\in Q$, $\val(x,p)=\val(y,p)$
is a congruence for $\langle \N\setminus\{0\};\times\rangle$ but not for $\langle \N;+\rangle$.
For instance for $Q=\{2,5\}$, $2\sim_Q 6$, but $4=2+2\not\sim_Q  6+2=8$.
In particular, there are uncountably many $\langle \N\setminus\{0\};\times\rangle$-congruences
whereas there are only countably many $\langle \N;+\rangle$-congruences.

2) The same phenomenon occurs for maorphisms. Let $\varphi_Q\colon\N\to \N$ be such that $\varphi_Q(x)=\prod_{p\in Q} p^{\val(x,p}$. This map
is a $\langle \N\setminus\{0\};\times\rangle$-morphism. There are thus uncountably many $\langle \N\setminus\{0\};\times\rangle$-morphisms
whereas there are only countably many $\langle \N;+\rangle$-morphisms.

2) Another example: the relation $\sim$ such that $x\sim y$ if $x=y$ or both $x,y$ are powers of $2$
is a congruence for $\langle \N;\times\rangle$ but not for $\langle \N;+\rangle$
since $2\sim 4$ and $4\sim 4$ but $2+4=6\not\sim 4+4=8$.
This stems from the fact that a mapping $\varphi\colon \langle \N;+,\times\rangle \to \langle M;\oplus,\otimes\rangle$
can be a $\times$-morphism without being a +morphism and vice versa.

2) Let $\varphi\colon\langle  \N ;+,\times\rangle\to \langle \Z/2\Z;+,\times\rangle$  be defined by: 
$\varphi(x) =1$ if and only if $x=2^n$ for some $n\in\N$ and $\varphi(x) =0$  otherwise; $\varphi$
is a  $\times$-morphism but not a  $+$-morphism because $\varphi(6+1)=0\not= 0+1=\varphi(6)+\varphi(1)$.
\end{example}
Before proving Theorem \ref{thm:iplX}, let us introduce some simple notation and obervationss.
 \begin{notation}
 If $L$ is a subset of $\N$, and $a\in\N$, $L/a$ denotes the set of exact quotients of elements of $L$ by $a$, $L/a=\{x\mid x\in\N {\text{ and \ }}ax\in L\}$.
 \end{notation} 

\begin{lemma}\label{NX} 1) ${\textit{DUO}}(\langle \N ; \times\rangle)=\{ x\mapsto ax\mid a\in\N^*\}$ is the set of homotheties.

2)   For $L\subseteq \N\setminus\{0\}$,  $\+L_\times(L)=\+L_{\langle \N ; \times\rangle}(L)$ is  the smallest sublattice $\+L$ of  $\+P(\N)$ containing $L$ and closed  under exact division of sets where $L/a=\{x\mid x\in\N {\text{ and \ }}ax\in L\}$. In particular, for a singleton set $L=\{c\}$, all sets in $\+L_\times(L)$ are sets of divisors of $c$.
  \end{lemma}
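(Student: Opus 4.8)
The plan is to verify the two assertions of Lemma~\ref{NX} directly from the definitions, since both are essentially unwindings of the general framework of Section~\ref{s:def} applied to the monoid $\langle\N;\times\rangle$.

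For part 1), I would start from Definition~\ref{def:gen} of $\DUO(\+A)$. The only operation in the signature of $\langle\N;\times\rangle$ is the binary multiplication $\times$. Its frozen functions are, for a fixed first or second argument $a\in\N$, the maps $x\mapsto a\times x$; since multiplication is commutative, freezing either argument yields the same family of homotheties $x\mapsto ax$. Composing two such homotheties $x\mapsto ax$ and $x\mapsto bx$ gives $x\mapsto (ab)x$, which is again a homothety, so the family is closed under composition and no genuinely new functions arise from $n\geq 2$ in Definition~\ref{def:gen}; the identity ($n=0$) is the homothety with $a=1$. Thus $\DUO(\langle\N;\times\rangle)=\{x\mapsto ax\mid a\in\N\}$. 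The only subtlety is the constraint $a\in\N^*$ rather than $a\in\N$: since we are working over the carrier $\N\setminus\{0\}$ (as the lemma concerns $L\subseteq\N\setminus\{0\}$ and Theorem~\ref{thm:iplX} is about $\langle\N\setminus\{0\};\times\rangle$), the constant-$0$ homothety is excluded because it does not map $\N\setminus\{0\}$ into itself, leaving exactly $a\geq 1$.

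For part 2), I would simply specialize Definition~\ref{def:Lat-LAL} using part 1). The lattice $\+L_\times(L)$ is the smallest sublattice of $\+P(\N)$ containing $L$ and closed under $\gamma^{-1}$ for every $\gamma\in\DUO$. For the homothety $\gamma\colon x\mapsto ax$, the preimage of a set $Z$ is $\gamma^{-1}(Z)=\{x\mid ax\in Z\}$, which is precisely the notation $Z/a$ introduced just before the lemma. Hence closure under the inverses of DUOs is exactly closure under exact division $Z\mapsto Z/a$, which gives the stated description. For the final claim about a singleton $L=\{c\}$, I would observe that $\{c\}/a=\{x\mid ax=c\}$ is either empty or the singleton $\{c/a\}$ when $a$ divides $c$, so every application of exact division to a subset of the divisors of $c$ yields again a subset of the divisors of $c$; since finite unions and intersections preserve this property, an easy induction on the lattice construction shows every member of $\+L_\times(\{c\})$ consists of divisors of $c$.

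There is no real obstacle here: the statement is a routine reformulation, and the main point requiring a moment of care is the domain issue in part 1) (why $a$ ranges over $\N^*$ and not $\N$), which is dictated by the ambient carrier $\N\setminus\{0\}$ appearing in the surrounding theorem. The singleton observation in part 2) is the one place where a short induction is needed, but it is immediate once one notes that $\{c\}/a\subseteq\{\text{divisors of }c\}$ and that the lattice operations preserve being a set of divisors of $c$.
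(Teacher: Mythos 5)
Your proof is correct; the paper in fact states Lemma~\ref{NX} without any proof, treating it as an immediate unwinding of Definitions~\ref{def:gen} and~\ref{def:Lat-LAL} for the multiplicative monoid, and your verification (frozen functions of $\times$ are homotheties, compositions of homotheties are homotheties, $\gamma^{-1}(Z)=Z/a$, and an induction on the lattice construction for the singleton claim) is exactly that routine specialization. You also correctly resolve the one genuine point of friction, namely the constraint $a\in\N^*$ versus $a\in\N$, by appeal to the carrier $\N\setminus\{0\}$ in force throughout Section~\ref{ss:Ntimes}, which is consistent with the paper's own earlier example distinguishing ${\textit{DUO}}(\langle\N;\times\rangle)$ from ${\textit{DUO}}(\langle\N\setminus\{0\};\times\rangle)$.
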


 \begin{example}  1) For $L=\{2^n5^p\mid n,p\in\N\}$, we have $\+L_\times(L)=\{\emptyset,L\}$.

2) For $L=\{2^n5^p\mid n\leq 2, p\leq 1\}=\{1,2,4,5,10,20\}$, the set $\+L_\times(L)$ is

$\big\{\emptyset,\{1\},\{1,2\},\{1,5\},\{1,2,4\},\{1,2,5\},
\{1,2,4,5\}, \{1,2,5,10\}, \allowbreak  \{1,2,4,5,10,20\}\big\}$

\noindent since the sets $L/n$ are  given by the following table:
$$
\begin{array}{c|ccccccc|}
\cline{2-8}
a&1&2&4&5&10&20&\notin L\\\cline{2-8}
L/a&\{1,2,4,5,10,20\}&\{1,2,5,10\}&\{1,5\}&\{1,2,4\}&\{1,2\}&\{1\}&\emptyset
\\\cline{2-8}
\end{array}
$$
\end{example}

\begin{proof} [Proof of Theorem \ref{thm:iplX}]
$(i)\Longrightarrow (ii)$ If $f$ is constant, then  $(ii)$ holds with $n=0$. Let us assume from now on that  $f$ is non constant.

{\bf Fact 1.} We first prove that $(i)$ implies:  {\em $x$ divides $f(x)$ for all $x$}.  
First, observe that any finite set $L$ is $\times$-recognizable.
Indeed, letting $L\subseteq\{0,\ldots,a-1\}$ and considering the map $\varphi_{a,1}$,
we have $L=\varphi_{a,1}^{-1}(L)$. As Lemma \ref{l:varphi ak morphism semiring} insures that 
$\varphi_{a,1}$ is a $\times$-morphism, we conclude that $L$ is $\times$-recognizable. 

Let $L=\{f(a)\}$, then (by lemma~\ref{NX}) every set in $\+L_\times(L)$ is a set of divisors of $f(a)$. As $a\in f^{-1}(L)\in \+L_\times(L)$, we conclude that $a$ is a divisor of $f(a)$.

\smallskip

\if 32
{\bf Fact 2.} We next prove that $(i)$ implies:
{\em if $b$ divides $a$ then $f(a)=f(b)\times(a/b)^n$ for some $n\in\N$}. 
Let $q=a/b$ and argue by contradiction. Assume $f(a)\not\in M=\{f(b)q^n\mid n\in\N\}$. 
As $a$ divides $f(a)$, the quotient $f(a)/a$ is an integer, and the set of $k$'s such that
$f(a)/(a\times q^k)$ is an integer contains 0, hence is non empty: let $0\leq k_0$ be the largest  $k$ such that $f(a)/(a\times q^k)$ is an integer.  
As $a$ divides $f(a)$ so does $q=a/b$ hence the largest  $k$ such that $f(a)/( q^k)$ is an integer $k_1\geq 1$. Moreover, $k_1\geq k_0+1$. Indeed, clearly $k_0\leq k_1$, and $k_0$ cannot be equal to $k_1$, otherwise $f(a)/(a\times q^{k_1})=f(a)/(q\times b\times q^{k_1})=f(a)/( b\times q^{k_1+1})$ would be an integer, and a fortiori $f(a)/(q^{k_1+1})$ would be an integer, contradicting the definition of $k_1$.

Let $L=\{f(a)/( q^j)\mid 0\leq j\leq k_1\}$. 
Since we assume $f(a)\not\in M=\{f(b)q^n\mid n\in\N\}$,
inequality $k_1\geq1$ implies  $f(b)\not\in L$,
thus $b\not\in f^{-1}(L)$. 
 Now, $f(a)\in L$ and $a\in f^{-1}(L)$. 
By  Lemma~\ref{NX},  $\+L_\times(L)$ is generated by the $\gamma^{-1}(L)$ which are here of the form $L/i=\{f(a)/(i\times q^j)\mid 0\leq j\leq k_1\}$.
To derive a contradiction, it suffices to prove that for all $i$, 
if $a\in L/i$ then also $b\in L/i$.
 If $a\in L/i$ then for some $\ell\leq k_1$,  $a=f(a)/(i\times q^\ell)$. 
Thus $i=f(a)/(a\times q^\ell)$ is an integer and $\ell\leq k_0$ (by definiton of $k_0$). As $k_0<k_1$, $f(a)/ q^{\ell+1}\in L$ and 
thus $f(a)/(i\times q^{\ell+1})\in L/i$.
Replacing $i$ by its value, we have 
$f(a)/(i\times q^{\ell+1})=f(a)/\big(\,( f(a)/(a\times (a/b)^\ell))\times(a/b)^{\ell+1}\,\big)=b$. Whence the contradiction.
%
\fi

{\bf Fact 2.} We next prove that $(i)$ implies:
{\em if $b$ divides $a$ then $f(a)=f(b)\times(a/b)^n$ for some $n\in\N$}. 
The case $b=a$ being trivial we suppose $b<a$.
Let $q$ be the integer $q=a/b$ and set $L=\N\ \cap\ \{f(a)/( q^j)\mid j\in\N\}$.
This set is finite: if $f(a)/( q^j)\in\N$ then $q^j\leq f(a)$ hence $j\leq\lfloor\log(f(a)/\log(q)\rfloor$.
Being finite $L$ is recognizable
and condition (i) insures that $f^{-1}(L) \in \+L_\times(L)$.
Letting $j=0$ we see that $f(a)\in L$ hence $a\in f^{-1}(L)$.
We prove that $b$ is also in $f^{-1}(L)$.
Being in $\+L_\times(L)$, the set $f^{-1}(L)$ is of the form $\bigcup_{s\in S}\bigcap_{i\in S_s}L/i$
for some finite family $S$ of finite subsets $S_i\subset\N$, $i\in S$ 
(cf. Lemmas~\ref{l:normal LAL} \& \ref{NX})
hence $a\in\bigcap_{i\in S_s}L/i$ for some $s$.
In particular, to prove $b\in f^{-1}(L)$ it suffices to prove that
$a\in L/i\Rightarrow b\in L/i$ for all $i\geq1$. 
If $a\in L/i$ then for some $\ell$ we have $a=f(a)/(i\times q^\ell)$
Thus, $f(a)/(a\times q^\ell)=i$ is an integer. 
As $q=a/b$ we have $f(a)/(q^{\ell+1}) = b \times f(a)/(a\times q^\ell) = b\times i$.
Thus, $f(a)/(q^{\ell+1})$ is also an integer hence $f(a)/(q^{\ell+1}) \in L$.
Now, $f(a)/(i\times q^{\ell+1}) = (f(a)/(i\times q^\ell))\times (b/a) = a\times(b/a) = b$ is an integer
hence $b=f(a)/(i\times q^{\ell+1})\in L/i$. 
This proves that $b$ is in $f^{-1}(L)$ hence $f(b)\in L$, i.e. $f(b)=f(a)/(q^j)$ for some $j$
hence $f(a)=f(b) \times q^j = f(b)\times(a/b)^j$.
This finishes the proof of Fact 2.

\smallskip
We finally prove that $(i)$ implies: {\em $f$ is  of the form $f(x)=f(1)\times x^n$ for some fixed $n\in\N$}, i.e.,  $(i)\Rightarrow(ii)$.
We apply Fact 2  with various pairs $(a,b)$ and write $n(a,b)$ for the exponent 
such that $f(a)=f(b)\,(a/b)^{n(a,b)}$.
Let $u$ and $v$ be coprime.
\begin{align}\label{eq: xy 1}
f(uv) &= f(1)\, (uv)^{n(uv,1)} &\text{$(a,b)=(uv,1)$}
\\\label{eq: x 1}
f(u) &= f(1)\, x^{n(u,1)} &\text{$(a,b)=(u,1)$}
\\\label{eq: xy x}
f(uv) &= f(u)\, v^{n(uv,u)}  &\text{$(a,b)=(uv,u)$}
\\\label{eq: xy x 1}
f(uv)&= f(1)\, u^{n(u,1)} \, v^{n(uv,1)} &\text{(combine \eqref{eq: xy x} and \eqref{eq: x 1})}
\end{align}
As $u$ and $v$ are coprime, 
comparing the exponents of $u$ in \eqref{eq: xy 1} and \eqref{eq: xy x 1},
we conclude that $n(uv,1) = n(u,1)$, for all coprime $u,v$.
Exchanging the roles of $u$ and $u$ we get $n(uv,1) = n(v,1)$.
Thus, if $u,v$ are coprime then $n(u,1)=n(v,1)$.
Now, for every $x,y\geq1$ there exists $z$ coprime to both $x$ and $y$.
We then have $n(x,1)=n(z,1)$ and $n(y,1)=n(z,1)$.
Thus, $n(x,1)=n(y,1)$.
Let $k$ be the common value of the $n(x,1)$'s. Equation \eqref{eq: x 1}
insures that $f(x)=f(1)x^n$ for every $x\geq1$.



\bigskip

$(ii)\Longrightarrow (iii)$. Straightforward.

\bigskip

$(ii)\Longrightarrow (i)$. If $f$ is of the form given in $(ii)$, 
then $f^{-1}(L)=\sqrt[n]{L/f(1)}$ with $n\in\N$.
We proved (Theorem 2.2 in \cite{cgg14ipl})   that any lattice closed by division is also closed by $n$th root. As $\+L_{\times}(L)$ is the smallest lattice containing $L$ and closed under division, it  is also closed by $n$th root; this  implies that $\+L_{\times}$ is closed under $f^{-1}$.

\bigskip

$(iii)\Longrightarrow (ii)$.  For $x\in\N$, let $P(x)$ be the set of primes dividing $x$.
Recall that $\val(x,p)$ denotes the
highest exponent $n$ of $p$ such that $p^n$ divides $x$.

If $p$ is prime then the relation $\sim_p$ defined by
$x \sim_p y$  if and only if  $\val(x,p)=\val(y,p)$ is a congruence on $\langle  \N\setminus\!\{0\} ;\times\rangle$.
As $f$ is congruence preserving,
we see that $\val(x,p)=\val(y,p)$ implies $\val(f(x),p)=\val(f(y),p)$. 
In other words, the $p$-valuation of $f(x)$ depends only on that of $x$. 
Thus, there is function $\theta_p : \N \to \N$ such that $\val(f(x),p)=\theta_p(\val(x,p))$.

Consider now the $\times$-congruence defined by 
$x \sim_{p,q} y$  if and only if  $\val(x,p)+\val(x,q)=\val(y,p)+\val(y,q)$. 
Clearly $p\sim_{p,q} q$ and $p^k \sim_{p,q} p^{k-1} q$ for all $k\geq1$, 
hence $f(p) \sim_{p,q} f(q)$ and $f(p^k)\sim_{p,q} f(p^{k-1}q)$.
Thus, $\val(f(p),p)+\val(f(p),q) = \val(f(q),p)+\val(f(q),q)$
and $\val(f(p^k),p)+\val(f(p^k),q) = \val(f(p^{k-1}q),p)+\val(f(p^{k-1}q),q)$,
hence 
\begin{eqnarray}\label{eq:theta1 theta0}
\theta_p(1)+\theta_q(0)&=&\theta_p(0)+\theta_q(1)
\\\notag
\theta_p(k)&=&\theta_p(k-1)+\theta_q(1)-\theta_q(0) \text{\qquad for $k\geq1$}
\\\label{eq:thetap}
\text{this yields\quad}\theta_p(k) &=& \theta_p(0) + k\,n  \text{\qquad for all $k\in\N$}
\end{eqnarray}
where $n$ is the common value of the $\theta_p(1) - \theta_p(0)$'s for prime $p$,
a property insured by equation \eqref{eq:theta1 theta0}.

Let $F$ be the finite set of primes which divide $f(1)$.
This set is also the set of primes $p$ such that $\theta_p(0)\neq0$.
Using \eqref{eq:thetap} we see that 
$\val(f(x),p)=\theta_p(\val(x,p)) =  \theta_p(0) + \val(x,p)\,n$ for every prime $p$.
In particular, $\val(f(x),p)=0$ if $p\notin F\cup P(x)$.
Thus,
\begin{eqnarray}\notag
f(x) &=& \prod_{p\in F\cup P(x)}  p^{\theta_p(0) + \val(x,p)\,n} 
\\\label{eq:Px F}
&=& \prod_{p\in F\cup P(x)} p^{\theta_p(0)}
            \times \prod_{p\in P(x)\cup F} p^{\val(x,p)\,n} 
\\\label{eq:Px F bis}
&=& \prod_{p\in F} p^{\theta_p(0)} \times \prod_{p\in P(x)} p^{\val(x,p)\,n} 
\\\label{eq:fx f1 xn}
&=& f(1)\,x^n
\end{eqnarray}
where the passage from \eqref{eq:Px F} to \eqref{eq:Px F bis} is justified as follows:
\\\indent- for $p\notin P(x)$ we have $\val(x,p)=0$ hence $p^{\val(x,p)}=1$,
\\\indent- for $p\notin F$ we have $\theta_p(0)=0$ hence $p^{\theta_p(0)}=1$.
\\
Equation~\eqref{eq:fx f1 xn} is the wanted condition (ii).
\end{proof}

\section{ Lattices and  preservation of stable preorders or congruences}\label{s:genResFin}

We prove a variant of Theorem~\ref{thm:ipl1}  for general algebras, where  congruence preservation is replaced by the  stronger condition of 
stable preorder preservation (Section \ref{sec:spp}).
We  also extend  the stronger version of Theorem~\ref{thm:ipl1} with the weaker condition  {\bf (2)} to residually finite algebras (in a strong sense) admitting a group operation.  

\subsection{Stable preorder preservation and complete lattices}\label{sec:spp}
We prove Theorem \ref{th:LatticeGen}, a variant of Theorem~\ref{thm:ipl1}, where
stable preorder preservation, complete lattices and arbitrary subsets replace congruences, lattices
and recognizable subsets respectively.

\begin{lemma} \label{f^-1-dansTreillis} Let $\+A=\langle A;\Xi\rangle$.  
Given a subset $L\subseteq A$, if $f : A\to A$ preserves the syntactic preorder $\leq_L$
then we have
\begin{equation}\label{eq:f-1L}
f^{-1}(L) \ =\ \mathop{\bigcup}\limits_{a\in f^{-1}(L)}\left(\mathop{\bigcap}\limits_{\{\gamma\in  {\textit{\DUO}}(\+A)\mid \gamma(a)\in L\}}\gamma^{-1}(L)\right).
\end{equation}
Thus $f^{-1}(L) \in \+L_{\+A}^\infty(L)$.
\end{lemma}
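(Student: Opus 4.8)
The plan is to establish the set-theoretic identity \eqref{eq:f-1L} by proving inclusion in both directions, and then to recognize the right-hand side as an element of $\+L_{\+A}^\infty(L)$ by appeal to Lemma~\ref{l:normal LAL}. The inclusion from right to left is the routine direction: if $a\in f^{-1}(L)$ and $x$ lies in the inner intersection $\bigcap_{\{\gamma\mid\gamma(a)\in L\}}\gamma^{-1}(L)$, then in particular the identity map $\gamma=\mathrm{id}$ satisfies $\gamma(a)=a\in L$ (since $a\in f^{-1}(L)$ means $f(a)\in L$, but we actually want $a\in L$)—so I need to be careful here about whether the identity belongs to the indexing set. Let me instead argue directly: taking $\gamma=\mathrm{id}$, which is in $\DUO(\+A)$ by Definition~\ref{def:gen} with $n=0$, the condition $\gamma(a)\in L$ is the condition $a\in L$; but the relevant thing is that for such an $x$ we must show $x\in f^{-1}(L)$, i.e. $f(x)\in L$. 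This is precisely where the hypothesis that $f$ preserves $\leq_L$ enters.

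The key step is to show the left-to-right inclusion together with the membership $f(x)\in L$ for each $x$ in the union. First I would take an arbitrary $a\in f^{-1}(L)$ and show that $a$ itself belongs to the corresponding inner intersection, which gives $f^{-1}(L)$ contained in the right-hand side: indeed, for every $\gamma$ with $\gamma(a)\in L$ we trivially have $a\in\gamma^{-1}(L)$, so $a\in\bigcap_{\{\gamma\mid\gamma(a)\in L\}}\gamma^{-1}(L)$, and hence $a$ is in the union indexed by itself. For the reverse inclusion, I would fix $a\in f^{-1}(L)$ and $x\in\bigcap_{\{\gamma\mid\gamma(a)\in L\}}\gamma^{-1}(L)$ and prove $f(x)\in L$. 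Unwinding the membership of $x$: for every $\gamma\in\DUO(\+A)$ with $\gamma(a)\in L$ we have $\gamma(x)\in L$. By the definition of the syntactic preorder in Definition-Lemma~\ref{syntacticL_congruence}, equation \eqref{eq:sord}, this says exactly that $x\leq_L a$. Since $f$ preserves $\leq_L$, we obtain $f(x)\leq_L f(a)$. Now $a\in f^{-1}(L)$ means $f(a)\in L$; applying \eqref{eq:sord} with $\gamma=\mathrm{id}$ to the relation $f(x)\leq_L f(a)$ yields $f(x)\in L$, i.e. $x\in f^{-1}(L)$, which is the desired inclusion.

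Having proved the identity \eqref{eq:f-1L}, the final assertion $f^{-1}(L)\in\+L_{\+A}^\infty(L)$ is immediate: the right-hand side is exactly a union of intersections of sets of the form $\gamma^{-1}(L)$ with $\gamma\in\DUO(\+A)$, which is the disjunctive normal form described in Lemma~\ref{l:normal LAL}(1) for membership in the complete lattice $\+L_{\+A}^\infty(L)$. I expect the only subtle point—and the part most worth stating carefully—to be the correct reading of the indexing condition $\{\gamma\mid\gamma(a)\in L\}$ and the dual use of \eqref{eq:sord}: once to translate the hypothesis $x\in\bigcap\gamma^{-1}(L)$ into $x\leq_L a$, and once (with the identity operation) to extract $f(x)\in L$ from $f(x)\leq_L f(a)$ and $f(a)\in L$. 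No genuine calculation is required beyond this bookkeeping, so the argument is short once the role of preorder preservation is pinned down.
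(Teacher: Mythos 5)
Your proof is correct and follows essentially the same route as the paper's: you identify the inner intersection with the initial segment $\{c\mid c\leq_L a\}$ via equation \eqref{eq:sord}, use reflexivity of $\leq_L$ for the inclusion $f^{-1}(L)\subseteq\bigcup_a I_a$, and use preorder preservation plus the identity operation (applied to $f(a)\in L$) for the reverse inclusion, exactly as in the paper. One tiny remark: for the final membership claim you should not cite Lemma~\ref{l:normal LAL}(1), which goes in the opposite direction (every set in $\+L_{\+A}^\infty(L)$ has that form); what you actually need is the immediate fact, from Definition~\ref{def:Lat-LAL}, that $\+L_{\+A}^\infty(L)$ contains each $\gamma^{-1}(L)$ and is closed under arbitrary unions and intersections, hence contains the right-hand side of \eqref{eq:f-1L}.
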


\begin{proof} 
Let $I_a=\bigcap_{\{\gamma\in  {\textit{\DUO}}(\+A)\mid \gamma(a)\in L\}}\gamma^{-1}(L)$.
 Observe that $c\in I_a$ if and only if, for all $\gamma\in\DUO(\+A)$, we have
$\gamma(a)\in L\Rightarrow\gamma(c)\in L$.
Thus, by definition of the syntactic preorder $\leq_L$, we have
$I_a=\{c\mid c\leq_L a\}$.
Since $a\leq_L a$ we have $a\in I_a$ hence $f^{-1}(L)\subseteq\bigcup_{a\in f^{-1}(L)} I_a$.
Conversely, for every $c\in I_a$ we have $c\leq_L a$ and, as $f$ preserves $\leq_L$,  we also have
$f(c)\leq_L f(a)$ hence if $f(a)\in L$ then $f(c)\in L$.
Thus, $\bigcup_{a\in f^{-1}(L)} I_a \subseteq f^{-1}(L)$.
This proves \eqref{eq:f-1L}.
\end{proof}
\begin{remark} For $\N$ with operations $\suc$, +  or $\times$,  equation \eqref{eq:f-1L} can be simplified due to the associativity and commutativity of the operations.
\begin{itemize}
\item For  $\langle \N;\suc\rangle$ or $\langle \N;+\rangle$,  equation \eqref{eq:f-1L} reduces to $$f^{-1}(L) \ =\ \cup_{a\in f^{-1}(L)}\left({\cap}_{\{n\in L-a\}}L-n\right) \text{,\ where\ } L-a=\{x\mid x+a\in L\}.$$ 
\item For $\langle \N;\times\rangle$,  equation \eqref{eq:f-1L} becomes $$f^{-1}(L) \ =\ {\cup}_{a\in f^{-1}(L)}\left({\cap}_{\{n \in L/a\}}L/n \right), \text{\ where\ } L/n=\{x\mid nx\in L\}.$$ 
\item For the free monoid $\langle \Sigma^*;\cdot\rangle$, equation \eqref{eq:f-1L} becomes 
$$f^{-1}(L) \ =\ {\cup}_{a\in f^{-1}(L)}\left({\cap}_{\{(x,y) \mid xay \in L\}} x^{-1}Ly^{-1} \right), \text{\ where\ } x^{-1}Ly^{-1}=\{z\mid xzy\in L\}.$$
\end{itemize}
\end{remark}

\begin{theorem} \label{th:LatticeGen} 
[ Stable preorder preservation and lattices ]
Let $\+A=\langle A;\Xi\rangle$ be  an algebra  and $f$ be a mapping  $f\colon A\to A$. 

1) Conditions $(i)$ and $(ii)$ are equivalent
 and when they are satisfied, $f(a)\in \gen(a)$ for every $a\in A$.
\begin{itemize}
\item[(i)] 
$f$ is stable preorder preserving.
\item[(ii)]  
for every  subset  $L\subseteq A$, $f^{-1}(L)$ is in the complete lattice $\+L_{\+A}^\infty(L)$.
\end{itemize}

2) Conditions $(i')$ and $(ii')$ are equivalent
\begin{itemize}
\item[(i')] 
$f$ preserves the stable preorders which have finite index.
\item[(ii')]  
for every $\+A$-recognizable subset $L\subseteq A$, $f^{-1}(L)$ is in
the finite lattice $\+L_{\+A}(L)$.
\end{itemize}
\end{theorem}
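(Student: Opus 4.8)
The plan is to prove both equivalences by exploiting Lemma~\ref{f^-1-dansTreillis}, which already does the heavy lifting for the hard direction, together with the characterization of $\+L_{\+A}^\infty(L)$ as the family of $\leq_L$-initial segments (Lemma~\ref{l:hull syntactic}) and the finiteness result for recognizable $L$ (Lemma~\ref{l;recFinite}). I would treat part 1) first, then derive part 2) by restricting attention to syntactic preorders of finite index.

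\medskip

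\noindent\textbf{Part 1), $(i)\Rightarrow(ii)$.} Assume $f$ is stable preorder preserving. For any $L\subseteq A$ the syntactic preorder $\leq_L$ is a stable preorder (Definition-Lemma~\ref{syntacticL_congruence}), so $f$ preserves it; Lemma~\ref{f^-1-dansTreillis} then gives directly that $f^{-1}(L)\in\+L_{\+A}^\infty(L)$. This direction is essentially immediate given the earlier lemma.

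\medskip

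\noindent\textbf{Part 1), $(ii)\Rightarrow(i)$.} Let $\preceq$ be an arbitrary stable preorder; I must show $f$ preserves it. By Lemma~\ref{lem:de n a 1} it suffices to handle unary behavior. The idea is to test $f$ against the initial segments of $\preceq$. Fix $b\in A$ and let $L=\{x\mid x\preceq b\}$, the principal initial segment of $b$; it is a $\preceq$-initial segment. By Proposition~\ref{p:syntactic largest}, $\preceq$ refines $\leq_L$, i.e. $x\preceq y\Rightarrow x\leq_L y$. Now suppose $a\preceq b$. Then $a\in L$, so $a\in f^{-1}(f(L))$; more usefully, I apply hypothesis $(ii)$ to the set $L'=\{x\mid x\preceq f(b)\}$: since $f^{-1}(L')\in\+L_{\+A}^\infty(L')$, Lemma~\ref{l:hull syntactic} tells me $f^{-1}(L')$ is a $\leq_{L'}$-initial segment, and since $\preceq$ refines $\leq_{L'}$ it is also a $\preceq$-initial segment. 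As $f(b)\preceq f(b)$ gives $b\in f^{-1}(L')$, and $a\preceq b$, the initial-segment property forces $a\in f^{-1}(L')$, i.e. $f(a)\preceq f(b)$. This is the crux of the argument, and the step I expect to be the main obstacle is getting the bookkeeping right so that the preorder whose initial segment I feed into $(ii)$ is the one downstairs ($\preceq$ at $f(b)$), not upstairs. For the final claim that $f(a)\in\gen(a)$: apply $(ii)$ with $L=\gen(a)$; since $\gen(a)$ is a $\leq_{\gen(a)}$-initial segment containing $a$ and $f^{-1}(\gen(a))$ lies in the complete lattice, one checks $a\in f^{-1}(\gen(a))$, whence $f(a)\in\gen(a)$.

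\medskip

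\noindent\textbf{Part 2).} The two directions mirror part 1) but restricted to the finite-index world. For $(i')\Rightarrow(ii')$: if $L$ is $\+A$-recognizable then by Lemma~\ref{l:rec and congru} its syntactic congruence, and hence $\leq_L$, has finite index; thus $f$ preserves $\leq_L$ by $(i')$, and Lemma~\ref{f^-1-dansTreillis} gives $f^{-1}(L)\in\+L_{\+A}^\infty(L)$, which equals the finite lattice $\+L_{\+A}(L)$ by Lemma~\ref{l;recFinite}. For $(ii')\Rightarrow(i')$: let $\preceq$ be a stable preorder of finite index and repeat the $(ii)\Rightarrow(i)$ argument, but now I must check that the principal initial segment $L'=\{x\mid x\preceq f(b)\}$ is recognizable so that $(ii')$ applies. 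Since $\preceq$ has finitely many classes, $L'$ is a union of finitely many of them, hence saturated for a finite-index congruence and so recognizable by Lemma~\ref{l:rec and congru}; the rest of the deduction is identical, using $\+L_{\+A}(L')=\+L_{\+A}^\infty(L')$ and Lemma~\ref{l:hull syntactic}. The only subtlety here is confirming that finite index of $\preceq$ transfers to recognizability of its principal initial segments, which is exactly what Lemma~\ref{l:rec and congru} supplies.
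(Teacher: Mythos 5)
Your proofs of the two equivalences are correct and follow essentially the paper's route: the hard directions via Lemma~\ref{f^-1-dansTreillis}, the converses by testing $f$ against the principal initial segment $\{x\mid x\preceq f(b)\}$, and part 2) by noting that this segment is saturated for the finite-index associated congruence, hence recognizable (Lemma~\ref{l:rec and congru}), with $\+L_{\+A}^\infty(L)=\+L_{\+A}(L)$ by Lemma~\ref{l;recFinite}. Your only real deviation is cosmetic: where the paper invokes Lemma~\ref{bool:saturated} to see that every set in $\+L_{\+A}^\infty(L')$ is a $\preceq$-initial segment, you route through Proposition~\ref{p:syntactic largest} and Lemma~\ref{l:hull syntactic}; both are valid.

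However, your proof of the supplementary claim that $f(a)\in\gen(a)$ is circular. You apply $(ii)$ with $L=\gen(a)$ and then assert that ``one checks $a\in f^{-1}(\gen(a))$.'' But $a\in f^{-1}(\gen(a))$ is, by definition, exactly the statement $f(a)\in\gen(a)$ that you are trying to prove, and none of the facts you cite delivers it: knowing that $f^{-1}(\gen(a))$ belongs to $\+L_{\+A}^\infty(\gen(a))$ and that $\gen(a)$ is a $\leq_{\gen(a)}$-initial segment containing $a$ says nothing about whether $a$ lies in $f^{-1}(\gen(a))$ --- the lattice $\+L_{\+A}^\infty(\gen(a))$ contains plenty of sets omitting $a$ (e.g.\ sets of the form $\gamma^{-1}(\gen(a))$ can even be empty). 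The paper's argument avoids this by choosing a set for which membership of $a$ in the preimage is automatic: apply $(ii)$ to the singleton $L=\{f(a)\}$, so that $a\in f^{-1}(L)$ trivially; then the disjunctive normal form of Lemma~\ref{f^-1-dansTreillis} (or Lemma~\ref{l:normal LAL}) expresses $f^{-1}(L)$ as a union of intersections of sets $\gamma^{-1}(L)$, so $a\in\gamma^{-1}(\{f(a)\})$ for some $\gamma\in\DUO(\+A)$, i.e.\ $\gamma(a)=f(a)$, which is precisely $f(a)\in\gen(a)$. You should replace your paragraph on this claim by that argument; the rest of your proof stands.
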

\begin{proof}  
1)
$(i)\Longrightarrow (ii)$ :  follows from Lemma \ref{f^-1-dansTreillis}.

$(ii)\Longrightarrow (i)$: 
 Assume that $f^{-1}(L)$ is in $\+L_{\+A}^\infty(L)$ for all $L$.
Let $\leq$ be an $\+A$-stable preorder and $x\in A$ 
and consider the $\leq$-initial segment $L=\{z \mid z \leq f(x)\}$.
The assumed condition insures that $f^{-1}(L)$ is in $\+L_{\+A}^\infty(L)$.
Since $L$ is a $\leq$-initial segment so is also every set in $\+L_{\+A}^\infty(L)$
(by Lemma~\ref{bool:saturated}). In particular,  $f^{-1}(L)$ is a $\leq$-initial segment.
Observing that $x\in f^{-1}(L)$ (since $f(x)\leq f(x)$),
we deduce that if $y\leq x$ then $y\in f^{-1}(L)$
hence $f(y)\in L$ and therefore $f(y)\leq f(x)$.
This shows that $f$ preserves the stable preorder $\leq$.

In order to see that $f(a)\in \gen(a)$, it suffices to apply $(ii)$ to $L=\{f(a)\}$ :
 applying Lemma~\ref{f^-1-dansTreillis}, we have
$a\in f^{-1}(L)=  \cup_{i\in I} \big(\cap_{\gamma\in \Gamma_i}\gamma^{-1}(L)\big)$
hence, for some $\gamma\in {\textit{DUO}}(\+A)$, we have
$a\in \gamma^{-1}(L)=\gamma^{-1}(\{f(a)\})$ and therefore $\gamma(a)=f(a)$.

 2) $(i')\Longrightarrow (ii')$ follows from Lemma \ref{f^-1-dansTreillis}
and Lemma~\ref{l;recFinite}.

$(ii')\Longrightarrow (i')$.
Let $\leq$ be a stable preorder with finite index associated congruence.
Observe that the initial segment $L=\{z \mid z \leq f(a)\}$ is $\+A$-recognizable.
Indeed, $L$ is saturated for the congruence $\sim$ associated to $\leq$
and, as $\sim$ has finite index, we can apply Lemma~\ref{l:rec and congru}.
Thus, using again Lemma~\ref{l;recFinite}, we can follow the proof given for item 1).
 \end{proof}

The weak version
Theorem \ref{th:LatticeGen} differs in three ways from  Theorem~\ref{thm:ipl1}:
1) the set $L$ is arbitrary,
2) the lattice  $\+L_{\+A}^\infty(L)$ is complete,
3) the function is stable prorder preserving.
In the next subsections, we will get a result closer to  Theorem~\ref{thm:ipl1}  by assuming that
\begin{enumerate}
\item The algebra satisfies a convenient residual finiteness property (Definition~\ref{def:residually finite c sp}), in which case we  can restrict ourselves to considering finite index preorders, congruences, recognizable sets, and  lattices (instead of arbitrary congruences, sets and complete lattices).
\item The algebra has a group operation;  by Corollary \ref{coro:stable finite index preorder in group} 1), finite index congruence preservation is equivalent to finite index stable order preservation.
\end{enumerate}
%
\subsection{Congruence  preservation and (complete)  boolean algebras of subsets}\label{sec:cpbool}
Congruence preservation for arbitrary algebras can be characterized using boolean algebras instead of lattices.

\begin{lemma} \label{f^-1-dansBA} 
Let $\+A=\langle A;\Xi\rangle$, let $L$ be a subset of $A$ and $f:A\to A$. 
If $f$preserves the syntactic congruence $\sim_L$ then  $f^{-1}(L) \in \+B_{\+A}^\infty(L)$.
More precisely,
\begin{align}\label{eq:f-1L BA}
f^{-1}(L)  =
 \mathop{\bigcup}\limits_{a\in f^{-1}(L)}\ \mathop{\bigcap}\limits_{\gamma\in  {\textit{\DUO}}(\+A)}
 \text{If $\gamma(a)\in L$ then $\gamma^{-1}(L)$ else $A\setminus\gamma^{-1}(L)$}\text{)}
\end{align}
\end{lemma}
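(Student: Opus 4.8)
The plan is to mirror exactly the argument used for the syntactic preorder in Lemma~\ref{f^-1-dansTreillis}, replacing the one-sided implications by biconditionals so that the intersections of sets $\gamma^{-1}(L)$ become intersections of $\gamma^{-1}(L)$ or its complement $A\setminus\gamma^{-1}(L)$, thereby producing a member of the Boolean algebra $\+B_\+A^\infty(L)$ rather than of the lattice $\+L_\+A^\infty(L)$.

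First I would name the inner intersection on the right-hand side of \eqref{eq:f-1L BA}, say $J_a=\bigcap_{\gamma\in\DUO(\+A)}(\text{If } \gamma(a)\in L \text{ then } \gamma^{-1}(L) \text{ else } A\setminus\gamma^{-1}(L))$, and determine which elements it contains. Unwinding the ``If\ldots then\ldots else'', an element $c$ lies in $J_a$ precisely when, for every $\gamma\in\DUO(\+A)$, one has $\gamma(c)\in L$ whenever $\gamma(a)\in L$ and $\gamma(c)\notin L$ whenever $\gamma(a)\notin L$; that is, $\gamma(a)\in L\Leftrightarrow\gamma(c)\in L$ for all $\gamma$. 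By the definition \eqref{eq:scong} of the syntactic congruence this is exactly the condition $c\sim_L a$, so $J_a$ is the $\sim_L$-class of $a$.

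With this identification the two inclusions are immediate. Since $\sim_L$ is reflexive we have $a\in J_a$, whence $f^{-1}(L)\subseteq\bigcup_{a\in f^{-1}(L)}J_a$. For the converse, take $c\in J_a$ with $a\in f^{-1}(L)$; then $c\sim_L a$, and because $f$ preserves $\sim_L$ we get $f(c)\sim_L f(a)$. Applying the defining biconditional of $\sim_L$ with $\gamma$ the identity of $\DUO(\+A)$ yields $f(a)\in L\Leftrightarrow f(c)\in L$; as $a\in f^{-1}(L)$ means $f(a)\in L$, we conclude $f(c)\in L$, i.e.\ $c\in f^{-1}(L)$. This establishes the equality \eqref{eq:f-1L BA}.

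Finally, the right-hand side of \eqref{eq:f-1L BA} is visibly a union of intersections of sets each equal to $\gamma^{-1}(L)$ or $A\setminus\gamma^{-1}(L)=\gamma^{-1}(A\setminus L)$, i.e.\ exactly the normal form described in part 2 of Lemma~\ref{l:normal LAL}; hence $f^{-1}(L)\in\+B_\+A^\infty(L)$. I expect no real obstacle here: the only point requiring care is the bookkeeping of the ``If\ldots then\ldots else'' so that membership $c\in J_a$ translates into the symmetric membership condition defining $\sim_L$, which is precisely what upgrades the preorder (lattice) argument to a congruence (Boolean) argument.
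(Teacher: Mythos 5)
Your proof is correct and follows essentially the same route as the paper's: you identify the inner intersection $J_a$ (the paper's $I_a$) with the $\sim_L$-class of $a$, use reflexivity for one inclusion and preservation of $\sim_L$ for the other. Your explicit appeal to the identity element of $\DUO(\+A)$ and to Lemma~\ref{l:normal LAL} for membership in $\+B_{\+A}^\infty(L)$ only makes explicit what the paper leaves implicit.
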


\begin{proof} 
Letting $I_a=\mathop{\bigcap}\limits_{\gamma\in  {\textit{\DUO}}(\+A)}
 \text{If $\gamma(a)\in L$ then $\gamma^{-1}(L)$ else $A\setminus\gamma^{-1}(L)$}\text{)}$,
we argue as in the proof of Lemma~\ref{f^-1-dansTreillis}.
Observe that $c\in I_a$ if and only if, for all $\gamma\in\DUO(\+A)$, we have
$\gamma(a)\in L\Leftrightarrow\gamma(c)\in L$.
Thus, by definition of the syntactic congruence $\sim_L$, we have
$I_a=\{c\mid c \sim_L a\}$.
Since $a\sim_L a$ we have $a\in I_a$ hence $f^{-1}(L)\subseteq\bigcup_{a\in f^{-1}(L)} I_a$.
Conversely, for every $c\in I_a$ we have $c\sim_L a$ and, as $f$ preserves $\sim_L$,  we also have
$f(c)\sim_L f(a)$ hence if $f(a)\in L$ then $f(c)\in L$.
Thus, $\bigcup_{a\in f^{-1}(L)} I_a \subseteq f^{-1}(L)$.
This proves \eqref{eq:f-1L BA}.
\end{proof}

\begin{theorem}\label{thm:cp bhull}
[Congruence preservation and Boolean algebras]
Let $\+A$ be an algebra
and $f\colon A \rightarrow A$.

1) Conditions $(i)$ and $(ii)$ are equivalent
\begin{itemize}
\item[(i)] 
$f$ is $\+A$-congruence preserving.
\item[(ii)]  
for every  subset  $L\subseteq A$, $f^{-1}(L)$ is in the complete Boolean algebra $\+B_{\+A}^\infty(L)$.
\end{itemize}

2) Conditions $(i')$ and $(ii')$ are equivalent
\begin{itemize}
\item[(i')] 
$f$ preserves the $\+A$-congruences which have finite index.
\item[(ii')]  
for every $\+A$-recognizable subset $L\subseteq A$, $f^{-1}(L)$ is in
the finite Boolean algebra $\+B_{\+A}(L)$.
\end{itemize}
\end{theorem}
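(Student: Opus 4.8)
The plan is to run the proof as a point-for-point dualization of the already-established lattice statement Theorem~\ref{th:LatticeGen}, replacing stable preorders by congruences, initial segments by saturated sets, the lattices $\+L_{\+A}(L),\+L_{\+A}^\infty(L)$ by the Boolean algebras $\+B_{\+A}(L),\+B_{\+A}^\infty(L)$, and the normal-form Lemma~\ref{f^-1-dansTreillis} by its Boolean twin Lemma~\ref{f^-1-dansBA}. Throughout, the role played before by the ``initial-segment closure'' Lemma~\ref{bool:saturated}(2) is now played by the ``saturation closure'' Lemma~\ref{bool:saturated}(1).

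For part~1, the implication $(i)\Rightarrow(ii)$ is immediate: a congruence preserving $f$ preserves in particular the syntactic congruence $\sim_L$ of any $L\subseteq A$, so Lemma~\ref{f^-1-dansBA} yields $f^{-1}(L)\in\+B_{\+A}^\infty(L)$. For $(ii)\Rightarrow(i)$ I would fix an $\+A$-congruence $\sim$ and a pair $x\sim y$, and take $L$ to be the $\sim$-class of $f(x)$, which is $\sim$-saturated. By hypothesis $f^{-1}(L)\in\+B_{\+A}^\infty(L)$, and Lemma~\ref{bool:saturated}(1) then guarantees that $f^{-1}(L)$ is itself $\sim$-saturated. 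Since $f(x)\in L$ we have $x\in f^{-1}(L)$, and $x\sim y$ together with saturation forces $y\in f^{-1}(L)$, i.e. $f(y)\in L$, i.e. $f(y)\sim f(x)$; hence $f$ preserves $\sim$. Note that, in contrast with Theorem~\ref{th:LatticeGen}, no conclusion $f(a)\in\gen(a)$ is available here, and indeed the swap map on $\langle\{a,b\};Id\rangle$ discussed after Definition~\ref{def:gena} shows it can genuinely fail; this is as expected, since saturation is symmetric and carries none of the ``overlinearity'' information that an initial segment does.

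For part~2 the same two arguments apply once everything is restricted to finite index congruences. In $(i')\Rightarrow(ii')$, a recognizable $L$ has syntactic congruence $\sim_L$ of finite index (Lemma~\ref{l:rec and congru}), so $f$ preserves it and Lemma~\ref{f^-1-dansBA} places $f^{-1}(L)$ in $\+B_{\+A}^\infty(L)$, which Lemma~\ref{l;recFinite} identifies with the finite Boolean algebra $\+B_{\+A}(L)$. In $(ii')\Rightarrow(i')$, given a finite index congruence $\sim$ and $x\sim y$, the witness $L=\{z\mid z\sim f(x)\}$ is a single $\sim$-class, hence $\sim$-saturated for a finite index congruence and therefore $\+A$-recognizable by Lemma~\ref{l:rec and congru}; thus $(ii')$ applies to it and the argument of part~1 concludes. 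The only place where the finite index hypothesis is truly used, and the one step deserving care rather than routine transcription, is precisely this verification that the chosen witness class is recognizable and that one may legitimately work inside the finite $\+B_{\+A}(L)$ instead of its completion $\+B_{\+A}^\infty(L)$ — both handled by Lemmas~\ref{l:rec and congru} and~\ref{l;recFinite}. Everything else is a formal mirror of the lattice proof.
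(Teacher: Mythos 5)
Your proof is correct and follows essentially the same route as the paper's: Lemma~\ref{f^-1-dansBA} for the forward implications, and for the converses the witness set $L=$ the $\sim$-class of $f(x)$ together with the saturation Lemma~\ref{bool:saturated}(1), with Lemmas~\ref{l:rec and congru} and \ref{l;recFinite} handling the finite index/recognizable case. Your side remark that no conclusion $f(a)\in\gen(a)$ is available here (witnessed by the swap map on $\langle\{a,b\};Id\rangle$) is a correct observation consistent with the paper.
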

\begin{proof}
1)  $(i)\Longrightarrow (ii)$ :  follows from Lemma \ref{f^-1-dansBA}.

$(ii)\Longrightarrow (i)$.
Assume that $f^{-1}(L)$ is in $\+B_{\+A}^\infty(L)$ for all $L$.
Let $\sim$ be an $\+A$-congruence and $x\in A$ 
and consider the $\sim$-class $L$ of $f(x)$.
The assumed condition insures that $f^{-1}(L)$ is in $\+B_{\+A}^\infty(L)$.
Since $L$ is $\sim$-saturated so is also every set in $\+B_{\+A}^\infty(L)$
(by Lemma~\ref{bool:saturated}). In particular,  $f^{-1}(L)$ is $\sim$-saturated.
Observing that $x\in f^{-1}(L)$ (since $L$ is the class of $f(x)$),
we deduce that if $x\sim y$ then $y\in f^{-1}(L)$
hence $f(y)\in L$ and therefore $f(y)\sim f(x)$
(again because $L$ is the class of $f(x)$).
This shows that $f$ preserves the congruence $\sim$. 
\\
2)
$(i')\Longrightarrow (ii')$.
If $L$ is recognizable then $\sim_L$ has finite index and hypothesis (i') insures that
$f$ preserves $\sim_L$. Applying Lemma \ref{f^-1-dansBA} and Lemma~\ref{l;recFinite}
we see that $f^{-1}(L)\in\+B_\+A(L)$.

$(ii')\Longrightarrow (i')$.
Let $\sim$ be a congruence with finite index, let $x\in A$ 
and consider the $\sim$-class $L$ of $f(x)$.
By  Lemma~\ref{l:rec and congru} the set $L$ is $\+A$-recognizable.
The assumed condition (ii') insures that $f^{-1}(L)$ is in $\+B_{\+A}(L)$.
Since $L$ is $\sim$-saturated so is also every set in $\+B_{\+A}(L)$
(by Lemma~\ref{bool:saturated}). In particular,  $f^{-1}(L)$ is $\sim$-saturated.
We conclude as in the proof given for item 1).
\end{proof}

%
\subsection{Residually finite algebras, recognizability and lattices}\label{s:43}

In the vein of \cite{KaarliPixley} (page 102), we define   notions of residual finiteness stronger than the classical  ones for congruences,
 preorders and algebras  tailored to fit in our framework.

\begin{definition}\label{def:residually finite c sp}1)  A congruence on an algebra $\+A$ is {\em c-residually finite}
if it is the intersection of a family of finite index congruences.

2) A stable preorder on an algebra $\+A$ is {\em sp-residually finite}
if it is the intersection of a family of stable preorders
all of which have finite index associated congruences.

3) An algebra $\+A$ is said to be {\em c-residually finite} if all congruences on  $\+A$ 
are c-residually finite.
$\+A$ is said to be {\em sp-residually finite}
if all stable preorders on  $\+A$ are {\em sp-residually finite}.
\end{definition}
\begin{remark}
The usual notion of residually finite group, ring or module requires that morphisms into finite algebras 
separate points, i.e., if $x\neq y$ there exists a morphism $\varphi$ into a finite algebra 
such that $\varphi(x)\neq\varphi(y)$.
This notion is equivalent to the c-residual finiteness of a single congruence, 
the trivial identity congruence,
hence it is weaker than that of c-residually finite algebra.
\end{remark}

 Every congruence being a preorder, 
Definition~\ref{def:residually finite c sp} gives a priori two notions of
residual finiteness for a congruence. In fact, both notions are proven to coincide below.

\begin{lemma}\label{l:p residually finite implies c}
1) If a stable preorder $\preceq$ is sp-residually finite then its associated congruence $\sim$
is c-residually finite.

2) A congruence is c-residually finite if and only if, as a preorder, it is sp-residually finite.

3) A sp-residually finite algebra is also c-residually finite.
\end{lemma}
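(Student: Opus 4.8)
The plan is to isolate a single observation---that forming the associated congruence commutes with intersections of preorders---and then obtain all three parts from it, with (2) and (3) being essentially formal consequences.

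For part (1), I would start from a witnessing decomposition $\preceq = \bigcap_{i\in I}\preceq_i$, where each $\preceq_i$ is a stable preorder whose associated congruence $\sim_i$ has finite index, and let $\sim$ be the congruence associated with $\preceq$ (so $x\sim y$ iff $x\preceq y$ and $y\preceq x$). The main step is the chain of equivalences
\[
x\sim y \iff (x\preceq y \ \wedge\ y\preceq x) \iff \forall i\,(x\preceq_i y\ \wedge\ y\preceq_i x) \iff \forall i\ x\sim_i y,
\]
where the middle equivalence uses that $x\preceq y$ holds exactly when $x\preceq_i y$ for every $i$. This yields $\sim = \bigcap_{i\in I}\sim_i$, an intersection of finite index congruences, so $\sim$ is c-residually finite directly from Definition~\ref{def:residually finite c sp}.

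For part (2), I would exploit that a congruence $\equiv$, regarded as a preorder, is symmetric, so its associated congruence is $\equiv$ itself. Hence if $\equiv$ is sp-residually finite as a preorder, then part (1) applied to $\equiv$ shows that its associated congruence, namely $\equiv$, is c-residually finite. For the converse, if $\equiv=\bigcap_i\sim_i$ with each $\sim_i$ a finite index congruence, then each $\sim_i$ is a stable preorder whose associated congruence is $\sim_i$ itself (again by symmetry) and thus has finite index; the very same intersection therefore witnesses that $\equiv$ is sp-residually finite as a preorder.

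Part (3) is then immediate: given an sp-residually finite algebra $\+A$ and an arbitrary congruence $\equiv$ on $\+A$, the relation $\equiv$ is a stable preorder, hence sp-residually finite by hypothesis, and part (2) converts this into c-residual finiteness of $\equiv$; since $\equiv$ was arbitrary, $\+A$ is c-residually finite. I do not anticipate a genuine obstacle here; the only points requiring care are verifying that the associated-congruence construction distributes over intersection (the displayed chain) and keeping track of the trivial but essential fact that a congruence equals its own associated congruence.
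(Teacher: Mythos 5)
Your proposal is correct and follows essentially the same route as the paper: both rest on establishing the identity $\sim \;=\; \bigcap_{i\in I}\sim_i$ in part (1), and then obtain (2) and (3) as formal consequences of the fact that a congruence, viewed as a preorder, equals its own associated congruence. The only difference is in how that identity is checked—you use a direct pointwise chain of equivalences, whereas the paper argues by double inclusion (using that $\bigcap_{i\in I}\sim_i$ is a congruence contained in $\preceq$, hence in $\sim$, and that $\sim\subseteq\preceq_i$ forces $\sim\subseteq\sim_i$)—and your verification is, if anything, more direct.
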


\begin{proof}
1) Let $(\preceq_i)_{i\in I}$ be a family of stable preorders such that $\preceq=\bigcap_{i\in I} \preceq_i$.
Let $\sim_i$ be the congruence associated to $\preceq_i$.
We show that $\sim=\bigcap_{i\in I} \sim_i$.
As $\sim_i$ is included in $\preceq_i$ we have
$(\bigcap_{i\in I} \sim_i) \subseteq (\bigcap_{i\in I} \preceq_i) = \preceq$.
As $\bigcap_{i\in I} \sim_i$ is a congruence, the last inclusion yields
$(\bigcap_{i\in I} \sim_i) \subseteq \sim$.
The inclusion of the congruence $\sim$ in the preorder $\preceq$ 
together with the inclusion $\preceq\subseteq\preceq_i$  imply 
the inclusion $\sim \subseteq \preceq_i$.
As $\sim$ is a congruence, this last inclusion yields $\sim \subseteq \sim_i$.
Thus, $\sim \subseteq (\bigcap_{i\in I} \sim_i)$.

2) If a congruence $\sim$ is c-residually finite and 
$\sim=\bigcap_{i\in I}\sim_i$ then the congruences $\sim_i$'s, being also preorders,
witness that $\sim$ is sp-residually finite. 
Conversely, applying 1) to a congruence $\sim$, we see that if $\sim$ is sp-residually finite
then it is also c-residually finite.

3) Trivial consequence of 2).
\end{proof}

%
%
%

\begin{example}\label{ex:PrimeSupport}
1) Integer (semi)-groups $\langle \N;+\rangle$ and $\langle \Z;+\rangle$ 
are c-residually finite as every non trivial congruence is of finite index (cf.  Lemmata \ref{folk0} and \ref{lem:CPsurA}) and the identity congruence is the intersection of all non trivial congruences. 
They are also sp-residually finite. Let  for instance, $\preceq$ be a stable preorder on $\langle \N;+\rangle$. For $i\in\N$  the relation $\preceq_i$ defined  by $x\preceq_i y$ if and only if 

$\bullet$ either $x<i$ or $y<i$ and $x\preceq y$,

$\bullet$ or $x\geq i$ and $y\geq i$,

\noindent is a stable preorder.  The congruence $\sim_i$ associated to $\preceq_i$ identifies all elements larger than $i$, thus   $\sim_i$  has finite index. It is easy to see that $\preceq \ = \ \cap_{i\in\N} \preceq_i$, hence $\preceq$ is residually finite and $\langle \N;+\rangle$ is sp-residually finite.

2) Contrary to the previous example, for $k\geq 2$,
the algebra $\langle \N^k;+\rangle$ admits c-residually finite congruences having infinite index, e.g.,
the congruence $\vec{x}\sim \vec{y} \Leftrightarrow x_1= y_1$. It is residually finite because $x_1= y_1$ if and only if $x_1\equiv y_1\pmod n$, for all $n$.

3) On the algebra of integers with multiplication 
$\langle \N;\times\rangle$, there exist residually finite non trivial congruences with  infinite index.  For instance consider on $\langle \N;\times\rangle$  the congruence $x\sim  y$ if and only if $x$ and $y$ have the same set of primes divisors:  $\sim $ does not have a finite index.  For each prime number $p$ let $\sim _p$ be the congruence $x\sim_p y$ if and only if $p$ divides both $x,y$ or neither of them. Each $\sim _p$ has finite index 2 and $\sim\  =\ \cap_{p\in \PP} \sim _p$.
\end{example}
\begin{lemma}\label{l:residually finite utile}
Let $\+A = \langle A;\Xi\rangle$ be an algebra and $f:A\to A$.

1) If $\+A$ is sp-residually  finite then $f$ is stable preorder preserving if and only if
$f$ preserves all stable preorders having finite index associated congruences

2) If $\+A$ is c-residually finite then $f$ is congruence preserving if and only if
$f$ preserves all finite index congruences.
\end{lemma}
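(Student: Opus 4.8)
The statement to prove is Lemma~\ref{l:residually finite utile}, which says that under the respective residual finiteness hypotheses, preserving all finite index stable preorders (resp. finite index congruences) already forces preservation of \emph{all} stable preorders (resp. all congruences).

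\medskip

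The plan is to prove both parts by the same mechanism, so I would really only write out part 1) in detail and remark that part 2) is identical with ``stable preorder'' replaced by ``congruence''. One direction is trivial in each case: if $f$ preserves all stable preorders then in particular it preserves those having finite index associated congruence, and similarly for congruences. So the content is the converse, and here is where the residual finiteness hypothesis does the work. Suppose $f$ preserves every stable preorder whose associated congruence has finite index, and let $\preceq$ be an arbitrary stable preorder on $\+A$. Because $\+A$ is sp-residually finite, $\preceq$ is sp-residually finite, i.e.\ there is a family $(\preceq_i)_{i\in I}$ of stable preorders, each with finite index associated congruence, such that $\preceq \ =\ \bigcap_{i\in I}\preceq_i$.

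\medskip

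The key step is then a simple ``intersection'' argument. First I would check that $f$ preserves each $\preceq_i$: this is immediate from the hypothesis, since each $\preceq_i$ has a finite index associated congruence. Now suppose $x\preceq y$. Then $x\preceq_i y$ for every $i\in I$, because $\preceq$ is the intersection of the $\preceq_i$. Since $f$ preserves each $\preceq_i$, we get $f(x)\preceq_i f(y)$ for every $i\in I$. But $f(x)\preceq_i f(y)$ for all $i$ means precisely $f(x)\in\bigcap_{i\in I}\preceq_i$-related to $f(y)$, i.e.\ $f(x)\preceq f(y)$. Hence $f$ preserves $\preceq$, and since $\preceq$ was an arbitrary stable preorder, $f$ is stable preorder preserving. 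For part 2) the identical argument applies using that $\+A$ is c-residually finite, writing an arbitrary congruence $\sim$ as $\bigcap_{i\in I}\sim_i$ with each $\sim_i$ of finite index, and using that $\bigcap_{i\in I}\sim_i$-relatedness of $f(x)$ and $f(y)$ is equivalent to $f(x)\sim_i f(y)$ for all $i$.

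\medskip

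I do not anticipate any real obstacle here: the lemma is essentially a formal consequence of the definition of sp- (resp.\ c-) residual finiteness (Definition~\ref{def:residually finite c sp}), together with the elementary observation that a pair of elements is related by an intersection of relations if and only if it is related by every relation in the family. The only point to state carefully is that the hypothesis is phrased in terms of stable preorders ``having finite index associated congruences'' rather than finite index preorders themselves, but this matches exactly the form of the witnessing family supplied by sp-residual finiteness, so no extra work is needed. The whole proof is three or four lines per part.
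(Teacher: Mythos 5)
Your proof is correct and follows essentially the same route as the paper's: decompose an arbitrary stable preorder (resp.\ congruence) via the witnessing family supplied by sp- (resp.\ c-) residual finiteness, observe that $f$ preserves each member of the family by hypothesis, and conclude through the intersection. The only difference is cosmetic — you spell out the trivial direction and the reduction of part 2) to part 1), which the paper leaves implicit.
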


\begin{proof}
1) Let $\leq$ be a stable preorder. 
The  hypothesis of sp-residual finiteness of $\+A$ insures that $\preceq$ is sp-residually  finite:
there exists a family of stable preorders $(\preceq_i)_{i\in I}$  
with associated congruences having finite indexes, such that $\preceq=\cap_{i\in I} \preceq_i$.
Thus, $a\preceq b$ if and only if, for all $i\in I$, $a\preceq_i b$.  
The hypothesis insures that $f$ preserves the $\preceq_i$'s hence 
$f(a)\preceq_i f(b)$ for all $i\in I$. This yields $f(a)\preceq_i f(b)$.

The proof of 2) is similar.
\end{proof}

The next theorem  improves item 2 of Theorem~\ref{th:LatticeGen} for 
sp-residually finite algebras.

\begin{theorem}\label{th:RecLatGeneralResiduallyFinite} 
Let $\+A = \langle A;\Xi\rangle$ be an algebra.

1) If $f\colon  A\to A$ is stable preorder preserving then 
$f^{-1}(L)$ is in the lattice $\+L_{\+A}(L)$ for every $\+A$-recognizable $L\subseteq A$.

2) Assume $\+A$ is  a sp-residually finite algebra.
If $f\colon  A\to A$ is such that 
$f^{-1}(L)$ is in the lattice $\+L_{\+A}(L)$ for every $\+A$-recognizable $L\subseteq A$
then $f$ is stable preorder preserving. 
\end{theorem}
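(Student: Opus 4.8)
The plan is to deduce both parts from the machinery already in place, with part 1 being essentially immediate and part 2 using the residual finiteness hypothesis to upgrade a statement about finite-index preorders to a statement about all stable preorders.

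For part 1, I would argue directly. Since $f$ is stable preorder preserving, it preserves in particular the syntactic preorder $\leq_L$, which is a stable preorder by Definition-Lemma~\ref{syntacticL_congruence}. Lemma~\ref{f^-1-dansTreillis} then yields $f^{-1}(L)\in\+L_{\+A}^\infty(L)$. Because $L$ is $\+A$-recognizable, Lemma~\ref{l;recFinite} gives $\+L_{\+A}^\infty(L)=\+L_{\+A}(L)$, whence $f^{-1}(L)\in\+L_{\+A}(L)$. This is nothing but the implication $(i')\Rightarrow(ii')$ of Theorem~\ref{th:LatticeGen}, invoked under the stronger hypothesis that $f$ preserves \emph{all} stable preorders rather than merely the finite-index ones.

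For part 2 the strategy is to reduce full stable preorder preservation to preservation of finite-index preorders, and then feed in the hypothesis through Theorem~\ref{th:LatticeGen}. Concretely, the assumption that $f^{-1}(L)\in\+L_{\+A}(L)$ for every recognizable $L$ is precisely condition $(ii')$ of Theorem~\ref{th:LatticeGen}, so by the equivalence $(i')\Leftrightarrow(ii')$ established there, $f$ preserves every stable preorder whose associated congruence has finite index. At this point the sp-residual finiteness of $\+A$ enters: by Lemma~\ref{l:residually finite utile}, on such an algebra $f$ is stable preorder preserving if and only if it preserves all stable preorders with finite-index associated congruence. Combining this equivalence with what has just been shown gives that $f$ is stable preorder preserving.

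The only genuine subtlety—and the step I would treat most carefully—is the passage from finite-index preservation to full preservation in part 2. Its correctness rests entirely on Lemma~\ref{l:residually finite utile}, which itself leans on the definition of sp-residual finiteness: an arbitrary stable preorder $\preceq$ is an intersection $\bigcap_{i}\preceq_i$ of stable preorders whose associated congruences have finite index, and $f$ preserving each $\preceq_i$ forces $f$ to preserve the intersection $\preceq$. One should also note that ``finite index'' of a stable preorder in Theorem~\ref{th:LatticeGen} means finite index of its associated congruence, so the notion matches the one in Lemma~\ref{l:residually finite utile} verbatim. Everything else is bookkeeping across the already-proved lemmas, with no fresh calculation required.
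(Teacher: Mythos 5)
Your proof is correct and follows essentially the same route as the paper: part 1 is the implication $(i)\Rightarrow(ii)$ of Theorem~\ref{th:LatticeGen} (i.e., Lemma~\ref{f^-1-dansTreillis} applied to the syntactic preorder $\leq_L$) combined with Lemma~\ref{l;recFinite}, and part 2 is the implication $(ii')\Rightarrow(i')$ of Theorem~\ref{th:LatticeGen} followed by Lemma~\ref{l:residually finite utile}. The paper's proof is exactly this two-line reduction, so nothing is missing.
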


\begin{proof}
1)  Apply implication $(i)\Rightarrow(ii)$ of Theorem~\ref{th:LatticeGen}
and then Lemma~\ref{l;recFinite}.

2)  Applying implication $(ii')\Rightarrow(i')$ of Theorem~\ref{th:LatticeGen}, 
we already know that $f$ preserves stable preorders with finite index associated congruences.
To conclude apply Lemma~\ref{l:residually finite utile}.
\end{proof}

\begin{remark}  If $\+A $ is sp-residually finite, Theorem \ref{th:RecLatGeneralResiduallyFinite} states the equivalence:
$f\colon  A\to A$ is stable preorder preserving if and only if  
$f^{-1}(L)$ is in the lattice $\+L_{\+A}(L)$ for every $\+A$-recognizable $L\subseteq A$.
An instance of Theorem \ref{th:RecLatGeneralResiduallyFinite} for the sp-residually finite algebra $\+N=\langle \N ;  +, \times\rangle$
is the characterization of congruence preservation given in Theorem \ref{thm:ipl1}. 
Indeed, Proposition \ref{prop:miracle}, 
shows that on $\+N$ a function $f$ is stable preorder preserving 
if and only if it is monotone non  decreasing and congruence preserving.  Theorem \ref{thm:CPsurN} shows that a
 non constant congruence preserving $f$ satisfies $f(x)\geq x$ for all $x$.  Thus on $\+N$, Theorem \ref{thm:ipl1}  becomes a consequence of Theorem \ref{th:RecLatGeneralResiduallyFinite}.
\end{remark}

The next theorem  improves item 2 of Theorem~\ref{thm:cp bhull} for 
c-residually finite algebras.

\begin{theorem}\label{th:RecLatGeneralResiduallyFinite c} 
Let $\+A = \langle A;\Xi\rangle$ be an algebra.

1) If $f\colon  A\to A$ is congruence preserving then 
$f^{-1}(L)$ is in the Boolean algebra $\+B_{\+A}(L)$ for every $\+A$-recognizable $L\subseteq A$.

2) Assume $\+A$ is  a c-residually finite algebra.
If $f\colon  A\to A$ is such that 
$f^{-1}(L)$ is in the Boolean algebra $\+B_{\+A}(L)$ for every $\+A$-recognizable $L\subseteq A$
then $f$ is congruence preserving. 
\end{theorem}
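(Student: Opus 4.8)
The plan is to mirror, almost line for line, the proof of Theorem~\ref{th:RecLatGeneralResiduallyFinite}, systematically replacing stable preorders by congruences, initial segments by saturated sets, and lattices $\+L_\+A$ by Boolean algebras $\+B_\+A$. Concretely, the role played there by Theorem~\ref{th:LatticeGen} will here be played by its congruence/Boolean twin, Theorem~\ref{thm:cp bhull}, while Lemmas~\ref{l;recFinite} and~\ref{l:residually finite utile} are used verbatim.

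For part 1) I would start from the hypothesis that $f$ is $\+A$-congruence preserving and apply implication $(i)\Rightarrow(ii)$ of Theorem~\ref{thm:cp bhull}, which yields $f^{-1}(L)\in\+B_{\+A}^\infty(L)$ for \emph{every} subset $L\subseteq A$. Specializing to a recognizable $L$, Lemma~\ref{l;recFinite} collapses the complete Boolean algebra to the finite one, $\+B_{\+A}^\infty(L)=\+B_{\+A}(L)$, so that $f^{-1}(L)\in\+B_{\+A}(L)$ as desired. Note that this direction needs no residual finiteness hypothesis and holds on an arbitrary algebra.

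For part 2), under the assumption that $\+A$ is c-residually finite, I would first invoke implication $(ii')\Rightarrow(i')$ of Theorem~\ref{thm:cp bhull}: from the hypothesis that $f^{-1}(L)\in\+B_{\+A}(L)$ for every recognizable $L$, it follows that $f$ preserves every finite index $\+A$-congruence. Then Lemma~\ref{l:residually finite utile}(2) upgrades this to preservation of \emph{all} congruences, precisely because c-residual finiteness expresses each congruence as an intersection of finite index congruences and compatibility with $f$ survives such intersections. There is no genuine obstacle in the argument; the only point deserving attention is the asymmetric use of the residual finiteness hypothesis, which is required solely in part 2) to climb from finite index congruences back to arbitrary ones, exactly as in Theorem~\ref{th:RecLatGeneralResiduallyFinite}.
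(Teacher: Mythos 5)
Your proposal is correct and follows essentially the same route as the paper: part 1) via implication $(i)\Rightarrow(ii)$ of Theorem~\ref{thm:cp bhull} together with Lemma~\ref{l;recFinite}, and part 2) via implication $(ii')\Rightarrow(i')$ of Theorem~\ref{thm:cp bhull} followed by Lemma~\ref{l:residually finite utile}. Your additional remarks (that part 1) needs no residual finiteness, and that c-residual finiteness is used only to pass from finite index congruences to arbitrary ones) accurately reflect the structure of the paper's argument.
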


\begin{proof}
1)  Apply implication $(i)\Rightarrow(ii)$ of Theorem~\ref{thm:cp bhull}
and then Lemma~\ref{l;recFinite}.

2) Applying implication $(ii')\Rightarrow(i')$ of Theorem~\ref{thm:cp bhull}, 
we already know that $f$ preserves finite index congruences.
To conclude apply Lemma~\ref{l:residually finite utile}.
\end{proof}

\subsection{Stable preorders become congruences when there is a group operation}
\label{ss:when there is a group operation}

 In some frameworks the requirement that a preorder be stable is quite a strong requirement 
 as shown by Corollary~\ref{coro:stable finite index preorder in group} below.
Recall first the notion of cancellability.
 \begin{definition}\label{def:cancellative}
1) A semigroup $S$ is said to be cancellable if  $xz=yz$ implies $x=y$ and $zx=zy$ implies $x=y$.

2)  A stable preorder  $\preceq$  on $S$  is said to be cancellable
if $xz\preceq yz$ implies $x\preceq y$ and $zx\preceq zy$ implies $x\preceq y$.
\end{definition}

\begin{lemma}\label{l:stable order on finite group}
 The only stable order of a finite group $G$ is the identity relation.
\end{lemma}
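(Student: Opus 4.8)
The plan is to show that a stable order $\preceq$ on a finite group $G$ can only be the identity relation, and the first move is to use the group structure to reduce everything to initial segments of the neutral element $e$. I would prove that $x\preceq y$ holds if and only if $e\preceq x^{-1}y$. Indeed, pairing the hypothesis $x\preceq y$ with the reflexive instance $x^{-1}\preceq x^{-1}$ and using stability of $\preceq$ under the product gives $e=x^{-1}x\preceq x^{-1}y$; conversely, pairing $e\preceq x^{-1}y$ with $x\preceq x$ and multiplying on the left by $x$ recovers $x\preceq y$. Hence it suffices to prove that $e\preceq g$ forces $g=e$.

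Next I would bring in finiteness. Let $g\in G$ with $e\preceq g$, and let $n\geq1$ be the order of $g$, so that $g^n=e$. By a short induction on $k$, using stability — from $e\preceq g^k$ and $e\preceq g$ one gets $e=e\cdot e\preceq g^k\cdot g=g^{k+1}$ — I obtain $e\preceq g^k$ for every $k\geq0$. Taking $k=n-1$ yields $e\preceq g^{n-1}=g^{-1}$.

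Finally, pairing $e\preceq g^{-1}$ with the reflexive $g\preceq g$ and multiplying on the right by $g$ gives $g=e\cdot g\preceq g^{-1}\cdot g=e$, that is $g\preceq e$. Together with the hypothesis $e\preceq g$, antisymmetry of the order $\preceq$ forces $g=e$. Unwinding the reduction, $x\preceq y$ entails $x^{-1}y=e$, i.e. $x=y$, so $\preceq$ is the identity relation.

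The step to state carefully, rather than a genuine obstacle, is the correct use of stability: compatibility with the binary product relates \emph{pairs} of arguments, so at each use I pair the nontrivial inequality with a reflexive one $z\preceq z$ in order to multiply on a single side. It is worth noting where each hypothesis is spent: finiteness is used only to guarantee that $g$ has finite order, so that the ascending chain $e\preceq g\preceq g^2\preceq\cdots$ closes up at $g^n=e$; and antisymmetry — which is exactly what distinguishes an \emph{order} from a mere preorder — is what collapses the resulting cycle $e\preceq g\preceq e$ into the equality $g=e$.
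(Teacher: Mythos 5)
Your proof is correct and follows essentially the same route as the paper's: reduce $x\preceq y$ to $e\preceq x^{-1}y$ via stability, propagate the inequality along powers of $x^{-1}y$, use finiteness of the group to close the cycle at $e$, and invoke antisymmetry. The only cosmetic differences are that you cite the order of the element directly (where the paper extracts $(x^{-1}y)^n=e$ from a repetition of powers) and you pass through $e\preceq g^{-1}$ to get $g\preceq e$ instead of chaining $e\preceq g\preceq g^n=e$; both are the same argument.
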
 
\begin{proof}
Assuming $x\preceq y$ we prove $x=y$.
Let $e$ be the unit of $G$.
Stability under left product by $x^{-1}$ and $(x^{-1}\,y)^n$
successively yield $e\preceq x^{-1}\,y$
and then $(x^{-1}\,y)^n \preceq (x^{-1}\,y)^{n+1}$ for all $n\geq1$.
By transitivity,  $e \preceq x^{-1}\,y \preceq (x^{-1}\,y)^n$.
As the group is finite there exists $k$ and $n\geq 1$ such that $(x^{-1}\,y)^{k+n}=(x^{-1}\,y)^k$
hence $(x^{-1}\,y)^n=e$; 
thus $e \preceq x^{-1}\,y \preceq e$ and by antisymmetry $e = x^{-1}\,y$ and $x=y$.
\end{proof} 
\begin{lemma}\label{l:finite cancellative semigroup}
 Any  finite cancellable semigroup is a group.
\end{lemma}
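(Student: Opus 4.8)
The plan is to exploit finiteness to upgrade cancellability into invertibility. First I would fix an arbitrary element $a\in S$ and consider the two translation maps $\lambda_a\colon x\mapsto ax$ and $\rho_a\colon x\mapsto xa$ from $S$ to $S$. Left (resp. right) cancellability says precisely that $\lambda_a$ (resp. $\rho_a$) is injective; since $S$ is finite, an injective self-map is automatically surjective, so both $\lambda_a$ and $\rho_a$ are bijections. This bijectivity is the engine of the whole argument: every equation $ax=b$ and $ya=b$ has a solution.

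Next I would produce a two-sided identity. Surjectivity of $\lambda_a$ gives some $e\in S$ with $ae=a$, and I claim $e$ is a right identity for all of $S$. Indeed, for any $b\in S$ surjectivity of $\rho_a$ yields $c$ with $ca=b$, whence $be=(ca)e=c(ae)=ca=b$ by associativity. Symmetrically, surjectivity of $\rho_a$ gives $f$ with $fa=a$, and the same reasoning shows $f$ is a left identity. The two then coincide, since $fe=e$ (as $f$ is a left identity) and $fe=f$ (as $e$ is a right identity), so $e=f$ is a two-sided identity.

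Finally I would construct inverses. For a given $a\in S$, surjectivity of $\lambda_a$ and of $\rho_a$ provide $b,c$ with $ab=e$ and $ca=e$; then $c=ce=c(ab)=(ca)b=eb=b$, so $b$ is a two-sided inverse of $a$. Having both an identity and inverses, $S$ is a group.

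The step I expect to require the most care is the middle one: verifying that the right identity $e$ obtained from the single equation $ae=a$ genuinely acts as a right identity on \emph{every} element, and then reconciling it with the independently obtained left identity $f$. The argument is short but rests essentially on being able to solve $ca=b$ for arbitrary $b$, i.e. on the surjectivity established in the first step. Without finiteness this surjectivity fails — for instance $\langle\N;+\rangle$ is cancellable but not a group — so this is exactly where the hypothesis is consumed.
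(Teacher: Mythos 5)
Your proof is correct and takes essentially the same route as the paper's: both arguments turn cancellability plus finiteness into bijectivity of the translation maps $x\mapsto ax$ and $x\mapsto xa$, then extract a two-sided identity and finally two-sided inverses via the standard computation $c = ce = c(ab) = (ca)b = eb = b$. The only (cosmetic) difference is in the identity step: the paper produces local identities $e'_a,\,e''_a$ for every $a$ and shows they all coincide by cancelling in $ae'_ab = ab = ae''_bb$, whereas you promote a single solution of $ae=a$ to a global right identity using surjectivity of $\rho_a$ and then match it with a left identity.
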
 
\begin{proof}
Cancellability insures that, for every $a\in S$, 
the maps $x\mapsto ax$ and $x\mapsto xa$ are injective 
hence are bijections $S\to S$ because $S$ is finite. 
In particular, for all $a\in S$ there exist $e'_a, e''_a$ such that $a e'_a=a$ and $e''_a a=a$.
For $a,b\in S$ we then have $ae'_a b = ab = a e''_b b$ and by cancellability $e'_a=e''_b$,
proving that $e'_a$ and $e''_a$ do not depend on $a$ and are equal.
Thus, the common value $e$ of the $e'_a$'s and $e''_a$'s is a unit of $S$.
Also, for $a\in S$ there exists $a', a''$ such that $aa'=a''a=e$.
Then Then $a' = ea' = (a''a)a' = a''(aa') = a''e = a''$, proving that $a'$ is an inverse of $a$.
\end{proof}

\begin{proposition}\label{p:stable cancellable preorder with finite index}
Let $\+A = \langle A;\Xi\rangle$ be an algebra such that $\Xi$ contains a semigroup operation.
Every cancellable stable preorder $\preceq$ of $\+A$ 
 such that the associated congruence $\sim$ has finite index 
is equal to its associated congruence $\sim$.
\end{proposition}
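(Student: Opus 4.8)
The plan is to push the whole situation down to the finite quotient $A/\!\!\sim$ and there invoke the two structural lemmas about finite cancellable semigroups and stable orders on finite groups. Throughout I write the semigroup operation multiplicatively and use that $\sim$, being the congruence associated with $\preceq$, is in particular compatible with that operation, so the quotient $A/\!\!\sim$ is a semigroup under the induced product $[x][z]=[xz]$.

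First I would observe that $\preceq$ descends to a genuine partial order $\leq$ on $A/\!\!\sim$, defined by $[x]\leq[y]$ iff $x\preceq y$. This is well defined precisely because $\sim$ collapses exactly the $\preceq$-equivalent pairs, and it is antisymmetric for the same reason: $[x]\leq[y]$ together with $[y]\leq[x]$ gives $x\preceq y$ and $y\preceq x$, hence $x\sim y$, i.e. $[x]=[y]$. Stability of $\preceq$ with respect to the semigroup operation passes to $\leq$. Since $\sim$ has finite index by hypothesis, $A/\!\!\sim$ is a \emph{finite} semigroup carrying the stable order $\leq$.

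The key step is to show that this finite semigroup is cancellable, drawing on the cancellability of the preorder. Suppose $[x][z]=[y][z]$, i.e. $xz\sim yz$; unfolding the definition of the associated congruence, this says both $xz\preceq yz$ and $yz\preceq xz$. Applying right cancellability of $\preceq$ to each of these separately yields $x\preceq y$ and $y\preceq x$, whence $x\sim y$, that is $[x]=[y]$. The symmetric argument, using left cancellability of $\preceq$ on $zx\sim zy$, gives left cancellation. Thus $A/\!\!\sim$ is a finite cancellable semigroup. This is the delicate point of the whole argument: one must remember that the single equality $xz\sim yz$ secretly contains the two $\preceq$-comparisons $xz\preceq yz$ and $yz\preceq xz$, so that cancellability of the preorder (which is a statement about $\preceq$, not directly about $\sim$) can be brought to bear.

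With that in hand the conclusion is immediate. By Lemma~\ref{l:finite cancellative semigroup} the finite cancellable semigroup $A/\!\!\sim$ is a group, and by Lemma~\ref{l:stable order on finite group} the only stable order on a finite group is equality; hence $\leq$ is the identity relation on $A/\!\!\sim$. Translating back, $x\preceq y$ forces $[x]\leq[y]$, hence $[x]=[y]$, i.e. $x\sim y$; this gives $\preceq\,\subseteq\,\sim$. The reverse inclusion $\sim\,\subseteq\,\preceq$ holds automatically since $\sim$ is the congruence associated with $\preceq$. Therefore $\preceq\,=\,\sim$, as claimed.
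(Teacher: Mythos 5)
Your proof is correct and follows essentially the same route as the paper's: pass to the finite quotient $A/\!\!\sim$, show it is a cancellable semigroup hence a group by Lemma~\ref{l:finite cancellative semigroup}, and then invoke Lemma~\ref{l:stable order on finite group} to conclude the induced stable order is the identity. If anything, you are more careful than the paper at the one subtle point — deriving cancellability of $\sim$ from the hypothesized cancellability of $\preceq$ by splitting $xz\sim yz$ into the two comparisons $xz\preceq yz$ and $yz\preceq xz$ — which the paper's proof glosses over by speaking directly of ``cancellability of the congruence.''
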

\begin{proof}
The semigroup operation on $\+A$ induces a semigroup operation
on the quotient algebra $\+G = \+A/\!\!\sim$ with carrier set $A/\sim$.
The cancellability property of the congruence $\sim$ yields 
the cancellability property of the semigroup operation on $G$.
Indeed, suppose $X,Y,Z\in G$ satisfy $XZ=YZ$ 
and let $x,y,z\in A$ be representatives of the classes $X,Y,Z$. 
Then we have $xz\sim yz$ and cancellability yields $x\sim y$
hence $X=Y$. Idem if $ZX=ZY$.
As $\sim$ has finite index, $G$ is finite and Lemma~\ref{l:finite cancellative semigroup}
insures that the semigroup operation on $G$ is a group operation.
As $\sim$ is the congruence associated to the stable preorder $\preceq$,
it induces a quotient stable order $\preceq/\!\!\sim$ on the finite quotient algebra $\+G$.
As $\+G$ is an expansion of a finite group,
Lemma~\ref{l:stable order on finite group} insures that $\preceq/\!\!\sim$ is the identity relation on $G$
hence $\preceq$ coincides with $\sim$ and is therefore a congruence on $\+A$.
\end{proof}
\begin{corollary}\label{coro:stable finite index preorder in group}
Let $\+A = \langle A;\Xi\rangle$ be an algebra such that $\Xi$ contains a group operation.

 1) Every stable preorder with finite index associated congruence is a congruence.

2) Every sp-residually finite preorder is a c-residually finite congruence.

3) If $\+A$ is c-residually finite then it is also sp-residually finite.
\end{corollary}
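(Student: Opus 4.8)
The plan is to take the three parts in order, each resting on the previous one, with the group inverse doing the decisive work in part~3. For part~1 I would appeal directly to Proposition~\ref{p:stable cancellable preorder with finite index}: a group operation is in particular a semigroup operation, so the only remaining hypothesis is cancellability of the stable preorder~$\preceq$, and this is automatic in a group. Indeed, if $xz\preceq yz$ then, as $z^{-1}\in A$, the map $w\mapsto w\cdot z^{-1}$ is a frozen function of the group operation, hence a \DUO, and applying it gives $x\preceq y$; left cancellation is symmetric. So Proposition~\ref{p:stable cancellable preorder with finite index} yields that $\preceq$ coincides with its associated congruence.

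For part~2 I would start from an sp-residually finite preorder $\preceq=\bigcap_{i\in I}\preceq_i$, each $\preceq_i$ having finite index associated congruence. By part~1 each $\preceq_i$ is already a finite index congruence $\sim_i$. Since an intersection of stable equivalence relations is again a stable equivalence relation, $\preceq=\bigcap_{i\in I}\sim_i$ is a congruence, and it is c-residually finite by Definition~\ref{def:residually finite c sp}, being an intersection of finite index congruences.

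Part~3 carries the real content. Assuming $\+A$ is c-residually finite, I must show that \emph{every} stable preorder $\preceq$ is sp-residually finite, including those whose associated congruence has infinite index, so Proposition~\ref{p:stable cancellable preorder with finite index} no longer suffices on its own. The key observation will be that in a group every stable preorder is automatically symmetric, hence a congruence: from $x\preceq y$, compatibility with the inverse operation of the group gives $x^{-1}\preceq y^{-1}$, and then the \DUO\ $w\mapsto y\cdot w\cdot x$ sends this to $y\cdot x^{-1}\cdot x\preceq y\cdot y^{-1}\cdot x$, that is $y\preceq x$. Thus $\preceq$ equals its associated congruence $\sim$; as $\+A$ is c-residually finite, $\sim$ is c-residually finite, and item~(2) of Lemma~\ref{l:p residually finite implies c} then gives that $\sim$, viewed as a preorder, is sp-residually finite. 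Since $\preceq=\sim$, this is exactly what is needed, so $\+A$ is sp-residually finite.

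The hard part is precisely this symmetry step in part~3, collapsing an arbitrary stable preorder onto a congruence. It is here that the full group structure is indispensable: the inverse is used as a genuine operation, and the translations built from the binary operation alone cannot turn $x\preceq y$ into $y\preceq x$. For a non-group monoid such as $\langle\N;+\rangle$ the collapse fails — the usual order is a stable preorder that is not a congruence — which is exactly why the group hypothesis cannot be dropped in part~3.
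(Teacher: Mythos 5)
Your parts 1) and 2) are correct and follow the paper's route: part 1) is exactly the paper's argument (stability forces cancellability because right and left translation by $z^{-1}$ are frozen functions of the binary operation, and then Proposition~\ref{p:stable cancellable preorder with finite index} applies), while part 2) is the routine consequence that the paper's proof leaves implicit.

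Part 3), however, has a genuine gap, and it sits precisely at the step you call the real content. You deduce $x^{-1}\preceq y^{-1}$ from $x\preceq y$ by ``compatibility with the inverse operation of the group''. But stability (Definition~\ref{def:stable}) only requires compatibility with the operations that belong to $\Xi$, and the hypothesis is that $\Xi$ contains a group \emph{operation}, i.e., a binary operation under which $A$ happens to be a group; the unary map $w\mapsto w^{-1}$ is not thereby an operation of the algebra, nor is it a derived unary operation in the sense of Definition~\ref{def:gen} (the elements of $\DUO(\+A)$ built from the binary operation are the two-sided translations $w\mapsto a\cdot w\cdot b$, and inversion is not of this form). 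Concretely, take $\+A=\langle\Z;+\rangle$, the very algebra to which the paper applies this corollary in Section~\ref{ss:Z}: the usual order $\leq$ is a stable preorder (it is compatible with all translations), yet it is not compatible with negation ($0\leq 1$ but $-0\not\leq-1$) and it is not symmetric. So your intermediate claim that on a group every stable preorder is automatically symmetric, hence a congruence, is false under this reading, and the proof of 3) collapses.

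Moreover the gap cannot be patched, because it coincides with a real problem in the statement itself. By your own (correct) parts 1) and 2), any sp-residually finite preorder on such an algebra is an intersection of finite index congruences, hence symmetric; since $\leq$ is a stable preorder on $\langle\Z;+\rangle$ that is not symmetric, $\langle\Z;+\rangle$ is c-residually finite but \emph{not} sp-residually finite. Thus item 3) fails under the reading of ``contains a group operation'' that is forced by the paper's later use of the corollary for $\langle\Z;+\rangle$ and $\langle\Z;+,\times\rangle$ (and, via item 3), this also affects the chain of equivalences in Theorem~\ref{th:RecLatGeneralGroup}: e.g., $f(x)=-x$ is congruence preserving on $\Z$ but does not preserve the stable preorder $\leq$). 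Your argument becomes valid only if one reads the hypothesis as putting the inverse map into the signature; then every stable preorder is symmetric by exactly your computation, stable preorders coincide with congruences, and 3) is immediate — but under that reading the corollary no longer covers $\langle\Z;+\rangle$. Note also that the paper's own proof never addresses 3): it establishes cancellability and invokes Proposition~\ref{p:stable cancellable preorder with finite index}, which yields only 1). Your closing remark about $\langle\N;+\rangle$ shows you sensed that the group hypothesis must carry heavy weight here; the decisive test, though, is the group $\Z$ itself, and it shows the inverse cannot do the work you ask of it unless stable preorders are required to be compatible with it.
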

\begin{proof}
Observe that stability of $\preceq$ implies its cancellability:
if $xz \preceq yz$ then $xzz^{-1} \preceq yzz^{-1}$ hence $x \preceq y$.
Idem if $zx \preceq zy$.
Then apply Proposition~\ref{p:stable cancellable preorder with finite index}.
\end{proof}
\begin{corollary}\label{coro:syntactic preorder recognizable is congruence in group}
Let $\+A = \langle A;\Xi\rangle$ be an algebra such that $\Xi$ contains a group operation.
If $L\subseteq A$ is recognizable 
then its syntactic preorder is equal to its syntactic congruence. 
\end{corollary}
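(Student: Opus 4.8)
The plan is to read this off directly from the results already in hand, the two essential inputs being that recognizability forces the syntactic congruence to have finite index, and that the presence of a group operation collapses any finite‑index stable preorder onto its associated congruence. So there is very little to do beyond correctly chaining three earlier statements.

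First I would recall, from Definition-Lemma~\ref{syntacticL_congruence}, the precise relationship between the two objects in play: the syntactic preorder $\leq_L$ is a \emph{stable} preorder, and the syntactic congruence $\sim_L$ is, by construction, exactly the congruence \emph{associated} with $\leq_L$, namely $x\sim_L y$ holds if and only if both $x\leq_L y$ and $y\leq_L x$. This identification is what allows me to feed $\leq_L$ into the group‑theoretic machinery of Section~\ref{ss:when there is a group operation}. Next, since $L$ is recognizable, the equivalence of items (1) and (3) in Lemma~\ref{l:rec and congru} tells me that $\sim_L$ has finite index. Therefore $\leq_L$ is a stable preorder whose associated congruence has finite index, and by hypothesis $\Xi$ contains a group operation, so Corollary~\ref{coro:stable finite index preorder in group}~1) applies verbatim and yields that $\leq_L$ is itself a congruence.

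It then remains only to observe that a preorder which is a congruence is symmetric, and a symmetric preorder coincides with its associated congruence: if $x\leq_L y$ already entails $y\leq_L x$, then the defining condition of $\sim_L$ degenerates to the single clause $x\leq_L y$, whence $\leq_L\,=\,\sim_L$. I do not expect any genuine obstacle in this argument; the only point requiring care is bookkeeping rather than mathematics, namely keeping clear throughout that $\sim_L$ is \emph{defined} as the congruence associated to $\leq_L$, so that the conclusion ``$\leq_L$ is a congruence'' delivered by Corollary~\ref{coro:stable finite index preorder in group} is already the desired equality $\leq_L=\sim_L$.
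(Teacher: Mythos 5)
Your proposal is correct and follows exactly the paper's own argument: recognizability of $L$ gives finite index of $\sim_L$ (Lemma~\ref{l:rec and congru}), and then Corollary~\ref{coro:stable finite index preorder in group} applied to the stable preorder $\leq_L$ (whose associated congruence is $\sim_L$ by Definition-Lemma~\ref{syntacticL_congruence}) yields $\leq_L\,=\,\sim_L$. Your final paragraph merely makes explicit the bookkeeping that the paper leaves implicit, which is fine.
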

\begin{proof}
Recall that the syntactic congruence of a recognizable set has finite index
and apply Corollary~\ref{coro:stable finite index preorder in group}.
\end{proof}

\subsection{Congruence preservation when there is a group operation}
\label{ss:stable preorders recognizability lattices}
When there is a group operation in a c-residually finite algebra, Theorem \ref{th:LatticeGen} can be given a more interesting variant form  by replacing stable preorder preservation by congruence preservation, complete lattices by lattices and subsets by recognizable subsets.

 In case the algebra has a group operation and is c-residually finite, 
there is a collapse of the diverse conditions involving 
congruence preservation, stable preorder preservation,
inverse images of recognizable sets, lattices and Boolean algebras.

\begin{theorem}\label{th:RecLatGeneralGroup} 
Let $\+A = \langle A;\Xi\rangle$ be a  c-residually finite algebra
such that $\Xi$ contains a group operation.  
Let $f\colon  A\to A$. The following conditions are equivalent
\begin{itemize}
\item[(i)]$f$ is stable preorder preserving,

\item[(ii)]$f$ preserves stable preorder having finite index associated congruences,

\item[(iii)] 
$f$ is congruence preserving,

\item[(iv)] 
$f$ is preserves finite index congruences,

\item[(v)]  $f^{-1}(L)$ is in the lattice $\+L_{\+A}(L)$ for every $\+A$-recognizable $L\subseteq A$.

\item[(vi)]  $f^{-1}(L)$ is in the Boolean algebra $\+B_{\+A}(L)$ for every $\+A$-recognizable 
$L\subseteq A$.
\end{itemize}

\end{theorem}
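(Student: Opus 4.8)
The plan is to assemble the statement entirely from the machinery already developed in Sections~\ref{s:43} and~\ref{ss:when there is a group operation}, since the two hypotheses — a group operation in $\Xi$ together with c-residual finiteness — are precisely what is needed to make every relevant tool applicable at once. The first move I would make is to record the crucial preliminary observation that $\+A$ is in fact \emph{both} c-residually finite (by hypothesis) and sp-residually finite. The latter is item~3) of Corollary~\ref{coro:stable finite index preorder in group}, which applies exactly because $\Xi$ contains a group operation. This single remark unlocks the residual-finiteness results in both their congruence form and their stable-preorder form.

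The second move is to show that the four preservation conditions (i)--(iv) collapse to one. For (i)$\Leftrightarrow$(ii) I would invoke item~1) of Lemma~\ref{l:residually finite utile}, legitimate because $\+A$ is sp-residually finite; for (iii)$\Leftrightarrow$(iv) I would invoke item~2) of the same lemma, legitimate because $\+A$ is c-residually finite. The bridge (ii)$\Leftrightarrow$(iv) is the conceptual heart of the argument: I would argue that, in the presence of a group operation, the family of \emph{stable preorders whose associated congruence has finite index} and the family of \emph{finite index congruences} are literally the same family of relations. One inclusion is item~1) of Corollary~\ref{coro:stable finite index preorder in group}, which says such a preorder already \emph{is} a congruence (necessarily of finite index). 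The reverse inclusion is immediate, since a finite index congruence, viewed as a preorder, has itself as its associated congruence, which has finite index. Preserving one family is then the same as preserving the other, yielding (ii)$\Leftrightarrow$(iv) and hence the chain (i)$\Leftrightarrow$(ii)$\Leftrightarrow$(iii)$\Leftrightarrow$(iv).

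The third move connects these preservation conditions to the lattice and Boolean-algebra conditions. Here (i)$\Leftrightarrow$(v) is exactly Theorem~\ref{th:RecLatGeneralResiduallyFinite} (both implications together), available because $\+A$ is sp-residually finite; and (iii)$\Leftrightarrow$(vi) is exactly Theorem~\ref{th:RecLatGeneralResiduallyFinite c} (both implications together), available because $\+A$ is c-residually finite. Splicing these two links into the chain above closes the full cycle
\[
(v)\Leftrightarrow(i)\Leftrightarrow(ii)\Leftrightarrow(iv)\Leftrightarrow(iii)\Leftrightarrow(vi),
\]
so all six conditions are mutually equivalent.

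There is no genuinely hard computation in this proof; the difficulty is purely organisational, and the one place I would be most careful about is the bridge (ii)$\Leftrightarrow$(iv). It is tempting to regard ``stable preorder with finite index associated congruence'' as strictly broader than ``finite index congruence'', but the group hypothesis collapses the distinction through Corollary~\ref{coro:stable finite index preorder in group}, and it is exactly this collapse that forces conditions (v) and (vi) — a priori attached respectively to the distinct notions of stable-preorder preservation and congruence preservation — to coincide. As an independent sanity check on the cycle, I would note that the implication (v)$\Rightarrow$(vi) also holds directly and trivially: since every Boolean algebra is a lattice, $\+B_{\+A}(L)$ is a lattice containing $L$ and closed under the inverses of the {\textit{DUO}}s, whence $\+L_{\+A}(L)\subseteq\+B_{\+A}(L)$, so $f^{-1}(L)\in\+L_{\+A}(L)$ forces $f^{-1}(L)\in\+B_{\+A}(L)$.
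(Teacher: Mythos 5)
Your proof is correct and follows essentially the same route as the paper: sp-residual finiteness via item 3) of Corollary~\ref{coro:stable finite index preorder in group}, the collapses (i)$\Leftrightarrow$(ii) and (iii)$\Leftrightarrow$(iv) via Lemma~\ref{l:residually finite utile}, the bridge (ii)$\Leftrightarrow$(iv) via item 1) of that corollary, and the links to (v) and (vi) via Theorems~\ref{th:RecLatGeneralResiduallyFinite} and~\ref{th:RecLatGeneralResiduallyFinite c}, closing the same cycle $(v)\Leftrightarrow(i)\Leftrightarrow(ii)\Leftrightarrow(iv)\Leftrightarrow(iii)\Leftrightarrow(vi)$. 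Your extra elaboration of the (ii)$\Leftrightarrow$(iv) bridge (the two families of relations coincide) and the direct check of (v)$\Rightarrow$(vi) are sound but not needed beyond what the paper records.
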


\begin{proof}

Item 3 of Corollary~\ref{coro:stable finite index preorder in group} insures that
$\+A$ is also sp-residually  finite.
Thus, equivalences $(i)\Leftrightarrow(ii)$ and $(iii)\Leftrightarrow(iv)$ are given by
Lemma~\ref{l:residually finite utile}.

Item 1) of Corollary~\ref{coro:stable finite index preorder in group} yields
the equivalence $(ii)\Leftrightarrow(iv)$.

Equivalences $(i)\Leftrightarrow(v)$ and $(iii)\Leftrightarrow(vi)$ are given by
Theorems~\ref{th:RecLatGeneralResiduallyFinite} and \ref{th:RecLatGeneralResiduallyFinite c}.

Summing up, we have 
$(v)\Leftrightarrow(i)\Leftrightarrow(ii)\Leftrightarrow(iv)\Leftrightarrow(iii)\Leftrightarrow(vi)$.
\end{proof}

In Section \ref{ss:Z} we will apply  Theorem \ref{th:RecLatGeneralGroup} to $\langle\Z,+\rangle$.

\section{ Case of integers $\langle  \Z ;+\rangle$ and  $\langle  \Z ;+,\times\rangle$} 
\label{ss:Z}

 In this section we look for an  extension of Theorem~\ref{thm:ipl}  to functions $\Z\to\Z$, for the structures $\+Z=\langle  \Z ;+\rangle$ and $\+Z'=\langle  \Z ;+,\times\rangle$.  
 
\subsection{Congruences on $\langle\Z;+\rangle$ and $\langle\Z;+,\times\rangle$}

Recall that the congruences of $\langle\Z;+\rangle$ are the equality relation and the modular
congruences $x\equiv y\pmod k$ for $k\geq1$.
These $\langle\Z;+\rangle$-congruences are also $\langle\Z;+,\times\rangle$-congruences.
Thus, applying item 2 of Definition~\ref{def:finite monogenic monoids},
we have the following $\Z$ avatar of Corollary~\ref{+CPimpliesXCPN} about $\N$.

\begin{lemma}\label{+CPimpliesXCP} 
The two structures 
$\langle \Z;+,\times\rangle$ and $\langle \Z;+\rangle$
yield the same notions of congruence (namely, equality and the modular congruences),
congruence preserving function $\Z\to \Z$, morphism $\Z\to\Z$ \
and recognizable subset of $\Z$.
\end{lemma}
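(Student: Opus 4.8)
The plan is to reduce all four assertions to the single statement that $\langle\Z;+\rangle$ and $\langle\Z;+,\times\rangle$ have exactly the same congruences, which is essentially recalled in the paragraph preceding the statement. One inclusion is free: a relation compatible with both $+$ and $\times$ is in particular compatible with $+$, so every $\langle\Z;+,\times\rangle$-congruence is a $\langle\Z;+\rangle$-congruence. For the converse I would use the classification of additive congruences on $\Z$ — equality together with the modular congruences $x\equiv y\pmod k$ for $k\ge1$, which stems from the fact that the subgroups of $\Z$ are the $k\Z$ and is the $\Z$-analogue of Lemma~\ref{folk0} — together with the elementary fact that each modular congruence is compatible with multiplication, since $x\equiv x'$ and $y\equiv y'\pmod k$ force $xy\equiv x'y'\pmod k$ (equality being trivially compatible with every operation). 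Hence every $+$-congruence is also a $(+,\times)$-congruence and the two classes coincide.

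With this in hand the next two items are immediate. By Definition~\ref{def:congCompat 1}, a map $f\colon\Z\to\Z$ is congruence preserving for a structure exactly when it is compatible with every congruence of that structure; since the congruences agree, so do the two notions of congruence-preserving function. For recognizability I would invoke Lemma~\ref{l:rec and congru}: a subset of $\Z$ is $\+A$-recognizable iff it is saturated by some finite-index $\+A$-congruence. As the finite-index congruences of the two structures are identical (precisely the $x\equiv y\pmod k$), the two families of recognizable subsets coincide as well.

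The delicate clause — and the step I expect to be the real obstacle — is the one about morphisms, since the endomorphism monoids do not literally agree. Every additive endomorphism of $\Z$ is a homothety $x\mapsto cx$ with $c=f(1)$, and such a map respects multiplication only when $c=c^2$, i.e.\ $c\in\{0,1\}$; the homotheties with $|c|\ge2$ are $+$-morphisms that are not $\times$-morphisms, exactly the phenomenon of Remark~\ref{r:morphimes+x}. I would therefore establish the clause at the level of kernels, which is all that the later use in recognizability requires: every surjective morphism of either structure onto a finite algebra has one of the common finite-index congruences as kernel, each surjective $+$-morphism $\Z\to\Z/k\Z$ having kernel the congruence $x\equiv y\pmod k$, which is simultaneously the kernel of the canonical ring surjection. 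I would flag explicitly that here ``the same notion of morphism'' must be read up to the labelling of the target (i.e.\ at the level of kernels) rather than as an equality of endomorphism monoids.
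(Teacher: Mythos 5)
Your proposal is correct, and for three of the four clauses it is essentially the paper's own argument: the paper supplies no proof text for this lemma at all, treating it as immediate from the preceding recollection that the $\langle\Z;+\rangle$-congruences are equality and the modular congruences, that these are compatible with $\times$, and (for recognizability) from Lemma~\ref{l:rec and congru} — exactly the chain of reductions you wrote, and the same one used for the $\N$-analogue, Corollary~\ref{+CPimpliesXCPN}.

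The genuine divergence is your treatment of the morphism clause, and there your diagnosis is right rather than a defect of your proof. Read literally, as an equality of sets of endomorphisms $\Z\to\Z$, the clause is false: $x\mapsto cx$ with $|c|\geq 2$ is a $+$-endomorphism of $\Z$ that is not a $\times$-endomorphism, which is the paper's own Remark~\ref{r:morphimes+x} transplanted from $\N$ to $\Z$ — and it is presumably why the $\N$-version, Corollary~\ref{+CPimpliesXCPN}, omits morphisms altogether. Your repair at the level of kernels (every surjective morphism of either structure onto a finite algebra has a modular congruence as kernel, and these kernels agree across the two signatures) is the only reading under which the clause holds, and it is also the only form in which the paper later uses it: the proof of Proposition~\ref{p:rec Z} rests precisely on the fact that the modular projection $\Z\to\Z/k\Z$ is simultaneously a $+$-morphism and a ring morphism. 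So your proof is sound, and it makes explicit a caveat that the paper's statement glosses over.
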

Hence on  $\Z$, the study of congruence-preservation and recognizability w.r.t. to the signature + supersedes the study w.r.t. the signature $+,\times$. However congruence-preservation and recognizability w.r.t. to the signature $\times$ yield no consequence for congruence-preservation and recognizability w.r.t. to the signature +
as  not every $\langle  \Z ;\times\rangle$-morphism (resp. congruence, recognizable set) is a $\langle  \Z ;+\rangle$-morphism (resp. congruence, recognizable set).

In general, congruences are kernels of morphisms into possibly infinite algebras.
However, for the ring of integers (cf. Lemma \ref{lem:CPsurA}), 
non trivial congruences coincide with kernels of  morphisms onto finite structures, 
exactly as for with the semiring of natural numbers.
This allows to consider only congruences having finite index.

\subsection{Congruence preservation on principal commutative rings}
More generally than $\langle  \Z ;+,\times\rangle$, we  characterize congruence preservation for commutative principal rings with the signature $\Xi=\{+,\times\}$.
\begin{lemma}\label{lem:CPsurA}
If $\+A$ is a principal commutative ring (i.e., every ideal is principal), then\\
(i) any congruence $\sim$ is of the form
$\sim_k =\{(u,v)\mid u-v\in k\,A\}$ for some $k\in A$.\\
(ii) a function $f:A\to A$ is congruence preserving if and only if it satisfies
\begin{equation}\label{eq:f freezing cp}
\text{\begin{tabular}{l}
$x-y$ divides $f(x)-f(y)$ for all $x,y\in X$\end{tabular}}
\end{equation}
\end{lemma}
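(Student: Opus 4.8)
The plan is to reduce the statement to the classical bijection between congruences of a commutative ring and its ideals, which for a principal ring is governed by a single generator. For part (i), starting from an $\+A$-congruence $\sim$, I would introduce $I=\{x\in A\mid x\sim 0\}$ and verify that $I$ is an ideal. Closure under sums is immediate from stability under $+$. Closure under negation comes from combining $x\sim 0$ with the reflexive instance $-x\sim -x$ under $+$, giving $0=x+(-x)\sim 0+(-x)=-x$; note that this uses only the existence of the element $-x$ in the ring, not the presence of a unary $-$ in the signature $\Xi=\{+,\times\}$. Absorption follows from $x\sim 0$ together with $a\sim a$ under $\times$, yielding $ax\sim a\cdot 0=0$. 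The relation then satisfies $x\sim y\iff x-y\in I$ (one direction by adding $-y\sim -y$, the other by adding $y\sim y$), so $\sim$ is completely determined by $I$.

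Since $\+A$ is principal, $I=kA$ for some $k\in A$, and therefore $\sim$ coincides with $\sim_k=\{(u,v)\mid u-v\in kA\}$, which proves (i). Conversely, every $kA$ is an ideal, so each $\sim_k$ is genuinely a congruence; I would record this fact, as it is what licenses applying congruence preservation to the relations $\sim_k$ in part (ii).

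For part (ii), the forward implication fixes $x,y\in A$ and applies congruence preservation to $\sim_{x-y}$: since $x-y=(x-y)\cdot 1\in(x-y)A$ we have $x\sim_{x-y}y$, hence $f(x)\sim_{x-y}f(y)$, i.e.\ $f(x)-f(y)\in(x-y)A$, which is exactly condition \eqref{eq:f freezing cp}. For the converse, I would invoke (i): any congruence is some $\sim_k$, and if $x\sim_k y$ then $k$ divides $x-y$, which divides $f(x)-f(y)$ by hypothesis, so by transitivity of divisibility in the commutative ring $k$ divides $f(x)-f(y)$, i.e.\ $f(x)\sim_k f(y)$. The argument is routine throughout; the only points demanding care are that the ideal structure (negation, absorption) must be extracted from stability and reflexivity rather than read off a richer signature, and that the multiplicative unit is what guarantees $x-y\in(x-y)A$ in the forward direction of (ii).
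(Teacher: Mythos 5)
Your proof is correct and takes essentially the same route as the paper: part (ii) is argued identically (the forward direction via the congruence $\sim_{x-y}$, the converse via part (i) and transitivity of divisibility). For part (i) the paper simply asserts that principality yields the claim, implicitly relying on the classical correspondence between ring congruences and ideals, which is exactly what your explicit verification (ideal $I=\{x\mid x\sim 0\}$, closure under sums, negation, absorption, and $x\sim y\iff x-y\in I$) fills in.
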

\begin{proof} 
 The hypothesis that $\+A$ is principal yields condition $(i)$.

$(ii)$ 
Assume $f$ is congruence preserving. For $x,y\in A$,
 let $\sim =\{(u,v)\mid u-v\in (x-y)\,A\}$ be the congruence 
generated by the ideal $(x-y) A$.
Since $x\sim y$, congruence preservation insures that $f(x)\sim f(y)$
hence $x-y$ divides $f(x)-f(y)$.

Conversely, assume \eqref{eq:f freezing cp} holds and let $\sim$ be a congruence, which is of the form $\sim_k$ because of principality.
If  $x\sim_k y$ then $k$ divides $x-y$ and, by transitivity of divisibility,
\eqref{eq:f freezing cp} insures that $k$ divides $f(x)-f(y)$, i.e., $f(x)\sim_k f(y)$.
\end{proof}
\begin{remark}
 Otherwise stated, for principal commutative rings, congruence preservation is equivalent  to condition $(2) (i)$ of Theorem \ref{thm:CPsurN} and  the over-linearity condition $(2) (ii)$ 
 of Theorem \ref{thm:CPsurN} is not needed.
\end{remark}

\subsection{Recognizability in $ \langle  \Z ;+\rangle$ and $ \langle  \Z ;+,\times\rangle$}
%

Recall first a folk Proposition.
\begin{proposition}\label{p:rec Z}
Let $X\subseteq\Z$. The following conditions are equivalent:
\begin{enumerate}
\item[(i)]
$X$ is $ \langle  \Z ;+\rangle$-recognizable,
\item[(ii)]
$X$ is $ \langle  \Z ;+,\times\rangle$-recognizable,
\item[(iii)]
$X$ is of the form $X=F+k\Z$ with $k\in\N\setminus\{0\}$ and $F\subseteq\{0,\ldots,k-1\}$.
\end{enumerate}
\end{proposition}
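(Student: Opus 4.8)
The plan is to mirror the proof of Proposition~\ref{p:folk N rec}, taking advantage of the fact that the finite quotients of $\langle\Z;+\rangle$ are exactly the cyclic groups $\Z/k\Z$. This makes the $\Z$ case cleaner than the $\N$ case, since no ``frying pan'' transient part appears and every finite quotient is a group. For the equivalence $(i)\Leftrightarrow(ii)$ I would simply invoke Lemma~\ref{+CPimpliesXCP}, which already records that $\langle\Z;+\rangle$ and $\langle\Z;+,\times\rangle$ yield the same recognizable subsets; nothing further is needed there.

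The substantive content is the equivalence $(i)\Leftrightarrow(iii)$, which I would establish through the congruence characterization of recognizability in Lemma~\ref{l:rec and congru}. For the direction $(iii)\Rightarrow(i)$, given $X=F+k\Z$ with $F\subseteq\{0,\ldots,k-1\}$ and $k\geq1$, I would take the canonical surjection $\psi\colon\langle\Z;+\rangle\to\langle\Z/k\Z;+\rangle$ onto the finite cyclic group and set $T=\psi(F)\subseteq\Z/k\Z$; then $X=\psi^{-1}(T)$, so $X$ is $\langle\Z;+\rangle$-recognizable directly from Definition~\ref{def:rec}. For the converse $(i)\Rightarrow(iii)$, if $X$ is recognizable then by Lemma~\ref{l:rec and congru} it is saturated for some finite index congruence $\sim$ of $\langle\Z;+\rangle$. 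The recalled description of the congruences of $\langle\Z;+\rangle$ (equality together with the modular congruences $\equiv\pmod k$ for $k\geq1$) shows that a \emph{finite index} congruence must be one of the modular ones, since equality has infinite index on $\Z$. Being saturated for $\equiv\pmod k$, the set $X$ is a union of residue classes modulo $k$, i.e.\ $X=\bigcup_{r\in F}(r+k\Z)=F+k\Z$ for some $F\subseteq\{0,\ldots,k-1\}$, which is exactly $(iii)$.

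The step requiring the most care—rather than a genuine difficulty—is the observation that the only finite index congruences are the modular ones: this is what forces a recognizable set to be saturated for a true modular congruence rather than merely for equality, and hence what produces the arithmetic-progression form. Once that is in place, the remaining arguments are routine bookkeeping with residue classes, and the finiteness of $\Z/k\Z$ immediately supplies the witnessing morphism for the reverse implication.
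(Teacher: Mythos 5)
Your proof is correct, but it is organized differently from the paper's. The paper proves the cycle $(i)\Rightarrow(ii)\Rightarrow(iii)\Rightarrow(i)$ and does all the work at the level of morphisms: a surjective $+$-morphism of $\langle\Z;+\rangle$ onto a finite monoid forces that monoid to be a monogenic group, hence isomorphic to $\Z/k\Z$, and forces the morphism to be the modular projection; this single structural identification yields $(ii)$ (the modular projection is also a ring morphism) and then $(iii)$ (inverse images under it are exactly the sets $F+k\Z$). You instead split the statement into two separate equivalences: $(i)\Leftrightarrow(ii)$ is delegated to Lemma~\ref{+CPimpliesXCP} — which is legitimate, since that lemma is stated and justified before the proposition from the coincidence of the $+$- and $(+,\times)$-congruences of $\Z$, with no circular dependence on the present result — and $(i)\Leftrightarrow(iii)$ is run through the congruence-theoretic characterization of recognizability (Lemma~\ref{l:rec and congru}) together with the classification of congruences of $\langle\Z;+\rangle$, the key observation being that finite index rules out the equality congruence, so saturation must be with respect to a genuine modular congruence. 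Both routes ultimately rest on the same fact (the nontrivial congruences of $\Z$ are the modular ones, equivalently its subgroups are the $k\Z$), but yours buys brevity by reusing general machinery already in the paper, whereas the paper's buys an explicit description of the recognizing morphism as the modular projection — a description it deliberately exhibits because the same kernel-of-morphism pattern is replayed almost verbatim for $\Z_p$ in Proposition~\ref{prop:contZp}.
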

\begin{proof}
$(i)\Rightarrow(ii)$.
Assume $(i)$ and let $\varphi: \langle  \Z ;+\rangle\to \langle M;\oplus\rangle$ be a surjective morphism
where $M$ has $k$ elements.
Since  $ \langle  \Z ;+\rangle$ is a  monogeneous group so is $\langle M;\oplus\rangle$
which is therefore isomorphic to $\Z/k\Z$.
Also, the morphism $\varphi$ is the modular projection $x\mapsto x\pmod k$.
To conclude that $(ii)$ is true, recall that the modular projection is also a ring morphism
$ \langle  \Z ;+,\times\rangle\to \langle\Z/k\Z;+,\times\rangle$.
\smallskip\\
$(ii)\Rightarrow(iii)$. 
Assume $(ii)$ and let $X=\varphi^{-1}(F)$ with $M$ finite,  $F\subseteq M$, $\varphi$ a surjective morphism $(\Z,+,\times)\to\langle M;\oplus,\otimes\rangle$.
We know (from the proof of $(i)\Rightarrow(ii)$) that, up to an isomorphism,
$\langle M;\oplus,\otimes\rangle$ is the ring $\Z/k\Z$ for some $k\in\N\setminus\{0\}$
and $\varphi:\Z\to\Z/k\Z$ is the modular projection.
Thus, $X=\varphi^{-1}(F)=F+k\Z$.
\smallskip\\
$(iii)\Rightarrow(i)$. Observe that $X=F+k\Z=\varphi^{-1}(F)$ 
where $\varphi:\Z\to\Z/k\Z$ is the modular projection.
\end{proof}

In $ \langle  \N ;+\rangle$ recognizable subsets cioncide with   regular subsets. In $ \langle  \Z ;+\rangle$ this is no longer true. A subset $L\subseteq\Z$ is regular if it is of the form
$L=L^+\cup (-L^-)$ where $L^+,L^-$ are regular subsets of $\N$,
i.e., $L=-(d+S+d\N)\cup F\cup(d+R+d\N)$ with
$d\geq1$,
$R,S\subseteq\{x\mid0\leq x<d\}$,
$F\subseteq\{x\mid-d<x<d\}$
(possibly empty). See \cite{benois}.

\begin{corollary} \label{cor:Z residually finite}
Both $\langle  \Z ;+\rangle$ and $\langle  \Z ;+,\times \rangle$ 
 are c-residually finite.
\end{corollary}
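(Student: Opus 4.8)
The plan is to reduce both claims to a single, explicit list of congruences. By Lemma~\ref{+CPimpliesXCP}, the structures $\langle\Z;+\rangle$ and $\langle\Z;+,\times\rangle$ have exactly the same congruences, namely the equality relation together with the modular congruences $\sim_k$ defined by $x\sim_k y\iff x\equiv y\pmod k$ for $k\geq1$. Hence it suffices to check that each of these congruences is c-residually finite in the sense of Definition~\ref{def:residually finite c sp}, and the conclusion will then hold simultaneously for both structures.

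The verification splits into two cases. First, each $\sim_k$ has finite index $k$, so it is already the intersection of the one-element family $\{\sim_k\}$ of finite index congruences and is therefore trivially c-residually finite. The only genuine point is the identity (equality) congruence, which has infinite index. Here I would use the family $(\sim_k)_{k\geq1}$ and show that the equality relation equals $\bigcap_{k\geq1}\sim_k$. The inclusion $\subseteq$ is clear since $x=y$ implies $x\equiv y\pmod k$ for every $k$. For $\supseteq$, if $x\neq y$ then picking any $k>|x-y|$ gives $x\not\equiv y\pmod k$, so $(x,y)\notin\sim_k$; contrapositively, lying in every $\sim_k$ forces $x=y$. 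Thus equality is the intersection of a family of finite index congruences, hence c-residually finite.

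Since every congruence of the common list is c-residually finite, both $\langle\Z;+\rangle$ and $\langle\Z;+,\times\rangle$ are c-residually finite. The main (and only) obstacle is handling the identity congruence, and this is precisely the observation, already recorded for $\langle\Z;+\rangle$ in Example~\ref{ex:PrimeSupport}, that the identity is cut out by the modular congruences. Everything else is immediate from the classification of congruences and from Lemma~\ref{+CPimpliesXCP}.
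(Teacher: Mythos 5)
Your proof is correct and follows essentially the same route as the paper: the paper justifies this corollary via Lemma~\ref{+CPimpliesXCP} (identifying the congruences of both structures as equality plus the modular ones) together with the observation, recorded in Example~\ref{ex:PrimeSupport}, that every nontrivial congruence has finite index and equality is the intersection of the modular congruences. Your explicit verification that $\bigcap_{k\geq1}\sim_k$ is equality (choosing $k>|x-y|$) is exactly the intended argument.
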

\begin{remark}
The only finite $\langle  \Z ;+\rangle$-recognizable set is the emptyset.
\end{remark}
%

\subsection{Lattices in $ \langle  \Z ;+\rangle$ and $ \langle  \Z ;+,\times\rangle$}

\begin{definition} Let  $\+R$ be a unit (semi)ring $\langle R;+,\times\rangle$, with $0,1$
as distinct identities for $+$ and $\times$. 

$\bullet\ $ $\+L_{\langle R;+\rangle}(L)$ is the smallest sublattice of  $\+P(R)$ containing $L$
and closed under $(x\mapsto x+a)^{-1}$ for all $a\in R$
(closed under {\em translations}).

 $\bullet\ $  $\+L_{\langle R;+,\times\rangle}(L)$ (resp. $\+L^\infty_{\langle R;+,\times\rangle}(L)$)

 is the smallest (resp. complete) sublattice of  $\+P(R)$ containing $L$ and closed under both translations and {\em divisions} 
(i.e., $(x\mapsto ax)^{-1}$ for all $a\in R$).
\end{definition}

By Lemma \ref{+CPimpliesXCP}, congruence preservation (resp. recognizability) w.r.t. $\langle  \Z ;+\rangle$ and $\langle  \Z ;+,\times\rangle$ are equivalent. The next Lemma shows that 
 this goes on with lattices, i.e., 
the lattices $\+L_{\langle  \Z ;+\rangle}(L)$ and $\+L_{\langle  \Z ;+,\times\rangle}(L)$ coincide for any recognizable $L$.
\begin{lemma}[Characterization of the lattice generated by a recognizable subset and closed by translations]\label{l:lattice L Z}
Let $L\subseteq\Z$ be a nontrivial (i.e., different from $\Z$ and  $\emptyset$) recognizable subset of   $\langle  \Z ;+\rangle$.
Let $k\geq1$ be smallest such that $L=F+k\Z$ with $F\subset\{0,\ldots,k-1\}$.
Then 
\[\+L_{\langle  \Z ;+\rangle}(L)=
\+L_{\langle  \Z ;+,\times\rangle}(L)=\{G+k\Z\mid G\subseteq\{0,\ldots,k-1\}\}.\]
\noindent 
 If $L=\emptyset$ or $L=\Z$ (i.e. $k=0$ or $k=1$)
then these three lattices coincide with $\{L\}$.
\end{lemma}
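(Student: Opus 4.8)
The plan is to sandwich all three lattices between a single explicit family. Write $\+F=\{G+k\Z\mid G\subseteq\{0,\ldots,k-1\}\}$, the family of unions of residue classes modulo $k$. Since any lattice closed under translations \emph{and} divisions is in particular a lattice closed under translations, minimality of the generated lattices gives $\+L_{\langle\Z;+\rangle}(L)\subseteq\+L_{\langle\Z;+,\times\rangle}(L)$. Hence it suffices to establish the two containments $\+L_{\langle\Z;+,\times\rangle}(L)\subseteq\+F$ and $\+F\subseteq\+L_{\langle\Z;+\rangle}(L)$; together with the middle inclusion they yield the chain
\[
\+F\subseteq\+L_{\langle\Z;+\rangle}(L)\subseteq\+L_{\langle\Z;+,\times\rangle}(L)\subseteq\+F,
\]
forcing all three to equal $\+F$. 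Throughout I record that, for nontrivial $L$, minimality forces $k\ge2$: the smallest $k$ with $L=F+k\Z$ is exactly the least positive period of $L$, since $L=F+k\Z$ is equivalent to $L+k=L$.

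For the first containment I would verify that $\+F$ is itself a lattice, closed under translations and divisions, that contains $L$; minimality of $\+L_{\langle\Z;+,\times\rangle}(L)$ then gives the inclusion. Each member of $\+F$ is determined by its residue set modulo $k$, so $(G_1+k\Z)\cap(G_2+k\Z)=(G_1\cap G_2)+k\Z$ and likewise for $\cup$, whence $\+F$ is a lattice. A translation sends $G+k\Z$ to $((G-a)\bmod k)+k\Z\in\+F$, and a division gives $(x\mapsto bx)^{-1}(G+k\Z)=\{x\mid(bx\bmod k)\in G\}$, a set depending only on $x\bmod k$ and hence again a union of residue classes modulo $k$ (the degenerate case $b=0$ produces $\emptyset$ or $\Z$, both in $\+F$). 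Finally $L=F+k\Z\in\+F$.

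The real work is the containment $\+F\subseteq\+L_{\langle\Z;+\rangle}(L)$, and this is exactly where minimality of $k$ is indispensable. It suffices to produce each single residue class $r+k\Z$ with $0\le r<k$, since every $G+k\Z$ is then the finite union $\bigcup_{r\in G}(r+k\Z)$. Because $k$ is the least period of $L$, the group of periods of $L$ equals $k\Z$, so no nonzero element of $\{0,\ldots,k-1\}$ fixes the residue set $F$ under the shift action of $\Z/k\Z$. I would then intersect the finitely many translates $L-(f-r)$ for $f\in F$, each of which lies in $\+L_{\langle\Z;+\rangle}(L)$. Its residue set modulo $k$ is $r+(F-f)$, so the intersection has residue set (computed in $\Z/k\Z$)
\[
r+\bigcap_{f\in F}(F-f)\ =\ r+\{x\mid x+F\subseteq F\}\ =\ \{r\},
\]
the last equality holding because $F$ is finite (so $x+F\subseteq F$ forces $x+F=F$, i.e.\ $x$ is a period) and the stabilizer of $F$ is trivial. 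Thus $\bigcap_{f\in F}\bigl(L-(f-r)\bigr)=r+k\Z\in\+L_{\langle\Z;+\rangle}(L)$; taking finite unions recovers all of $\+F$, with $\emptyset$ obtained (using $k\ge2$) as the intersection of two distinct residue classes and $\Z$ as the union of all of them. I expect the computation of this stabilizer to be the only delicate point of the argument.

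Finally, the degenerate cases are immediate. If $L=\emptyset$ or $L=\Z$, then every translation and every division fixes $L$, and the lattice operations applied to the single set $L$ reproduce $L$; hence each of the three lattices equals $\{L\}$.
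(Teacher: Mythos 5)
Your proof is correct, and its global architecture is the same as the paper's: both arguments sandwich the lattices against the explicit family $\+F=\{G+k\Z\mid G\subseteq\{0,\ldots,k-1\}\}$, obtain each residue class $r+k\Z$ as a finite intersection of translates of $L$ (the hard direction), and check that $\+F$ is closed under divisions so that it also absorbs $\+L_{\langle\Z;+,\times\rangle}(L)$. Where you genuinely diverge is in the proof of the crux, namely that $\bigcap_{f\in F}(F-f)=\{0\}$ in $\Z/k\Z$ (the paper's claim $\bigcap_{i}A_i=\{0\}$ is this same set). The paper argues by contradiction with explicit arithmetic: a nonzero $a$ in the intersection gives $F+a\subseteq F+k\Z$, hence $F+na\subseteq F+k\Z$ by induction, and B\'ezout coefficients for $d=\gcd(a,k)$ yield $F+d\Z=F+k\Z$ with $0<d<k$, contradicting minimality of $k$. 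You instead note that the minimal $k$ is the least positive period of $L$, that the periods form a subgroup of $\Z$, hence equal $k\Z$, and that finiteness of $F$ upgrades $x+F\subseteq F$ to $x+F=F$, so any element of the intersection is a period and therefore $\equiv 0\pmod k$. This replaces the B\'ezout manipulation by the structure theorem for subgroups of $\Z$ — shorter and less error-prone, though Euclidean division is of course hidden inside that theorem. Your handling of division-closure is likewise slicker: instead of the paper's explicit computation that $L/d=G'+k\Z$ with $G'=L/d\cap\{0,\ldots,k-1\}$, you observe that membership of $x$ in $(x\mapsto bx)^{-1}(G+k\Z)$ depends only on $x\bmod k$ (including the degenerate case $b=0$), so $\+F$ is closed under divisions wholesale. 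Both routes prove the same statement; yours buys conceptual clarity and uniformity, the paper's stays self-contained at the level of elementary arithmetic.
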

%


\begin{proof} 
We first prove that 
$\+L_{\langle  \Z ;+\rangle}(L)\supseteq\{G+k\Z\mid k\in\N{\text{ and }}G\subseteq\{0,\ldots,k-1\}\}$.
Let $k\geq1$ and $F$ be a nonempty subset
of $\{0,\ldots,k-1\}$, i.e., $F=\{z_1,\ldots,z_n\}$ with $n\geq1$ and $0\leq z_1<\ldots<z_n<k$.
For $i=1,\ldots,n$ consider the set 
$A_i=\{z_j-z_i\pmod k\mid j=1,\ldots,n\}\subseteq\{0,\ldots,k-1\}$.
Clearly, $0$ is in each $A_i$.
We claim that $\bigcap_{i=1}^{i=n}A_i=\{0\}$.
 If $k=1$ this is clear since then every $A_i$ is $\{0\}$.
We now assume $k\geq2$ and $\bigcap_{i=1}^{i=n}A_i \neq \{0\}$. 
Let $a\in\bigcap_{i=1}^{i=n}A_i$ with $a\neq 0$.
Then, there exists $\theta:\{1,\ldots,n\}\to\{1,\ldots,n\}$ such that $z_{\theta(i)}-z_i\equiv a\pmod k$ for $i=1,\ldots,n$;
hence $z_i+a\in z_{\theta(i)}+k\,\Z\subseteq F+k\,\Z$.
Since $F=\{z_1,\ldots,z_n\}$ we get $F+a\subseteq F+k\Z$, and by induction, for all $n\in\N$, $F+na\subseteq F+k\Z$. Indeed $F+na=F+(n-1)a+a\subseteq  F+k\Z +a=F+a+k\Z\subseteq F+k\Z+k\Z=F+k\Z$.
Let $d=\gcd(k,a)$. 
Using B\'ezout identity, there are  $p,q\in\Z$, $p>0>q$  such that 
$p a + q k=d$ and $p',q'\in\Z$, $p'<0<q'$  such that 
$p'a + q ' k=d$. Inclusion $F+p a\subseteq F+k\Z$ yields 
$F+d=F+p a+q k\subseteq F+k\Z$ hence (again by induction)  $F+d\N\subseteq F+k\Z$.
In the same way, using $p',q'$ we get $F-d\N\subseteq F+k\Z$.
Thus, $F+d\Z\subseteq F+k\Z$.
As $d$ divides $k$ we also have $F+k\Z\subseteq F+d\Z$.
Thus, $F+d\Z=F+k\Z=L$ and since $0<d<k$ (recall $0<a<k$ and $d$ divides $a$)
this contradicts the minimality of $k$.

Equality $\bigcap_{i=1}^{i=n}A_i=\{0\}$ implies that
\begin{eqnarray*}
\bigcap_{i=1}^{i=n} L-z_i &=& \bigcap_{i=1}^{i=n}\big( \{z_j-z_i\mid j=1,\ldots,n\}+k\Z\big)
= \bigcap_{i=1}^{i=n} \big(\{A_i\}+k\Z\big)\\
&=& \big(\bigcap_{i=1}^{i=n} A_i\big)+k\Z=
\{0\}+k\Z\ =\ k\Z.
\end{eqnarray*}
Thus, for all $b\in \{0,\ldots,k-1\}$, $b+k\Z= \bigcap_{i=1}^{i=n} \big(L-z_i+b\big)$ 
 belongs to $\+L_{\langle  \Z ;+\rangle}(L)$. 
All finite unions and intersections of such $b+k\Z$ also belongs to  $\+L_{\langle  \Z ;+\rangle}(L)$ proving that for all $G\subseteq\{0,\ldots,k-1\}$, 
 $\{G+k\Z\}$ is in  $\+L_{\langle  \Z ;+\rangle}(L)$. 

The converse inclusion is straightforward. 

\smallskip

Finally, as $\+L_{\langle  \Z ;+,\times\rangle}(L)$ is the smallest lattice closed by translations and divisions containing $L$, it suffices to prove that $\{G+k\Z\mid G\subseteq\{0,\ldots,k-1\}\}$ is closed by divisions to conclude. Recall that $L=F+k\Z$ with $ F\subseteq\{0,\ldots,k-1\}$. For $d\in\Z$ 
 $L/d=\{b\mid bd\in L\}$.
Set $G=L/d\cap  \{0,\ldots,k-1\}$, we show that $L/d = G+k\Z$. Clearly $(G+k\Z)d\subseteq dG+dk\Z\subseteq L+k\Z=L$, hence $G+k\Z\subseteq L/d$. 
Conversely, if $b\in L/d$, then 
$bd=f+kz\in L,\ f\in F$; letting $a=b\pmod k\in \{0,\ldots,k-1\}$, we have $ad=(b+kz')d=f+k(z+dz')\in L$ hence $a\in L/d\cap  \{0,\ldots,k-1\}$. 
Thus $b\in G+k\Z$, this shows that $L/d \subseteq G+k\Z$.
Hence  $L/d = G+k\Z$.

\smallskip\noindent Finally, the last assertion about the cases $L=\emptyset$ and $L=\Z$ is straightforward.
\end{proof}
\begin{remark} Note the following immediate consequence of Lemma \ref{l:lattice L Z}.
For every $L\neq\emptyset$ we have $\Z\in\+L_{\langle  \Z ;+\rangle}(L)$ as $\Z=\{0,\ldots,k-1\}+k\Z\}$. This is
different from the case of $\N$ where $\N$ does not necessarily belong to  $\+L_\N(L)$,
for instance when $L$ is finite (hence recognizable in $\langle\N;+\rangle$) all sets in $\+L_{\langle  \N ;+\rangle}(L)$  are finite.
\end{remark}
%
\subsection{Characterizing congruence preserving functions on $\langle  \Z ;+\rangle$}\label{s:55}
Theorem \ref{th:RecLatGeneralGroup}    for residually finite algebras immediately yields the following consequence, even though we also can give a direct proof without using residual finiteness.
\begin{theorem}\label{thm:mainZ1}
Function $f:\Z\to\Z$ is 
$+$-congruence preserving if and only if for every  
recognizable subset $L$ of $\Z$ the lattice $\+L_{\langle  \Z ;+\rangle}(L)$ is  closed under $f^{-1}$.
\end{theorem}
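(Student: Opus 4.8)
The plan is to give the direct argument, which is short once we combine two facts already at hand: the characterization of $+$-congruence preservation on $\Z$ and the explicit description of the lattice $\+L_{\langle\Z;+\rangle}(L)$. By Lemma~\ref{lem:CPsurA} (applied to the principal ring $\langle\Z;+,\times\rangle$, whose congruences agree with those of $\langle\Z;+\rangle$ by Lemma~\ref{+CPimpliesXCP}), a map $f\colon\Z\to\Z$ is $+$-congruence preserving if and only if $x-y$ divides $f(x)-f(y)$ for all $x,y$, equivalently if and only if $f$ preserves every modular congruence $\equiv\pmod k$. On the lattice side, Lemma~\ref{l:lattice L Z} tells us that for a nontrivial recognizable $L=F+k\Z$ with $k$ minimal, $\+L_{\langle\Z;+\rangle}(L)=\{G+k\Z\mid G\subseteq\{0,\dots,k-1\}\}$, i.e. this lattice is \emph{exactly} the family of subsets of $\Z$ saturated for the congruence $\equiv\pmod k$. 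So both sides of the theorem translate into statements about $k$-saturation, and the proof reduces to matching them.

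For the forward implication I would assume $f$ congruence preserving and fix a recognizable $L$. The cases $L=\emptyset$ and $L=\Z$ are immediate since then $\+L_{\langle\Z;+\rangle}(L)=\{L\}$ and $f^{-1}(\emptyset)=\emptyset$, $f^{-1}(\Z)=\Z$. Otherwise write $L=F+k\Z$ with $k$ minimal and take any $Z=G+k\Z$ in the lattice. Since $Z$ is $k$-saturated and $f$ preserves $\equiv\pmod k$, the preimage $f^{-1}(Z)$ is again $k$-saturated: if $a\in f^{-1}(Z)$ and $a\equiv b\pmod k$ then $f(a)\equiv f(b)\pmod k$ and $f(a)\in Z$ force $f(b)\in Z$, so $b\in f^{-1}(Z)$. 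Hence $f^{-1}(Z)=G'+k\Z$ for some $G'\subseteq\{0,\dots,k-1\}$, which by Lemma~\ref{l:lattice L Z} lies in $\+L_{\langle\Z;+\rangle}(L)$. Thus the lattice is closed under $f^{-1}$.

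For the converse I would assume closure under $f^{-1}$ for every recognizable $L$ and verify the divisibility condition $(x-y)\mid(f(x)-f(y))$. Fix $x\neq y$ and set $k=|x-y|$ and $L=f(y)+k\Z$; this $L$ is recognizable, and being a single residue class mod $k$ the smallest modulus $k'$ with $L=F'+k'\Z$ is $k$ itself, so $\+L_{\langle\Z;+\rangle}(L)$ is the family of $k$-saturated sets. Closure under $f^{-1}$ gives $f^{-1}(L)\in\+L_{\langle\Z;+\rangle}(L)$, hence $f^{-1}(L)$ is $k$-saturated. Now $y\in f^{-1}(L)$ because $f(y)\in L$, and $x\equiv y\pmod k$; saturation yields $x\in f^{-1}(L)$, i.e. $f(x)\in f(y)+k\Z$, so $k=|x-y|$ divides $f(x)-f(y)$. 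By Lemma~\ref{lem:CPsurA} this makes $f$ congruence preserving.

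Finally, I would remark that the statement also follows from Theorem~\ref{th:RecLatGeneralGroup}, since $\langle\Z;+\rangle$ is c-residually finite (Corollary~\ref{cor:Z residually finite}) and carries a group operation, yielding the equivalence of congruence preservation with the condition $f^{-1}(L)\in\+L_{\langle\Z;+\rangle}(L)$ for all recognizable $L$. The only genuine gap in this route, and the point I expect to require the most care, is that the present theorem asks for the stronger assertion that $\+L_{\langle\Z;+\rangle}(L)$ be \emph{closed} under $f^{-1}$, rather than merely $f^{-1}(L)\in\+L_{\langle\Z;+\rangle}(L)$. Bridging this needs the observation that every $Z\in\+L_{\langle\Z;+\rangle}(L)$ is itself recognizable with modulus dividing $k$, so that $\+L_{\langle\Z;+\rangle}(Z)\subseteq\+L_{\langle\Z;+\rangle}(L)$ and one may apply the equivalence to $Z$ in place of $L$. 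In the direct proof this subtlety is absorbed entirely by the explicit $k$-saturation description of Lemma~\ref{l:lattice L Z}, which is why I favor that route.
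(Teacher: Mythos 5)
Your proof is correct, but it follows a different route from the paper. The paper's own proof is essentially your final remark: it observes that $\gen(a)=\Z$ for every $a$ and then invokes the general Theorem~\ref{th:RecLatGeneralGroup} for c-residually finite algebras carrying a group operation (via Corollary~\ref{cor:Z residually finite}), with no further argument. Your main proof instead is direct and self-contained: it combines Lemma~\ref{lem:CPsurA} (with Lemma~\ref{+CPimpliesXCP}) to translate $+$-congruence preservation into the divisibility condition, and Lemma~\ref{l:lattice L Z} to identify $\+L_{\langle\Z;+\rangle}(L)$ with the family of sets saturated modulo the minimal modulus $k$ of $L$; both implications then reduce to elementary statements about $k$-saturation, and your choice $L=f(y)+k\Z$ in the converse (together with the check that $k$ is indeed its minimal modulus) is exactly what is needed. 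What each approach buys: the paper's proof is two lines given the general machinery and showcases the abstract theory; yours costs a bit more text but proves the \emph{literal} statement of the theorem, namely that the whole lattice $\+L_{\langle\Z;+\rangle}(L)$ is closed under $f^{-1}$, whereas Theorem~\ref{th:RecLatGeneralGroup}(v) only asserts membership $f^{-1}(L)\in\+L_{\+A}(L)$ for recognizable $L$. The paper silently identifies these two conditions; you correctly flagged this discrepancy and sketched the bridge (every member $G+k\Z$ of the lattice is itself recognizable with minimal modulus dividing $k$, so its own lattice sits inside $\+L_{\langle\Z;+\rangle}(L)$), which is a genuine point of care that the paper's proof omits. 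In this respect your direct argument via the explicit $k$-saturation description is the cleaner of the two.
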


\begin{proof} For any $a\in\Z$, in the algebra $\+Z$, $\gen(a)=\{c+a\mid c\in\Z\}=\Z$. Hence for any function $f\colon\Z\to\Z$, for any $a$, condition $f(a)\in \gen(a)$ trivially holds.
Thus by Theorem \ref{th:RecLatGeneralGroup}, $f$ is $\+Z$-congruence preserving if and only if for every   $\+Z$-recognizable subset $L$ of $\Z$ the lattice $\+L_{\+Z}(L)$ is  closed under $f^{-1}$.
 \end{proof}
\begin{remark}\label{r:Nnepasse-pas-aZ}
1) The previous result shows that 
conditions $(1)_\N$ and  $(3)_\N$ of Theorem \ref{thm:ipl}  
can be extended when substituting $\langle  \Z ;+\rangle$ for $\langle  \N ;+\rangle$. 
However,  conditions $(2)_\N$  and  $(3)_\N$ of Theorem \ref{thm:ipl}  are no longer equivalent when substituting $\langle  \Z ;+\rangle$ for $\langle  \N ;+\rangle$ 
as shown by the counterexample exhibited  in 2).

2)  It is straightforward to see that, for any   $L$ finite,  $\+L_{\+Z}(L)$ is the set of {\em all } finite subsets of $\Z$.
 Consider $f:\Z\to\Z$ such that 
$f(k)=2^k$ for $k\in\N$ and $f(x)=x$ if $x<0$.
As $f^{-1}(a)$ is finite for every $a$,  the inverse image of any finite subset is a finite subset, and $\+L_{\+Z}(L)$ is closed under $f^{-1}$. However $f$ is  not congruence preserving: 
for instance, $2-0=2$ does not divide $f(2)-f(0)=2^2-2^0=3$.

3) Example \ref{ex:contrexRecNonRat} shows that in Theorem  \ref{thm:mainZ1} ``regular" cannot be  substituted  for ``recognizable".
\end{remark}

\begin{example} \label{ex:contrexRecNonRat}
Theorem \ref{thm:mainZ1}  does not hold if we substitute  ``regular" for ``recognizable". In {\rm \cite{benois}\/} it is shown that a regular  subset $L$ of $\Z$ is  of the form
$L=L^+\cup (-L^-)$ where $L^+,L^-$ are regular subsets of $\N$,
i.e., $L=-(d+S+d\N)\cup F\cup(d+R+d\N)$ with
$d\geq1$,
$R,S\subseteq\{x\mid0\leq x<d\}$, and
$F\subseteq\{x\mid-d<x<d\}$
(possibly empty).
 Consider the regular set
$L= 6+10 \N$;
  function $f$ defined by $f(x) = x^2$ is congruence preserving by Lemma \ref{lem:CPsurA}.
The set $f^{-1}(L) = (\{4,6\}+10\N)\cup -(\{4,6\}+10\N)$ 
contains infinitely many negative numbers.
Each set $L-t$ (for $t\in\Z$) contains only finitely many negative numbers
and the same is true for any finite union of finite intersections of $L-t$'s, 
and, in particular, for any set in $\+L_\+Z(L)$.
Thus, $f^{-1}(L)$ is not in $\+L_\+Z(L)$.
\end{example}

%
\section{Case of rings of $p$-adic  integers}
\label{ss:ZpZhat}
%
For rings $\Z_p$ of $p$-adic integers, 
the results are similar to those for the ring $\Z$. 

{\bf Representation of $p$-adic integers.} Let us recall some basic facts about $p$-adic integers. 
The set $\Z_p$ of $p$-adic integers  is the projective limit 
${\underleftarrow{\lim}} \langle\Z/p^n\Z;+,\times\rangle$
relative to the projections $\pi_{p^n\!,p^m}\colon \Z/p^n\Z\to\Z/p^m\Z$ 
 with $n\geq m$, such that $\pi_{p^n\!,p^m}(x)=x\pmod p^m$ for 
$x=0,\ldots,2^n-1$. 
 Every $p$-adic integer  can be represented as an infinite sum
 $\sum_{n=0}^\infty a_n p^n$ with $a_n\in\{0,\ldots,p-1\}$. 
 Addition is performed with carries as in the finite case (except that it goes from left to right).
 Elements of $\N$ are represented in $\Z_p$ 
 by sums with an infinite tail of $0$'s.
Elements of $\Z\setminus\N$ correspond in $\Z_p$ to  base $p$
representations with an infinite tail of digits all equal to $p-1$.
 For instance, (writing $a^\omega$ for an infinite tail of digits all equal to $a$)
we have $100110^\omega + (p-1)(p-1)(p-1)(p-2)(p-2)(p-1)^\omega=0^\omega$.
\subsection{About saturation and congruence preservation} 
\label{contrexPat}
We first show that $p$-adic integers  come in naturally for congruence preservation reasons, more precisely we study the extension of congruence preserving functions  $\N \to\Z$ into  congruence preserving function  $\Z \to A$.
 Example \ref{contrex:NZ}   below  shows  that, if we want to extend  all congruence preserving functions  $\N \to\Z$ into  congruence preserving functions  $\Z \to A$, the carrier set $A$ of the extended algebra cannot be reduced to $\Z$
and it must be ``saturated" in the sense of $p$-adic analysis (different from Definition \ref{df:satureCong}).
 
\begin{example} \label{contrex:NZ}{\rm
On $\langle\N;+\rangle$} we can define by induction a  congruence preserving function $F\colon\N\to\N$ such that $F(0)=0$, $F(1)=F(2)=2$ and for all $n>1$,
$F(2^n-1)\equiv 0\pmod {2^n}$.   
See the Appendix for the inductive proof.

We now show that $F$ cannot be extended into a congruence preserving function $F_{\Z}$ on $\Z$.
Indeed,  $F_{\Z}(-1)$  ought to satisfy the following  conditions
\begin{itemize}
\item $F_{\Z}(-1)\equiv F_{\Z}(2)=2\pmod 3$, hence $F_{\Z}(-1)\not=0$

\item $F_{\Z}(-1)\equiv 0\pmod {2^n}$ for all $n$ 
(if $F_{\Z}$ is congruence preserving then  $2^n$ divides 
$F_{\Z}(2^n-1) -F_{\Z}(-1)$, and it already divides $F_{\Z}(2^n-1)=F(2^n-1)$\;)\end{itemize}
hence $|F_{\Z}(-1)|\geq 2^{n}$ for all $n$ and that is impossible in $\Z$.
\end{example}

\noindent{\bf Saturation.} Let $f\colon\N\to\N$ be a congruence preserving function. In order to extend $f$ into a congruence preserving function in $-1$,  letting  $a=f(-1)$,  we must have: 2 divides $a - f(1)$, 3 divides $a - f(2)$, 4 divides $a - f(3)$, \dots Hence infinitely many conditions must hold. For every finite subset of this set of  conditions, there exists such an $a$ in $\Z$ (and also in $\N$) as proved in Example \ref{contrex:NZ}.  Unfortunately,  an $a$ satisfying {\em all} the conditions in the infinite set of conditions  does not  exist in $\N$, or $\Z$, because neither $\N$, nor $\Z$ are ``saturated"(in the logical sense). Saturated sets containing  $\Z$ are the sets of  $p$-adic integers.
\subsection{Residual finiteness of rings of $p$-adic integers}
\begin{definition}
 The {\em $p$-adic valuation} of $x\in\Z_p$ is the maximum $k$ such that $p^k$ divides $x$,
i.e., the number of heading zeroes  in the $p$-adic representation of $x$.
\end{definition}
\begin{remark}
The set $U$ of invertible elements of $\Z_p$ consists of all elements with null $p$-adic valuation,
i.e., those with $p$-adic representation $(a_n)_{n\in\N}$ such that $a_0\neq0$.
Thus, every element $x$ of $\Z_p$ can be written 
$x=p^n u$ where $n\in\N$ and $u\in U$.
\end{remark}
\begin{lemma} [ cf. Lenstra \cite{lenstra2}]\label{ZpPrincipal} 
$\Z_p$ is a principal ring, and all ideals of $\Z_p$ are of the form $\+I_n=p^n\Z_p$, with  $n\in\N$, or $\+I=\{0\}$
\end{lemma}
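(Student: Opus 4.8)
The goal is to prove Lemma~\ref{ZpPrincipal}: every ideal of $\Z_p$ is either $\{0\}$ or of the form $p^n\Z_p$ for some $n\in\N$. My plan rests entirely on the structural remark preceding the statement, namely that every nonzero $x\in\Z_p$ factors uniquely as $x=p^n u$ with $n\in\N$ the $p$-adic valuation of $x$ and $u$ a unit (an element of null valuation). The strategy is therefore to reduce everything to the behaviour of the valuation under addition and multiplication.

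First I would record the elementary arithmetic of the valuation: writing $v(x)$ for the $p$-adic valuation of a nonzero $x$, one has $v(xy)=v(x)+v(y)$ (immediate from the factorizations $x=p^{v(x)}u$, $y=p^{v(y)}w$ since $uw$ is again a unit, the units forming a multiplicative group), and $v(x+y)\geq\min(v(x),v(y))$ whenever $x+y\neq0$. These two facts are exactly what is needed, and both follow directly from the representation of $p$-adic integers recalled in the text.

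Next I would take a nonzero ideal $\+I$ and let $n=\min\{v(x)\mid x\in\+I,\ x\neq0\}$, which exists because valuations are natural numbers. The plan is to show $\+I=p^n\Z_p$ by two inclusions. For $\+I\subseteq p^n\Z_p$: any nonzero $x\in\+I$ has $v(x)\geq n$, so $x=p^{v(x)}u=p^n\,(p^{v(x)-n}u)\in p^n\Z_p$, and $0\in p^n\Z_p$ trivially. For the reverse inclusion, I would pick a witness $x_0\in\+I$ with $v(x_0)=n$, write $x_0=p^n u_0$ with $u_0$ a unit, and observe that $p^n=x_0\,u_0^{-1}\in\+I$ because $\+I$ absorbs multiplication by the ring element $u_0^{-1}$; hence $p^n\Z_p\subseteq\+I$, again using absorption. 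This gives $\+I=p^n\Z_p$, and together with the trivial case $\+I=\{0\}$ completes the classification of ideals, which is precisely the assertion that $\Z_p$ is principal.

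The only genuinely non-routine point is the existence of inverses for units, i.e.\ that an element of null valuation is actually invertible in $\Z_p$; this is what makes the factorization $x=p^n u$ useful and what lets me pass from $p^n u_0\in\+I$ to $p^n\in\+I$. Since the excerpt cites Lenstra~\cite{lenstra2} for the lemma, I expect to invoke invertibility of null-valuation elements as a known fact about $\Z_p$ (it follows from lifting the inverse of $a_0$ modulo $p$ through the projective system $\Z/p^k\Z$, since $a_0\neq0$ means $a_0$ is a unit mod $p$). Everything else is bookkeeping with the valuation, so I would keep the proof short and point to this structural input as the crux.
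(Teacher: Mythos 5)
Your proof is correct and is essentially the paper's own argument: both take an element of minimal valuation $n$ in a nonzero ideal, factor it as $p^n u$ with $u$ a unit, get $\+I\subseteq p^n\Z_p$ from minimality of the valuation, and get $p^n\Z_p\subseteq\+I$ by multiplying by $u^{-1}$ and using absorption. The paper likewise leans on the prior remark that null-valuation elements are invertible, which you correctly identify as the one non-routine input.
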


\begin{proof}
Let us recall the simple proof.
If $a$ is an element with minimum valuation in an ideal $I$ then $a=p^n u$ for some inversible $u$
and also $I\subseteq p^n\Z_p$.
Let $v$ be an inverse of $u$. Then $I$ contains $av\Z_p=p^nuv\Z_p=p^n\Z_p$.
\end{proof}
\begin{corollary}\label{congZp} The ring $\langle  \Z_p ;+,\times\rangle$ is residually finite.
\end{corollary}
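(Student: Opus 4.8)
The plan is to determine explicitly the complete list of congruences of the ring $\langle\Z_p;+,\times\rangle$ and then verify the defining condition of c-residual finiteness (Definition~\ref{def:residually finite c sp}) against this short list. Since the analogous statement for $\Z$ (Corollary~\ref{cor:Z residually finite}) asserts c-residual finiteness, I would prove the same stronger property here, namely that every congruence of $\Z_p$ is an intersection of finite index congruences.

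First I would combine Lemma~\ref{ZpPrincipal} with Lemma~\ref{lem:CPsurA}. By Lemma~\ref{ZpPrincipal}, $\Z_p$ is a principal commutative ring whose ideals are exactly the $p^n\Z_p$ for $n\in\N$ together with $\{0\}$. Hence Lemma~\ref{lem:CPsurA}(i) applies: every congruence of $\langle\Z_p;+,\times\rangle$ is of the form $\sim_k=\{(u,v)\mid u-v\in k\Z_p\}$, and since $\sim_k$ depends only on the ideal $k\Z_p$, the congruences are precisely the relations $\sim_{p^n}$ for $n\in\N$ (where $u\sim_{p^n}v$ means $p^n$ divides $u-v$) together with the identity congruence, which corresponds to the ideal $\{0\}$. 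Each $\sim_{p^n}$ has finite index, because the quotient $\Z_p/p^n\Z_p$ is isomorphic to $\Z/p^n\Z$ and so has $p^n$ elements; in particular every $\sim_{p^n}$ is trivially the intersection of the one-element family $\{\sim_{p^n}\}$ of finite index congruences, hence is c-residually finite.

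The only remaining congruence is the identity, which has infinite index since $\Z_p$ is uncountable, so here I must exhibit it as an intersection of the finite index congruences $\sim_{p^n}$. Concretely, this amounts to the elementary fact that $\bigcap_{n\in\N}p^n\Z_p=\{0\}$, and this is the single point deserving care — though it presents no genuine obstacle. It follows directly from the $p$-adic valuation: a $p$-adic integer lies in $p^n\Z_p$ exactly when its valuation is at least $n$, i.e. when its first $n$ base-$p$ digits vanish; thus an element lying in every $p^n\Z_p$ has all digits equal to $0$ and is therefore $0$. Consequently $u=v$ if and only if $p^n$ divides $u-v$ for all $n$, so the identity congruence equals $\bigcap_{n\in\N}\sim_{p^n}$. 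Since every congruence of $\Z_p$ is thereby shown to be an intersection of finite index congruences, $\langle\Z_p;+,\times\rangle$ is c-residually finite, which gives the claim.
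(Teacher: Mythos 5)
Your proof is correct and follows essentially the same route as the paper: both invoke Lemma~\ref{ZpPrincipal} and Lemma~\ref{lem:CPsurA} to classify the congruences as the finite-index relations $\sim_{p^n}$ plus the identity, and then express the identity as $\bigcap_{n\in\N}\sim_{p^n}$. The only difference is that you spell out the last step (via $\bigcap_{n\in\N}p^n\Z_p=\{0\}$ and $p$-adic valuation), which the paper leaves implicit.
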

\begin{proof}  By Lemma \ref{lem:CPsurA} and Lemma \ref{ZpPrincipal} a non trivial congruence on  $\Z_p$ is of the form
$$x\sim_n y \ {\text {iff}}\  x-y\in p^n\Z_p.$$
There thus are $p^n$ equivalence classes $a+ p^n\Z_p$ with $a\in \{0,\ldots,p^n-1\}$.

 Finally, the trivial congruence $x\sim y \ {\text {iff}}\  x=y$ is equal to the intersection of all non trivial congruences.
\end{proof}
Lemmata \ref{lem:CPsurA} and \ref{ZpPrincipal} imply the following
\begin{proposition} \label{th:CP-IDRsurZp} A function $f\colon \Z_p\rightarrow \Z_p$ is congruence preserving on $\langle  \Z_p ;+,\times\rangle$ if and only if it is satisfies the divisibility condition \eqref{eq:f freezing cp}.
\end{proposition}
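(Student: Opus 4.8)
The plan is to obtain the statement as an immediate instance of Lemma~\ref{lem:CPsurA}, which already characterizes congruence preservation on an \emph{arbitrary} principal commutative ring. Thus the only thing to verify is that $\langle\Z_p;+,\times\rangle$ meets the hypotheses of that lemma. Commutativity of the ring operations on $\Z_p$ is standard, and Lemma~\ref{ZpPrincipal} supplies exactly the missing ingredient, namely that $\Z_p$ is a principal ring (every ideal being $p^n\Z_p$ or $\{0\}$). With both hypotheses in place, part (ii) of Lemma~\ref{lem:CPsurA} asserts verbatim that $f$ is congruence preserving if and only if it satisfies the divisibility condition \eqref{eq:f freezing cp}.

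For completeness I would unwind the two implications as they specialize to $\Z_p$, reusing the argument of Lemma~\ref{lem:CPsurA}. For the forward direction, given $x,y\in\Z_p$ one considers the congruence $\sim$ generated by the principal ideal $(x-y)\Z_p$; since $x\sim y$, congruence preservation forces $f(x)\sim f(y)$, i.e.\ $f(x)-f(y)\in(x-y)\Z_p$, which is precisely the assertion that $x-y$ divides $f(x)-f(y)$. For the converse, part (i) of Lemma~\ref{lem:CPsurA} (justified by principality) tells us that every congruence on $\Z_p$ has the shape $\sim_k=\{(u,v)\mid u-v\in k\Z_p\}$; if $x\sim_k y$ then $k$ divides $x-y$, which in turn divides $f(x)-f(y)$ by hypothesis, so transitivity of divisibility yields $f(x)\sim_k f(y)$.

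The proof presents no genuine obstacle, the substance having already been isolated in the two cited lemmas; the only point worth recording is that in $\Z_p$ divisibility is governed entirely by the $p$-adic valuation. Writing $v(x)$ for the $p$-adic valuation of $x$ (with the convention $v(0)=\infty$), the ideal $(x-y)\Z_p$ equals $p^{v(x-y)}\Z_p$, so ``$x-y$ divides $f(x)-f(y)$'' is equivalent to the numerical inequality $v(x-y)\le v(f(x)-f(y))$. This renders condition \eqref{eq:f freezing cp} completely explicit and confirms that it is the exact analogue of the divisibility condition appearing for $\Z$ via Lemma~\ref{lem:CPsurA}.
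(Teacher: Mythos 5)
Your proposal is correct and follows the paper's own route exactly: the paper derives this proposition as an immediate consequence of Lemma~\ref{lem:CPsurA} (congruence preservation on principal commutative rings equals the divisibility condition \eqref{eq:f freezing cp}) combined with Lemma~\ref{ZpPrincipal} (principality of $\Z_p$). Your additional unwinding of the two implications and the reformulation of divisibility via the $p$-adic valuation are accurate but not needed beyond what the two cited lemmas already provide.
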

 Let us recall a result proved in \cite{cgg15} (Theorem 24) for  functions satisfying condition \eqref{eq:f freezing cp}, hence by Corollary \ref{th:CP-IDRsurZp}  for congruence preserving functions.
\begin{proposition}
Every $\langle +,\times\rangle$-congruence preserving function $f:\N\to\Z$
extends to unique $\langle +,\times\rangle$-congruence preserving function
 $\widehat{f}:\Z_p\to\Z_p$.
\end{proposition}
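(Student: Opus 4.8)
The plan is to exploit the fact that, over $\Z_p$, congruence preservation is exactly a Lipschitz condition for the $p$-adic metric, and then to extend $f$ by the standard density argument. Recall that by Lemma~\ref{ZpPrincipal} divisibility in $\Z_p$ is governed by the $p$-adic valuation: $a$ divides $b$ if and only if $v_p(a)\leq v_p(b)$. Hence, by Proposition~\ref{th:CP-IDRsurZp}, a map $g:\Z_p\to\Z_p$ is congruence preserving precisely when $v_p(x-y)\leq v_p(g(x)-g(y))$ for all $x,y$, i.e.\ when $g$ is a contraction ($1$-Lipschitz) for the $p$-adic metric $d_p(x,y)=p^{-v_p(x-y)}$. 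The hypothesis on $f:\N\to\Z$, namely that $x-y$ divides $f(x)-f(y)$ in $\Z$, says exactly the same thing on the dense subset $\N\subseteq\Z_p$: we have $v_p(x-y)\leq v_p(f(x)-f(y))$ for all $x,y\in\N$.

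First I would construct $\widehat f$. Every $z\in\Z_p$ is the $p$-adic limit of its partial sums $z_N=\sum_{n\leq N}a_np^n\in\N$, so $\N$ is dense in the complete ring $\Z_p$. Given $x\in\Z_p$, pick any $x_k\in\N$ with $x_k\to x$; since $(x_k)$ is Cauchy we have $v_p(x_k-x_\ell)\to\infty$, whence $v_p(f(x_k)-f(x_\ell))\geq v_p(x_k-x_\ell)\to\infty$, so $(f(x_k))$ is Cauchy in $\Z_p$ and converges. Define $\widehat f(x)$ to be this limit. Independence of the chosen sequence (hence well-definedness) follows because if $x_k\to x$ and $x'_k\to x$ then $v_p(f(x_k)-f(x'_k))\geq v_p(x_k-x'_k)\to\infty$; taking $x_k$ constant shows $\widehat f$ extends $f$. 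The limit lands in $\Z_p$ precisely because $\Z_p$ is complete, which is the reason (discussed in Section~\ref{contrexPat}) why the codomain must be saturated.

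The crux is to verify that $\widehat f$ is congruence preserving, i.e.\ that $v_p(\widehat f(x)-\widehat f(y))\geq v_p(x-y)$ for all $x,y\in\Z_p$; by Proposition~\ref{th:CP-IDRsurZp} this is the whole content. Since $v_p$ is not continuous, I would argue level by level. Fix a finite $n=v_p(x-y)$ (the case $x=y$ being trivial) and choose $x_k,y_k\in\N$ with $x_k\to x$, $y_k\to y$. For $k$ large we have $x_k\equiv x$, $y_k\equiv y$, $f(x_k)\equiv\widehat f(x)$ and $f(y_k)\equiv\widehat f(y)$, all $\bmod\ p^n$ (the last two by construction of $\widehat f$). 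As $x\equiv y\pmod{p^n}$ we get $x_k\equiv y_k\pmod{p^n}$, so from $p^n\mid(x_k-y_k)\mid(f(x_k)-f(y_k))$ we obtain $f(x_k)\equiv f(y_k)\pmod{p^n}$; chaining $\widehat f(x)\equiv f(x_k)\equiv f(y_k)\equiv\widehat f(y)$ yields $v_p(\widehat f(x)-\widehat f(y))\geq n=v_p(x-y)$, as required.

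Finally, uniqueness is immediate: any congruence preserving $g:\Z_p\to\Z_p$ extending $f$ is, by the Lipschitz reformulation above, continuous, and two continuous maps agreeing on the dense set $\N$ must coincide on all of $\Z_p$. The only step requiring genuine care is the passage of the divisibility condition to the closure in the third paragraph, and the fixed-modulus argument there is exactly what circumvents the discontinuity of $v_p$.
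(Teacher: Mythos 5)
Your proof is correct. Note that the paper itself does not prove this Proposition: it recalls it as Theorem 24 of \cite{cgg15}, where the extension is built algebraically as an inverse limit. In that approach, congruence preservation lets $f$ factor through every finite quotient (in the spirit of Proposition~\ref{thm:CP=FactorThroughMorphism}), yielding congruence preserving maps $f_n\colon \Z/p^n\Z\to\Z/p^n\Z$ that form an inverse system, and $\widehat f={\underleftarrow{\lim}}\,f_n$; congruence preservation and uniqueness of $\widehat f$ are then read off from the fact that the nontrivial ideals of $\Z_p$ are exactly the $p^n\Z_p$ (Lemma~\ref{ZpPrincipal}). Your route is the analytic counterpart: via Lemma~\ref{ZpPrincipal} and Proposition~\ref{th:CP-IDRsurZp} you translate congruence preservation on $\Z_p$ into the $1$-Lipschitz property for the $p$-adic ultrametric, then extend by density of $\N$ and completeness of $\Z_p$, with your fixed-modulus ($\bmod\ p^n$) argument playing the role that compatibility of the system $(f_n)$ plays in the algebraic proof. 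The two constructions are equivalent in substance, since $\Z_p$ is at once the metric completion of $\N$ and ${\underleftarrow{\lim}}\,\Z/p^n\Z$, but yours has the merit of being self-contained within the present paper's toolbox, whereas the inverse-limit formulation is the one that would naturally be attempted for $\Zhat$, where no single valuation governs divisibility (which is precisely where the paper's conjecture remains open). One point worth making explicit in your write-up: for $f\colon\N\to\Z$, the phrase ``$\langle+,\times\rangle$-congruence preserving'' is being read, as the paper does, as the divisibility condition \eqref{eq:f freezing cp}; this is legitimate because the congruences of $\Z$ are exactly equality and the modular ones (Lemma~\ref{+CPimpliesXCP}), so preservation amounts to $x\equiv y\pmod k\Rightarrow f(x)\equiv f(y)\pmod k$ for all $k$, i.e.\ $(x-y)\mid(f(x)-f(y))$ in $\Z$ -- the over-linearity clause of Theorem~\ref{thm:CPsurN}, which is forced by the frying-pan congruences of $\N$, plays no role here since the target is $\Z$.
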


In particular, this answers the problem raised in section \ref{contrexPat} 
and Example~\ref{contrex:NZ}.

\begin{corollary}
Every $\langle +,\times\rangle$-congruence preserving function $f:\N\to\Z$
extends to unique $\langle +,\times\rangle$-congruence preserving functions
$\widehat{f}_\Z:\Z\to\Z_p$.
\end{corollary}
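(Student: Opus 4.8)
The plan is to obtain the statement from the preceding Proposition by restriction, and to settle uniqueness through the divisibility condition \eqref{eq:f freezing cp}. The first thing I would do is fix the reading of ``congruence preserving'' for a map between the two distinct rings $\Z$ and $\Z_p$. Since both are principal commutative rings, Lemma~\ref{lem:CPsurA} (and Proposition~\ref{th:CP-IDRsurZp} for $\Z_p$) tell us that congruence preservation amounts to the divisibility condition \eqref{eq:f freezing cp}, read in the ambient ring $\Z_p$: identifying $\Z$ with its image under the canonical embedding $\Z\hookrightarrow\Z_p$, a map $g\colon\Z\to\Z_p$ is congruence preserving precisely when $(x-y)$ divides $g(x)-g(y)$ in $\Z_p$ for all $x,y\in\Z$.

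For \textbf{existence}, I would view the codomain of $f\colon\N\to\Z$ inside $\Z_p$ and apply the preceding Proposition, obtaining a congruence preserving extension $\widehat f\colon\Z_p\to\Z_p$ of $f$. Setting $\widehat f_\Z:=\widehat f\segment\Z$ gives a map $\Z\to\Z_p$ that agrees with $f$ on $\N$. Its congruence preservation is inherited for free: specializing the divisibility condition for $\widehat f$ to arguments $x,y\in\Z\subseteq\Z_p$ yields exactly $(x-y)\mid(\widehat f_\Z(x)-\widehat f_\Z(y))$ in $\Z_p$. Thus $\widehat f_\Z$ is a congruence preserving extension of $f$, as wanted.

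For \textbf{uniqueness}, the heart of the matter is that the values of any congruence preserving extension on the negative integers are forced by its values on $\N$. Let $g\colon\Z\to\Z_p$ be any congruence preserving map with $g\segment\N=f$, and fix $m\geq1$. For every $n\in\N$ the divisibility condition applied to the pair $(n,-m)$ shows that $n+m$ divides $f(n)-g(-m)$ in $\Z_p$. Now fix $k\geq1$ and choose $n\in\N$ with $n\equiv-m\pmod{p^k}$; then $p^k$ divides $n+m$, so $g(-m)\equiv f(n)\pmod{p^k}$ in $\Z_p$. As a $p$-adic integer is determined by its residues modulo $p^k$ for all $k$, this pins down $g(-m)$ uniquely, and the same constraints are met by $\widehat f_\Z$. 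Hence $g=\widehat f_\Z$.

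The only delicate point, and the one I treat first, is agreeing on the meaning of congruence preservation across the two rings; once that is the divisibility condition evaluated in $\Z_p$, both the restriction step and the residue computation go through, and no serious obstacle remains. The argument is then a direct combination of the preceding Proposition with the completeness of $\Z_p$ (its determinacy by residues modulo $p^k$), the latter being precisely the ``saturation'' property discussed in Section~\ref{contrexPat} that $\N$ and $\Z$ lack.
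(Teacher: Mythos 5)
Your proof is correct and follows essentially the route the paper intends: the paper presents this corollary as an immediate consequence of the preceding Proposition, i.e., existence by restricting the unique extension $\widehat{f}:\Z_p\to\Z_p$ to $\Z$, and your reading of congruence preservation as the divisibility condition \eqref{eq:f freezing cp} evaluated in $\Z_p$ is exactly the paper's usage (via Lemma~\ref{lem:CPsurA} and Proposition~\ref{th:CP-IDRsurZp}). Your uniqueness argument, choosing $n\in\N$ with $n\equiv -m\pmod{p^k}$ so that $p^k$ divides $f(n)-g(-m)$ and then invoking $\bigcap_{k}p^k\Z_p=\{0\}$, is precisely the standard density/saturation reasoning the paper leaves implicit, so no gap remains.
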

However,  for $x\in\Z$,  $\widehat{f}_\Z(x)$  is not necessarily
in $\Z$  but in $\Z_p$, cf. Example \ref{contrex:NZ}.
 
 \subsection{Recognizability in $\langle  \Z_p ;+,\times\rangle$}\label{s:rec}
 We here give a simple characterization of recognizable subsets.
\begin{proposition}\label{prop:contZp}
Let $X$ be a subset of $\Z_p$.
The following conditions are equivalent:
\begin{enumerate}
\item[(i)]
$X$ is a recognizable subset of the ring $\langle  \Z_p ;+,\times\rangle$.
\item[(ii)]
$X$ is of the form $X=F+p^n\Z_p$ for some $n\in\N$ and some finite subset of
$\{0,\ldots,p^n-1\}$.
\end{enumerate}
\end{proposition}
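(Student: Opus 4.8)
The plan is to derive both implications directly from the classification of congruences on $\Z_p$ already available (Lemma~\ref{lem:CPsurA} together with Lemma~\ref{ZpPrincipal}, as recorded in Corollary~\ref{congZp}), combined with the congruence-based reformulation of recognizability in Lemma~\ref{l:rec and congru}. The whole argument is then essentially bookkeeping.

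For $(ii)\Rightarrow(i)$ I would exhibit an explicit recognizing morphism. Given $X=F+p^n\Z_p$ with $F\subseteq\{0,\ldots,p^n-1\}$, consider the canonical projection $\varphi\colon\Z_p\to\Z/p^n\Z$ sending $x$ to its residue modulo $p^n$. This is a \emph{ring} morphism onto the finite ring $\langle\Z/p^n\Z;+,\times\rangle$, and it is surjective since already $\N\subseteq\Z_p$ surjects. Writing $\bar F$ for the image of $F$ in $\Z/p^n\Z$, one checks at once that $X=\varphi^{-1}(\bar F)$, so $X$ is $\langle\Z_p;+,\times\rangle$-recognizable by Definition~\ref{def:rec}.

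For $(i)\Rightarrow(ii)$ I would invoke Lemma~\ref{l:rec and congru}: a recognizable $X$ is saturated for some finite index congruence $\sim$. By Lemma~\ref{lem:CPsurA} applied to the principal ring $\Z_p$ (Lemma~\ref{ZpPrincipal}), every congruence on $\Z_p$ is either the identity or of the form $\sim_n$, where $x\sim_n y$ iff $x-y\in p^n\Z_p$. Since $\Z_p$ is uncountable, the identity congruence has infinite index, whereas each $\sim_n$ has exactly $p^n$ classes; hence the finiteness of $\sim$ forces $\sim\ =\ \sim_n$ for some $n$. A $\sim_n$-saturated set is a union of residue classes $a+p^n\Z_p$ with $a\in\{0,\ldots,p^n-1\}$, that is, exactly $F+p^n\Z_p$ where $F$ collects the representatives of the classes contained in $X$; note $F\subseteq\{0,\ldots,p^n-1\}$ is automatically finite. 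The degenerate cases fit as well: take $F=\emptyset$ for $X=\emptyset$ and $F=\{0,\ldots,p^n-1\}$ for $X=\Z_p$.

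The only points requiring any care are the two I flagged above, both immediate: verifying that the identity congruence on $\Z_p$ has infinite index (so that it is excluded by the finite index hypothesis), and verifying that the modular projection is a genuine ring morphism and not merely additive. Neither constitutes a real obstacle, so the proposition follows purely from the already-established congruence theory for $\Z_p$.
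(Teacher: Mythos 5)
Your proof is correct, and the direction $(ii)\Rightarrow(i)$ coincides with the paper's (both exhibit the modular projection onto $\Z/p^n\Z$ as a recognizing ring morphism). For $(i)\Rightarrow(ii)$, however, you take a genuinely different route. The paper works directly with the recognizing morphism $\varphi:\Z_p\to M$: it forms the kernel $K=\varphi^{-1}(0_M)$, notes $K\neq\{0\}$ (else $M$ would be infinite), and then runs the valuation argument inline --- taking $z=p^nu\in K$ of minimal valuation and using invertibility of $u$ to conclude $K=p^n\Z_p$ --- which in effect re-proves Lemma~\ref{ZpPrincipal} for the ideal $K$, before writing $X$ as a finite union of cosets $x_i+p^n\Z_p$. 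You instead stay entirely at the level of congruences: Lemma~\ref{l:rec and congru} converts recognizability into saturation for a finite index congruence, the classification of congruences on the principal ring $\Z_p$ (Lemma~\ref{lem:CPsurA} together with Lemma~\ref{ZpPrincipal}, exactly as packaged in Corollary~\ref{congZp}) identifies that congruence as $\sim_n$ for some $n$ (the identity congruence being excluded by its infinite index), and saturation then gives $X=F+p^n\Z_p$ immediately. Your decomposition is more modular and shorter, since it reuses lemmas the paper has already established instead of duplicating the valuation computation; it also makes explicit the small point, glossed over in the paper, that the coset representatives can be taken in $\{0,\ldots,p^n-1\}$. What the paper's inline computation buys in exchange is explicit information about the recognizing morphism itself, namely that its kernel is exactly $p^n\Z_p$, tying the integer $n$ directly to $\varphi$ rather than to an abstractly chosen saturating congruence.
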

\begin{proof}

$(i)\Rightarrow(ii)$.
Assume $(i)$ and let $\varphi:\Z_p\to M$ be a surjective $\{+,\times\}$-morphism
with $M$ finite. 
Since $\Z_p$ is a ring  so is $M=\varphi(\Z_p)$.
Let $K=\varphi^{-1}(0_M)$ 
be the kernel of $\varphi$.
The kernel $K$ of $\varphi$ is not reduced to $\{0\}$, otherwise $\varphi$ would be injective and $M$ infinite. 
Let $n$ be the smallest among the  $p$-adic  valuations  of the elements of $K$.
 Then $K\subseteq p^n \Z_p$. Let $z\in K$ have
 this smallest $p$-adic valuation: 
$z=p^n u$ with $u\in U$. We have for all $v\in\Z_p$, $p^n v=p^nu\times u^{-1}v=z\times u^{-1}v$, hence, as $K$ is an ideal and $z\in K$, we have $K\supseteq p^n\,\Z_p$, whence
 $K=p^n\,\Z_p$.
As $\varphi^{-1}(\varphi(x))=x+K$, we get
$\varphi^{-1}(\varphi(x))=x+p^n\Z_p$.
As $M$ is finite, any $T\subset M$ is a finite union $T=\cup_{i=1,\ldots,k}\{\varphi(x_i)\}$.
Any $\{+,\times\}$-recognizable subset $Z$ of $\Z_p$ is thus a finite union of such
$\varphi^{-1}(\varphi(x_i)),\ i=1,\ldots,k$, i.e.,  
$Z=\bigcup_{i=1}^{i=k}(x_i+p^{n}\Z_p)=F+p^n\Z_p$ where
$F=\{x_1,\ldots,x_k\}$.
This proves condition $(ii)$.
\smallskip\\
 $(ii)\Rightarrow(i)$.
It suffices to consider $M=\Z/p^n\Z$ and $\varphi$  the modular projection
$(a_k)_{k\in\N}\mapsto(a_0,\ldots,a_{n-1},0,0,\ldots)$.
\end{proof}
%

\subsection{Congruence preserving functions and lattices} \label{s:latticesZp}

%
 Theorem~\ref{th:RecLatGeneralGroup} allows 
to extend  Theorem \ref{thm:mainZ1}   to the ring $ \Z_p$.

\begin{proposition}\label{prop:CPlatZp}
 A function $f:\Z_p\to\Z_p$ is $\langle \Z_p;+,\times\rangle$-congruence preserving if and only if for every  recognizable subset $L$ of $\Z_p$, the lattice $\+L_{\langle \Z_p;+,\times\rangle}(L)$ is  closed under $f^{-1}$.
\end{proposition}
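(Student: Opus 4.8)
The plan is to derive this Proposition as an immediate instance of the general collapse Theorem~\ref{th:RecLatGeneralGroup}, exactly as Theorem~\ref{thm:mainZ1} was obtained for $\Z$. That theorem asserts that whenever $\+A$ is a c-residually finite algebra whose signature contains a group operation, the condition ``$f$ is congruence preserving'' (its condition~(iii)) is equivalent to ``$f^{-1}(L)\in\+L_{\+A}(L)$ for every recognizable $L$'' (its condition~(v)). So the entire task reduces to checking that $\langle\Z_p;+,\times\rangle$ satisfies the two hypotheses of that theorem.

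The first hypothesis is immediate: $+$ is a group operation, since $\langle\Z_p;+\rangle$ is an abelian group. The second hypothesis, c-residual finiteness of the \emph{algebra}, is the only point needing care, and I would establish it from the classification of congruences. By Lemma~\ref{ZpPrincipal} the ring $\Z_p$ is principal, with ideals exactly $p^n\Z_p$ (for $n\in\N$) and $\{0\}$; hence by Lemma~\ref{lem:CPsurA}(i) every congruence is of the form $\sim_k$ with $x\sim_k y\iff x-y\in k\Z_p$, and so is either one of the congruences $x\sim y\iff x-y\in p^n\Z_p$ or the identity. Each of the former has finite index $p^n$ (cf. the proof of Corollary~\ref{congZp}), hence is trivially c-residually finite; and the identity equals the intersection of these finite index congruences, because $\bigcap_{n\in\N}p^n\Z_p=\{0\}$. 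As every congruence thus falls into one of these two cases, every congruence on $\Z_p$ is c-residually finite, i.e. $\langle\Z_p;+,\times\rangle$ is a c-residually finite algebra. I stress that this is strictly more than the plain residual finiteness recorded in Corollary~\ref{congZp}, which only expresses the c-residual finiteness of the single identity congruence; the upgrade to \emph{all} congruences is what Theorem~\ref{th:RecLatGeneralGroup} demands, and it is the one place where a short argument is required.

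With both hypotheses in hand, the equivalence (iii)$\Leftrightarrow$(v) of Theorem~\ref{th:RecLatGeneralGroup} reads verbatim as the statement of the Proposition: $f$ is $\langle\Z_p;+,\times\rangle$-congruence preserving if and only if $\+L_{\langle\Z_p;+,\times\rangle}(L)$ is closed under $f^{-1}$ for every recognizable $L\subseteq\Z_p$. As in the $\Z$ case, no over-linearity hypothesis intervenes; this may be traced to the fact that $\gen(a)=\Z_p$ for every $a$ once translations are available, so that $f(a)\in\gen(a)$ holds automatically, although the group-operation form of the theorem already subsumes this point. The only genuine obstacle is the c-residual finiteness verification of the preceding paragraph, and it is settled purely by the ideal structure of $\Z_p$ given by Lemmas~\ref{ZpPrincipal} and~\ref{lem:CPsurA}.
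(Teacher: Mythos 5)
Your proposal is correct and follows essentially the same route as the paper: the paper's proof also consists of invoking Theorem~\ref{th:RecLatGeneralGroup} (with $+$ as the group operation) together with the residual-finiteness of $\langle\Z_p;+,\times\rangle$, which it cites as Corollary~\ref{congZp} — whose proof is exactly your inline argument via Lemmas~\ref{ZpPrincipal} and~\ref{lem:CPsurA} (all nontrivial congruences are $x\sim y\iff x-y\in p^n\Z_p$, of finite index $p^n$, and the identity is their intersection since $\bigcap_n p^n\Z_p=\{0\}$). Your observation that the Corollary's statement says only ``residually finite'' while the theorem needs c-residual finiteness of \emph{all} congruences is a fair terminological point, but the Corollary's proof already establishes the stronger property, so there is no gap in either argument.
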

\begin{proof} By  Corollary \ref{congZp} and 
 Theorem~\ref{th:RecLatGeneralGroup},
 noting that the condition $f(a)\in \gen(a)$ holds
 because, $\Z_p$ being a group, for any $a\in\Z_p$ we have $\gen(a)=\Z_p$.
\end{proof}
\if34
\subsection{Additive groups of $p$-adic } 

{\color{red} On devrait peut etre supprimer tous ces profinis et Cie qui m'ont valu de me casser la figure pour mon expose ???.}

The set $\Zhat$ of profinite integers  is the projective limit 
${\underleftarrow{\lim}} \,\Z/n\Z$
relative to the projections $\pi_{n,m}\colon \Z/n\Z\to\Z/m\Z$ for {\sout{$n\geq m$. }} {\color{red}{$m$ divides $n$}, i.e., $\Zhat=\{(a_n)_{n=1}^\infty\mid a_n\in\Z/n\Z \text{  and } \forall n,m \text { such that }m\mid n,\ 
a_n\equiv a_m \pmod m\}$}.

Recall \cite{wiki} that any limit of finite groups is residually finite. Hence both $\langle  \Z_p ;+\rangle$ and $\langle\Zhat;+\rangle$ are residually finite. Then Theorem \ref {th:RecLatGeneralGroup} implies
\begin{proposition}\label{prop:groupZpZhat} Let $\+A=\langle A;+\rangle $, with $A$ being  either 
$\Zhat$ or $\Z_p$.  A function $f\colon A\to A$ is congruence preserving if and only if for every  $\+A$-recognizable subset $L$, the lattice $\+L_{\+A}(L)$ is  closed under $f^{-1}$.
\end{proposition}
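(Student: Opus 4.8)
The plan is to obtain the statement as the special case of Theorem~\ref{th:RecLatGeneralGroup} in which $\Xi=\{+\}$. Both $\Z_p$ and $\Zhat$ are abelian groups under addition, so the single operation $+$ is a group operation, which is one of the two hypotheses of that theorem. The left-hand side of the proposition is condition $(iii)$ (congruence preservation); the right-hand side is equivalent, when quantified over all recognizable $L$, to condition $(v)$ ($f^{-1}(L)\in\+L_{\+A}(L)$), because each translate $\gamma^{-1}(L)$ already lies in $\+L_{\+A}(L)$ and hence $\+L_{\+A}(\gamma^{-1}(L))\subseteq\+L_{\+A}(L)$, so that $f^{-1}$ of any element of $\+L_{\+A}(L)$ is again built from sets of the form $f^{-1}(\gamma^{-1}(L))\in\+L_{\+A}(L)$. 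Thus the whole proposition will follow once the remaining hypothesis of Theorem~\ref{th:RecLatGeneralGroup}, namely c-residual finiteness (Definition~\ref{def:residually finite c sp}), is verified for $\langle\Z_p;+\rangle$ and $\langle\Zhat;+\rangle$.

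Before addressing that, I would record that one implication is free of any residual-finiteness assumption. If $f$ is congruence preserving then Theorem~\ref{th:RecLatGeneralResiduallyFinite c}(1) gives $f^{-1}(L)\in\+B_{\+A}(L)$ for every recognizable $L$. Since $+$ is a group operation, Corollary~\ref{coro:syntactic preorder recognizable is congruence in group} shows that for recognizable $L$ the syntactic preorder $\leq_L$ equals the syntactic congruence $\sim_L$; then Lemma~\ref{l:hull syntactic} identifies the initial-segment lattice with the saturated Boolean algebra, and Lemma~\ref{l;recFinite} collapses the complete hulls to the finite ones, so that $\+L_{\+A}(L)=\+B_{\+A}(L)$ for every recognizable $L$. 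Hence $f^{-1}(L)\in\+L_{\+A}(L)$, which is the forward implication.

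For the converse I would use the same identity $\+L_{\+A}(L)=\+B_{\+A}(L)$ to rewrite the hypothesis as $f^{-1}(L)\in\+B_{\+A}(L)$ for all recognizable $L$, so that Theorem~\ref{thm:cp bhull}(2) yields that $f$ preserves every \emph{finite index} congruence. To upgrade this to preservation of \emph{all} congruences I would invoke Lemma~\ref{l:residually finite utile}(2), whose hypothesis is precisely c-residual finiteness. Consequently the entire proposition rests on showing that $\langle\Z_p;+\rangle$ and $\langle\Zhat;+\rangle$ are c-residually finite.

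This verification is the main obstacle, and it is considerably more delicate than the slogan that an inverse limit of finite groups is residually finite. For an abelian group the congruences are exactly the relations $x\sim_H y\iff x-y\in H$ for subgroups $H$, and the finite index congruences correspond to finite index subgroups; for $\Z_p$ these are precisely the open subgroups $p^n\Z_p$ (every finite index subgroup of the procyclic pro-$p$ group $\Z_p$ is open), and for $\Zhat=\prod_q\Z_q$ they are the $n\Zhat$. c-Residual finiteness in the strong sense of Definition~\ref{def:residually finite c sp} demands that \emph{every} subgroup be an intersection of such finite index subgroups, i.e. be closed in the profinite topology. Profiniteness delivers this at once only for the identity congruence $H=\{0\}$ --- this is exactly classical residual finiteness, the separation of points --- and the real work is to extend the closure property to an arbitrary congruence. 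I expect this step, the passage from separation of points to closedness of every subgroup, to be the decisive difficulty on which the proposition turns, and I would attack it by a direct analysis of the subgroup lattices of $\Z_p$ and $\Zhat$.
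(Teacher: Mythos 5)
Your reduction to Theorem~\ref{th:RecLatGeneralGroup}, your handling of the ``closed under $f^{-1}$'' versus ``$f^{-1}(L)\in\+L_{\+A}(L)$'' formulations, and your forward direction are all sound, and your instinct about where the proposition turns is exactly right --- but the step you defer (every subgroup is an intersection of finite-index subgroups) is not merely delicate, it is \emph{false}, and the proposition fails with it in the signature $\{+\}$. As you observe, the finite-index subgroups of $\langle\Z_p;+\rangle$ are exactly the $p^n\Z_p$ (an index-$m$ subgroup contains $m\Z_p=p^{v_p(m)}\Z_p$ and corresponds to a subgroup of the cyclic quotient), so intersections of finite-index congruences are only the congruences modulo $\Z_p$, $p^n\Z_p$ or $\{0\}$; the congruence modulo the dense cyclic subgroup $\Z=\langle 1\rangle$ is a perfectly good $+$-congruence that is none of these, so $\langle\Z_p;+\rangle$ is not c-residually finite, and no analysis of the subgroup lattice can change that. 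Worse, the equivalence itself breaks: take a unit $c\in\Z_p^\times\setminus\Z$ and $f(x)=cx$. Every $\langle\Z_p;+\rangle$-recognizable set is $L=F+p^n\Z_p$, and $f^{-1}(L)=c^{-1}F+p^n\Z_p$ is again a union of cosets of $p^n\Z_p$; the proof of Lemma~\ref{l:lattice L Z} adapts verbatim (the computation descends to the cyclic quotient $\Z/p^n\Z$) to show that $\+L_{\+A}(L)$ consists of \emph{all} unions of cosets of $p^n\Z_p$ for $n$ minimal, and since $f^{-1}$ commutes with $\cup,\cap$ and with translations, each lattice $\+L_{\+A}(L)$, $L$ recognizable, is closed under $f^{-1}$. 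Yet $1\sim 0$ modulo $\Z$ while $f(1)-f(0)=c\notin\Z$, so $f$ is not congruence preserving. The same construction (with $n\Zhat$ and $c\in\Zhat^\times\setminus\Z$) refutes the $\Zhat$ case.

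It is worth comparing this with the paper's own proof, which is precisely the shortcut you declined to take: it consists of the remark that ``any limit of finite groups is residually finite, hence both $\langle\Z_p;+\rangle$ and $\langle\Zhat;+\rangle$ are residually finite,'' followed by an appeal to Theorem~\ref{th:RecLatGeneralGroup}. This conflates classical residual finiteness --- separation of points, i.e.\ c-residual finiteness of the identity congruence alone, exactly the distinction the paper itself draws in the remark after Definition~\ref{def:residually finite c sp} --- with c-residual finiteness of \emph{every} congruence, which is what the theorem needs. The authors seem to have sensed the problem: the proposition sits in a conditionally excluded block of the source, flagged with a marginal query, and the version retained in the compiled paper is Proposition~\ref{prop:CPlatZp} for the \emph{ring} $\langle\Z_p;+,\times\rangle$, where the gap genuinely disappears because Lemmas~\ref{ZpPrincipal} and~\ref{lem:CPsurA} show the only congruences are equality and the $\sim_{p^n}$ (note that my counterexample $x\mapsto cx$ is a ring DUO, hence harmless there). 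So your diagnosis is sharper than the paper's own argument: the statement is salvageable only with the multiplicative structure in the signature, and for the bare additive groups the converse direction is irreparably false.
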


\begin{proof} Easy: note that $\+A$ is residually finite and for any $a\in A$, $\gen(a)=A$.
\end{proof}

{\color{red} Estce juste ???=  ca a l'air mais il me semble qu on avait parle de topologie pour + .
Voir ci dessous.}

The  topology comes in for the additive structure. 
Recall that the rings $\Z_p$ 
contains the ring $\Z$ 
and is a compact topological ring
for the topology $\+O
$ given by the ultrametric $d$ such that
$d(x,y)=2^{-n}$ where $n$ is largest such that $p^n$ 
divides $x-y$, i.e., $x$ and $y$ have the same first $n$ digits
in their base $p$ representation.
We proved in \cite{cgg15} that on $\Z_p$   every congruence preserving function $f$ is continuous and is the inverse limit of an inverse system of congruence preserving functions $f_n$, e.g., on $\Z_p$  we have
$f={\underleftarrow{\lim}} f_n$ where each $f_n$ $f_{n}\colon \Z/{p^n}\Z\to\Z/{p^n}\Z$  is congruence preserving, and moreover the $f_n$ form an inverse system, i.e., for $j\leq i$ ,  $\pi_{i,j} \circ f_i = f_j\circ \pi_{i,j}$, i.e., the following diagram commutes:

\centerline {
$\begin{CD}\label{eqComm}
 \Z/{p^i}\Z  @> f_i>>   \Z/{p^i}\Z \\
@VV\pi_{i,j}V        @VV\pi_{i,j}V\\
\Z/{p^j}Z   @> f_j>>  \Z/{p^j}\Z
\end{CD}$
 }

 \begin{definition}[Continuous recognizability] \label{def:crec}
Let  $\Xi$ 
  be a signature. 
Let $\+A=\langle A; \Xi,\+O\rangle$ be an algebra with signature $\Xi$ and with a topology  $\+O$.
A  subset of $A$ is said to be {\em continuously $\+A$-recognizable} if it is $\Xi$-recognized by a continuous
$\Xi$-morphism $\varphi:A\to M$, where the finite algebra $M$ is endowed with
the discrete topology.
\end{definition}

\begin{proposition}\label{prop:contZp1}
Let $X$ be a subset of $\Z_p$.
The following conditions are equivalent:
\begin{enumerate}

\item[(i)]
$X$ is of the form $X=F+p^n\Z_p$ for some $n\in\N$ and some finite subset of
$\{0,\ldots,p^n-1\}$.

\item[(ii)]
$X$ is a continuously recognizable subset of the topological  group $\langle  \Z_p ;+,\+O\rangle$.
\end{enumerate}

\end{proposition}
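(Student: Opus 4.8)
The plan is to prove the two implications separately; the routine direction is $(i)\Rightarrow(ii)$, while the substance of $(ii)\Rightarrow(i)$ reduces to a classification of the open subgroups of the additive group $\langle\Z_p;+\rangle$. The whole argument parallels the ring case treated in Proposition~\ref{prop:contZp}, the difference being that for a ring morphism the kernel is automatically an ideal $p^n\Z_p$, whereas for a $+$-morphism the kernel is only a subgroup, and it is precisely continuity (Definition~\ref{def:crec}) that pins it down to the form $p^n\Z_p$.

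For $(i)\Rightarrow(ii)$ I would exhibit an explicit continuous recognizing morphism. Given $X=F+p^n\Z_p$ with $F\subseteq\{0,\ldots,p^n-1\}$, take $M=\Z/p^n\Z$ with the discrete topology and let $\varphi\colon\Z_p\to\Z/p^n\Z$ be the modular projection $(a_k)_{k\in\N}\mapsto(a_0,\ldots,a_{n-1},0,0,\ldots)$ already used in Proposition~\ref{prop:contZp}. This $\varphi$ is a surjective $+$-morphism, and it is continuous because the preimage of any subset of the discrete space $M$ is a union of cosets of $\ker\varphi=p^n\Z_p$, and $p^n\Z_p$ is open in the ultrametric topology $\+O$: it is the closed ball $\{x\mid d(x,0)\le 2^{-n}\}$, which in an ultrametric space is also open, and translations are homeomorphisms. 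Identifying $F$ with a subset of $\Z/p^n\Z$ gives $X=\varphi^{-1}(F)$, so $X$ is continuously recognizable.

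For $(ii)\Rightarrow(i)$, suppose $X=\varphi^{-1}(T)$ for a continuous surjective $+$-morphism $\varphi\colon\Z_p\to M$ onto a finite discrete algebra $M$ with binary operation $+$. First I would note that, since $\varphi$ is surjective and $\langle\Z_p;+\rangle$ is a group, the image $M=\varphi(\Z_p)$ is itself a finite group (it contains $\varphi(0)$ as unit and $\varphi(-x)$ as inverse of $\varphi(x)$), so the kernel $K=\varphi^{-1}(0_M)$ is a subgroup of $\langle\Z_p;+\rangle$; continuity forces $K$ to be open, as $\{0_M\}$ is open in $M$. The heart of the argument is the claim that every open subgroup of $\langle\Z_p;+\rangle$ equals $p^n\Z_p$ for some $n\in\N$: an open subgroup $K$ contains a basic neighbourhood $p^N\Z_p$ of $0$, so $K/p^N\Z_p$ is a subgroup of the cyclic group $\Z_p/p^N\Z_p\cong\Z/p^N\Z$; since the subgroups of a cyclic $p$-group form the chain $p^j\Z/p^N\Z$ for $0\le j\le N$, pulling back yields $K=p^j\Z_p$. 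Once $K=p^n\Z_p$ is established, each fibre $\varphi^{-1}(t)$ for $t\in T$ is a coset $a_t+p^n\Z_p$; choosing representatives $a_t\in\{0,\ldots,p^n-1\}$ and setting $F=\{a_t\mid t\in T\}$ gives $X=\bigcup_{t\in T}(a_t+p^n\Z_p)=F+p^n\Z_p$, which is condition~(i).

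The step that requires the most care is the classification of open subgroups of $\langle\Z_p;+\rangle$, and the openness hypothesis is exactly what makes it work: it replaces an a priori arbitrary additive subgroup by one containing some $p^N\Z_p$, reducing the problem to the transparent subgroup lattice of the finite cyclic group $\Z/p^N\Z$. This is also where the continuity requirement in the statement earns its keep. Without it the kernel would only be known to be a finite-index subgroup of $\Z_p$, and to conclude one would have to invoke the more delicate fact that finite-index subgroups of the topologically procyclic group $\Z_p$ are automatically open—an input that the topological formulation lets us bypass entirely.
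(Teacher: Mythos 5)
Your proof is correct, and it differs from the paper's own argument in how the kernel is handled in $(ii)\Rightarrow(i)$; the $(i)\Rightarrow(ii)$ direction (modular projection onto $\Z/p^n\Z$) is the same in both. The paper treats $K=\varphi^{-1}(0_M)$ purely topologically: continuity plus discreteness of $M$ make $K$ clopen, and it then invokes Proposition~\ref{p:Zp compact} (proved in the appendix via compactness of $\Z_p$), by which every clopen subset of $\Z_p$ has the form $G_n+p^n\Z_p$ with $G_n\subseteq\{0,\ldots,p^n-1\}$; the fibres $x+K$ are then finite unions of cosets of $p^n\Z_p$, and condition (i) follows by the same bookkeeping as in the ring case (Proposition~\ref{prop:contZp}). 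Note the paper never needs to identify $K$ with a single $p^j\Z_p$ — clopenness of $K$ suffices. You instead exploit the group structure: $K$ is an open subgroup, hence contains a basic neighbourhood $p^N\Z_p$ of $0$, so $K/p^N\Z_p$ is a subgroup of the cyclic group $\Z_p/p^N\Z_p\cong\Z/p^N\Z$, whose subgroups form the chain $p^j\Z/p^N\Z$, and pulling back gives $K=p^j\Z_p$ exactly. What each approach buys: yours is purely algebraic, bypasses compactness (and hence the appendix lemma) altogether, and pins the kernel down precisely, at the price of using the cyclic structure of $\Z_p/p^N\Z_p$; the paper's reuses a lemma it needs anyway for the ring case, and is indifferent to whether $K$ is a subgroup or merely a clopen set. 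One remark on your closing paragraph: the fact you propose to bypass is actually cheap in this setting — a subgroup of index $m$ in the abelian group $\Z_p$ contains $m\Z_p$, which equals $p^k\Z_p$ for $k$ the $p$-adic valuation of $m$ and is therefore open — so continuity is in fact automatic for any surjective $+$-morphism onto a finite group; still, your proof uses the continuity hypothesis exactly where the paper does, which is the natural reading of the statement.
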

\begin{proof}
$(ii)\Rightarrow(i)$.
Assume $(ii)$ and let $\varphi:\Z_p\to M$ be a continuous surjective $\{+\}$-morphism
with $M$ finite.
Then $M$ is an abelian group.
Let $K=\varphi^{-1}(0_M)$ (where $0_M=\varphi(0)$) be the kernel of $\varphi$.
As the topology on $M$ is  discrete,  any subset of $M$ is clopen; as  $\varphi$ is continuous, the inverse image $K$ of the clopen set $\{0_M\}$  is clopen in $\Z_p$, hence of the form $G_n+p^n\Z_p$, with $G_n\subseteq\{0,1,\ldots,p^n-1\}$ (by Proposition \ref{p:Zp compact} in the  Appendix). 
Following the same argument as above, we get condition $(i)$. 
\\
$(i)\Rightarrow(ii)$.
It suffices to consider $M=\Z/p^n\Z$ and $\varphi$  the modular projection
$(a_k)_{k\in\N}\mapsto(a_0,\ldots,a_{n-1},0,0,\ldots)$.
\end{proof}
\fi

\begin{remark}
We do not know whether  Proposition \ref {prop:CPlatZp}  extends to the ring of profinite integers $\Zhat$. As there is no simple characterization of the ideals of $\Zhat$  \cite{lenstra2}, 
 our proofs of Corollary \ref{th:CP-IDRsurZp} and Proposition \ref{prop:contZp}  do not hold for $\Zhat$.  We nevertheless conjecture that  these results  hold on $\Zhat$.
\end{remark}
\section{Conclusion}

We studied the relationships between lattices generated by recognizable sets of some  algebras  and  congruence preserving functions.  For quite a few usual algebras (in particular those with carriers $\N,\  \Z,\ \Z_p$), we showed that congruence preserving functions somehow correspond to functions 
 which can be added to the algebra with no modification of the lattices generated by 
recognizable set and closed under the inverses of the ``generating operations"  of the algebra.

{\small

}
\section{Appendix}

%
%
\begin{proof} [Proof of Example \ref{contrex:NZ}]
 We can define, by induction on $x\in\N$, a  congruence preserving function $F\colon\N\to\N$ such that:  $F(\N)\subseteq \N$, $F(0)=0$, $F(1)=F(2)=2$ and for all $n>1$,
$F(2^n-1)\equiv 0\pmod {2^n}$. Note that $F$ is congruence preserving both as a mapping $\N\to\N$, and as a mapping $\N\to\Z$.  
Basis: for $x=3$ ($n=2$), $F(3)=12$ is suitable

Induction: assume $F(y)$ has been defined for $y< x$ and define $F(x)$; $F(x)$ must satisfy
\begin{subequations}\label{system0}
\begin{align}
F(x)&\equiv F(0) \pmod {2^n} &  \text{ if } x= 2^n-1 \label{eq0bis}\\
F(x)&\equiv F(0) \pmod x  \label{eq0} \\
F(x)&\equiv F(1) \pmod {(x-1)}\notag\\
\notag\vdots\\
F(x)&\equiv F(i) \pmod {(x-i)} \\
\notag\vdots\\
F(x)&\equiv F(x-2) \pmod {2} \label{eqx-2}
\end{align}
\end{subequations}

\noindent Otherwise stated, assuming for all $n<x$, $F(n)$ satisfying \eqref{system0} has been defined, we define  $F(x)$ satisfying  all of the  equivalences in \eqref{system0}. 

As for $q_1,q_2$   coprime, $a\equiv b\pmod {q_1 \times q_2}$ if and only if 
$a\equiv b\pmod {q_1}$ and $a\equiv b\pmod  {q_2}$, we can transform the above system of equivalences \eqref{system0} so that all equivalences are modulo a power of a prime number. For any $n$ and any prime $p<n$, let $\alpha_{n,p}$ be the exponent of the largest power of $p$  dividing $n$ (i.e., $\alpha_{n,p}=\lfloor
\log_pn\rfloor$).
Then  system  \eqref{system0} is  equivalent to system \eqref{system1} below

\begin{subequations}\label{system1}
\begin{align}
F(x)&\equiv F(0) \pmod {2^n}  &\text{ if } x= 2^n-1\quad\text{ and } p=2  \\
F(x)&\equiv F(0) \pmod {p^{\alpha_{x,p}} }&\text{ for all $p$ dividing } x\\
F(x)&\equiv F(1) \pmod {p^{\alpha_{(x-1),p}} } \quad&\text{ for all $p$ dividing } x-1\\
\notag\vdots\\
F(x)&\equiv F(i) \pmod {p^{\alpha_{(x-i),p}}}\ &\text{ for all $p$ dividing } x-i\\
\notag\vdots\\
F(x)&\equiv F(x-2) \pmod {2}&\text{ for all $p$ dividing } 2 
\end{align}
\end{subequations}

The next Fact allows us to simplify system \eqref{system1} in such a way that each prime $p$ occurs in at most one equivalence.

\begin{fact}\label{lemmered}
Let $F$ satisfy system \eqref{system0}  for all $y<x$. Let $p \leq x$ be a prime.  Let $\alpha_p=\max\big\{\alpha_{i,p} \ |\  i=2,\ldots,x\big\}$. Let  \eqref{system2} be the subsystem of \eqref{system1} consisting only of  equivalences modulo a power of the chosen prime $p$
\begin{subequations}\label{system2}
\begin{align}
F(x)&\equiv F(0) \pmod {2^n}  \text{ if }\ x= 2^n-1\ \text{ and }\  p=2  \label{eq0bis2}\\
F(x)&\equiv F(0) \pmod {p^{\alpha_{x,p}}} \text{ if $p$ divides } x \label{eq02}\\
\notag\vdots\\
F(x)&\equiv F(j) \pmod {p^{\alpha_{(x-j),p}} } \text{ if $p$ divides } x-j \label{eqj2}\\
\notag\vdots\\
F(x)&\equiv F(i) \pmod {p^{\alpha_{(x-i),p}}}\text{ if $p$ divides } x-i \label{eqi2}\\
\notag\vdots\\
F(x)&\equiv F(x-2) \pmod {p^{\alpha_{2,p}}}\text{ if $p$ divides } 2 
 \label{eqn-22}
\end{align}
\end{subequations}
The system \eqref{system2} of equivalences reduces to a {\em  single equivalence} \eqref{eqred}  of the form

\begin{subequations}\label{eqred} 
\begin{align}
\text {either } \quad F(x)&\equiv
F(i) \pmod {p^{\alpha_p}}\quad\text{ if } x\not=2^n-1\text{ or } p\not=2 \label{eqred1}\\
\text {or }\ \quad F(x)&\equiv
F(0) \pmod {2^n} \quad\text{ if } x=2^n-1\text{ and } p=2 \label{eqred2}
 \end{align}
\end{subequations}
\end{fact}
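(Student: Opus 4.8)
The plan is to treat the two clauses of \eqref{eqred} separately, in each case singling out one equivalence of \eqref{system2} with maximal modulus and showing it entails all the others (the reverse implication being free, since the retained equivalence is literally a member of \eqref{system2}). Throughout I would use that $\alpha_{m,p}$ is the $p$-adic valuation $\val(m,p)$, so that the non-Archimedean inequality $\val(a+b,p)\geq\min(\val(a,p),\val(b,p))$, with equality when the two valuations differ, is at hand. The inductive hypothesis provides two facts about the already-defined values $F(0),\ldots,F(x-1)$: first, since $F$ is congruence preserving below $x$, $a-b$ divides $F(a)-F(b)$, whence $\val(F(a)-F(b),p)\geq\val(a-b,p)$ for all $a,b<x$; second, the auxiliary constraints $F(2^m-1)\equiv F(0)\pmod{2^m}$ hold for every $m$ with $2^m-1<x$. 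Note also that $p\leq x$ forces $p\in\{2,\ldots,x\}$, so $\alpha_p\geq1$ and \eqref{system2} is nonempty.

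For the generic clause \eqref{eqred1} (when $x\neq 2^n-1$ or $p\neq 2$), only the natural equivalences $F(x)\equiv F(j)\pmod{p^{\alpha_{(x-j),p}}}$, for $j$ with $p\mid x-j$, appear; as $x-j$ ranges over $\{2,\ldots,x\}$, the maximal valuation among these is exactly $\alpha_p$. First I would pick an index $i=x-d^\ast$ with $\val(x-i,p)=\alpha_p$ and retain the equivalence $F(x)\equiv F(i)\pmod{p^{\alpha_p}}$, which is itself in \eqref{system2}. To derive an arbitrary member $F(x)\equiv F(j)\pmod{p^{\alpha_{(x-j),p}}}$, observe that $i-j=(x-j)-(x-i)$ gives $\val(i-j,p)\geq\min(\alpha_p,\alpha_{(x-j),p})=\alpha_{(x-j),p}$, so the inductive divisibility yields $F(i)\equiv F(j)\pmod{p^{\alpha_{(x-j),p}}}$; combining this with the retained equivalence reduced modulo $p^{\alpha_{(x-j),p}}$ and transitivity gives the desired congruence.

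For the exceptional clause \eqref{eqred2} (when $x=2^n-1$ and $p=2$), the extra equivalence \eqref{eq0bis2}, namely $F(x)\equiv F(0)\pmod{2^n}$, is present, and since every difference $x-j\in\{2,\ldots,2^n-1\}$ has $2$-valuation at most $n-1$, its modulus $2^n$ strictly dominates all natural moduli; hence I would retain \eqref{eq0bis2}. It remains to obtain each natural equivalence $F(x)\equiv F(j)\pmod{2^m}$, where $m=\alpha_{(x-j),2}=\val(2^n-1-j,2)$ satisfies $1\leq m\leq n-1$ and $j$ is necessarily odd. The key step is that $j\equiv 2^n-1\equiv-1\equiv 2^m-1\pmod{2^m}$, so $\val(j-(2^m-1),2)\geq m$; the inductive divisibility then gives $F(j)\equiv F(2^m-1)\pmod{2^m}$, while the auxiliary constraint gives $F(2^m-1)\equiv F(0)\pmod{2^m}$ (legitimate because $2^m-1\leq 2^{n-1}-1<x$). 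Chaining these with \eqref{eq0bis2} reduced modulo $2^m$ produces $F(x)\equiv F(j)\pmod{2^m}$.

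I expect the exceptional clause to be the main obstacle, because there the naive ``largest modulus wins'' argument of the generic case collapses: the dominating equivalence involves $F(0)$, and the plain divisibility $F(0)\equiv F(j)\pmod{|j|}$ is vacuous for odd $j$. What rescues the argument is feeding in the auxiliary conditions $F(2^m-1)\equiv F(0)\pmod{2^m}$ that the construction has deliberately maintained at every earlier stage $2^m-1<x$; recognising $2^m-1$ as the correct intermediate value, through the congruence $j\equiv-1\pmod{2^m}$, is the crux of the whole Fact.
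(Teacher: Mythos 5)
Your proof is correct and takes essentially the same route as the paper's: in the generic case both arguments retain the maximal-valuation congruence and recover every other one from the inductive divisibility $a-b \mid F(a)-F(b)$, and in the exceptional case $x=2^n-1$, $p=2$ both use the deliberately maintained constraints $F(2^k-1)\equiv F(0)\pmod{2^k}$ at earlier stages as the bridge. The only cosmetic difference is in how that case is wrapped up: the paper pivots once through $i=2^{n-1}-1$ (recovering the dominant natural congruence modulo $2^{n-1}$ and then re-invoking the generic argument), whereas you pivot through $2^m-1$ separately for each modulus $2^m$ --- the same mechanism applied per equivalence rather than once.
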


\begin{proof}
Clearly, if all equivalences \eqref{system2} hold then equivalence \eqref{eqred} holds. Let us prove the converse:
\begin{itemize}
\item assume first $x\not=2^n-1$  or $p\not=2$, then equation \eqref{eq0bis2} is out of the picture and system \eqref{system2} is equivalent to equation \eqref{eqred1}. If $\alpha_p={\alpha_{(x-i),p}}={\alpha_{(x-j),p}}$, arbitrarily choose one of the equivalences \eqref{eqj2} or  \eqref{eqi2} for \eqref{eqred}.
 We prove that equivalences  \eqref{eqj2} hold for all $j\in\{2,\ldots,x\}\setminus\{i\}$. Let $j$ be given, note that
$p^{\alpha_{(x-j),p}} $ divides $p^{\alpha_{(x-i),p}} $
\begin{itemize}
\item  assume e.g. $i>j$. Then $p^{\alpha_{(x-j),p}} $ divides both $x-i$ and $x-j$, hence it divides $(x-j)-(x-i)=i-j$, and by the induction hypothesis, equivalences \eqref{system0} hold for $i$, thus $F(i)\equiv F(j) \pmod {p^{\alpha_{(x-j),p}} }$ 
\item  \eqref{eqred1}  and $p^{\alpha_{(x-j),p}} $ divides $p^{\alpha_{(x-i),p}} $
imply that   $F(x)\equiv F(i) \pmod {p^{\alpha_{(x-j),p}}}$
\end{itemize}

 hence \eqref{eqj2} by transitivity.
 
 \item assume now $x=2^n-1$. For $p\not=2$ the proof is the same as above. If    $p=2$, then equation \eqref{eq0bis2} gets into the picture and more care is needed. 
  The largest power of 2 dividing the ($x-i$)'s is $2^{n-1}$ and it is reached for $i=2^{n-1}-1$ as 
 $x-i=2^n-1- 2^{n-1}-1=2^{n-1}$. 
 \begin{itemize}
 \item By the induction hypothesis on $F$, we have $F(2^{n-1}-1)\equiv 0\pmod{2^{n-1}}$
 \item As $F(0)=0$, equation \eqref{eqred2} implies that $F(x)\equiv 0\pmod{2^{n}}$
 \end{itemize}
Hence $F(x)\equiv F(2^{n-1}-1)\pmod{2^{n-1}}$ by transitivity of $\equiv$  
and, by the proof of the previous case, this implies that all equations \eqref{eq02} to \eqref{eqn-22} hold.
 \qedhere
  \end{itemize}
\end{proof}

We now use the above  Fact  \ref{lemmered}  to simplify system \eqref{system1} so that there is at most one equivalence for each prime $p$; 
the Chinese remainder theorem shows that  this last system has at least a solution $F(x)=y_0$;
as $F$ is congruence preserving both $2^n-1$ and $2^n$ divide $F(2^n-1)$. 
\end{proof}

\subsection*{Profinite integers}
Recall some classical equivalent approaches to
the topological rings of $p$-adic integers  integers,
cf. Lenstra \cite{lenstra2} and Lang \cite{lang}.
\begin{proposition}\label{p:Zp}
Let $p$ be prime.
The three following approaches lead to isomorphic structures,
called the topological ring $\Z_p$ of $p$-adic integers.
\begin{itemize}
\item
The ring $\Z_p$ is the inverse limit of the following inverse system:
\begin{itemize}
\item
the family of rings $\Z/p^n\Z$ for $n\in\N$,
endowed with the discrete topology,
\item
the family of surjective morphisms
$\pi_{p^n,p^m}:\Z/p^n\Z \to \Z/p^m\Z$ for $0\leq n\geq m$.
\end{itemize}
\item
The ring $\Z_p$ is the set of infinite sequences $\{0,\ldots,p-1\}^\N$
endowed with the Cantor topology and addition and multiplication
which extend the usual way to perform addition and multiplication
on base $p$ representations of natural integers.
\item
The ring $\Z_p$ is the Cauchy completion of the metric topological ring
$(\N,+,\times)$
relative to the following ultrametric: 
$d(x,x)=0$ and for $x\neq y$, $d(x,y)=2^{-n}$ where $n$ is the $p$-valuation of $|x-y|$,
i.e., the maximum $k$ such that $p^k$ divides $x-y$.
\end{itemize}
\end{proposition}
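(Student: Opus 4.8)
The plan is to name the three candidate structures $R_1$ (the inverse limit), $R_2$ (the space of base-$p$ digit sequences), and $R_3$ (the Cauchy completion of $\langle\N;+,\times\rangle$), and to produce explicit isomorphisms of topological rings between them. First I would establish $R_1\cong R_2$ by a direct combinatorial correspondence, and then identify $R_3$ with $R_1$ via the universal property of completion, using that $R_1$ is compact (hence complete) and contains $\N$ as a dense subset carrying exactly the $p$-adic metric.

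For $R_1\cong R_2$: an element of $R_1$ is a coherent family $(x_n)_{n\geq1}$ with $x_n\in\Z/p^n\Z$ and $x_n\equiv x_m\pmod{p^m}$ whenever $n\geq m$. Writing each $x_n$ in its least nonnegative base-$p$ form $\sum_{i<n}a_i^{(n)}p^i$, the coherence condition forces $a_i^{(n)}=a_i^{(m)}$ for $i<m\leq n$, so the digits stabilize to a single sequence $(a_i)_{i\in\N}\in\{0,\ldots,p-1\}^\N$. This assignment is a bijection onto $R_2$, with inverse sending a digit sequence to the family of its truncated partial sums read modulo $p^n$. I would then check that this bijection is a ring isomorphism: addition and multiplication in $R_1$ are componentwise modulo $p^n$, and this is precisely what base-$p$ addition and multiplication with carries compute once the result is truncated to its first $n$ digits. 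Finally, the inverse-limit topology on $R_1$ (the subspace topology inherited from the product of the discrete $\Z/p^n\Z$) and the Cantor product topology on $R_2$ have matching bases: fixing $x_n\in\Z/p^n\Z$ amounts to fixing the first $n$ digits, so the bijection is a homeomorphism.

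For $R_3\cong R_1$: the map $m\mapsto(m\bmod p^n)_{n\geq1}$ embeds $\N$ into $R_1$, and its image is dense because any coherent family $(x_n)$ is the limit of the partial sums $\sum_{i<n}a_i p^i\in\N$. On $R_1$ the distance $d(x,y)=2^{-\nu}$, where $\nu$ is the largest index with $x_\nu=y_\nu$, restricts on $\N$ to exactly the $p$-adic ultrametric of the statement and induces the inverse-limit topology. Since $R_1$ is a closed subset of the compact product $\prod_n\Z/p^n\Z$ it is compact, hence complete for $d$; being the completion of a dense subspace isometric to $(\N,d)$, it is canonically isomorphic to $R_3$. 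The ring operations are non-expansive, hence uniformly continuous, on $\N$ (for instance $v_p(xz-yw)\geq\min(v_p(x-y),v_p(z-w))$), so they extend uniquely and continuously to the completion, and the identification respects $+$ and $\times$.

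The hard part will be the bookkeeping in the ring-isomorphism verification of the $R_1\cong R_2$ step, namely confirming that carry propagation in the base-$p$ arithmetic on $R_2$ agrees, after truncation to $n$ digits, with the purely algebraic reduction modulo $p^n$ in $R_1$; once this is pinned down for both operations, the remaining topological and completion arguments are routine.
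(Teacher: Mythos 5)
Your proposal is correct, and it supplies something the paper itself does not contain: the paper states Proposition~\ref{p:Zp} with no proof at all, presenting it as a recollection of classical facts and deferring entirely to the cited references (Lenstra and Lang). Your two-step argument — first $R_1\cong R_2$ by the digit-truncation correspondence, then $R_3\cong R_1$ by exhibiting $\N$ as a dense isometric subring of the compact (hence complete) inverse limit and invoking uniqueness of completions — is the standard textbook route, and every step you outline is sound: the coherence condition does force digit stabilization; the basic open sets of the inverse-limit topology and the Cantor topology match under your bijection; the restriction of $d(x,y)=2^{-\nu}$ to $\N$ is exactly the ultrametric of the statement; and the inequality $v_p(xz-yw)\geq\min(v_p(x-y),v_p(z-w))$ (via $xz-yw=x(z-w)+w(x-y)$) does give uniform continuity of both operations, so they extend uniquely to the completion and the identification respects $+$ and $\times$.

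The one place you flag as "bookkeeping" deserves to be stated as a clean lemma rather than left implicit: for natural numbers, the first $n$ base-$p$ digits of a sum or product depend only on the first $n$ digits of the operands (carries propagate from low digits to high digits, never backwards), and they coincide with the base-$p$ expansion of the result reduced mod $p^n$. Equivalently, each truncation map $\{0,\ldots,p-1\}^\N\to\Z/p^n\Z$, $(a_i)_{i\in\N}\mapsto\sum_{i<n}a_ip^i \bmod p^n$, is a ring morphism when $\{0,\ldots,p-1\}^\N$ carries the carry-arithmetic operations. Once phrased this way, the verification is immediate from the definition of carry arithmetic on $\N$ together with density (the operations on $R_2$ are continuous and agree with those of $R_1$ on the dense subset $\N$, hence everywhere), and your proof is complete.
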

%
\begin{proposition}\label{p:Zp compact}
The topological group $\Z_p$ is compact and has a basis of clopen sets (sets which are both open and closed) of the form $\prod_{i=0}^nG_i\times\prod_{i=n+1}^\infty\Z/p^i\Z$, with $G_i$ a subset of $\{0,\ldots,p-1\}$. 

Any clopen set of  $\Z_p$  is of the  form
$G_n+p^n\Z_p$, with $G_n\subseteq\{0,1,\ldots,p^n-1\}$. 
\end{proposition}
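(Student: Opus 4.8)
The plan is to work with the description of $\Z_p$ furnished by Proposition~\ref{p:Zp} as the set of digit sequences $(a_i)_{i\in\N}\in\{0,\ldots,p-1\}^\N$ equipped with the ultrametric $d$ for which $d(x,y)=2^{-n}$, where $n$ is the $p$-valuation of $x-y$. Under this metric the closed ball of radius $2^{-n}$ centred at $c$ is exactly the residue class $c+p^n\Z_p=\{x\mid x\equiv c\pmod{p^n}\}$; more generally, constraining the digits in positions $0,\ldots,n$ to lie in prescribed sets $G_0,\ldots,G_n$ yields the cylinder $\prod_{i=0}^n G_i\times\prod_{i>n}\{0,\ldots,p-1\}$, which is a finite union of balls $c+p^{n+1}\Z_p$. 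Each coordinate space being discrete, every such cylinder is simultaneously open and closed, and these cylinders form a basis of the product topology; this already gives the announced basis of clopen sets.

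For compactness I would avoid any appeal to completeness subtleties and argue directly. The cleanest route in the digit picture is that $\{0,\ldots,p-1\}^\N$ is a countable product of finite discrete spaces, hence compact by Tychonoff. Alternatively, one can show that $(\Z_p,d)$ is complete (it is the Cauchy completion of $\N$ by Proposition~\ref{p:Zp}) and totally bounded: for any $\varepsilon>0$ choose $n$ with $2^{-n}<\varepsilon$, and note that the $p^n$ residue classes $c+p^n\Z_p$ for $c\in\{0,\ldots,p^n-1\}$ are balls of radius $2^{-n}$ covering $\Z_p$; a complete totally bounded metric space is compact.

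The remaining claim, that every clopen set has the single-level form $G_n+p^n\Z_p$, is the step carrying the real content. Let $C$ be clopen. Being closed in the compact space $\Z_p$, it is compact; being open, it is a union of basic clopen balls contained in it. Compactness extracts a finite subcover, so $C=\bigcup_{j=1}^k\,(c_j+p^{n_j}\Z_p)$. Setting $n=\max_j n_j$ and using that each coarser ball $c_j+p^{n_j}\Z_p$ is the disjoint union of the finer balls $c+p^n\Z_p$ with $c\equiv c_j\pmod{p^{n_j}}$, I can rewrite $C$ as a union of balls all at the common level $n$, that is $C=\bigcup_{c\in G_n}(c+p^n\Z_p)=G_n+p^n\Z_p$ with $G_n\subseteq\{0,\ldots,p^n-1\}$. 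The main obstacle is precisely this reduction to a common level: it is where compactness is indispensable (to pass from an a priori infinite union of balls to a finite one) and where the ultrametric nesting of balls is used (to refine the finitely many balls to a single radius).
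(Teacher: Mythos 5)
Your proof is correct, and its skeleton is the same as the paper's: exhibit the basic clopen cylinders, get compactness from Tychonoff, then use compactness of a clopen set to replace an infinite union of basic sets by a finite one. The genuine difference is the model of $\Z_p$ you work in. The paper uses the inverse-limit picture, realizing $\Z_p$ as a subset of the full product $\prod_{i}\Z/p^i\Z$; there Tychonoff only gives compactness of the ambient product, so the paper must additionally prove that $\Z_p$ is \emph{closed} in that product (this is the somewhat laborious argument with the sets $C_n$ encoding the compatibility conditions $\pi_{n-1}(a_n)=a_{n-1}$). You instead use the digit-sequence picture of Proposition~\ref{p:Zp}, where $\Z_p$ \emph{is} the whole space $\{0,\ldots,p-1\}^\N$, so Tychonoff applies directly and the closedness step disappears; your alternative argument (complete plus totally bounded) is likewise self-contained and does not appear in the paper. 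Conversely, on the last step your write-up is more complete than the paper's: where the paper ends with ``it is then easy to see that $L$ is of the form $G_n+p^n\Z_p$,'' you spell out the actual content — pass to the common level $n=\max_j n_j$ and use the ultrametric nesting to split each coarser ball $c_j+p^{n_j}\Z_p$ into the finer balls $c+p^n\Z_p$ with $c\equiv c_j\pmod{p^{n_j}}$ — which is exactly the refinement the paper leaves implicit. So the two proofs are interchangeable; yours trades the closedness-in-the-product verification for a cleaner ambient space, at the (minor) cost of invoking the isomorphism between the two descriptions of $\Z_p$.
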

\begin{proof} The topology on $\Z_p$ is the  product topology (coarsest such that the projections $\pi_i\colon\Z_p\to\Z$ defined by $\pi_i(a_0,a_1,\ldots)=a_i$ are continuous); as $\Z/p^i\Z$ has the discrete topology, any $G_i\subseteq \Z/p^i\Z$ is both open and closed; hence any set $\pi_i^{-1}(G_i)=\prod_{i=0}^{n-1}\Z/p^i\Z\times G_i\times\prod_{i=n+1}^\infty\Z/p^i\Z$ is clopen, and so is also any finite intersection of such sets, $\prod_{i=0}^nG_i\times\prod_{i=n+1}^\infty\Z/p^i\Z$.
It is easy to see that these clopens form a basis of the open sets of $\+O$. 

$\Z_p$ is compact: by the theorem of Tychonoff the product of compact topological spaces is itself compact. $\Z_p$  is an intersection of closed subsets:  $\Z_p=\cap _{n=0}^\infty C_n$, where
$C_n=\prod_{i=0}^{n-2}\Z/p^i\Z\times G_n\times\prod_{i=n+1}^\infty\Z/p^i\Z$  where $G_n=\{(a_{n-1},a_n)\mid \pi_{n-1} (a_n)=a_{n-1}\}\subset \Z/p^{n-1}Z\times\Z/p^n\Z$; $G_n$ is closed as it is a finite subset of the product of two finite sets with discrete topology, the complement $\overline{G_n}$  of $G_n$ is thus open and so is the complement $\overline{C_n}$ of $C_n$; thus 
$C_n$ is closed in $\prod _{i=0}^\infty\Z/p^i\Z$ ;
$\Z_p$ is therefore closed as intersection of closed subset, hence compact.  

Finally, all clopens of $\Z_p$ are of the form $\prod_{i=0}^nG_i\times\prod_{i=n+1}^\infty\Z/p^i\Z$, with $G_i$ a subset of $\{0,\ldots,p-1\}$: any clopen set $L$ is a union $\cup_{i\in I} C_i$ of such clopens of the basis; $L$ being a closed subset of a compact space is compact, hence a finite union $\cup_{i\in \{i_,\ldots,i_k\}} C_i$ of these clopens covers $L$,  and as
$L=\cup_{i\in I} C_i$, we also have $L=\cup_{i\in \{i_,\ldots,i_k\} } C_i$. It is then easy to see that $L$ is in the form $G_n+p^n\Z_p$, with $G_n\subseteq\{0,1,\ldots,p^n-1\}$.
\end{proof}
\end{document}